\newif \ifcomments \commentstrue
\newif \iffc \fctrue
\newcolumntype{L}{>{\RaggedRight}X}
\pgfplotsset{compat=newest}
\Crefname{algorithm}{Alg.}{Algs.}
  \let\subparagraph\relax
  \titleformat{\subsubsection}[block]
    {\normalfont\normalsize\bfseries}{\thesubsubsection}{1em}{}
  \titlespacing*{\subsubsection}{0pt}{1.5ex plus .5ex minus .2ex}{0.7ex}
\DeclareRobustCommand{\Mref}[1]{%
  \ifmmode
    \text{\Cref{#1}}%
  \else
    \Cref{#1}%
  \fi
}
    \newtheorem{remark}{Remark}
    \newtheorem{theorem}{Theorem}
    \newtheorem{lemma}[theorem]{Lemma}
    \newtheorem{corollary}[theorem]{Corollary}
    \newtheorem{definition}{Definition}
\newcommand{\sysname}{MonadBFT\xspace}
\newcommand{\mypara}[1]{\smallskip\noindent\textbf{#1}\hspace*{0.25em}}
\renewcommand{\addcontentsline}[3]{}
\newcommand{\viewduration}{\Theta_{\textup{view}}}
\newcommand{\recoveryduration}{\Theta_{\textup{recovery}}}
\newcommand{\batch}{\kappa}
\newcommand{\intervalduration}{\Theta_{\textup{interval}}}
\newcommand{\ProposalRequest}{\textsf{ProposalRequest}\xspace}
\newcommand{\ProposalResponse}{\textsf{ProposalResponse}\xspace}
\newcommand{\NERequest}{\textup{\textsf{NERequest}}}
\newcommand{\NoEndorsement}{\textsf{NE}\xspace}
\newcommand{\one}{\vspace{1mm}}
\newcommand{\validators}{\textup{\texttt{validators}}}
\newcommand{\requested}{\mathit{requested}}
\newcommand{\NEset}{\mathit{NEset}}
\newcommand{\ine}{\mathit{ne}}
\newcommand{\NE}{\textsf{NE}\xspace}
\newcommand{\hightip}{\textup{\texttt{high\_tip}}\xspace}
\newcommand{\hightipQCview}{\textup{\texttt{high\_tip\_qc\_view}}\xspace}
\newcommand{\highQC}{\textup{\texttt{high\_qc}}\xspace}
\newcommand{\tip}{\textup{\texttt{tip}}\xspace}
\newcommand{\block}{\textup{\texttt{block}}\xspace}
\newcommand{\view}{\textup{\textup{\texttt{view}}}\xspace}
\newcommand{\originalview}{\textup{\textup{\texttt{block\_view}}}\xspace}
\newcommand{\payload}{\textup{\texttt{payload}}\xspace}
\newcommand{\payloadhash}{\textup{\texttt{payload\_hash}}\xspace}
\newcommand{\blockheader}{\textup{\texttt{block\_header}}\xspace}
\newcommand{\blockhash}{\textup{\texttt{block\_hash}}\xspace}
\newcommand{\blockid}{\textup{\texttt{proposal\_id}}\xspace}
\newcommand{\hightipvote}{\textup{\texttt{tip\_vote}}}
\newcommand{\tipsviews}{\textup{\texttt{tips\_views}}\xspace}
\newcommand{\qcsviews}{\textup{\texttt{qcs\_views}}\xspace}
\newcommand{\qc}{\textup{\texttt{qc}}\xspace}
\newcommand{\tc}{\textup{\texttt{tc}}\xspace}
\newcommand{\nec}{\textup{\texttt{nec}}\xspace}
\newcommand{\lastcer}{\textup{\texttt{last\_cer}}\xspace}
\newcommand{\tipview}{\textup{\texttt{tip\_view}}}
\newcommand{\qcview}{\textup{\texttt{qc\_view}}}
\newcommand{\GST}{{\rm GST}}
\newcommand{\iqc}{\mathit{qc}}
\newcommand{\inec}{\mathit{nec}}
\newcommand{\itc}{\mathit{tc}}
\newcommand{\ivote}{\mathit{vote}}
\newcommand{\itip}{\mathit{T}}
\newcommand{\ihighqc}{\mathit{high\_qc}}
\newcommand{\iview}{\mathit{view}}
\newcommand{\ivotes}{\mathsf{votes}}
\newcommand{\itimeoutvotes}{\mathsf{timeout\_votes}}
\newcommand{\ivotesproposalview}{\mathit{votes\_proposal\_view}}
\newcommand{\iblock}{\mathit{B}}
\newcommand{\iblockheader}{H}
\newcommand{\iprop}{\mathit{p}}
\newcommand{\itimeout}{\mathit{M}}
\newcommand{\ileader}{\mathit{L}}
\newcommand{\iparent}{\mathit{parent}}
\newcommand{\igparent}{\mathit{grandparent}}
\newcommand{\itips}{\mathit{tips}}
\newcommand{\iqcs}{\mathit{qcs}}
\newcommand{\iqcview}{\mathit{qc\_view}}
\newcommand{\iblockview}{\mathit{block\_view}}
\newcommand{\ipayload}{\mathit{payload}}
\newcommand{\ipayloadhash}{\mathit{payload\_hash}}
\newcommand{\iblockhash}{\mathit{block\_hash}}
\newcommand{\iproposalid}{\mathit{proposal\_id}}
\newcommand{\icer}{\mathit{cer}}
\newcommand{\itimeouts}{\mathsf{timeouts}}
\newcommand{\imsg}{\mathit{msg}}
\newcommand{\ihightip}{\mathit{high\_tip}}
\newcommand{\iindex}{\mathit{i}}
\newcommand{\iviewcer}{\mathit{view\_cer}}
\newcommand{\tupled}[1]{\langle #1 \rangle}
\newcommand{\Proposal}{\text{\small \textsf{Proposal}}\xspace}
\newcommand{\TC}{\text{\small \textsf{TC}}\xspace}
\newcommand{\NEC}{\text{\small \textsf{NEC}}\xspace}
\newcommand{\IsFreshProposal}{\textnormal{\textsc{IsFreshProposal}}}
\newcommand{\ValidQC}{\textnormal{\textsc{ValidQC}}}
\newcommand{\ValidTC}{\textnormal{\textsc{ValidTC}}}
\newcommand{\ValidVote}{\textnormal{\textsc{ValidVote}}}
\newcommand{\ValidBlock}{\textnormal{\textsc{ValidBlock}}}
\newcommand{\ValidBlockHeader}{\textnormal{\textsc{ValidBlockHeader}}}
\newcommand{\ValidNEC}{\textnormal{\textsc{ValidNEC}}}
\newcommand{\ValidTimeoutMessage}{\textnormal{\textsc{ValidTimeoutMessage}}}
\newcommand{\ValidTip}{\textnormal{\textsc{ValidFreshTip}}}
\newcommand{\SafetyCheck}{\textnormal{\textsc{ValidProposal}}}
\newcommand{\CreateTC}{\textnormal{\textsc{CreateTC}}}
\newcommand{\CreateNEC}{\textnormal{\textsc{CreateNEC}}}
\newcommand{\CreateQC}{\textnormal{\textsc{CreateQC}}}
\newcommand{\CreateTimeoutMsg}{\textnormal{\textsc{CreateTimeoutMessage}}}
\newcommand{\CurrentLeaderReceivesQC}{\textnormal{\textsc{BroadcastBackupQC}}}
\newcommand{\NextLeaderReceivesQC}{\textnormal{\textsc{CreateProposalBasedOnQC}}}
\newcommand{\NextLeaderReceivesTC}{\textnormal{\textsc{CreateProposalBasedOnTC}}}
\newcommand{\NonLeaderReceivesQC}{\textnormal{\textsc{ForwardBackupQCToNextLeader}}}
\newcommand{\ProcessProposalRequest}{\textnormal{\textsc{ProcessProposalRequest}}}
\newcommand{\ProcessNERequest}{\textnormal{\textsc{ProcessNERequest}}}
\newcommand{\CommitSpecCommit}{\textnormal{\textsc{CommitAndSpecCommit}}}
\newcommand{\mss}[1]{\ifmmode\mathsf{#1}\else\textsf{#1}\fi}
\newcommand{\localhighQC}{\textup{\mss{last\_qc}}\xspace}
\newcommand{\curView}{\mathsf{cur\_view}\xspace}
\newcommand{\proposal}{\mss{p}\xspace}
\newcommand{\lastTC}{\mss{last\_tc}\xspace}
\newcommand{\localhightip}{\mss{local\_tip}\xspace}
\newcommand{\highestvotedview}{\mss{highest\_voted\_view}\xspace}
\newcommand{\pmh}[1]{{\color{blue}{#1}}}
\newcommand{\false}{\mathsf{false}}
\newcommand{\Require}{{\bf require}\xspace}
\newcommand{\deq}{\stackrel{\text{def}}{=}}
\newcommand{\Recover}{\textup{\textsc{Recover}}}
\newcommand{\malor}{view responsiveness}
\newcommand{\msg}[1]{{\langle #1 \rangle}}
\definecolor{darkgray}{gray}{0.45}
\newcommand{\commentline}[1]{{\textcolor{darkgray}{\normalfont // \emph{#1}}}}
\newcommand{\commentinline}[1]{{\hfill \textcolor{darkgray}{\normalfont // \emph{#1}}}}
    \newcommand{\todo}[1]{{\color{orange}{[ToDo: #1]}}}
    \newcommand{\new}[1]{{\color{purple}{#1}}}
    \newcommand{\old}[1]{{\color{cyan}\sout{#1}}}
    \newcommand{\todo}[1]{}
    \newcommand{\new}[1]{}
    \newcommand{\old}[1]{}
    \newcommand{\kushal}[1]{{\color{orange}{[KB: #1]}}}
    \newcommand{\mussadiq}[1]{{\color{purple}{[MJ: #1]}}}
    \newcommand{\yd}[1]{{\color{olive}{[YD: #1]}}}
    \newcommand{\tobias}[1]{{\color{red}{[TK: #1]}}}
    \newcommand{\tobiaschange}[1]{{\color{red!30!orange}{#1}}}
    \newcommand{\tobiasremove}[1]{{\color{red!30!orange}{\sout{#1}}}}
    \newcommand{\sd}[1]{{\color{magenta}{[SD: #1]}}}
    \newcommand{\jovan}[1]{{\color{teal}{[JV: #1]}}}
    \newcommand{\fatima}[1]{{\color{blue}{[FA: #1]}}}
    \newcommand{\kushal}[1]{}
    \newcommand{\mussadiq}[1]{}
    \newcommand{\yd}[1]{}
    \newcommand{\tobias}[1]{}
    \newcommand{\tobiasremove}[1]{}
    \newcommand{\tobiaschange}[1]{{{#1}}}
    \newcommand{\sd}[1]{}
    \newcommand{\jovan}[1]{}
    \newcommand{\fatima}[1]{}
\newcommand{\authorsep}{\and}
\newcommand{\commentif}[1]{{\NoSlashComment*[h]{\textcolor{darkgray}{\normalfont // \emph{#1}}}}}
\title{\sysname : Fast, Responsive, Fork-Resistant Streamlined Consensus}
\author{
\setlength{\baselineskip}{1.1\baselineskip}
Mohammad Mussadiq Jalalzai \authorsep
        Kushal Babel \authorsep
        Jovan Komatovic \authorsep
        Tobias Klenze \authorsep
        Sourav Das \authorsep
        Fatima~Elsheimy \authorsep
        Mike Setrin \authorsep
        John Bergschneider \authorsep
        Babak Gilkalaye
 }
 \institute{Category Labs}
\begin{document}

\maketitle
\begin{abstract}

This paper introduces \sysname, a novel Byzantine Fault Tolerant (BFT) consensus protocol that advances both performance and robustness. \sysname is implemented as the consensus protocol in the Monad blockchain.
As a HotStuff-family protocol, \sysname~has \emph{linear message complexity} in the common case and is \emph{optimistically responsive}, operating as quickly as the network allows.
A central feature of \sysname\ is its \emph{tail-forking resistance}. In pipelined BFT protocols, when a leader goes offline, the previous proposal is abandoned. 
Malicious leaders can exploit this tail-forking behavior as a form of Maximal Extractable Value (MEV) attack by deliberately discarding their predecessor's block, depriving that proposer of rewards and enabling transaction reordering, censorship, or theft. \sysname\ prevents such tail-forking attacks, preserving both fairness and integrity in transaction execution.
Another related feature of \sysname~is its notion of \emph{speculative finality}, which enables parties to execute ordered transactions after a single round (i.e., a single view), with reverts occurring only in the rare case of provable leader equivocation. This mechanism reduces user-perceived latency.
Additionally, we introduce the \emph{leader fault isolation} property, which ensures that the protocol can quickly recover from a failure.
To our knowledge, no prior pipelined, leader-based BFT consensus protocol combines all of these properties in a single design.

\end{abstract}



\iffc
\else
Moreover, as a separate matter of interest, by decoupling consensus from transaction execution, \sysname enables parallel ordering and execution of transactions. 
\fi

\section{Introduction}
\label{sec:introduction}
Byzantine Fault Tolerant (BFT) consensus protocols allow participants in a network to reach consensus even when some participants may act arbitrarily.
With the rise of blockchains, however, BFT consensus protocols must address additional challenges: ensuring fairness, since participants are directly rewarded or penalized for their behavior, and achieving scalability to support hundreds of \emph{validators} coordinating within the protocol.
In the context of a high-performance L1 blockchain, these requirements give rise to several key design goals for \sysname.
While prior work addresses some of these goals in isolation, \sysname is, to the best of our knowledge, the first protocol to achieve all of them within a single framework.

%
\begin{itemize}[leftmargin=*, itemsep=3pt]
    
\item \emph{Rotating leaders, pipelining \& linear complexity:} 
Unlike traditional leader-based BFT protocols~\cite{Castro:1999:PBF:296806.296824} that have a steady leader driving consensus, blockchain BFT protocols employ rotating leaders~\cite{tendermint,HotStuff} to enhance fairness and robustness against slow or faulty leaders. Moreover, the protocol must pipeline commit decisions across successive leaders to achieve high throughput. Additionally, it should maintain linear communication complexity in the common case to lower validator overhead, promote wider participation, and reduce operational costs~\cite{HotStuff}.

\item \emph{Optimistic responsiveness:}
The \emph{optimistic responsiveness} property lets a protocol make progress at the actual network speed~$\delta$ rather than on the worst-case message delay~$\Delta \gg \delta$~~\cite{HotStuff,jalalzai2021fasthotstuff, gelashvili2021jolteon}. 
We require blockchain BFT protocols to have optimistic responsiveness, as it minimizes reliance on a carefully calibrated value of~$\Delta$, which is notoriously difficult to tune in practice and significantly affects both latency and throughput.

\item \emph{Tail-forking resistance:} 
Recent work~\cite{BEEGEES} has identified a fairness and liveness vulnerability in (pipelined) BFT protocols, known as \emph{tail-forking}. This vulnerability arises from a protocol design where leaders are responsible for disseminating certificates for the preceding leader’s block. 
In \Cref{fig:Forking-Attack} (a), we show how an offline leader causes the previous block to be abandoned, despite the block gathering the support of most validators. 
Furthermore, a malicious validator can purposefully abandon the prior block to steal, reorder, or insert transactions, capturing additional \emph{Maximal Extractable Value (MEV)}~\cite{flashboys, babel2023clockwork}—a practice that has already led to hundreds of millions of dollars in user losses.
Tail-forking thus amplifies MEV opportunities through cross-block transaction reordering and (short-term) censorship, giving leaders disproportionate influence over transaction inclusion.
Such manipulation increases centralization pressure and undermines both the fairness and reliability of the consensus protocol.
\begin{figure}[t!]
\centering
\begin{minipage}{0.7\textwidth}
\centering
\includegraphics[width=\textwidth]{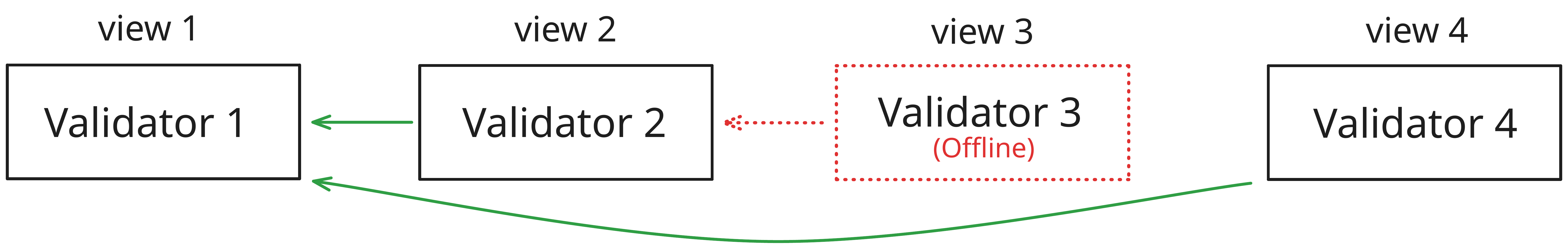}
\vspace{-15pt}
\caption*{(a) Example of tail forking in a pipelined protocol.}
\end{minipage}

\vspace{8pt}

\begin{minipage}{0.7\textwidth}
\centering
\includegraphics[width=\textwidth]{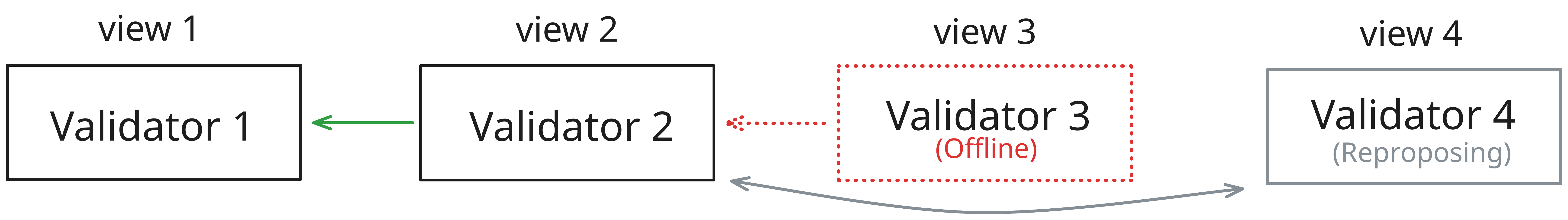}
\vspace{-15pt}
\caption*{(b) \sysname prevents tail-forking via reproposals (standard recovery).}
\end{minipage}

\vspace{8pt}

\begin{minipage}{0.7\textwidth}
\centering
\includegraphics[width=\textwidth]{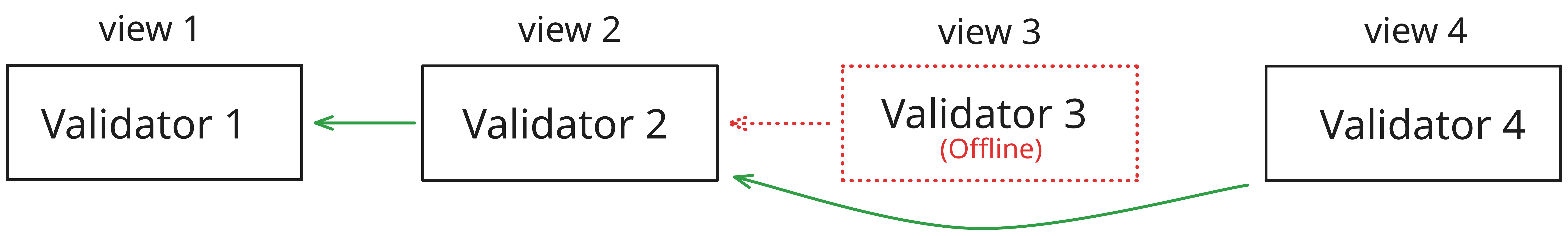}
\vspace{-15pt}
\caption*{(c) Fast recovery in \sysname, also tail-forking resistant.}
\end{minipage}

    \caption{
    Example of tail-forking (a) and how \sysname's standard (b) and fast (c) recovery prevent it.\\
    (a) In standard pipelined consensus, blocks can easily be abandoned. Here, Validator 2 proposes a block that gets abandoned because Validator 3 fails to collect votes for it.\\
    (b) \sysname guarantees that a correct validator's block is never abandoned. In this case, it is reproposed by Validator 4. \\
    (c) \sysname often enables fast recovery via backup QC, which skips the reproposal while still preventing tail-forking.
    }
    \label{fig:Forking-Attack}
\end{figure}

\item \emph{Speculative commit:} 
To provide low-latency confirmations for typical executions, the protocol should support \emph{speculative finality} after a single round (i.e., a single view), with revocation allowed only when the leader equivocates.
Because equivocation is provable and punishable in a blockchain setting, this mechanism enables fast confirmations in the common case.


\item \emph{Leader fault isolation:}
The protocol should recover from failures quickly. To this end, we introduce the \emph{leader fault isolation} property, which states that a single Byzantine leader only incurs a single timeout delay; all views led by correct leaders operate at network speed.

\item \emph{Low latency:}
The blockchain consensus protocol should have low commit latency, which leads to faster transaction confirmations and, consequently, a better user experience.

\end{itemize}


\subsection{\sysname: Technical Overview}  
We now provide a brief overview of the design and highlight the key ideas in \sysname.
%

\medskip
\mypara{Base protocol.}
\sysname operates in the partially synchronous model, and falls within the HotStuff family of protocols, operating in a steady state under honest leaders and a stable network—referred to as the \textit{happy path}—and transitioning to an \textit{unhappy path} when recovering from Byzantine leaders or network instabilities.

\sysname progresses through a sequence of \emph{views} (also known as \emph{rounds}), each coordinated by a designated \emph{leader}.  
Within a view, validators follow the standard pipelined consensus steps: they receive the leader's block proposal $p$, validate it, and send their vote to the leader of the next view. 
This next leader collects the issued votes and assembles a \emph{quorum certificate}~(QC) that points to $p$, confirming that more than $2/3$ of validators have voted for $p$, and then includes this QC in its own block proposal for the subsequent view.
Once validators receive the next leader’s proposal along with the QC for $p$, they enter the new view, forming a continuous, certified chain of block proposals. 
This describes {\sysname}’s operation under ideal conditions --- the happy path --- where leaders act honestly and the network is stable.  
On the happy path, every view produces a QC that is immediately incorporated by the following leader, allowing the blockchain to steadily grow without interruption.  

When failures occur, for instance due to an offline leader or long network delays, we call this the \emph{unhappy path}. A validator will notice a failure when it remains in a view $v$ for longer than a certain pre-specified duration. It then times out from the view and broadcasts (i.e., sends to all validators) a \emph{timeout message} for the current view $v$.
Timeout messages from more than $2/3$ validators indicate a failure of view $v$. The messages are combined into a \emph{timeout certificate}~(TC), which allows validators to advance to view $v+1$ without waiting for a QC from view $v$. In general, any validator in a view $\leq v$ will transition to view $v+1$ as soon as it receives either a QC or a TC for view $v$.

\medskip
\mypara{Tail-forking resistance via reproposals.}
Our design so far mirrors the pipelined HotStuff protocol, which lacks \emph{tail-forking resistance} --- the property that once a block proposed by a leader can gather a QC, it cannot be abandoned, unless the leader provably equivocated. 
To achieve this, \sysname introduces a \emph{reproposal mechanism}, where a leader recovering from a failed view must repropose the block from that view instead of proposing a new one.
\Cref{fig:Forking-Attack} shows \sysname's standard and fast recovery, which both prevent tail-forking.

\noindent \emph{Tips and high tip:}
A key challenge in the reproposal mechanism is to ensure that all validators agree on when and which block should be reproposed.
\sysname enforces this by having each validator track the latest proposal it has voted for, called its \emph{local tip}. Upon a view failure, validators include their local tip in their timeout messages.
The next leader collects these messages, identifies the \emph{high tip} --- the tip with the highest view among them --- and reproposes the corresponding block, unless it can construct a QC for that block.
Finally, the leader includes the reported local tips into the constructed TC, allowing everyone to verify the reproposal decision.
At a high level, reproposing the high tip ensures tail-forking resistance for the following reason.
For a block $B$ to obtain a QC, at least $f+1$ honest validators must have voted for it.
When recovering from a failed view, each of these honest validators reports either $B$ or a descendant of $B$ as its local tip.
Since the high tip is determined from the $2f+1$ local tips, at least one local tip from an honest validator that voted for $B$ is guaranteed to be included in the high-tip computation.
This ensures that $B$ (or its extension) is always reproposed, thereby providing tail-forking resistance.

\noindent \emph{No endorsement certificates:}
Note that under the high-tip reproposal rule, the leader must possess the block corresponding to the high tip in order to repropose it.
A naive but impractical solution would be to embed the entire block within the tip itself. Instead, in \sysname, a tip contains only the block’s identity (its hash) and a small amount of auxiliary metadata.
Additionally, \sysname\ allows a leader who does not possess the block corresponding to the high tip to retrieve the missing block by querying other validators. In rare cases where an honest leader fails to recover the block, validators jointly issue a cryptographic \emph{no-endorsement certificate} (NEC). The existence of an NEC guarantees that the high-tip block could have never obtained a QC, allowing the honest leader to safely bypass reproposal and propose a new block instead.

\medskip
\mypara{Fast recovery optimizations.}
In addition to the above, we introduce \emph{fast recovery} optimizations that help the protocol efficiently handle common failure scenarios.
\tobiaschange{
In particular, validators send \emph{backup votes} to the current leader, enabling it to form and broadcast a \emph{backup QC} for its own proposal, which can be used when the next leader fails to collect regular votes.
Using the recovery from a single Byzantine leader as an example, \Cref{fig:timeline-single-leader-failure-monadbft} illustrates how \sysname's fast recovery can restore the happy path faster than standard pipelined consensus protocols (\Cref{fig:timeline-single-leader-failure-pipelined-consensus}), while ensuring tail-forking resistance.
}
We elaborate on these optimizations in~\Cref{sec:Protocol_Description} after formalizing the related efficiency requirements in~\Cref{sub:consensus_properties}.

\begin{figure}[t]
    \centering
    \begin{minipage}[t]{\textwidth}
        \centering
        \includegraphics[height=5cm]{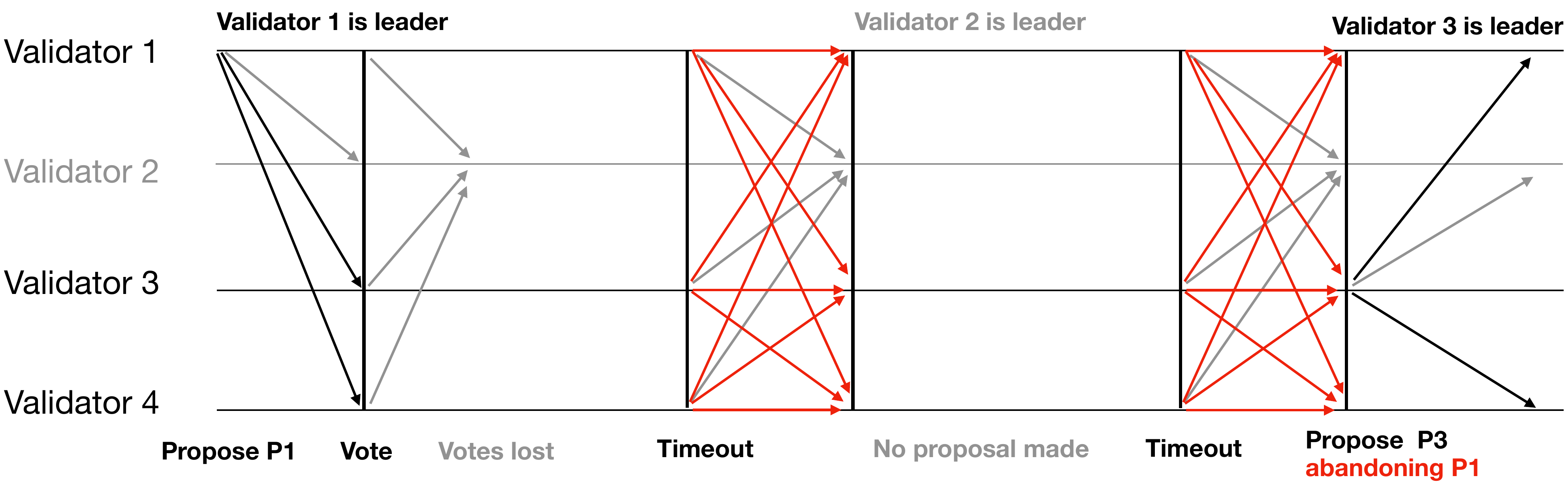}
        \caption{In \textbf{standard pipelined BFT}, a single offline leader (Validator 2) causes two views to time out, as Validator 2 not only fails to make a proposal, but also fails to collect votes for Validator 1's block proposal P1. Furthermore, Validator 1's block is lost when Validator 3 makes its own proposal P3.}
        \label{fig:timeline-single-leader-failure-pipelined-consensus}
    \end{minipage}

    \vspace{20pt}

    \begin{minipage}[t]{\textwidth}
        \centering
        \includegraphics[height=5cm]{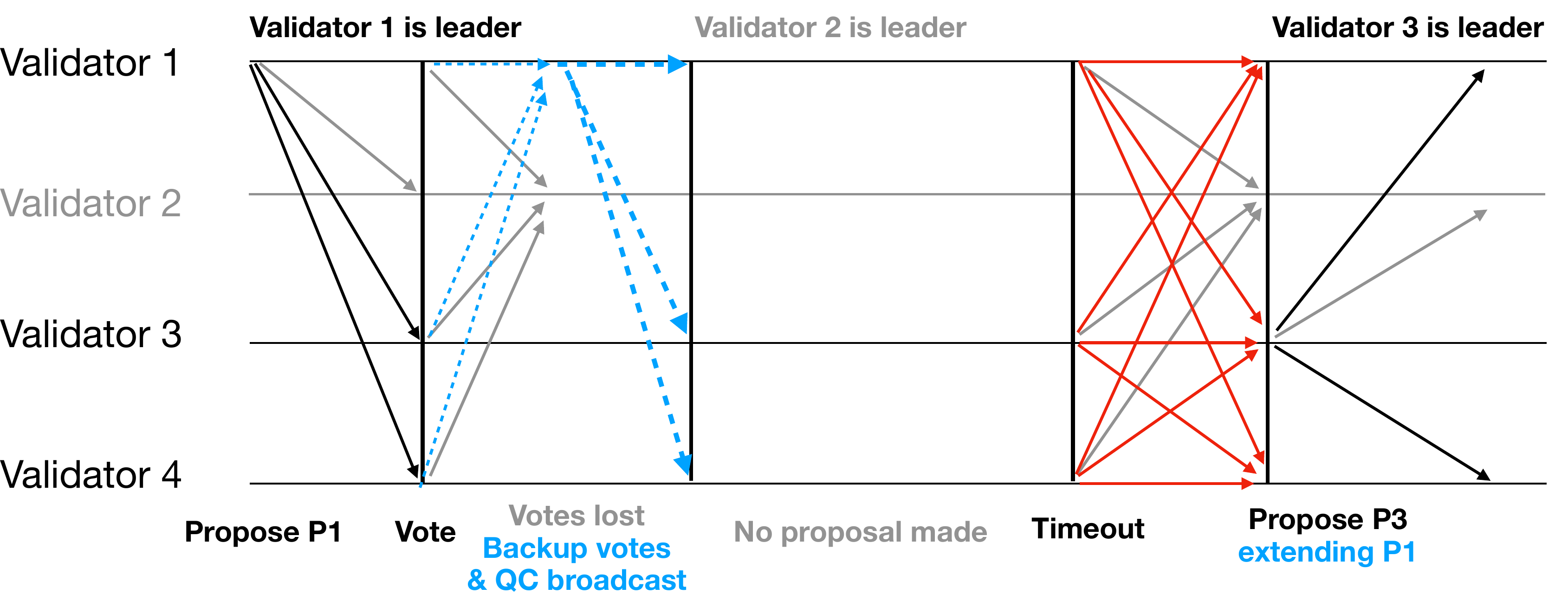}        \caption{In \textbf{\sysname}, 
        a single offline leader (Validator 2) causes only one view to time out, 
        as Validator~1 collects backup votes (thin dashed \textcolor{Cerulean}{blue}) and broadcasts a backup QC (thick dashed \textcolor{Cerulean}{blue}) for its own block proposal~P1, allowing the view to complete successfully. Validator 3, recovering from the failure, can build directly on the backup QC of P1, ensuring that P1 (Validator 1's block) is not abandoned.
        To rule out tail-forks, Validator 3 must prove that Validator~2 did not produce a block that received sufficient support.
        This figure illustrates \sysname's \emph{fast recovery} (introduced later in the paper), which avoids reproposals.
        }
        \label{fig:timeline-single-leader-failure-monadbft}
    \end{minipage}
\end{figure}

\subsection{Achieved Properties of \sysname}
\label{sub:properties}
\begin{table}[t!]
\caption{Comparison of chain-based rotating leader BFT SMR protocols. $\delta$ is the actual network delay (c.f.~\Cref{subsec:model}).}
\scriptsize
\label{tab:comparison}
\setlength{\tabcolsep}{4pt}
\renewcommand{\arraystretch}{1.2}

\centering
\noindent\makebox[\textwidth][c]{%
\begin{tabular}{@{}lccccccc@{}}
\hline
\rule{0pt}{8ex}
\textbf{Protocol} &
\shortstack{\textbf{Min.}\\\textbf{Commit}\\\textbf{Latency}} &
\shortstack{\textbf{Tail-}\\\textbf{forking}\\\textbf{Resilience}} &
\textbf{Pipelined} &
\shortstack{\textbf{Happy}\\\textbf{Path}\\\textbf{Complexity}} &
\shortstack{\textbf{Unhappy}\\\textbf{Path}\\\textbf{Complexity}} &
\shortstack{\textbf{Leader-based.}\\\textbf{Opt.}\\\textbf{Resp.}} \\
\hline
HotStuff \cite{HotStuff-2:2023/397} & $7\delta$ & $\times$ & $\checkmark$ & $O(n)$ & $O(n)$ & $\checkmark$ \\
Fast-HotStuff \cite{jalalzai2021fasthotstuff} & $5\delta$ & $\times$ & $\checkmark$ & $O(n)$ & $O(n^2)$ & $\checkmark$ \\
Jolteon \cite{gelashvili2021jolteon} & $5\delta$ & $\times$ & $\checkmark$ & $O(n)$ & $O(n^2)$ & $\checkmark$ \\
HotStuff-2 \cite{HotStuff-2:2023/397} & $5\delta$ & $\times$ & $\checkmark$ & $O(n)$ & $O(n)$ & $\times$ \\
PaLa \cite{Pala} & $4\delta$ & $\times$ & $\checkmark$ & $O(n)$ & $O(n^2)$ & $\times$ \\
Banyan \cite{albarello2023fastinternetcomputerconsensusICC} & $3\delta$ & $\times$ & $\times$ & $O(n^2)$ & $O(n^2)$ & $\times$ \\
BeeGees \cite{BEEGEES}$^{(1)}$ & $3\delta$ & $\checkmark$ & $\checkmark$ & $O(n)$ & $O(n^2)$ & $\times$ \\
VBFT~\cite{VBFT} & $2\delta$$^{(2)}$   & $\checkmark$ & $\checkmark$ & $O(n^2)$ & $O(n^2)$ & $\checkmark$ \\
Moonshot \cite{doidge2024moonshotoptimizingchainbasedrotating} & $3\delta$ & $\times$ & $\checkmark$ & $O(n^2)$ & $O(n^2)$ & $\checkmark$\\
HotStuff-1 (Pipelined) \cite{kang2024hotstuff1linearconsensusonephase} & $3\delta$$^{(2)}$ & $\times^{(3)}$ & $\checkmark$ & $O(n)$ & $O(n)$ & $\times$\\
\hline
\textbf{This work} & $3\delta$$^{(2)}$ & $\checkmark$ & $\checkmark$ & $O(n)$ & $O(n^2)$ & $\checkmark$ \\
\hline
\end{tabular}%
}

\vspace{2mm}
\begin{minipage}{\textwidth}
\scriptsize
\begin{flushleft}
(1) Although BeeGees achieves asymptotic complexities comparable to MonadBFT, certain deployment-oriented considerations are not reflected
in these bounds; we discuss these aspects in \Cref{para:beegees}.\\
(2) Latency is minimized due to speculation. \\
(3) The HotStuff-1 pipelined variant mitigates tail-forking with continuous block proposals by the same leader (``slotting'') but becomes vulnerable upon leader rotation, the original reason for the tail-forking vulnerability. \\
\end{flushleft}
\end{minipage}
\end{table}

\medskip
\mypara{Optimistic linear complexity.} 
As in prior work~\cite{jalalzai2021fasthotstuff,gelashvili2021jolteon}, \sysname achieves linear message complexity on the happy path and quadratic complexity on the unhappy path.
We adopt this design for two reasons.
First, we expect the protocol to operate predominantly on the happy path, since faulty leaders are rare in proof-of-stake blockchains due to financial penalties~\cite{he2023don, slashing-ethereum}.
Second, the quadratic communication pattern accelerates recovery after a view failure by leveraging the all-to-all message exchange. Specifically, \sysname relies on the all-to-all communication Bracha amplification approach~\cite{Bracha-Broadcast}~(hence quadratic cost) to rapidly disseminate view failure information and synchronize validators to the next view.

\medskip
\mypara{Optimistic responsiveness.}
On the happy path, \sysname progresses as soon as the next leader collects the QC from the previous view and includes it in its proposal, enabling the protocol to operate at network speed. 
Even on the unhappy path, progress remains largely independent of timeouts --- these are used only to detect a view failure and trigger the transition to the unhappy path. 
Specifically, validators issue timeout messages after waiting for $\Delta$ time units to detect such a failure. 
A TC, composed of these timeout messages, triggers the transition to the unhappy path; after that, the protocol operates at network speed.


\medskip
\mypara{Tail-forking resistance \& speculative commit.} 
Tail-forking resistance ensures that any QC-backed block, i.e., blocks that have been voted for by at least $f + 1$ honest validators, will appear in the chain at its proposed height, unless the leader equivocates.\footnote{In~\Cref{sub:consensus_properties}, we specify the precise conditions under which a QC-backed block can be reverted.}
We leverage this property and the ability to impose financial penalties for equivocation in proof-of-stake protocols to allow validators to \emph{speculatively commit} blocks, i.e., validators speculatively commit a block $B$ upon obtaining its QC.
If the leader proposing $B$ equivocates, validators may revoke the speculative commit, potentially incurring wasted effort. However, when coupled with economic disincentives for equivocation, such provable misbehavior is likely to be a rare occurrence. Therefore, we anticipate most users will adopt the speculatively finalized state, gaining the UX benefit of eliminating at least one round trip delay. In particular, latency-sensitive applications can act on the speculatively finalized state, including
trading and market-making updates, oracle and index maintenance, and low-value
interactive workloads (in-game asset updates, NFT listings/bids,
social interactions). In contrast, settlement-critical operations spanning multiple parties or domains (bridging, custodial exchange deposits/withdrawals, treasury movements, multi-party escrows) will likely defer action until full finality. Ultimately, the latency--safety  threshold is
application-specific: each application can choose  to utilize the speculative state  or wait for full finalization, depending on its tolerance for risk and need for speed.


\medskip
\mypara{Latency.}
An immediate consequence of the speculative commitment mechanism is that validators can speculatively execute the transactions contained in a block within $1.5$ round-trip time from its proposal: $1$ round-trip (one view)  to construct the QC and $0.5$ round-trip (half a view) to disseminate it. This optimization translates into a $60\%$ latency reduction over HotStuff~\cite{HotStuff}, which requires $3.5$ round trips, and a $40\%$ reduction over faster variants~\cite{jalalzai2021fasthotstuff, Diem-BFT}, which require $2.5$ round trips.
\tobiaschange{Additionally,
\sysname achieves leader fault isolation, which limits the delays of a single Byzantine leader to a single timeout, through the use of backup QCs, which allow each honest validator to complete their own view without relying on the next leader to collect votes.
}

\medskip 
We summarize the properties of \sysname\ and compare them with existing schemes in \Cref{tab:comparison}, while a more detailed discussion of related work is provided in \Cref{sec:related}.

\medskip
\mypara{\sysname's implementation in the Monad blockchain.}
\sysname is implemented as the consensus protocol of the Monad blockchain, where all messages are authenticated, and validators are weighted based on their stakes. Additionally, the implementation utilizes RaptorCast~\cite{raptorcast} for efficient, erasure-coded block propagation, BlockSync~\cite{blocksync} for retrieving missing blocks from peers, and StateSync~\cite{statesync} for synchronizing transaction execution states when validators fall behind. 
We refer to \url{https://docs.monad.xyz} for more details.

\medskip
\mypara{Paper overview.}
The rest of this paper is organized as follows.
In Section~\ref{sec:system-prelim}, we introduce the system model, describe the properties of consensus, and present the data structures that underpin our design.
In Section~\ref{sec:Protocol_Description}, we specify and describe the \sysname\ protocol.
In Section~\ref{sec:proofintuition}, we give an intuition of how and why \sysname\ satisfies the safety, liveness, and efficiency properties introduced above; full proofs are available in Appendix~\ref{sec:analysis}.
Finally, in Section~\ref{sec:related}, we discuss related work.

\iffc
\else
\textbf{Improving Transaction In-Queue Latency through Pipelining:} In addition to the primary contribution of a fast consensus protocol, \sysname introduces a distinct enhancement by enabling consensus and execution separation.
  This independent improvement allows these processes to occur simultaneously in a pipelined manner, thereby reducing transaction processing latency and accelerating consensus finality. This approach reduces the wait time for transactions within the leader's queue, allowing for quicker processing.
    The consensus module primarily focuses on establishing the speculative order of transactions, validated against a Quorum Certificate (QC), while the execution module processes these transactions (speculatively) and manages state synchronization across the network. The execution module commits the state generated from a block execution once more than $2f+1$ execution modules have speculatively executed the same block\kushal{Instead of saying this statement, maybe it is simpler to just say that the execution module commits the state once the block is finalized (committed) by the consensus module.}\mussadiq{A certificate generated by voting on a state produced from the execution of a block guarantees that the block will eventually be committed by the consensus.}. 
Thus, a block is executed speculatively after a single round of consensus. 

\sysname can also be deployed in a traditional integrated manner without separating consensus and execution. In this configuration, clients receive direct responses from the protocol, bypassing the decoupled execution phase\kushal{receiving direct response, or bypassing, makes it \textit{sound like} the integrated manner is superior.}. \mussadiq{Thanks, feel free to revise it.}
\fi

\section{System Model \& Preliminaries}
\label{sec:system-prelim}

\subsection{System Model} 
\label{subsec:model}
\medskip
\mypara{Participants.} The system comprises $n = 3f+1$ parties or validators, indexed from $1$ to $n$, out of which up to $f$ are Byzantine~\cite{lamport1982byzantine}, i.e., they can deviate from the protocol and behave arbitrarily.
(Our protocol can trivially be adapted to support $n > 3f + 1$.)
Depending on the type of Byzantine behavior, we also call Byzantine validators \emph{offline} or \emph{malicious}. All other validators (called \emph{correct} or \emph{honest}) follow the protocol.

\medskip
\mypara{Cryptographic tools.}
Our protocol relies on standard cryptographic primitives, summarized below:
\begin{itemize}
  \item \emph{Digital signatures:} Each validator has a public–private key pair and knows the public keys of all other validators. 
  Messages, along with their message type, are signed with the private key, while the corresponding public key is used to verify their authenticity.
  
  \item \emph{Signature aggregation:} 
  Our protocol additionally relies on aggregate signatures~\cite{Boneh:2003} to aggregate multiple validator's signatures (on potentially different messages) into a single signature. 
  In particular, our implementation uses the pairing-based aggregate signatures from~\cite{Boneh:2003}.
\end{itemize}


\medskip
\mypara{Communication network.}
Validators communicate over point-to-point authenticated channels.
We assume that this network is \emph{reliable}, meaning that any message sent by a correct validator to another correct validator is eventually delivered. 


We assume the network to be \emph{partially synchronous}~\cite{dwork1988consensus}, where the network behaves asynchronously up to some unknown point in time, 
after which it operates synchronously. 
Formally, this model assumes the existence of an unknown Global Stabilization Time (GST) and a known delay bound $\Delta > 0$.  
Any message sent at real time $\tau$ is guaranteed to be delivered by time $\max(\tau, \GST) + \Delta$. 
We use $\delta \leq \Delta$ to denote the \emph{actual} network delay.
%

\subsection{Consensus Properties} 
\label{sub:consensus_properties}


\mypara{Block-chaining paradigm.}
The primary objective of a blockchain consensus protocol is to ensure that client requests~(transactions), organized into \emph{blocks}, are \emph{committed} (or \emph{finalized}) onto an ever-growing linear sequence of blocks.  
Concretely, each validator maintains its own \emph{local log}, but the protocol must guarantee that the local logs of all correct validators remain consistent.    
This guarantees that all correct validators maintain a consistent view of the log, equivalent to that of a single fault-free server.  
Before formalizing our protocol, we introduce the abstract notion of a block used to specify the properties of our protocol.  
The concrete block data structure and all related components will be detailed later in the paper; here, we rely only on the abstract formulation needed for the problem definition.

Our work follows the \emph{block-chaining paradigm}, in which each leader proposes a value in the form of a block that explicitly extends a previously proposed block $B$ by including the pointer of $B$, referred to as its \emph{parent}.  
This creates a unique chain over all preceding blocks in the log.  
The first block in this chain is the \emph{genesis block}, and the distance from the genesis block to a block $B$ is its \emph{height}.  
A block $B_k$ at height $k$ has the form
\[
B_k \gets \big( b_k,\, P(B_{k-1}) \big),
\]
where $b_k$ is the proposed payload at height $k$, $B_{k-1}$ is the block at height $k-1$ and $P(B_{k - 1})$ is a pointer to block $B_{k - 1}$.
(In our protocol, pointers are realized through quorum certificates; further details appear in \Cref{sub:data_structures}.)
A block $B_k$ is said to \emph{extend} a block $B_h$ if $B_k = B_h$ or $B_k$ is a descendant of $B_h$.  
All blocks on the path from the genesis block to $B_k$ are the \emph{ancestors} of $B_k$.  
In this paradigm, whenever a block $B_k$ is committed, all of its ancestors are committed as well.

\medskip
\mypara{Security properties.}
\sysname\ satisfies the following fundamental security properties:
\begin{itemize}
    \item \emph{Safety:} No two correct validators commit different blocks at the same log position (height) in their local logs.
    
    
    \item \emph{Liveness:} After GST, every block proposed by a correct validator is eventually committed by every correct validator.
\end{itemize}
Intuitively, safety ensures that all correct validators have a consistent view of the log (see Theorem~\ref{theorem:jovan_safety}), while liveness ensures that, once the network stabilizes, blocks issued by correct validators are eventually committed (see Theorem~\ref{theorem:liveness}).


Another fundamental property of leader-based pipelined blockchain consensus protocols is \emph{tail-forking resistance}. 
In its classical form, as defined in~\cite{BEEGEES}, this property guarantees that every block proposed by a correct validator after GST is eventually committed (a guarantee we refer to as \emph{liveness} in this work).
In non-pipelined protocols, this guarantee is relatively easy to achieve: an honest leader performs all the work required to drive its own block to commitment and can therefore ensure that its block is not abandoned. 
In pipelined protocols, however, this reasoning breaks down because the commitment of a block proposed in view $v$ requires the leader of view $v+1$ to collect votes for the proposal; thus, even if the leader of view $v$ behaves correctly, it cannot unilaterally guarantee that its block will eventually be committed. 
This interdependence across views makes tail-forking resistance significantly more subtle to ensure in the pipelined setting.

In this work, we adopt a strictly stronger form of the property. 
Our variant requires that any block that receives votes from at least $f + 1$ correct validators must eventually be committed, unless the leader equivocates. 
\sysname\ provides this guarantee for \emph{fresh} proposals, i.e., proposals containing blocks that have not been issued before (see~\Cref{sub:data_structures} for formal definition of fresh proposals).
Since progress is already guaranteed by the above liveness property, we formalize tail-forking resistance as the following safety property (proven in \Cref{theorem:jovan_tail_forking}):
\begin{itemize}
    \item \emph{Tail-forking resistance:} If a leader proposes a block $B$ as part of a fresh proposal and at least $f + 1$ correct validators vote for it, then—unless the leader equivocates by issuing multiple proposals—any block $B'$ committed by a correct validator must satisfy that either $B$ extends $B'$ or $B'$ extends $B$.
\end{itemize}

\sysname also incorporates \emph{speculative commitments}, and we require explicit guarantees on when such commitments may be revoked. 
A block is speculatively committed once a quorum certificate (QC) is formed for a fresh proposal containing that block. 
Since a QC aggregates $2f + 1$ votes, this implies that at least $f + 1$ correct validators voted for the proposal.
Our stronger tail-forking resistance property then \emph{directly implies} that a speculatively committed block cannot be reversed unless the issuing leader equivocates, and equivocation is a provable and punishable fault that is expected to be rare in practice. 
Thus, the stronger definition of the tail-forking resistance property simultaneously ensures that every block proposed by a correct leader after GST is eventually committed (a security property for honest leaders) and that speculative commitments can only be revoked in the presence of equivocation (a security property for applications operating on the speculatively finalized state). 
The latter property, following directly from our strengthened tail-forking resistance property, can be formalized as (and proven in \Cref{corollary:jovan_reversion}):

\begin{itemize}
    \item \emph{Reversion of speculatively committed blocks:} Suppose a block $B$ contained in a fresh proposal is speculatively committed by a correct validator, and another block $B'$ is committed by a correct validator such that neither $B$ extends $B'$ nor $B'$ extends $B$. 
    Then, the leader that proposed $B$ equivocated.
\end{itemize}

\medskip
\mypara{Efficiency properties.}
Next, we discuss the efficiency properties satisfied by \sysname. We assume that local computation has negligible time overhead.
Some of these properties depend on the high-level design of the protocol, which we will outline as needed.
We begin by formalizing the properties that characterize the protocol’s behavior in the common-case operation—that is, when the network is synchronous and all (or most) validators behave honestly.
In this setting, we require \sysname\ to operate as efficiently as the network permits, and achieve both low latency and linear communication overhead.
We define these properties formally below.
\begin{itemize}[itemsep=2pt]
    \item \emph{Leader-based optimistic responsiveness~\cite{doidge2024moonshotoptimizingchainbasedrotating}:} After GST, a correct leader of view $v$ needs to wait only until $n - f = 2f + 1$ messages of view $v-1$ of the same type (e.g., votes) have been received to guarantee that it can create a proposal that will be voted upon by all correct validators.\footnote{Note that by our liveness guarantee, the block of this proposal is guaranteed to be committed.}

    \item \emph{Optimistic responsiveness:} If all validators are correct, then during any post-GST time interval $[\tau_1, \tau_2]$, each correct validator commits $\Omega\big((\tau_2 - \tau_1) / \delta\big)$ blocks.
    Here, the interval $[\tau_1, \tau_2]$ is post-GST if $\tau_1 \geq \GST$, and $\delta$ denotes the actual network delay.
    (See \Cref{thm:optimistic_responsiveness}.)

    \item \emph{Linear communication:} When all validators are correct, after GST, each view requires only linear message complexity.
    
    \item \emph{Low latency:} After GST, a block proposed by an honest leader is speculatively finalized in $1.5$ round trips and fully finalized in $2.5$ round trips,\footnote{Full finalization in \sysname\ requires two consecutive honest leaders.} measured from the time the leader issues the proposal.
\end{itemize}
Note that the properties discussed above, such as optimistic responsiveness and linear communication, are stated under the assumption that all (or most) validators behave correctly.
However, \sysname provides strictly stronger guarantees.
Concretely, in any post-GST period that contains views led by correct leaders, new blocks are committed within $O(\delta)$ time, and only a linear number of messages are exchanged ---without requiring all validators to be correct. \sd{It is okay for now, but this part is a bit confusing. If you can achieve a stronger property, then why do you formalize a weaker one?} \jovan{Because the weaker property is easier to define. Take a look at the proofs and it will become clear that the current formulation of the stronger one is ugly.}

Next, we capture the behavior of \sysname in the presence of a faulty leader.
To characterize the behavior in faulty views, our protocol defines $\viewduration \in O(\Delta)$ as the local timeout duration after which a view is deemed failed. It depends on the fixed upper bound $\Delta$ of the network delay, whereas $\delta$, the actual network delay, may be substantially lower. 

We now introduce the \emph{leader fault isolation} property.
It ensures that, after GST, a single Byzantine leader only causes delays of a single view timeout.
This holds as an honest leader can propose and obtain a backup QC for its proposal within the same view without depending on the next leader, thereby making progress without incurring timeout delays. 
Recall from~\Cref{sec:introduction} that \sysname\ proceeds in sequential \emph{views}, each with a designated leader.
\begin{itemize}[itemsep=2pt]

 
    \item \emph{Leader fault isolation}: After GST, 
    the \emph{duration} of a view $v+1$ by a Byzantine leader, following an honest leader of view $v$ and followed by an honest leader in view $v+2$, is bound by $\viewduration + O(\delta)$. The duration of a view $v$ is the time between the first honest validator entering $v$, and the last honest validator entering a view $>v$. 
    (See \Cref{thm:leader_fault_isolation}.)
\end{itemize}

Note that the duration of a view $v+1$ by an honest leader following another honest leader in view $v$ is bound by $O(\delta)$, i.e., the view progresses at the actual network speed and does not depend on a predetermined timeout period. This is due to the backup QC mechanism, which allows the leader of view $v+1$ to build a QC independently from the leader of view $v+2$.

\subsection{Data Structures} \label{sub:data_structures}
%
In this subsection, we present the data structures we use in \sysname.

\medskip
\mypara{Views.} 
A \emph{view} is a positive integer representing a logical round of the consensus algorithm, each associated with a distinct \emph{leader} (or \emph{proposer}) responsible for issuing a block. In \sysname, leadership rotates in a round-robin fashion, ensuring no two consecutive views share the same leader.

\medskip
\mypara{Blocks and block headers.}  
A block $B$ is formally defined as
\[
    \iblock = \langle \originalview, \payload, \payloadhash, \qc, \blockhash \rangle,
\]
with the components defined as follows:
\begin{itemize}
    \item $\originalview$ is the view in which the block is first proposed.
    \item $\payload$ is the block’s content.
    \item $\payloadhash$ is the cryptographic hash of $\payload$.
    \item $\qc$ is a quorum certificate for the parent block (i.e., the block that this block extends), a cryptographic proof that $2f + 1$ validators have endorsed that block.
    \item $\blockhash$ is the cryptographic hash of $\langle \originalview, \payloadhash, \qc \rangle$ and is used to identify the block. 
\end{itemize}

A block header $\blockheader$ is identical to a block, except with $\payload$ omitted.

\medskip
\mypara{Votes \& quorum certificates.}
A vote $\ivote$ is the basic unit of validator endorsement, and is defined as:
\[
 \ivote = \langle \view,\blockhash, \blockid, \sigma \rangle,
\]
where \blockhash is defined as above, 
with the components specified as follows:

\begin{itemize}
    \item $\view$ is the view in which the vote is cast.

   \item $\blockhash$ is the hash of the block being endorsed.
    \item $\blockid$ is the identifier of the block proposal being endorsed (we will formally define it in the ``Proposals'' paragraph).

    \item $\sigma$ is the validator’s signature over all the above fields.
\end{itemize}

A quorum certificate $\iqc$ aggregates votes on a proposal from a quorum of validators, providing a collective attestation of support.  
Formally,
\[
    \iqc = \langle \view, \blockhash, \blockid, \Sigma \rangle,
\]
where $\view$, \blockhash and $\blockid$ are as defined above, and:
\begin{itemize}
    \item $\Sigma$ is an aggregated signature over $\langle \view, \blockhash, \blockid \rangle$, combining $2f+1$ distinct validators' signatures to certify that the proposal has been endorsed by a quorum.
\end{itemize}

Only votes cast in the same view and for the same proposal can be aggregated. 
We say that a QC $\iqc$ \emph{points to} a proposal $\iprop$ if and only if $\iqc.\blockid = \iprop.\blockid$. 
Each validator in \sysname\ also maintains the most recent QC that enabled it to advance to a later view.
This QC, referred to as the validator’s \emph{local high QC} and stored in $\localhighQC$, is updated whenever the validator enters a new view due to receiving a QC (see \Cref{subsection:Pacemaker}).

\medskip
\mypara{Proposals.}
A block proposal, or simply a proposal $\iprop$, encapsulates a block and is defined as
\[
    \iprop = \langle \view, \blockid, \block, \sigma, \tc, \nec \rangle,
\]
with components defined as follows:
\begin{itemize}
    \item $\view$ denotes the view in which the proposal is made.
    
    \item $\blockid$ is the hash of $\langle \block.\blockhash, \view \rangle$.
    \item $\block$ is the block being proposed; we say that ``$\Proposal$ contains $\Proposal.\block$''.
    \item $\sigma$ is the leader’s signature over $\blockid$.
    \item $\tc$ is a timeout certificate, providing a cryptographic proof that view $\view - 1$ has failed (formally defined later in the ``Timeout certificates'' paragraph). It can also be $\bot$.
    \item $\nec$ is a no-endorsement certificate, closely related to timeout certificates (formally defined below). It can also be $\bot$.
\end{itemize}

A \emph{fresh proposal} contains a new block. 
A \emph{reproposal}, in contrast, is a proposal that embeds a previous block. It corresponds to a previously issued proposal that failed to gather sufficient support during the initial issuance. Reproposals play a crucial role in ensuring tail-forking resistance. For a fresh proposal $\iprop$, $\iprop.\view = \iprop.\block.\originalview$ holds, whereas for a reproposal, $\iprop.\view > \iprop.\block.\originalview$ holds.

\medskip
\mypara{Tips.} 
The tip $\itip$ of a proposal is the proposal without the payload of the contained block. Formally,
\[
    \itip = \langle \view, \blockid, \blockheader, \sigma, \tc, \nec \rangle.
\]
Each validator maintains a \emph{local tip} in its $\localhightip$ variable, which is crucial for enforcing \sysname's tail-forking resistance.
The local tip is updated whenever the validator processes a new proposal; further details on updating the local tip appear in \Cref{sec:Protocol_Description}.


\medskip
\mypara{High tips.}  
Given a set of tips $\{\itip_1, \itip_2, \dots, \itip_j\}$, the \emph{high tip} of the set is defined as the $\itip_i$ with the largest view number, i.e., the one maximizing $\itip_i.\view$ for $1 \leq i \leq j$.
If multiple distinct tips attain the same maximal view, we break ties by the largest
$\itip.\blockheader.\qc.\view$. This corresponds to the  highest-QC rule in pipelined consensus algorithms. If there are still tied tips, the leader is free to choose one of them.

\medskip
\mypara{Timeout messages.} 
Validators expect regular progress by moving to higher views. A validator in view $v$ issues a timeout message $\itimeout$ for $v$ if a pre-specified amount of time has passed since entering $v$.
A timeout message carries either a local tip with its corresponding tip vote, or a QC in the $\qc$ field, but not both. If the validator holds a QC $\iqc$ with $\iqc.\view \geq \tip.\view$, it includes only the QC, since it is more relevant. The message also includes a certificate (QC or TC) from the previous view, which allows validators in a lower view to advance to the same view.
Formally, a timeout message is defined as
\[
 \itimeout = \langle \view, \tip, \hightipvote, \qc, \lastcer, \sigma \rangle,
\]
with components defined as follows:
\begin{itemize}
    \item $\view$ denotes the view for which the timeout message is issued.
    \item $\tip$ is the tip of the latest \emph{fresh} proposal (by view number) that the validator has voted for, i.e., the validator's local tip; can be $\bot$.
    \item $\hightipvote$ represents the validator’s vote for the included tip; can be $\bot$.
    \item $\qc$ is the last QC (by view number) observed; can be $\bot$.
    \item $\lastcer$ is the last certificate (by view number) observed by the validator, i.e., a certificate from view $\view - 1$.
    \item $\sigma$ is a digital signature on the tuple $\langle \view, \tipview, \qcview \rangle$, where if the local tip's view is lower or equal to the view of the latest QC, then $\qcview$ is the view of the latest QC and $\tipview = \bot$. Otherwise, $\qcview= \tip.\blockheader.\qc.\view$ and $\tipview=\tip.\view$.
\end{itemize}


\medskip
\mypara{Timeout certificates.}
A \emph{timeout certificate} $\itc$ is created from a quorum of $2f+1$ timeout messages for the same view. It represents agreement on a timeout condition and determines whether the leader must repropose a previous block, and if so, which one.  
To determine this, only the highest tip or QC among the timeout messages is relevant, so the TC embeds just one: whichever has the higher view (ties favor the QC).
To prove this is indeed the highest tip (resp. highest QC) among all $2f+1$ messages, the TC contains the view numbers of the local tip (resp. local high QC) from each contributing timeout message via the \tipsviews and \qcsviews vectors, allowing validators to verify the correctness of the selection.
Formally, a TC is defined as
\[
    \itc = \langle \view, \tipsviews, \hightip, \qcsviews, \highQC, \Sigma \rangle,
\]
with components defined as follows:
\begin{itemize}
    \item $\view$ is the view for which the TC is issued.
    \item $\tipsviews$ is the vector of validator-view tuples from each of the $2f+1$ timeout messages. Specifically, $\tipsviews$ is a vector of $(i, v)$ tuples for each validator $i$ whose timeout message with local tip from view $v$ contributes to the TC.
   
    \item $\hightip$ is the tip with the highest view among those received in the timeout messages; can be $\bot$. 
    \item $\qcsviews$ is the vector of validator-view pairs from each of the $2f+1$ timeout messages. 
    It is a vector of $(i, v)$ tuples for each validator $i$ that submitted a timeout message $\itimeout$, where $v$ is defined as 
    \begin{itemize}
        \item $\itimeout.\qc.\view$ if ${\itimeout.\qc} \neq \bot$, or
        \item $\itimeout.\tip.\blockheader.\qc.\view$ otherwise.
    \end{itemize}
    \item $\highQC$ is the most recent QC among those received in the timeout messages; can be $\bot$.
    \item $\Sigma$ is the aggregated signature over the timeout messages, certifying the correctness of the $\view$, $\tipsviews$ and $\qcsviews$ fields.
\end{itemize}

\medskip
\mypara{Block recovery \& no-endorsement certificates.}
If a leader needs to repropose a high tip, but does not have the full block proposal corresponding to that tip, it initiates \emph{block recovery}. \sysname guarantees that the leader is able to either obtain the requested proposal, or construct a \emph{no-endorsement certificate} (NEC), which is a message signed by $2f+1$ validators attesting that they  did not vote for the proposal. Crucially, in the latter case, no QC could have been formed for it. Consequently, the block can be safely discarded, and the leader may issue a fresh proposal extending the QC of the requested high tip, since tail-forking resistance is only relevant for blocks supported by at least $f+1$ honest validators.
Formally, we define an NEC $\inec$ as:
\[
    \inec = \langle \view, \hightipQCview, \Sigma \rangle,
\]
with components defined as follows: 
\begin{itemize}
    \item $\view$ is the view for which the NEC is issued.
    \item $\Sigma$ is the aggregate signature of $2f+1$ validators on $\langle \view, \hightipQCview \rangle$
    \item $\hightipQCview$ is the view of the QC of the requested high tip
\end{itemize}
We introduce the data structures and messages that are used in block recovery in \Cref{sub:recovery}.


\medskip
\mypara{Terminology.}
A leader \emph{equivocates} if it issues more than one distinct proposal in a view; these are  \emph{equivocated} proposals. 
Moreover, we say that a block $B_1$ is the \emph{parent} of a block $B_2$ if and only if there exists a proposal $p$ such that
\[
p.\block = B_1 \quad \text{and} \quad p.\blockid = B_2.\qc.\blockid.
\]
Since the $\blockid$ field of a proposal is determined solely by the view and the block of the proposal, any two proposals with the same $\blockid$ must contain the same block. Consequently, each block has exactly one parent.
We further say that a block $B_2$ \emph{extends} a block $B_1$ if and only if $B_1 = B_2$ or $B_2$ is a descendant of $B_1$ under the parent relation.
A block $B_2$ \emph{strictly extends} $B_1$ if and only if $B_2$ extends $B_1$ and $B_2 \neq B_1$. 

We now extend this terminology to proposals.
Concretely, we say that a proposal $p_1$ is a \emph{parent} of a proposal $p_2$ if and only if $p_1.\block$ is the parent of $p_2.\block$.
Since multiple proposals may contain the same block, a proposal $p_2$ may have more than one parent; recall that a proposal carries additional fields beyond $\blockid$.
Then, we say that a proposal $p_2$ \emph{extends} a proposal $p_1$ if and only if block $p_2.\block$ extends block $p_1.\block$.
Finally, a proposal $p_2$ \emph{strictly extends} a proposal $p_1$ if and only if block $p_2.\block$ strictly extends block $p_1.\block$.

\section{\sysname} 
\label{sec:Protocol_Description}
We begin with an overview of \sysname, and then present its two logical modes of operation, the \emph{happy} and \emph{unhappy} path in detail in \Cref{sub:happy_path,sub:unhappy_path}. 
We formally describe the protocol in \Cref{Algorithm:Consensus-Execution_1,Algorithm:Consensus-Execution_2} with helper functions in \Cref{Algorithm: Consensus-Execution-2,Algorithm:ValidationPredicates,Algorithm:Utilities}. 
In~\Cref{subsection:Pacemaker} and \Cref{Algorithm:Pacemaker}, we describe the protocol's pacemaker, a tool critical for liveness. 
In~\Cref{sub:recovery} and \Cref{Algorithm:Recovery} we describe block recovery.
Our algorithms use the keyword {\bf require} \verb|<condition>| as a shortcut for \emph{if not} \verb|<condition>| \emph{then return;} (or \emph{return $\false$} when a return value is required).

\medskip
\mypara{Protocol overview.} 
\sysname protocol is a leader-based blockchain (aka state machine replication) consensus protocol.
On the happy path, the protocol operates normally, without failures or network instabilities, with a new block being added to the chain in every view.
This process encompasses a leader collecting votes and constructing a quorum certificate for the previous proposal, creating and disseminating a new proposal that extends the previous one, and validators voting to endorse the newly issued proposal.

Validators expect regular progress. When no valid proposal arrives within the timeout period, for instance due to network instabilities or a Byzantine leader, validators suspect a failure and broadcast timeout messages to initiate the unhappy path.
Timeout messages are then aggregated by each validator into a timeout certificate, which serves two purposes: First, it proves that the view has timed out and it is safe to advance to the next view. Second, the information contained in timeout messages and aggregated in TCs prescribe a specific recovery path.
We distinguish two recovery types:
\begin{itemize}
    \item \emph{Standard recovery} involves the leader reproposing the highest tip (also called \emph{high tip}) reported in the collected timeout messages. If the leader already has the block $B$ corresponding to the tip, reproposal is straightforward. Otherwise, the leader first queries for the block $B$ using a sub-protocol that we call \emph{block recovery}.
    This mechanism either (i) recovers~$B$, allowing the leader to repropose it, or (ii) returns a proof of its unavailability in the form of an NEC, certifying that a majority of correct validators could not obtain $B$. In the latter case, the leader proposes a fresh block with the above NEC, and extends the block referenced by the QC included in the high tip. Looking ahead, upon successful recovery of $B$, reproposing $B$ is critical to ensure tail-forking resistance.
    \item \emph{Fast recovery} involves the leader obtaining or constructing a QC either from the previous view (allowing a happy-path proposal directly extending the QC), or such that the QC proves that the high tip proposal does not need to be reproposed. The leader can directly make a fresh proposal extending the block that obtained a QC.
\end{itemize}
Fast recovery streamlines recovery in the most common failure scenarios, but it is merely an optimization and not strictly necessary for \sysname's security properties.

\subsection{Happy Path} 
\label{sub:happy_path}
\begin{figure}[t!]
    \centering
    \includegraphics[width=0.8\textwidth]{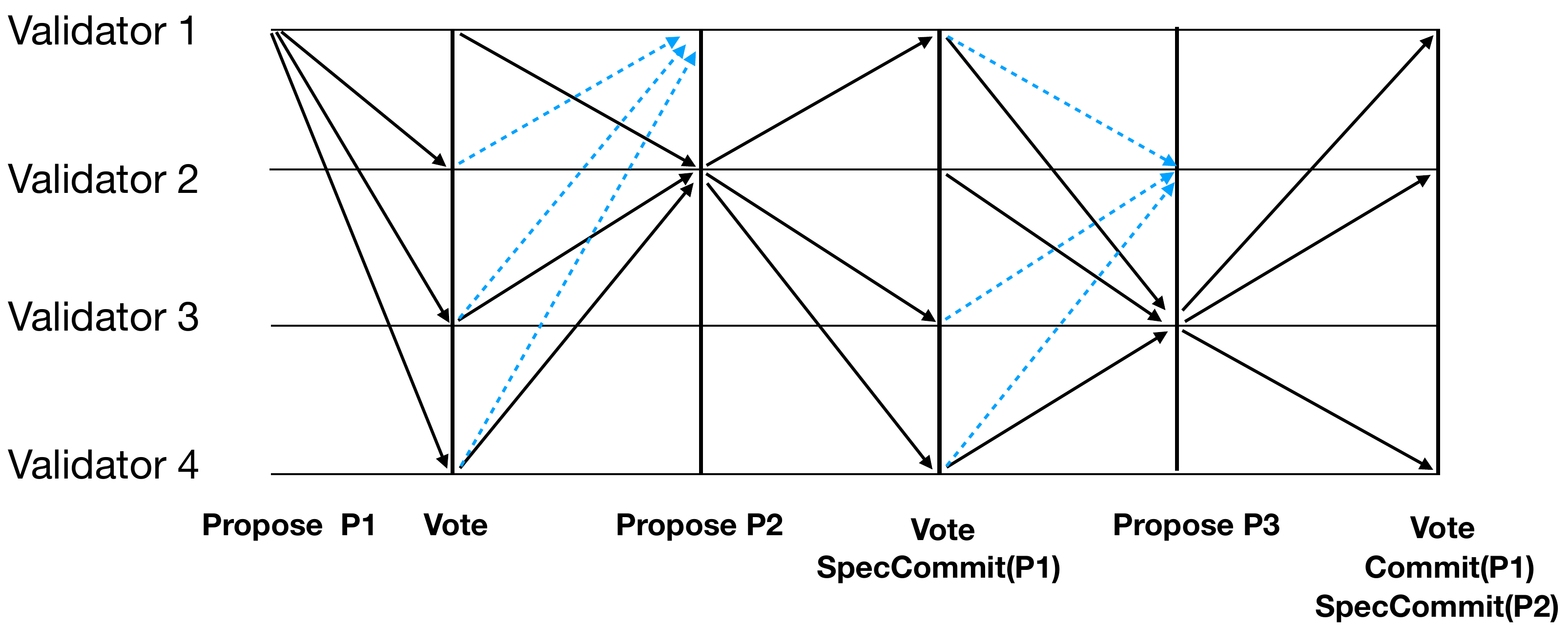}
    \caption{Happy path workflow in \sysname. Note that validator sends votes to both the current (dashed, \textcolor{Cerulean}{blue}) and the next leader (solid, \textcolor{black}{black}). Each validator speculatively commits the parent block of a proposal after observing a QC from the previous round for the parent, and commits the grandparent block if the parent and grandparent have QCs from consecutive views. For simplicity, we omit the dissemination of the backup QC, as it is not required in this example. }
    \label{fig:MonadBFT-HappyPath}
\end{figure}

We illustrate the happy path workflow of \sysname\ in~\Cref{fig:MonadBFT-HappyPath}.
All proposals in the happy path are \emph{fresh}, and directly extend the block from the previous view. Therefore, for any such proposal $p$, we have
\[
    p.\view = p.\block.\originalview = p.\block.\qc.\view + 1;~~ 
    p.\tc = p.\nec = \bot.
\]

\mypara{Description of the happy path.}
Consider a view $v$, led by leader $L_v$. $L_v$ builds or receives a QC from view $v-1$, forms a proposal $p_v$ using this QC, and broadcasts it to all validators.

Upon receiving proposal $p_v$, each validator $i$ verifies its validity using the $\SafetyCheck()$ function in ~\Cref{Algorithm:ValidationPredicates}.
Specifically, validator $i$ checks that the proposal $p_v$ was issued by the designated leader $L_v$ of view $v$. If the QC included in $p_v$ originates from the immediately preceding view $v-1$, then it is a happy path proposal, and there should not be a TC. 
Validator $i$ continues to process the proposal $p_v$ by retrieving the $\mathit{parent}$ (the proposal referenced by $p_v.\block.\qc$) and the $\mathit{grandparent}$ (the proposal referenced by $\mathit{parent}.\block.\qc$) of proposal $p_v$, and then:
\begin{enumerate}
    \item It transitions to view $v$.
    \item It (irrevocably) finalizes the  $\mathit{grandparent}$ block
    if the QCs pointing to $\mathit{grandparent}$ and $\mathit{parent}$ originate from consecutive views, i.e., $p_v.\block.\qc.\view = \mathit{parent}.\block.\qc.\view + 1$. 
    
    \item It speculatively finalizes the $\mathit{parent}$ if $\mathit{parent}$ is a fresh proposal (since tail-forking resistance only holds for fresh proposals).
    
    \item Next, validator $i$ casts its vote for proposal $p_v$ and sends the vote to \emph{both} $L_v$ and to the next leader (i.e., the leader of view $v + 1$). Additionally, validator $i$ updates its local tip to the tip of $p_v$.
\end{enumerate}
Upon constructing a QC from view $v$, the next leader $L_{v+1}$ continues along the happy path.

\medskip
\mypara{Genesis.}
The blockchain is initialized with a \emph{genesis tip} and a \emph{genesis QC} for view 0, which are valid without a signature, and are used to initialize the $\localhightip$ and $\localhighQC$ variables in each validator. 
When the protocol starts, all validators call $\Mref{Algorithm:Pacemaker}.\IncrementView(\localhighQC)$ to enter the first view, and the leader of that view additionally invokes $\Mref{Algorithm: Consensus-Execution-2}.\NextLeaderReceivesQC(\localhighQC)$ to issue its proposal.
For brevity, we omit the details on genesis and initialization of variables from the algorithms.

\begin{algorithm}[p]
\caption{\sysname: Pseudocode for validator $i$ [part 1 of 2]}
 \label{Algorithm:Consensus-Execution_1}
\newcounter{lastAlgoLine}           
\small
    \LinesNumbered

\commentline{we highlight the steps needed for view synchronization (the Pacemaker) in \pmh{blue}} \\
\textbf{Global variables:} \localhightip, \highestvotedview, \lastTC, \localhighQC, $\itimeouts$, $\ivotes$, $\itimeoutvotes$, $\pmh{\curView}$\\


\one
\Upon{\textup{receiving a proposal $\iprop$ from validator $\ileader$ with $\iprop.\view \geq \pmh{\curView}$}} {
    \label{Alg:Consensus-Execution-Receipt-of-Proposal} 
    
        \Require $\ileader$ is the leader of view $\iprop.\view$ and $\Mref{Algorithm:ValidationPredicates}.\SafetyCheck(\iprop)$; \commentinline{if proposal is invalid, reject it} \\

        \textbf{if} $\iprop.\tc \neq \bot$ \textbf{then} \pmh{$\Mref{Algorithm:Pacemaker}.\IncrementView(\iprop.\tc)$}; \textbf{else} \pmh{$\Mref{Algorithm:Pacemaker}.\IncrementView(\iprop.\block.\qc)$}; \\ \label{line:increment_view_proposal} 

        \If{$\iprop.\block.\qc.\view = \iprop.\view - 1$}{ 
            \commentline{proposal contains the QC from the preceding view} \\
            \If{\textup{$i$ is the leader of view $\iprop.\block.\qc.\view$}} {
                $\Mref{Algorithm: Consensus-Execution-2}.\CurrentLeaderReceivesQC(\iprop.\block.\qc)$; \commentinline{disseminate the backup QC} \label{line:disseminate_backup_qc_proposal} 
            }
        
            \textbf{send} $\iprop.\block.\qc$ to the leader of view $\iprop.\block.\qc.\view$, if not already sent; \label{line:forward_qc_proposal_current_leader}
        }
        
        $\Mref{Algorithm: Consensus-Execution-2}.\CommitSpecCommit(\iprop.\block.\qc)$; \commentinline{attempt commit and speculatively commit based on QC} \label{line:commit_spec_commit_proposal}

        \one
        
        \If{\textup{$\iprop.\view > \mathsf{highest\_voted\_view}$}}{ \label{Alg:Algorithm: Consensus-Execution:safety_check}

            $\localhightip \gets \Mref{Algorithm:Utilities}.\GetTip(\iprop)$;\label{Alg:update_tip} \commentinline{update the local tip} \\
            
            $\ivote \gets \Mref{Algorithm:Utilities}.\CreateVote(\curView, \localhightip.\blockheader)$; \commentinline{create a vote for the proposal} \\ \label{line:create_vote_normal}
            
            \textbf{send} $\ivote$ to the leaders of views $\pmh{\curView}$ and $\pmh{\curView} +1$; \label{line:send_vote_normal}

            $\highestvotedview \gets \pmh{\curView}$; \commentinline{update the highest view in which a vote has been issued} \label{line:highest_voted_view_update_proposal}
        }
    }



\Upon{\textup{receiving a vote $\ivote$ with $\ivote.\view \geq \pmh{\curView}$}}{ \label{Alg:consensus:Leader_happy_path}
    \Require $\Mref{Algorithm:ValidationPredicates}.\ValidVote(\ivote)$;

    \Require validator $i$ is the leader of view $\ivote.\view$ or view $\ivote.\view + 1$; \commentinline{only leaders handle votes}
    
    $\iqc \gets \Mref{Algorithm: Consensus-Execution-2}.\HandleVote(\ivote)$; \commentinline{try to form a QC from $2f{+}1$ votes}\\ \label{line:form_qc}
  
    \Require $\iqc \ne \bot$; \commentinline{do nothing if not collected enough votes yet}

    $\pmh{\Mref{Algorithm:Pacemaker}.\IncrementView(\iqc)}$; \\ \label{line:increment_view_form_qc}
    $\Mref{Algorithm: Consensus-Execution-2}.\CommitSpecCommit(\iqc)$; \\
  
    \If{\textup{validator $i$ is the leader of view $\iqc.\view$}}{
        $\Mref{Algorithm: Consensus-Execution-2}.\CurrentLeaderReceivesQC(\iqc)$; \label{line:disseminate_backup_qc}
    } \Else {
        $\Mref{Algorithm: Consensus-Execution-2}.\NextLeaderReceivesQC(\iqc)$; \label{line:propose_form_qc}
    }
}

\Upon{\textup{receiving a QC $\iqc$ from a validator $j$ with $\iqc.\view \geq \pmh{\curView}$}} {
\label{line:receive_backup_qc}
        \Require $\Mref{Algorithm:ValidationPredicates}.\ValidQC(\iqc)$; \\

        \If{\textup{$j$ is the leader of view $\iqc.\view$}} {
        $\pmh{\Mref{Algorithm:Pacemaker}.\IncrementView(\iqc)}$; \commentinline{enter the next view} \\ \label{line:increment_view_receive_qc}

        $\Mref{Algorithm: Consensus-Execution-2}.\CommitSpecCommit(\iqc)$; \\

        $\Mref{Algorithm: Consensus-Execution-2}.\NonLeaderReceivesQC(\iqc)$; \commentinline{process a backup QC} \label{line:handle_backup_qc_receive_qc}
        }

        \If{\textup{$i$ is the leader of view $\iqc.\view$}} {
                \Mref{Algorithm: Consensus-Execution-2}.$\CurrentLeaderReceivesQC(\iqc)$; \commentinline{spread the QC} \label{line:disseminate_backup_qc_2}
        }

        \If{\textup{$i$ is the leader of view $\iqc.\view + 1$}} {
                $\pmh{\Mref{Algorithm:Pacemaker}.\IncrementView(\iqc)}$; \commentinline{enter the next view} \\ \label{line:increment_view_receive_qc_next_leader}

                    $\Mref{Algorithm: Consensus-Execution-2}.\CommitSpecCommit(\iqc)$; \\

                \Mref{Algorithm: Consensus-Execution-2}.$\NextLeaderReceivesQC(\iqc)$; \commentinline{utilize the received QC to issue a new proposal} \label{line:propose_receive_qc}
        }
    }
\setcounter{lastAlgoLine}{\value{AlgoLine}}

\Upon{\textup{\pmh{timing out in view $\curView$}}}{ 
\label{line:timeout_from_view}


   $\highestvotedview \gets \pmh{\curView}$; \commentinline{this prevents voting for proposals of the current view} \label{line:update_highest_voted_view_timeout}

   $\itimeout \gets \Mref{Algorithm:Utilities}.\CreateTimeoutMsg(\pmh{\curView})$; \\
    \textbf{broadcast} $\itimeout$;  \label{line:broadcast_timeout_message}
}
\end{algorithm}

\begin{algorithm}[htbp]
\caption{\sysname: Pseudocode for validator $i$ [part 2 of 2]}
\label{Algorithm:Consensus-Execution_2}
\small
    \LinesNumbered




    
  
            

        

\Upon{\textup{receiving a timeout message $\itimeout$ with $\itimeout.\view \geq \pmh{\curView}$}} {
\label{line:receive_timeout_message}
        \Require $\Mref{Algorithm:ValidationPredicates}.\ValidTimeoutMessage(\itimeout)$; \\

            \If{\textup{$\itimeout.\lastcer$ is a TC \textbf{and} $i$ has not yet broadcast a timeout message or TC for $\itimeout.\lastcer.\view$}} {
                \textbf{broadcast} $\itimeout.\lastcer$; \label{line:broadcast_tc_timeout}
            }

        $\pmh{\Mref{Algorithm:Pacemaker}.\IncrementView(\itimeout.\lastcer)}$; \commentinline{set $\curView$ to $\itimeout.\lastcer.\view + 1$} \\ \label{line:increment_view_receive_qc_timeout_message}
        \If{\textup{$i$ is the leader of view $\itimeout.\lastcer.\view + 1$ \textbf{and} no proposal has been issued for that view 
        }} {
            \If{\textup{$\itimeout.\lastcer$ is a QC}} {
                $\Mref{Algorithm: Consensus-Execution-2}.\NextLeaderReceivesQC(\itimeout.\lastcer)$; \commentinline{utilize the QC to issue a new proposal} \label{line:propose_receive_qc_timeout_message}
            } \Else {
                $\Mref{Algorithm: Consensus-Execution-2}.\NextLeaderReceivesTC(\itimeout.\lastcer)$; \commentinline{utilize the TC to issue a new proposal}
            }
        }

        \If{\textup{$\itimeout.\lastcer$ is a QC}} {
            \If{\textup{$i$ is the leader of view $\itimeout.\lastcer.\view$}} {
                $\Mref{Algorithm: Consensus-Execution-2}.\CurrentLeaderReceivesQC(\itimeout.\lastcer)$; \label{line:disseminate_backup_qc_timeout}
            }
        
            \textbf{send} $\itimeout.\lastcer$ to the leader of view $\itimeout.\lastcer.\view$; \\ \label{line:forward_qc_timeout_message_current_leader}
            \textbf{send} $\itimeout.\lastcer$ to the leader of view $\itimeout.\lastcer.\view + 1$; \label{line:forward_qc_timeout_message}

            $\Mref{Algorithm: Consensus-Execution-2}.\CommitSpecCommit(\itimeout.\lastcer)$; \\

        }

        \one
        $\itc, \iqc \gets \Mref{Algorithm:Pacemaker}.\pmh{\HandleTimeout(\itimeout)}$; \commentinline{returns a TC or a QC, or neither}\\ \label{line:form_tc_qc}
        

        \If{$\iqc \neq \bot$} {
            $\pmh{\Mref{Algorithm:Pacemaker}.\IncrementView(\iqc)}$; \commentinline{further timeout messages from same view will be ignored after this} \\ \label{line:increment_view_form_qc_timeout_message}

            $\Mref{Algorithm: Consensus-Execution-2}.\CommitSpecCommit(\iqc)$; \\

            \If{\textup{$i$ is the leader of view $\iqc.\view + 1$}} {
                $\Mref{Algorithm: Consensus-Execution-2}.\NextLeaderReceivesQC(\iqc)$; \commentinline{utilize the constructed QC to issue a new proposal} \label{line:propose_form_qc_timeout_message}
            }
        } \If{$\itc \neq \bot$} {
            $\pmh{\Mref{Algorithm:Pacemaker}.\IncrementView(\itc)}$; \commentinline{further timeout messages from same view will be ignored after this} \\ \label{line:increment_view_form_tc}
            \If{\textup{$i$ is the leader of view $\itc.\view + 1$}} {
                $\Mref{Algorithm: Consensus-Execution-2}.\NextLeaderReceivesTC(\itc)$; \commentinline{utilize the constructed TC to issue a new proposal}
            }
        }
    }

\one 
\Upon{\textup{receiving a TC $\itc$ with $\itc.\view \geq \pmh{\curView}$}} {
\label{line:receive_tc}
    \Require $\Mref{Algorithm:ValidationPredicates}.\ValidTC(\itc)$; \\
    $\pmh{\Mref{Algorithm:Pacemaker}.\IncrementView(\itc)}$; \\ \label{line:increment_view_receive_tc}
    \If{\textup{no TC and no timeout message broadcast for view $\itc.\view$}} {
        \textbf{broadcast} $\itc$; \label{line:broadcast_tc_receive_tc}
    }

    \If{\textup{$i$ is the leader of view $\itc.\view + 1$}} {
        $\Mref{Algorithm: Consensus-Execution-2}.\NextLeaderReceivesTC(\itc)$; \commentinline{utilize the received TC to issue a new proposal}
    }
}

\one 

\Upon{\textup{receiving a proposal request $\msg{\ProposalRequest, \itc}$ from validator $L$ with $\itc.\view+1 \geq \pmh{\curView}$}} {
\label{line:receive_proposal_request}
    \Require $\Mref{Algorithm:ValidationPredicates}.\ValidTC(\itc)$ \textbf{and} $\itc.\hightip \neq \bot$ \textbf{and} $L$ is the leader of view $\itc.\view+1$; \\
    $\pmh{\Mref{Algorithm:Pacemaker}.\IncrementView(\itc)}$; \\ \label{line:increment_view_proposal_request}
    \Mref{Algorithm:Recovery}.$\ProcessProposalRequest(\itc)$;
}

\Upon{\textup{receiving an NE request $\msg{\NERequest, \itc}$ from validator $L$ with $\itc.\view+1 \geq \pmh{\curView}$}} {
\label{line:receive_ne_request}
    \Require $\Mref{Algorithm:ValidationPredicates}.\ValidTC(\itc)$ \textbf{and} $\itc.\hightip \neq \bot$ \textbf{and} $L$ is the leader of view $\itc.\view+1$; \\
    $\pmh{\Mref{Algorithm:Pacemaker}.\IncrementView(\itc)}$; \\ \label{line:increment_view_ne_request}
    \Mref{Algorithm:Recovery}.$\ProcessNERequest(\itc)$;
}


\setcounter{lastAlgoLine}{\value{AlgoLine}}
\end{algorithm}

\begin{algorithm} [htbp]
    \DontPrintSemicolon
    \caption{\sysname: Pacemaker}
    \label{Algorithm:Pacemaker}

    \SetKwFunction{UpdateQC}{\sc UpdateQC}

    \Fn{$\IncrementView(\icer)$}{ \label{alg:Increment_view}
        \Require $\icer.\view \geq \curView$; \\
        stop local timer for $\curView$;\\
        $\curView \gets \icer.\view + 1 $;\\ \label{line:enter_view}
        start local timer for $\curView$;

        \textbf{if} $\icer$ is a QC \textbf{then} $\localhighQC \gets \icer$; \textbf{else} $\lastTC \gets \icer$;

    }

\Fn(\commentif{precondition guaranteed by caller: $\textup{\Mref{Algorithm:ValidationPredicates}}.\ValidTimeoutMessage(\itimeout)$}){$\HandleTimeout(\itimeout)$}{ 


    $\iview \gets \itimeout.\view$; \\
    \If{$\itimeout.\tip \neq \bot$}{
        $\iqc \gets \Mref{Algorithm: Consensus-Execution-2}.\HandleTimeoutVote(\itimeout.\hightipvote)$; \commentinline{attempt fast recovery by building QC from tip votes carried by timeout messages} \\
        \If{$\iqc \neq \bot$} {
            \Return $\bot, \iqc$; \commentinline{QC is from view $\itimeout.\view$}
        }
    }

    $\itimeouts[\iview] \gets \itimeouts[\iview] \cup \{ \itimeout \}$; \\
    \If{\textup{$|\itimeouts[\iview] | = f + 1$}}{
        \textbf{trigger} the timeout event for $\itimeout.\view$ $(= \curView)$, if not already triggered; \commentinline{Bracha mechanism} \\ \label{line:premature_timeout}
    }

    \If{$|\itimeouts[\iview]| < 2f + 1$} {
        \Return $\bot, \bot$; \commentinline{not enough timeouts yet to build a TC}
    }

    \commentline{build a TC using timeout messages; only done once per view, as \IncrementView is called on }\\
    \commentline{the resulting TC, and this function is only called for messages from $\curView$}\\
    $\itips, \iqcs \gets \text{maps from validators to views, initialized to map all validators to $0$}$; \\
    \For{\textup{\textbf{each}} $\imsg \in \itimeouts[\iview]$}{
        \If{$\imsg.\tip \neq \bot$}{
            $\itips[\text{the sender of } \imsg] \gets \imsg.\tip.\view$; \\
            $\iqcs[\text{the sender of } \imsg] \gets \imsg.\tip.\blockheader.\qc.\view$; 
        } \Else {
            $\iqcs[\text{the sender of } \imsg] \gets \imsg.\qc.\view$;
        }
    }
    \commentline{compare maximum reported tip view and maximum reported qc view}\\
    \If(\commentif{TC created with high tip}){$\max(\itips) > \max(\iqcs)$}{
        $\iindex \gets$ a validator such that 
$\itips[\iindex] = \max(\itips)$ 
and no validator $j$ satisfies 
$\itips[j] = \max(\itips)$ 
and \hphantom{$\iindex \gets$}$\iqcs[j] > \iqcs[\iindex]$; 
\commentinline{select validator with highest tip and highest QC among ties} \\

        $\imsg \gets \text{the timeout message from validator $\iindex$ in $\itimeouts[\iview]$}$; \\
        \Return\  \Mref{Algorithm:Utilities}.\CreateTC$(\iview, \itips, \imsg.\tip, \iqcs, \bot, \itimeouts[\iview]), \bot$;

    } \Else(\commentif{TC created with high QC}) {
        $\iindex \gets$ a validator such that 
$\iqcs[\iindex] = \max(\iqcs)$; \\
        $\imsg \gets \text{the timeout message from validator $\iindex$ in $\itimeouts[\iview]$}$; \\
        \Return\  \Mref{Algorithm:Utilities}.\CreateTC$(\iview, \itips, \bot, \iqcs, \imsg.\qc, \itimeouts[\iview]), \bot$;
    }

}
   

    
\end{algorithm}

\begin{algorithm}[p]
\caption{\sysname: Helper Functions [part 1 of 3]}
 \label{Algorithm: Consensus-Execution-2}
\LinesNumbered

\Fn{$\IsFreshProposal(\iprop)$}{
    \Return $\iprop.\block.\qc.\view = \iprop.\view - 1$ \textbf{or} $\iprop.\nec \neq \bot$ \textbf{or} ($\iprop.\tc \neq \bot$ \textbf{and} $\iprop.\tc.\highQC \neq \bot$);

}

    \Fn(\commentif{precondition guaranteed by caller: $\textup{\Mref{Algorithm:ValidationPredicates}}.\ValidQC(\iqc)$}){$\CommitSpecCommit(\iqc)$}{
        $\iparent \gets \text{a proposal to which $\iqc$ points}$; \commentinline{$\iparent.\blockid = \iqc.\blockid$} \\
        \If{$\IsFreshProposal(\iparent)$}{
            \textbf{speculatively commit} $\iparent.\block$; 
        }

        $\igparent \gets \text{a proposal to which $\iparent.\block.\qc$ points}$; \\
        \If{$\iqc.\view = \iparent.\block.\qc.\view + 1$} {
            \commentline{if $\iparent$ and $\igparent$ are from consecutive views, commit $\igparent$} \\
            \textbf{commit} $\igparent.\block$;
            \label{line:commit_grandparent_1}
        }
    }

\Fn(\commentif{precondition guaranteed by caller: $\textup{\Mref{Algorithm:ValidationPredicates}}.\ValidQC(\iqc)$}){$\NextLeaderReceivesQC(\iqc)$\label{Func:NextLeaderReceivesQC}}{
    $\iprop \gets \Mref{Algorithm:Utilities}.\CreatePrepareMsg(\curView, \bot, \iqc, \bot, \bot)$; \commentinline{produce a fresh proposal}\\
    
    \textbf{broadcast} $\iprop$; 
}

\Fn(\commentif{precondition guaranteed by caller: $\textup{\Mref{Algorithm:ValidationPredicates}}.\ValidTC(\itc)$}){$\NextLeaderReceivesTC(\itc)$}{    
    
        \If{\textup{$\itc.\highQC \neq \bot$}} { \label{alg:consensus:createproposalfromhighqc}
            \commentline{produce a fresh proposal via the high QC}
            $\iprop \gets \Mref{Algorithm:Utilities}.\CreatePrepareMsg(\tt \curView, \bot, \itc.\highQC, \itc,\bot)$;
        }
        
        \ElseIf{\textup{the block $\iblock$ with $\iblock.\blockhash = \itc.\hightip.\blockheader.\blockhash$ is available}} { \label{alg:consensus:payload_present}
            $\iprop \gets \Mref{Algorithm:Utilities}.\CreatePrepareMsg(\tt \curView, \iblock, \bot, \itc,\bot)$; \commentinline{produce a reproposal of the high tip} \\
                
        }\Else{ 
            $(\iprop_\textit{old}, \inec) \gets$ $\Mref{Algorithm:Recovery}.\Recover(\itc)$;\label{alg:consensus:payloadli_request} \commentinline{send request to recover the missing high tip block}\label{alg:consensus:sendrequest}\\
            
            \If{$\inec \neq \bot$}{
                \label{alg:consensus:NEC-Formed}

                \commentline{the block could not be retrieved — produce a fresh proposal}
                
                $\iprop \gets \Mref{Algorithm:Utilities}.\CreatePrepareMsg(\tt \curView, \bot, \itc.\hightip.\blockheader.\qc, \itc, \mathsf{nec})$; \\
        
            } \Else {
                \commentline{the block could be retrieved — repropose it}
    
                $\iprop \gets \Mref{Algorithm:Utilities}.\CreatePrepareMsg(\tt \curView, \iprop_{\textit{old}}.\block, \bot, \itc, \bot)$; \label{alg:consensus:Reproposal} \commentinline{produce a reproposal of the high tip} \\
            }
    }
    \textbf{broadcast} $\iprop$; 
}
\Fn(\commentif{precondition guaranteed by caller: $\textup{\Mref{Algorithm:ValidationPredicates}}.\ValidVote(\ivote)$}){$\HandleVote(\ivote)$}{
    
    $\iview \gets \ivote.\view$; $\iproposalid \gets \ivote.\blockid$; \\
    $\ivotes[\iproposalid] \gets \ivotes[\iproposalid] \cup \{ \ivote \}$; \commentinline{individual votes}\\

    \If{$|\ivotes[\iproposalid]| = 2f + 1$} {
        $\iqc \gets \Mref{Algorithm:Utilities}.\CreateQC(\ivotes[\iproposalid])$; \\
        \Return $\iqc$;
    }
    \Return $\bot$;
    }

\Fn(\commentif{precondition guaranteed by caller: $\textup{\Mref{Algorithm:ValidationPredicates}}.\ValidVote(\ivote)$}){$\HandleTimeoutVote(\ivote)$}{
    \commentline{identical to \HandleVote, but exclusively for votes embedded in timeout messages}\\
    $\iview \gets \ivote.\view$; $\iproposalid \gets \ivote.\blockid$; \\
    $\itimeoutvotes[\iproposalid] \gets \itimeoutvotes[\iproposalid] \cup \{ \ivote \}$; \commentinline{votes embedded in timeout messages}\\

    \If{$|\itimeoutvotes[\iproposalid]| = 2f + 1$} {
        $\iqc \gets \Mref{Algorithm:Utilities}.\CreateQC(\itimeoutvotes[\iproposalid])$; \\
        \Return $\iqc$;
    }
    \Return $\bot$;
    }

    \Fn{$\CurrentLeaderReceivesQC(\iqc)$}{
    \label{Func:DisseminateBackupQC}
    \textbf{broadcast} $\iqc$, if not already broadcast;
}

    \Fn{$\NonLeaderReceivesQC(\iqc)$}{
        \textbf{send} $\iqc$ to the leader of view $\iqc.\view + 1$, if not already sent;
        
        

       
    }

\end{algorithm}

\begin{algorithm}[p]
\DontPrintSemicolon
\caption{\sysname: Helper Functions [part 2 of 3, continues on next page]}
 \label{Algorithm:ValidationPredicates}


\Fn{$\ValidBlock(\iblock)$} {
    \Return $\iblock.\payloadhash = \Hash(\iblock.\payload)$ \textbf{and} $\ValidBlockHeader(\text{the block header of $\iblock$})$; \\
}

\one

\Fn{$\ValidBlockHeader(\iblockheader)$} {
    \Return $\ValidQC(\iblockheader.\qc)$ \textbf{and} $\iblockheader.\blockhash = \Hash(\iblockheader.\originalview, \iblockheader.\payloadhash, \iblockheader.\qc)$;
}

\one

\Fn{$\ValidVote(\ivote)$}{
\Require $\ivote.\blockid = \Hash(\ivote.\blockhash, \ivote.\view)$; \\
\Return $\ivote.\sigma$ \textup{is a valid signature on $\tupled{\ivote.\view,\ivote.\blockhash,\ivote.\blockid}$};
}

\one

\Fn{\ValidQC$(\iqc)$}{
\Require $\iqc.\blockid = \Hash(\iqc.\blockhash, \iqc.\view)$; \\
\Return $\iqc.\Sigma$ \textup{aggregates $2f+1$ valid signatures on $\tupled{\iqc.\view,\iqc.\blockhash,\iqc.\blockid}$};
}

\one

\Fn{$\SafetyCheck(\iprop)$}{ \label{line:safety_check}
    \Require $\ValidBlock(\iprop.\block)$ \textbf{and} $\iprop.\blockid = \Hash(\iprop.\block.\blockhash, \iprop.\view)$; \\
    \Require $\iprop.\sigma$ is a valid signature on $\iprop.\blockid$ of the leader of view $\iprop.\view$;


    \Require $\iprop.\view \geq \iprop.\block.\originalview$ \textup{\textbf{and}} $p.\view > p.\block.\qc.\view$;
    \label{line:jovan_safety_check_proposal}

    \If{\textup{\Mref{Algorithm: Consensus-Execution-2}.$\IsFreshProposal(\iprop)$}} {
        
        \Return $\ValidTip(\text{the tip of } \iprop)$; \commentinline{verify the validity of the tip} \label{line:jovan_valid_tip_fresh_proposal}
    }

        \commentline{not fresh; verify that $\iprop.\tc.\hightip$ has been reproposed} \\
    \Require $\iprop.\view > \iprop.\block.\originalview$; \\ \label{line:jovan_check_block_view_reproposal}    
    \Require \textup{$\iprop.\tc \ne \bot$ \textbf{and} $\ValidTC(\iprop.\tc)$ \textbf{and} $\iprop.\view = \iprop.\tc.\view + 1$ \textbf{and} $\iprop.\tc.\hightip \neq \bot$};
    \label{line:valid_proposal_tc_check}

         \Return the block header of $\iprop.\block$ is equal to $\iprop.\tc.\hightip.\blockheader$; \label{alg:safetycheck:block_hash_check}

    }

\one


\Fn{$\textsc{\ValidTip}(\itip)$}{
    \Require $\ValidBlockHeader(\itip.\blockheader)$; \\
    \Require $\itip.\blockid = \Hash(\itip.\blockheader.\blockhash, \itip.\view)$; \\
    \Require $\itip.\sigma$ is a valid signature on $\itip.\blockid$ of the leader of view $\itip.\view$;



    \Require $\itip.\blockheader.\originalview = \itip.\view$; \label{line:jovan_require_block_view_fresh}

    \Require$\itip.\view > \itip.\blockheader.\qc.\view$;
    \label{line:jovan_safety_check_tip_2}

\If{\textup{$\itip.\view = \itip.\blockheader.\qc.\view+1$}}{\label{alg:line:safetycheck_extends_QC_1}
                \Return $\itip.\tc = \bot$ \textbf{and} $\itip.\nec = \bot$; \commentinline{tip constructed on the happy path}

    }


    \Require $\itip.\tc \ne \bot$ \textup{\textbf{and}} $\ValidTC(\itip.\tc)$ \textbf{and} $\itip.\view = \itip.\tc.\view + 1$;\\
        \If{\textup{$\itip.\nec = \bot$}}{ \label{line:check_tc_safety_check}
         \label{alg:line:safetycheck_extends_QC_2}
         \commentline{tip constructed using the high QC from the TC}

    \Return$\itip.\tc.\highQC \neq \bot$ \textbf{and} $\itip.\blockheader.\qc = \itip.\tc.\highQC$; \\

    }

        \Require \ValidNEC($\itip.\nec$); \commentinline{otherwise, the fresh tip is constructed with an NEC} \\
        \Return $\itip.\view = \itip.\nec.\view$ \textbf{and} $\itip.\blockheader.\qc.\view = \itip.\nec.\hightipQCview$;


 
}


\one
\Fn{$\ValidTimeoutMessage(\itimeout)$}{
    \textup{\textbf{if} $\itimeout.\lastcer$ is a QC \textbf{then require} $\ValidQC(\itimeout.\lastcer)$; \textbf{else require} $\ValidTC(\itimeout.\lastcer)$;} \\
    \Require $\itimeout.\lastcer.\view = \itimeout.\view - 1$; \\

    \Require $(\itimeout.\tip \neq \bot \textup{\textbf{ and }} \itimeout.\qc = \bot) \textup{\textbf{ or }} (\itimeout.\tip = \bot \textup{\textbf{ and }} \itimeout.\qc \neq \bot)$; \commentinline{must have exactly one}

    \If{$\itimeout.\tip \neq \bot$}{
        \Require $\itimeout.\hightipvote \neq \bot$ \textbf{and} $\ValidVote(\itimeout.\hightipvote)$ \textbf{and} $\itimeout.\hightipvote.\view = \itimeout.\view$; \\
        \Require $\ValidTip(\itimeout.\tip)$ \textbf{and} $\itimeout.\tip.\view \leq \itimeout.\view$; \\
        \Return $\itimeout.\sigma$ is a valid signature on $\tupled{\itimeout.\view, \itimeout.\tip.\view, \itimeout.\tip.\blockheader.\qc.\view}$;
    }

    \Require $\ValidQC(\itimeout.\qc) \textup{\textbf{ and }} \itimeout.\qc.\view < \itimeout.\view$; \\
    \Return $\itimeout.\sigma$ is a valid signature on $\tupled{\itimeout.\view, \bot, \itimeout.\qc.\view}$;
}

    


        






\end{algorithm}

\setcounter{algocf}{4}

\begin{algorithm}[t]
\DontPrintSemicolon
\caption{\sysname: Helper Functions [continuation of part 2 of 3]}

\setcounter{AlgoLine}{44}

\one
\Fn{$\ValidTC(\itc)$}{
    \Require $\itc.\hightip \neq \bot$ \textbf{or} $\itc.\highQC \neq \bot$; \\
    \Require $\itc.\Sigma$ aggregates $2f + 1$ valid signatures on $\itc.\view$, $\itc.\tipsviews$ and $\itc.\qcsviews$; \\

    \If{$\itc.\highQC \neq \bot$} {
        \Require $\itc.\hightip = \bot$ \textbf{and} $\ValidQC(\itc.\highQC)$ \textbf{and} $\itc.\highQC.\view < \itc.\view$; \\
        \Return $\max(\itc.\tipsviews) \leq \itc.\highQC.\view$ 
        \textup{\textbf{and}} $\itc.\highQC.\view = \max(\itc.\qcsviews)$;
    }

    \Require $\ValidTip(\itc.\hightip)$ \textbf{and} $\itc.\hightip.\view \leq \itc.\view$;
    
          $\itips \gets \tc.\tipsviews$; $\iqcs \gets \itc.\qcsviews$; \\

        for every validator $i$ such that $\itips[i] \neq \bot$, \Require $\iqcs[i] < \itips[i]$; \\

        \Require \textup{\textbf{not} exists $i$ such that $\itips[i] > \itc.\hightip.\view$};\commentinline{no more recent tip than $\itc.\hightip$}

        \commentline{ties must be broken by view of QC}
        
        \Require \textup{\textbf{not} exists $i$ such that $\itips[i] = \itc.\hightip.\view$ and $\iqcs[i] > \itc.\hightip.\blockheader.\qc.\view$};

        \Return $\itc.\hightip.\view > \max(\iqcs)$;
        

}



\Fn{$\ValidNEC(\inec)$} {
    \Require $\inec.\hightipQCview < \inec.\view - 1$; \\
    \Return $\inec.\Sigma$ aggregates $2f+1$ valid signatures on $\tupled{\inec. \view, \inec.\hightipQCview}$;
}

\end{algorithm}

\begin{algorithm}[p]
\DontPrintSemicolon
\caption{\sysname: Helper Functions [part 3 of 3]}
 \label{Algorithm:Utilities}
    \Fn{\CreatePrepareMsg{$\iview, \iblock, \iqc, \itc, \inec$}}{
    
        \If{\textup{$B = \bot$}} {
            \commentline{fresh proposal request; build a new block} \\
            
            $\iblockview \gets \iview$; \\
            
            $\ipayload \gets $ fill with mempool transactions;\\
            $\ipayloadhash \gets \Hash(\ipayload)$; \\   
            
            $\iblockhash \gets \Hash(\iblockview, \ipayloadhash, \iqc)$; \\
            
            $B \gets \langle \iblockview, \ipayload, \ipayloadhash, \iqc, \iblockhash \rangle $;
        }
        
        $\iproposalid \gets \Hash(\iblock.\blockhash, \iview)$;\\
        $\sigma \gets \Sign(\iproposalid)$;\\
        \Return $\langle \iview, \iproposalid, \iblock, \sigma, \itc, \inec \rangle $; 
    }
    
    \one
    \Fn{$\GetTip(\iprop)$}{
    \label{line:function_get_tip}
        \If{$\textup{\Mref{Algorithm: Consensus-Execution-2}}.\IsFreshProposal(\iprop)$} {
            \Return the tip of $\iprop$;
        }
        \Return $\iprop.\tc.\hightip$;  
    }

    \one
    \Fn{\CreateVote{$\iview, \iblockheader$}}{
    \label{line:function_create_vote}
        $\iproposalid \gets \Hash(\iblockheader.\blockhash, \iview)$;\\
        $\sigma \gets \Sign(\iview,  \iblockheader.\blockhash, \iproposalid)$;\\
    
        \Return $\langle \iview,  \iblockheader.\blockhash, \iproposalid, \sigma \rangle$; 
    }

    \one
    \Fn{$\CreateQC(\ivotesproposalview)$}{
        $\Sigma \gets$ aggregate signatures from all votes in $\ivotesproposalview$; \\
        $\ivote \gets$ any vote in $\ivotesproposalview$; \\
        \Return $\langle \ivote.\view, \ivote.\blockhash, \ivote.\blockid, \Sigma \rangle$;  
    }

    \one
    \Fn{$\CreateTC(\iview, \itips, \ihightip, \iqcs, \ihighqc, \itimeouts)$}{
        \Return $\langle \iview, \itips, \ihightip, \iqcs, \ihighqc, \text{aggregate signatures from all messages in $\itimeouts$} \rangle$;

    }

\one
\Fn{\CreateTimeoutMsg($\iview$)}{
\label{line:view_certificate_create_timeout_message}
    \If{$\localhighQC.\view = \iview - 1$} {
        $\iviewcer \gets \localhighQC$;
    } \Else {
        $\iviewcer \gets \lastTC$; 
    }



    \If{$\localhightip.\view \leq \localhighQC.\view$}{
        $\sigma \gets \textsc{Sign}(\iview, \bot, \localhighQC.\view)$;

        \Return $\langle
            \iview,
            \bot,
            \bot,
            \localhighQC,
             \iviewcer,
            \sigma
        \rangle$;
    }\Else{
        $\ivote \gets \CreateVote(\iview,\; \localhightip.\blockheader)$; \label{line:create_vote_timeout}

        $\sigma \gets \textsc{Sign}(\iview,\; \localhightip.\view,\; \localhightip.\blockheader.\qc.\view)$;

        \Return $\langle
            \iview,\;
            \localhightip,\;
            \ivote,\;
            \bot,\;
            \iviewcer,\;
            \sigma
        \rangle$;
    }
}

\one
    \Fn{$\CreateNEC(\iview, \iqcview, \mathit{sigs})$}{
        \Return $\langle \iview, \iqcview, \text{aggreagate signatures from $\mathit{sigs}$} \rangle$;
    }
\end{algorithm}

  

%
\begin{algorithm}[t!]
\DontPrintSemicolon
\caption{\sysname: Block Recovery}
\label{Algorithm:Recovery}
    \one 
    \Fn{$\Recover(\itc)$}{
        \label{Algoritm:Recover}
        $X \gets$ any set of $\batch$ validators; \commentinline{optimization: prefer validators who reported $\itc.\hightip$} \\ 
        {\bf send} $\msg{\ProposalRequest, \itc}$ to all validators in $X$; \\
        $\NEset \gets \emptyset$; $\requested \gets X$; \\
        {\bf broadcast} $\msg{\NERequest, \itc}$ to all validators; \\

        \one
        \commentline{function proceeds asynchronously until a proposal or NEC is returned, or $\Recover$ is called with}\\
        \commentline{a TC from a higher view}\\

        \Upon{\textup{receiving $\msg{\ProposalResponse, p}$}} {
            \label{Alg:Basic_DA_blockid_match} 
            \Require $p.\blockid = \itc.\hightip.\blockid$; \commentinline{if false, drop message, but don't return} \\
            \Return $(p, \bot)$;
        }

        \one
        \Upon{\textup {first receipt of $\ine_i = \msg{\NE, \sigma_i}$ from validator $j$}}{
            \commentline{if condition is false, drop message, but don't return} \\
            \Require $\sigma_i$ is a valid signature on $\tupled{\itc.\view+1, \itc.\hightip.\blockheader.\qc.\view}$ of validator $j$; \\ 
            $\NEset \gets \NEset \cup \{ \ine_i\}$; \\
            \If{$|\NEset| = 2f + 1$}{
                \Return $(\bot, \Mref{Algorithm:Utilities}.\CreateNEC(\itc.\view + 1, \itc.\hightip.\blockheader.\qc.\view, \NEset))$;
            }
            
        }

        \one
        \Upon{\textup {at regular intervals of duration $\intervalduration$}}{
            $X \gets \text{any $\batch$-sized subset of $($the set of all validators} \setminus  \requested)$; \\
            {\bf send} $\msg{\ProposalRequest, \itc}$ to validators in $X$; \\
            $\requested \gets \requested \cup X$; 
        }
    }
        
    \Fn{$\ProcessProposalRequest(\itc)$}{
        \label{Algorithm:ProcessProposalRequest}
        \If{\textup{$i$ has cached a proposal $p$ with $p.\blockid = \itc.\hightip.\blockid$}\label{Alg:Has_voted_for_hightip}}{
            {\bf send} $\msg{\ProposalResponse, p}$ to the leader of view $\itc.\view + 1$;
        }
    }

    \Fn{$\ProcessNERequest(\itc)$}{
        \label{Algorithm:ProcessNERequest}
        \If{\textup {$i$ has not voted for a proposal $p$ with $p.\blockid = \itc.\hightip.\blockid$ \textbf{and} has not sent an $\NE$ message in view $\itc.\view + 1$
        } \label{Has_not_voted_for_higthip}}{
            $\sigma \gets \Sign (\itc.\view+1, \itc.\hightip.\blockheader.\qc.\view)$; \\ 
            {\bf send} $\msg{\NE, \sigma}$ to the leader of view $\itc.\view + 1$;
        }
    }
\end{algorithm}

\begin{figure}[t!]
    \centering
    \includegraphics[width=0.72\linewidth]{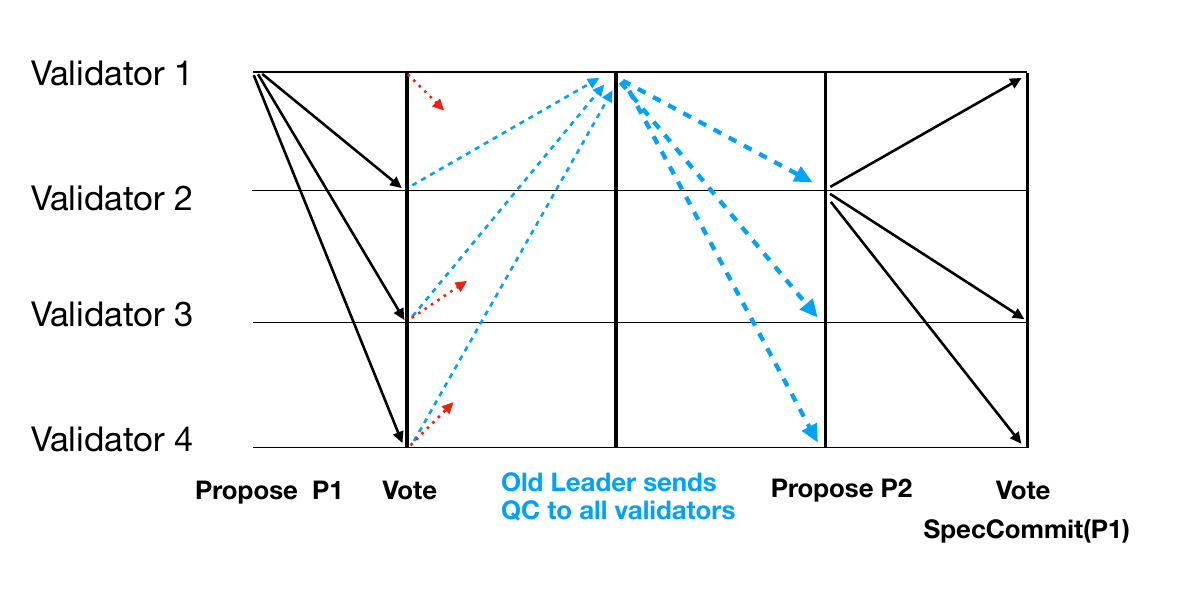}  
    \caption{Happy path workflow in \sysname using backup QC. In this scenario, votes sent to the current leader, Validator 2, are arbitrarily delayed (dotted, \textcolor{red}{red}), preventing the formation of a QC. In this case, the old leader, Validator 1, will construct a QC using the backup votes (dashed, \textcolor{Cerulean}{blue}) and broadcast it to all validators, including the next leader (bold dashed, \textcolor{Cerulean}{blue}). This enables the next leader to extend the proposal. Although some messages are lost, this example belongs to the \emph{happy path}, as a QC is successfully constructed and a timeout is avoided. Validators 3 and 4 also forward the backup QC received from Validator 1 to Validator 2. For simplicity, we omit this, as it is not required in this example.}
    \label{fig:MonadBFT-Unhappy-Backup-QC}
\end{figure}
\medskip
\mypara{Achieving leader fault isolation.}
Since the backup QC mechanism that is crucial for leader fault isolation already plays a role on the happy path, we describe it here.
%
Recall that $v$ denotes the current view with leader $L_v$, and $p_v$ is the proposal by $L_v$. Let $L_{v+1}$ denote the leader of the next view $v{+}1$.

To achieve the \emph{leader fault isolation} property (see~\Cref{sub:consensus_properties}), we introduce the notion of a \emph{backup QC}. A backup QC is a quorum certificate that a leader obtains for the block it proposes. To enable $L_v$ to form such a backup QC, we augment \sysname\ with the following modifications:
\begin{itemize}
    \item[(1)] Each validator, upon receiving $p_v$ from $L_v$, also sends its vote for $p_v$ back to $L_v$.  
    Once $L_v$ collects $2f{+}1$ votes, it constructs a \emph{backup QC} and broadcasts it to all validators.
\end{itemize}
Intuitively, this change prevents a malicious $L_{v+1}$ from forcing validators to timeout in view $v$ by withholding its proposal. This also provides an alternative way for a QC to be formed and disseminated as illustrated in \Cref{fig:MonadBFT-Unhappy-Backup-QC}.

However, sending votes to $L_v$ alone does not always ensure that $L_v$ can construct a backup QC. For instance, a malicious $L_{v+1}$ might prematurely trigger view changes by sending its proposal $p_{v+1}$ to a subset of honest validators, causing them to advance to view $v{+}1$ before voting for $p_v$. Consequently, $L_v$ would fail to obtain a backup QC due to insufficient votes. To address this, we make the following additional change:
\begin{itemize}
    \item[(2)] Each validator, upon receiving a proposal $p_{v+1}$ from $L_{v+1}$ that extends $p_v$ via a QC from view $v$, forwards the QC $p_{v+1}.\block.\qc$ to $L_v$.  
    Upon receiving this QC, $L_v$ treats it as its backup QC and broadcasts it to all validators.
x\end{itemize}

Lastly, we add the following changes to ensure the leader fault isolation property when $L_v$ is Byzantine but $L_{v+1}$ is honest:
\begin{itemize}
    \item[(3)] Each validator, upon receiving a backup QC for $p_v$, increments its view, and forwards the QC   to $L_{v+1}$ unless it has already received a valid proposal from $L_{v+1}$.
\end{itemize}
This rule prevents a malicious $L_v$ from triggering a view change at all validators (except the leader) to view $v{+}1$ much earlier than $L_{v+1}$, which would effectively shorten $L_{v+1}$'s timeout window (view duration) and could lead to unnecessary timeouts in view $v{+}1$.

\subsection{Unhappy Path} \label{sub:unhappy_path}
The unhappy path handles recovery when a view fails to progress—typically due to a faulty leader or prolonged network delays. Once $2f+1$ validators declare the view failed, \sysname\ transitions to the unhappy path\footnote{We use the terms \emph{happy} and \emph{unhappy path} only for explanatory purposes; the actual protocol does not have such an unambiguous global state, as there may be divergent local perspectives by different validators---for instance, when one validator receives a proposal, but another does not and times out.}, enabling an honest leader to recover while ensuring tail-forking resistance.

\medskip
\mypara{Timeout messages.}
Let $v$ be the current view. If a validator does not enter the next view $v+1$ within a predetermined timeout period after entering view $v$, it broadcasts a timeout message, signaling that view~$v$ has failed from its perspective.
Recall that a timeout message includes: (1) the current view, (2) the local tip, (3) a vote for the local tip, (4) the most recent observed QC (i.e., the local high QC), (5) the QC or TC that caused the validator to enter view $v$, and (6) a signature. 

To create the timeout message, each validator $i$ maintains its local tip $T_i$ and the most recent QC $\iqc_i$ it has seen, and checks whether $\iqc_i.\view \geq T_i.\view$. If so, the validator sets $(\qc, \tip,\hightipvote) \gets (\iqc_i, \bot, \bot)$. Otherwise, the validator sets the $(\qc, \tip) \gets (\bot, T_i)$. In the latter case, the validator also populates the $\hightipvote$ field with a vote for $T_i$.



\medskip
\mypara{Fast recovery: Building a QC.}
Recall that in the happy path, the leader $L_{v+1}$ of the view $v+1$ directly collects votes and builds a QC for the proposal of view $v$. Even when this fails, in two cases a QC for the same proposal can be constructed, allowing the next honest leader to quickly recover from the failure and make a regular fresh proposal:
\begin{itemize}
    \item \textbf{\underline{Case 1:}} The previous leader $L_v$ constructs a QC for its own proposal, broadcasts it, and it is received by the next honest leader.
    \item \textbf{\underline{Case 2:}} Enough timeout messages from view $v$ contain \emph{tip votes}, and they can be used to construct a QC for it.
\end{itemize}

\subsubsection{Processing Timeout Messages}
\label{subsub:process-timeout-msg}

\begin{figure}[t]
    \centering
    \includegraphics[width=0.8\textwidth]{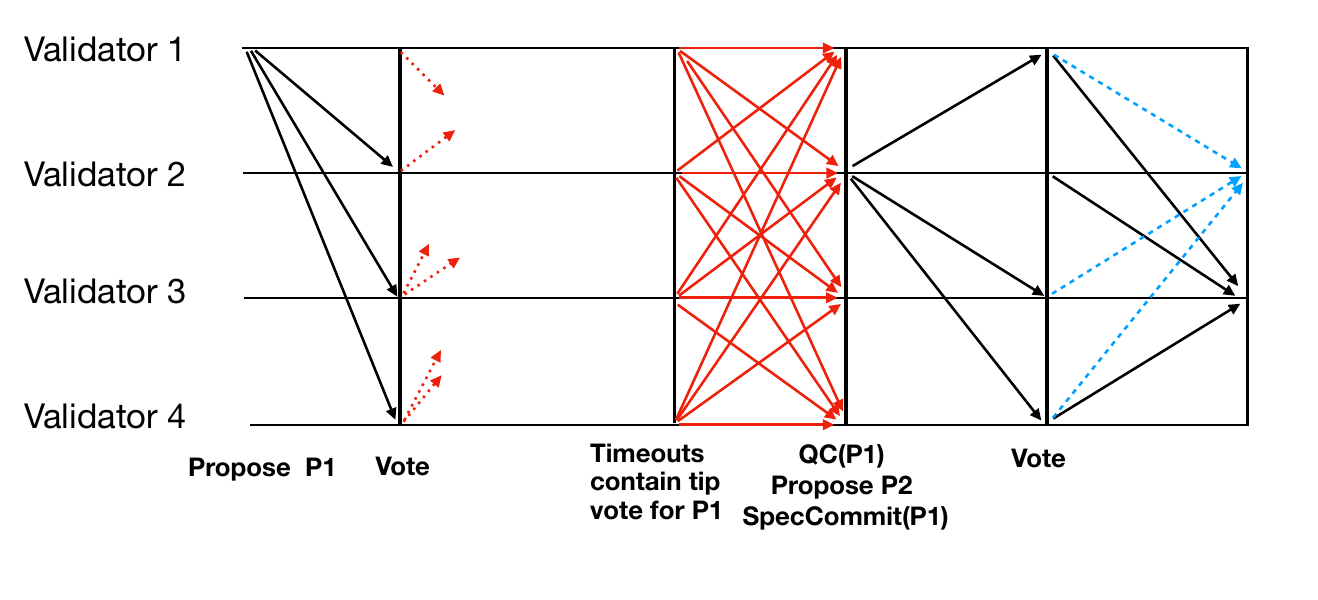}
    \caption{Unhappy path, but fast recovery workflow in \sysname, illustrating formation of a QC from tip votes. In this scenario, votes for proposal $P_1$, sent to both the current and next leader (dotted, \textcolor{red}{red}), are delayed. 
    Consequently, no QC can be build for the proposal and the view times out. Upon timing out, validators broadcast timeout messages (solid, \textcolor{red}{red}), which also include their votes for $P_1$. The next leader (validator~2) then aggregates these votes to form a QC for $P_1$ and extends it with a new proposal $P_2$. This example also illustrates that a validator can cast both a vote and a timeout message in the same view and that both a QC and a TC can be formed in the same view.}
    \label{fig:MonadBFT-Unhappy-Tip-votes}
\end{figure}

\begin{figure}[t]
    \centering
    \includegraphics[width=0.8\textwidth]{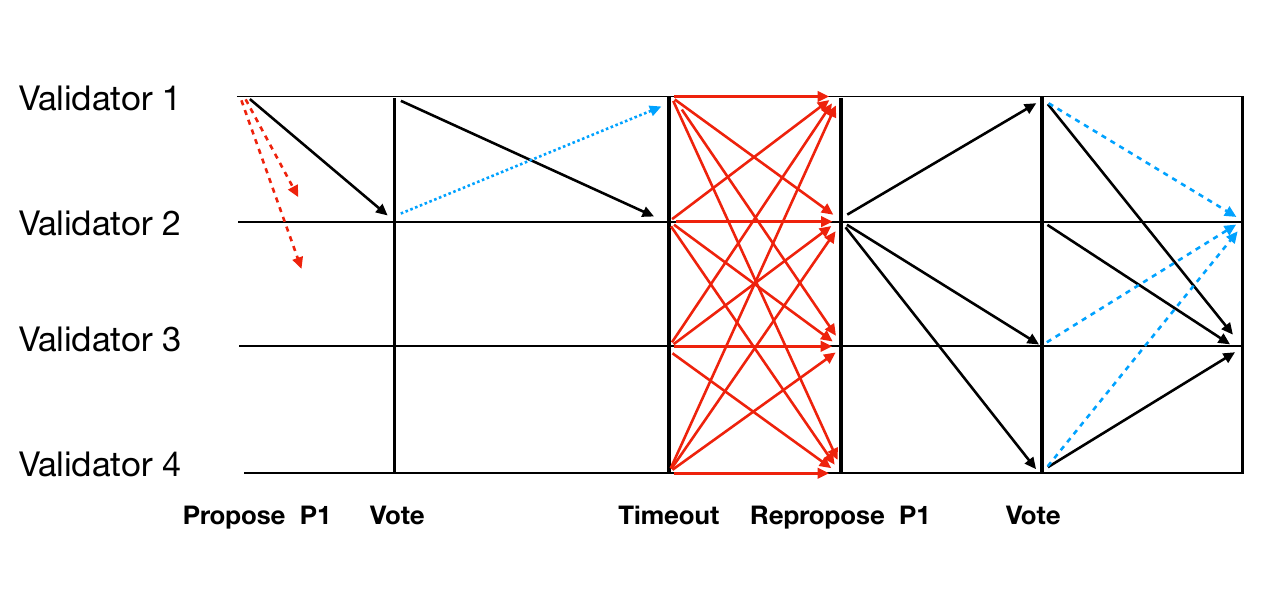}
    \caption{Unhappy path workflow in MonadBFT using standard recovery. In this example, the proposal P1 is not received by sufficiently many validators to form a QC, neither via regular (solid, black), nor backup (dashed, blue), nor tip votes (solid, \textcolor{red}{red}). P1 appears as the high tip of the timeout certificate constructed. It is then reproposed by the next leader.}
    \label{fig:MonadBFT-Unhappy-Workflow}
\end{figure}



%
If constructing or obtaining a QC from view $v$ is not possible, the leader $L_{v+1}$ waits until $2f+1$ timeout messages are received and then builds a TC.
Let $\tipsviews$ and $\qcsviews$ be the set of views of local tips and local high QCs reported in the timeout messages, respectively. $L_{v+1}$ creates a TC $\itc$ as: 
\[
    \itc.\view \gets v; \quad \itc.\tipsviews \gets \tipsviews; \quad \itc.\qcsviews \gets \qcsviews. 
\]
While creating $\itc$, $L_{v+1}$ uses the signatures $\sigma$ included in the timeout messages to compute $\itc.\Sigma$, and justify $\itc.\tipsviews$ and $\itc.\qcsviews$.

Let $T$ and $\iqc$ be the highest tip and local high QC (ordered by view number) reported in these timeout messages.
If multiple such tips have the same maximum view, the tip that contains the highest QC must be chosen (as validated by $\ValidTC$ in \Cref{Algorithm:ValidationPredicates}). 
If $\iqc.\view \geq T.\view$, $L_{v+1}$ sets $(\tc.\hightip, \itc.\highQC) \gets (\bot, \iqc)$, otherwise, $L_{v+1}$ sets $(\tc.\hightip, \itc.\highQC) \gets (T, \bot)$.

\medskip
Now, depending upon the TC, the following cases can occur.

\medskip
\mypara{\underline{Case 3:}}
If $\iqc.\view \geq T.\view$, $L_{v+1}$ proposes a fresh proposal $p$ with a new block $B$ with $B.\qc \gets \iqc$ and:
\[
    p.\view \gets v+1; \quad  p.\block \gets B; \quad p.\tc \gets \itc; \quad p.\nec \gets \bot.
\]
This is a fast recovery path since the TC itself directly implies that no reproposal is needed. We now turn to the two standard recovery cases.

\medskip
\mypara{\underline{Case 4:}}
If $T.\view > \iqc.\view$, $L_{v+1}$ then checks if it possesses the block $B$ corresponding to the tip $T$. If it does, $L_{v+1}$ reproposes the block $B$ in a proposal $p$ where:
\[
    p.\view \gets v+1; \quad 
    p.\block\gets B; \quad  
    p.\tc \gets \itc; \quad 
    p.\nec \gets \bot.
\]

\medskip
\mypara{\underline{Case 5:}}
Otherwise, i.e., if $L_{v+1}$ does not have the block $B$ corresponding to the tip $T$, the leader invokes the block recovery part of the consensus protocol. As we discuss in~\Cref{sub:recovery}, this recovery either returns the block $B$ or a no-endorsement certificate~(NEC), certifying that the block $B$ is not recoverable. 

If the block recovery returns $B$, $L_{v+1}$ reproposes $B$ exactly as in case 4. Otherwise, let $\inec$ be the NEC returned by the block recovery.
$L_{v+1}$ then issues a fresh proposal $p$ with a new block $B'$ with $B'.\qc \gets T.\qc$, and: 
\[
    p.\view \gets v+1; \quad 
    p.\block\gets B'; \quad  
    p.\tc \gets \itc; \quad 
    p.\nec \gets \inec.
\]


\subsubsection{Processing Proposals in the Unhappy Path}
\label{subsub:process-proposals}
This section will describe how the validators process the proposals they receive in the unhappy path.

A validator processes QCs and TCs to advance views regardless of how they are obtained -- whether constructed locally or received directly or indirectly as part of a proposal, timeout message, or tip. Moreover, upon receiving any valid message, each validator executes the pacemaker logic first (\Cref{subsection:Pacemaker}), before using the message for other protocol actions.

\medskip
\mypara{\underline{Case 1 \& 2:}}
During view $v+1$, if a validator receives a proposal $p$ from $L_{v+1}$ with $p.\view = v+1$ and $p.\block.\qc.\view = v$, the validator processes $p$ exactly as in the happy path~(see~\Cref{sub:happy_path}).

\medskip
\mypara{\underline{Case 3:}}
Alternatively, upon receiving a proposal $p$ with:
\[
    p.\tc \ne \bot;  \quad 
    p.\tc.\hightip = \bot; \quad 
    p.\tc.\highQC \ne \bot; \quad 
    p.\nec = \bot,
\]
each validator checks that the proposal $p$ is consistent with $p.\tc$ using the \SafetyCheck() function in Algorithm~\ref{Algorithm:ValidationPredicates}. Intuitively, these checks ensure that $L_{v+1}$ followed the procedure in case 3 of~\Cref{subsub:process-timeout-msg} to create~$p$.
Additionally, the validator checks that:
(i) $p.\tc.\highQC = p.\block.\qc$, i.e.,
the proposed block extends the block corresponding to $p.\tc.\highQC$; and  
(ii) $p.\tc.\highQC.\view < p.\view - 1$. 
Check (ii) here prevents $L_{v+1}$ from triggering case 3, when $L_{v+1}$ could have created $p$ as per case 1 and 2 in~\Cref{subsub:process-timeout-msg}.

If all these checks are successful, as in the happy path, the validator: 
(i) sends its vote for $p$ to both $L_{v+1}$ and the leader of view $v+2$; and
(ii) updates its local tip to the tip of proposal $p$.

\medskip
\mypara{\underline{Case 4:}}
Alternatively, upon receiving a proposal $p$ with:
\[
    p.\tc \ne \bot;  \quad 
    p.\tc.\hightip \ne \bot; \quad 
    p.\tc.\highQC = \bot; \quad 
    p.\nec = \bot,
\]
as in the previous case, each validator checks that the proposal $p$ is consistent with $p.\tc$ to ensure that $L_{v+1}$ followed the procedure in case 4 of~\Cref{subsub:process-timeout-msg} to create $p$.
Moreover, if all these checks are successful, the validator casts votes.
This is also the only scenario in which the local tip is \emph{not} updated to the tip of proposal $p$.
Since $p$ is a reproposal, the local tip is instead set to the high tip carried in the included TC.
Note that $p$ and the high tip refer to the same block, as $p$ is a valid reproposal of that high tip.

\medskip
\mypara{\underline{Case 5:}}
There are two possibilities here, depending upon whether $p.\nec$ is set to $\bot$ or not. 
If $p.\nec = \bot$, the validator processes this proposal exactly as in case 4 above. 
Otherwise, the validator checks that:
(i) $p.\view = p.\nec.\view = v+1$, i.e., the NEC is from the current view,
(ii) $p.\block.\qc.\view = p.\nec.\hightipQCview$, i.e., the block extends the parent of the high tip that was requested by the leader, and, 
(iii) $p.\itc \neq \bot$ and valid.
If all these checks are valid, the validator casts votes and updates its local tip to the tip of proposal $p$.

\begin{figure}[h!]
    \centering
    \begin{minipage}[b]{0.4\textwidth}
        \centering
        \includegraphics[width=\textwidth]{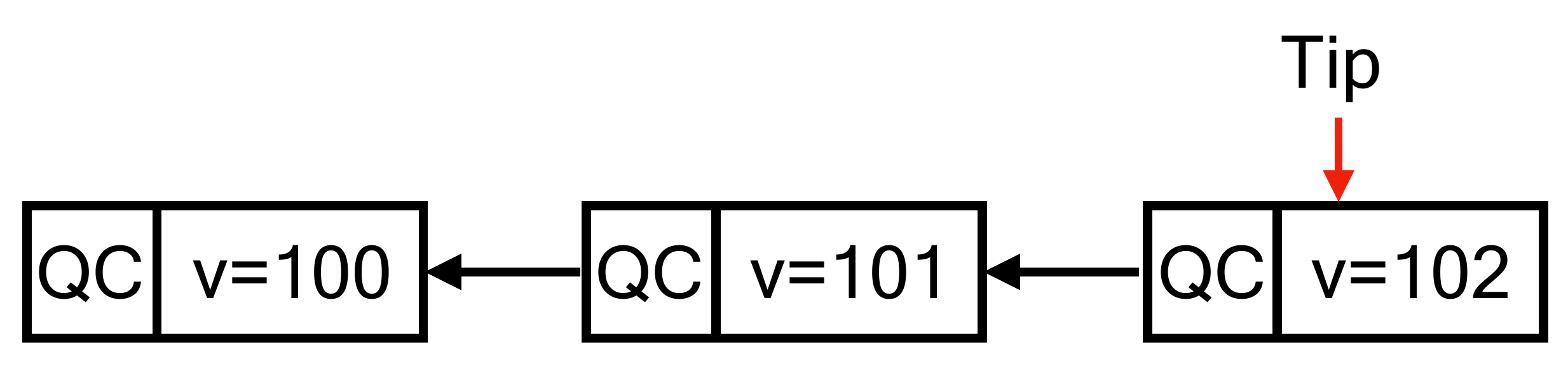}
        \caption{The tip points to the latest proposal.}
        \label{fig:unhappy-1}
    \end{minipage}
    \hfill
    \begin{minipage}[b]{0.48\textwidth}
        \centering
        \includegraphics[width=\textwidth]{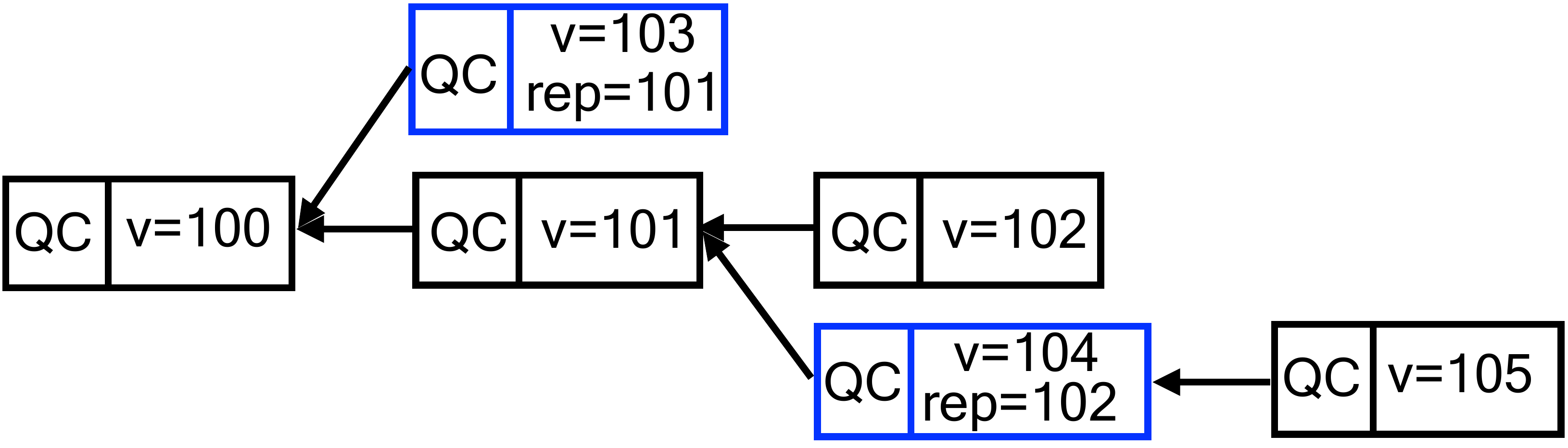}
        \caption{Fresh proposals and reproposals.}
        \label{fig:unhappy-2}
    \end{minipage}
    \hfill
    
    \vspace{20pt}
    
    \begin{minipage}[b]{0.49\textwidth}
        \centering
        \includegraphics[width=\textwidth]{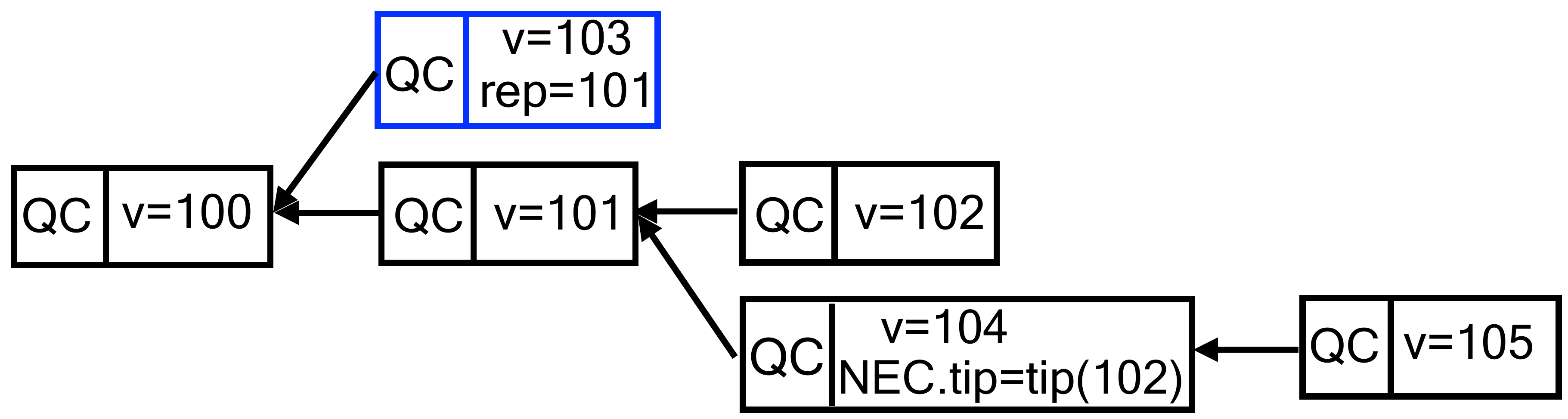}
        \caption{\NEC allowing a leader to make a fresh proposal} 
        \label{fig:unhappy-3}
    \end{minipage}
    \hfill
    \begin{minipage}[b]{0.48\textwidth}
        \centering 
         
        \textbf{The figures} illustrate the selection of the \hightip and the subsequent proposal processes during view failures. Blue rectangles indicate reproposals, while black rectangles indicate fresh proposals.
        
    \end{minipage}
\end{figure}

\medskip
\mypara{Remark on the recursive structure of tips.}
At first glance, the definition of tips appears recursive: a tip may contain a TC, which in turn may contain a tip, and so on. In practice, this recursion can be trivially avoided. 
Consider any tip $\itip$ with $\itip.\tc \neq \bot$. We distinguish two cases:
\begin{itemize}
    \item If $\itip.\nec \neq \bot$, then the TC can be safely discarded (by setting $\itip.\tc = \bot$).
    Since a valid NEC already provides all the information needed to validate the tip, the embedded TC becomes redundant and the recursive reference can be safely removed.
    
    \item If $\itip.\nec = \bot$, then $\itip.\tc$ necessarily contains a nonempty $\highQC$ (i.e., $\itip.\tc.\highQC \neq \bot$), which in turn implies $\itip.\tc.\hightip = \bot$. Thus, the TC does not contain another tip, so no recursion occurs.
\end{itemize}
Hence, although the data type permits a recursive structure, valid tips that arise during the protocol can always be represented in a non-recursive form.
For simplicity, we leave the recursive definition in the presentation, but in the actual \sysname implementation these recursions are eliminated.

\medskip
\mypara{Examples of tip handling \& reproposal dynamics.}
Figure~\ref{fig:unhappy-1} illustrates the local tip held by a validator.
Figure~\ref{fig:unhappy-2} depicts multiple view failures: first, views $100$ and $101$ succeed, i.e., QC is formed for the blocks proposed in these views. Next, view $102$ fails, and the tip from view $101$ is selected as the high tip and reproposed in view $103$.
The proposal from view $102$ did not propagate to a sufficient number of validators and, consequently, was not selected as the high tip in view $103$.
Following the failure of view $103$, the tip from view $102$ resurfaces as part of the subsequent TC and is selected as the high tip for view $104$.
Note that the tip from view $103$ is not considered a candidate for the high tip, as only fresh proposals are eligible.
The tip from view $102$ is then reproposed in view $104$, and view $104$ succeeds.

Figure~\ref{fig:unhappy-3} depicts an alternative scenario for view $104$.  
After the failure of view $103$, the leader does not hold the block corresponding to the high tip (i.e., the tip associated with the fresh proposal of view $102$) and therefore requests it using the block recovery protocol.  
The leader of view $104$ subsequently receives an NEC for the high tip from view $102$, enabling it to issue a fresh proposal in view $104$ instead of reproposing the block from view $102$.  
The NEC includes the view of the highest observed QC, view $101$, which the fresh proposal must extend; in other words, the parent of the high tip from view $102$ becomes the parent of the newly issued fresh proposal.

\begin{figure}[t]
    \centering
    \includegraphics[width=0.9\textwidth]{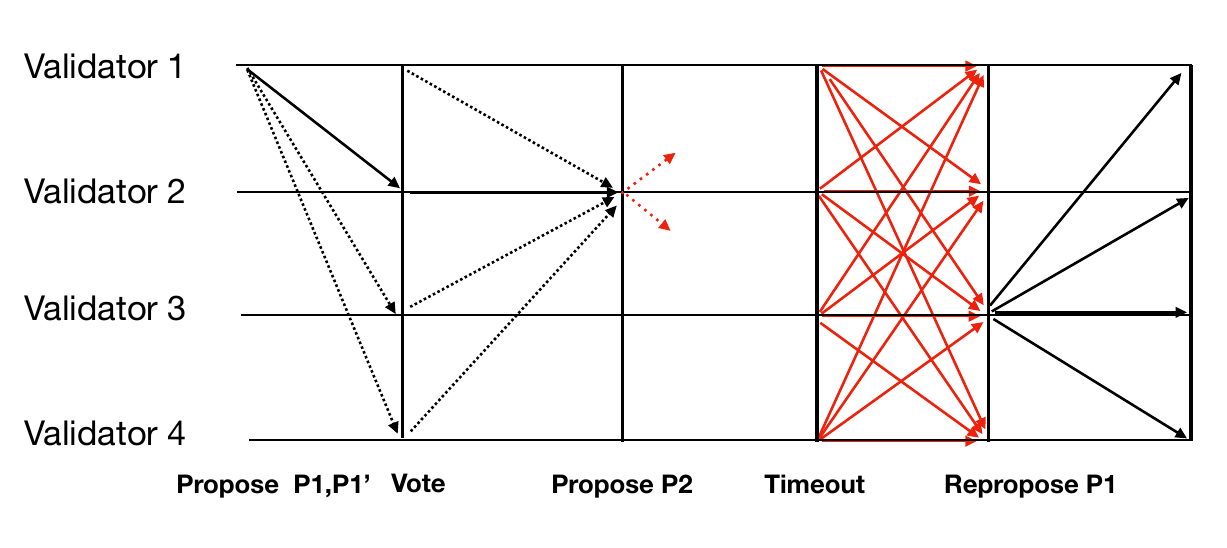} 
    \caption{Example to illustrate that tail-forking cannot be ruled out under leader equivocation. Validator 1, a malicious leader, makes two equivocating fresh proposals, sending P1 (solid, black) to validator 2 and a conflicting proposal P1' (dashed, black) to validators 3 and 4 (f+1 honest validators). Each validator votes for the proposal they received and sends their vote to the next leader, validator 2 (the figure omits backup votes). Validator 2 collects enough votes to form a QC for P1', but it fails to send proposal P2, causing validators to send timeout messages containing their highest observed tips. Validators 3 and 4 send timeout message including the tip of P1' as the highest tip. Validator 2 and the Byzantine validator 1 send timeout messages including the tip of P1 as the highest tip. Validator 3 builds a TC containing both P1 and P1' as high tips. Since both are proposals building on the same QC, validator 3 can freely choose to repropose one of them. In this example, validator 3 reproposes P1, leading P1' to be abandoned despite having obtained a QC. 
    }
    \label{fig:MonadBFT-Unhappy-tailforking}
\end{figure}

\subsubsection{Tail-forking Resistance and Equivocation}
Recall that tail-forking resistance is guaranteed only for blocks that receive at least $f+1$ votes from honest validators and whose issuing leader has not equivocated (see Section~\ref{sub:consensus_properties}).  
In this section, we analyze how equivocation interacts with tail-forking resistance, in particular how equivocation can cause a block that already carries a QC to be reverted. We illustrate this in~\Cref{fig:MonadBFT-Unhappy-tailforking}.

First, observe that equivocation by itself does not compromise the voting process: in particular, it cannot lead to two conflicting QCs for the same view.  
This is because each correct validator casts at most \emph{one} vote per view---that is, it votes for at most one proposal in any view.
Once a validator has voted in a given view, all subsequent proposals for that view are rejected (enforced by the $\mathsf{highest\_voted\_view}$ variable; see the check in Algorithm~\ref{Algorithm:Consensus-Execution_1}, line~\ref{Alg:Algorithm: Consensus-Execution:safety_check}).  

The critical point arises when considering how equivocation can interact with the protocol’s recovery mechanisms whose goal is to prevent tail-forking attacks.  
Suppose the leader of some view $v$ equivocates by issuing multiple proposals 
$p_{set} = \{p_1, p_2, \ldots, p_k\}$ ($k > 1$) for the same height in the blockchain.  
As argued above, at most one proposal $p^{\star} \in p_{set}$ can gather a QC.  
We now distinguish two cases:

\begin{itemize}
    \item \emph{Case 1: Sufficient QC dissemination.}  
    If, by the ``backup voting'' mechanism, at least $f+1$ honest validators receive the QC for $p^{\star}$ while still in view $v$, then all of them advance to view $v+1$ without ever issuing a timeout message for view $v$.
    Consequently, no TC for view $v$ can be formed.
    This forces every proposal in view $v+1$ to carry a QC from view $v$, which in turn ensures that it must extend $p^{\star}$, thereby preserving tail-forking resistance.
    

    \item \emph{Case 2: Insufficient QC dissemination.}  
    If fewer than $f+1$ honest validators receive the QC for $p^{\star}$, then a TC for view $v$ may be formed.
    In that case, although $p^{\star}$'s tip must still be present among the $2f+1$ timeout messages, the subsequent leader may also observe some other conflicting proposal from $p_{set}$ and choose to repropose it. 
    This effectively abandons $p^{\star}$, even though a QC for it exists, thereby breaking tail-forking resistance. 
    For this reason, \sysname guarantees tail-forking resistance only when the issuing leader has not equivocated.  
\end{itemize}

\subsection{Pacemaker}
\label{subsection:Pacemaker}
The pacemaker is a coordination protocol that validators use to advance views once the network becomes synchronous. Specifically, it guarantees that after GST, if a correct validator enters view~$v$, then all correct validators will enter view~$v$ or higher within a bounded time.
In \sysname, we implement the pacemaker using quorum certificates~(QCs), timeout certificates~(TCs), and Bracha-style amplification based on timeout messages~\cite{Bracha-Broadcast,Bracha_based_pacemaker_bounded_memory}. We summarize the pacemaker protocol in~\Cref{Algorithm:Pacemaker} and describe it below.

\smallskip
Consider validator $i$ whose current view is $v$. Then, validator $i$ performs the following actions:
\begin{enumerate}[itemsep=2pt]
    \item Upon receiving a QC~$\iqc$ with $\iqc.\view \geq v$, 
    (i) forwards $\iqc$ to the leader of view~$\iqc.\view{+}1$, (ii) updates its local view to $\iqc.\view{+}1$, and (iii) updates its local high QC to $\iqc$.

    \item Upon receiving a TC~$\itc$ with $\itc.\view \geq v$,   
    (i) broadcasts a timeout message for view~$\itc.\view$, unless it has already done so, and (ii) updates its local view to $\itc.\view{+}1$. 

    \item Upon receiving $f{+}1$ timeout messages for some view~$v' \ge v$, broadcasts its own timeout message for view~$v'$, unless it has already done so. Validator $i$ uses a QC and TC included in any of these $f+1$ timeout message to create its timeout message. This step implements the Bracha amplification mechanism~\cite{Bracha-Broadcast,Bracha_based_pacemaker_bounded_memory}. 

    \item Upon receiving $2f{+}1$ timeout messages for some view~$v' \ge v$, constructs a TC~$\itc$ and processes it as in step~2 above.
\end{enumerate}

\subsection{Block Recovery}
\label{sub:recovery}
Recall from~\Cref{subsub:process-timeout-msg}, during the unhappy path, a leader $L$ in view $v$ collects a TC $\itc$ and in certain scenarios, reproposes the block corresponding to the high-tip in $\itc$.
Let $p$ be the proposal corresponding to $\itc.\hightip$, and say $L$ seeks to repropose the block $p.\block$. Now, if $L$ does not possess the $p.\block$, it invokes $\Recover(\itc)$ to recover $p$ and consequently the block $p.\block$ or a NEC certifying that $p.\block$ need not be reproposed. We specify the recovery protocol in~\Cref{Algorithm:Recovery}, and describe it next.
\begin{itemize}[itemsep=3pt]
    \item The leader $L$ sends the $\msg{\ProposalRequest, \itc}$ message to batches of $\batch$ validators, prioritizing those who included tip $\itc.\hightip$ in their timeout messages. $L$ continues to periodically send $\msg{\ProposalRequest, \itc}$ to additional validators, until it recovers a proposal or an NEC. Upon receiving $\msg{\ProposalRequest,\itc}$, each validator validates $\itc$ and checks that $L$ is the designated leader for $\itc.\view$. Upon successful validation, the validator increments its view, and if it possesses the proposal $p$, sends $p$ back to $L$.

    \item Simultaneously, $L$ broadcasts the $\msg{\NERequest,\itc}$. Upon receiving $\msg{\NERequest,\itc}$, each validator validates the $\NERequest$ similarly to $\ProposalRequest$. Upon successful validation, the validator increments its view, and responds with a no endorsement~(NE) message $\msg{\NE,\sigma}$ unless it has previously cast a vote for $p$. Here, $\sigma$ is a digital signature on the tuple $\tupled{\itc.\view+1, \itc.\hightip.\blockheader.\qc.\view}$. 

    \item The leader, upon receiving $2f+1$ valid NE messages, aggregates the signatures to create an NEC.

    \item The block recovery protocol terminates when either $L$ receives the proposal $p$ or constructs an NEC.
\end{itemize}

\begin{remark}
Note that it is possible to simultaneously recover proposal $p$ and construct an NEC if fewer than $f+1$ honest validators hold the proposal. The presence of an NEC does not imply that the block is unrecoverable; rather, it indicates that fewer than~$f+1$ honest validators obtained it during its propagation. Consequently, the block can be safely abandoned without compromising the tail-forking resistance property.
\end{remark}

\begin{remark}
Note that, in the worst case, the recovery protocol we describe incurs $O(n/\batch)$ latency. This can happen when only $O(1)$ correct validators have the block, and $f$ malicious validators who claim to have the proposal in their timeout messages do not assist the leader to recover $p$. 
Although it is possible to improve the protocol using erasure codes, we opt for this simpler protocol as we envision that such situations will occur rarely in practice.
\end{remark}



\section{Analysis}
\label{sec:proofintuition}

This subsection provides an overview and intuition of how and why \sysname satisfies its properties. 
Our goal is to develop intuition for the core mechanisms of \sysname and how they guarantee these properties, and we therefore include several detailed explanations.
The full proofs can be found in \Cref{sec:analysis}.

\medskip
\mypara{Safety.}
Safety requires that no two correct validators commit conflicting blocks, that is, different blocks at the same height in their respective local logs.
Informally, the commitment rule in \sysname states that a block is committed once a block has obtained a ``QC of a QC'' and both QCs are from consecutive views.
More precisely, suppose a block $B$ appears in a proposal $p$ from view $v$.
A correct validator commits $B$ upon observing (i) a QC for $p$ from view $v$, and (ii) a QC for a proposal $p'$ from view $v + 1$ such that $p'$ directly extends $p$ (by carrying a QC for $p$).
We now explain why, after this point, no conflicting block $B'$ can ever be committed.

First, consider QCs formed in view $v$.
Since $2f + 1$ validators (including at least $f + 1$ correct validators) vote for proposal $p$ in view $v$, and correct validators never vote for two different proposals within the same view, it is impossible to collect $2f + 1$ votes in view $v$ for any other proposal.
Hence, the only QC that can be formed in view $v$ is the QC for $p$.
By the same quorum-intersection reasoning, the only proposal that can obtain $2f + 1$ votes in view $v + 1$ is the proposal $p'$.
Thus, the only QC that can be formed in view $v + 1$ is the QC for $p'$, which strictly extends proposal $p$.
Therefore, in the two consecutive views responsible for committing $B$, the protocol permits only the QC for $p$ in view $v$ and only the QC for $p'$ in view $v + 1$.

Next, consider any valid proposal $p^*$ formed in view $v + 2$.
We show that such a proposal must necessarily strictly extend $p$, which already suffices to protect the commitment of $B$.
There are two possibilities for $p^*$:

\begin{itemize}
    \item \emph{Case 1:} Proposal $p^*$ is a fresh proposal carrying a QC from view $v + 1$.
    In view $v + 1$, the only QC that can form is the QC for $p'$, and $p'$ strictly extends $p$.
    Thus, proposal $p^*$ carrying a QC from view $v + 1$ must strictly extend $p$.

    \item \emph{Case 2:} Proposal $p^*$ is either (i) a fresh proposal carrying an NEC from view $v + 2$ or carrying a TC from view $v + 1$ whose $\highQC$ field is non-$\bot$, or (ii) a reproposal carrying a TC from view $v + 1$.
    All of these possibilities imply the existence of a TC formed in view $v + 1$.
    

    We first observe that any TC $\itc$ formed in view $v + 1$ must satisfy two properties:  
    (1) its $\hightip$ field must be non-$\bot$ (and consequently its $\highQC$ field must be $\bot$), and  
    (2) this high tip must come from view $v + 1$ and must carry a QC from view $v$.
    Indeed, at least $f + 1$ correct validators (namely, those that voted for proposal $p'$) update their local tip to the tip of $p'$ in view $v + 1$; recall that $p'$ is a fresh proposal because it carries a QC from view $v$.
    Since TCs aggregate timeout messages from $2f + 1$ validators, at least one of the updated tips from these correct validators must appear in any valid TC, and therefore in $\itc$.
    Moreover, because the highest-view tip is selected as the high tip (breaking ties by choosing the tip that carries the most recent QC), the high tip of $\itc$ must be the tip from view $v + 1$ that carries a QC from view $v$.
    Finally, because no conflicting QC can be formed in view $v$, the QC carried by this tip must be the QC for $p$, which implies that the high tip of $\itc$ strictly extends proposal $p$. 

    With this structure in place, any valid proposal $p^*$ derived from a TC formed in view $v + 1$ must also strictly extend $p$:
    \begin{itemize}
        \item If $p^*$ is a fresh proposal carrying an NEC from view $v + 2$, observe that forming an NEC requires at least $f + 1$ correct validators to issue $\NE$ messages upon receiving a TC from view $v + 1$.
        Each $\NE$ message identifies the view whose QC the leader must incorporate into the new proposal.
        Since every TC formed in view $v + 1$ has a high tip carrying a QC from view $v$, the NEC forces $p^*$ to include that QC, which in turn points to $p$.
        Thus, $p^*$ strictly extends $p$.

        \item Note that $p^*$ cannot carry a TC from view $v + 1$ whose $\highQC$ field is non-$\bot$, since we have established that every TC formed in view $v + 1$ must use its high tip (and therefore has its $\highQC$ field set to $\bot$).

        \item If $p^*$ is a reproposal carrying a TC from view $v + 1$, it reproposes the high tip of that TC.
        As shown above, that high tip strictly extends $p$, so $p^*$ strictly extends $p$ as well.
    \end{itemize}
\end{itemize}
Thus, every valid proposal in view $v + 2$ strictly extends $p$, thereby preserving the commitment of $p$.
The same reasoning applies inductively to all views beyond $v + 2$.
Therefore, once $B$ is committed, no future proposal can revert or conflict with $p$, and no conflicting block $B'$ can ever be committed.

\medskip
\mypara{Tail-forking resistance.}
We now give an intuitive explanation of why \sysname satisfies tail-forking resistance.
Assume a leader proposes a block $B$ as part of a fresh proposal $p$ in some view $v$, that at least $f + 1$ correct validators vote for $p$, and that the leader does not equivocate in view $v$.
Tail-forking resistance states that proposal $p$ (equivalently, block $B$ carried by $p$) can never be abandoned.
As before, we illustrate the argument for view $v + 1$; the reasoning then applies inductively to all later views.

Consider any valid proposal $p^*$ issued in view $v + 1$.
We now show that $p^*$ must extend $p$, meaning that $p^*$ is either a reproposal of $p$ or builds directly on top of $p$.
There are several possibilities for the form of $p^*$:
\begin{itemize}
    \item \emph{Case 1:} Proposal $p^*$ is a fresh proposal carrying a QC from view $v$.
    Since $f + 1$ correct validators voted for $p$ in view $v$ and never vote for two different proposals in the same view, any QC formed in view $v$ must be the QC for $p$.
    Therefore, the parent of proposal $p^*$ must be $p$, which shows that $p$ is not abandoned by $p^*$.
    
    \item \emph{Case 2:} Proposal $p^*$ is either (i) a fresh proposal carrying an NEC from view $v + 1$ or carrying a TC from view $v$ with a non-$\bot$ $\highQC$ field, or (ii) a reproposal carrying a TC from view $v$.
    All of these possibilities rely on the existence of a TC formed in view $v$.

    We first show that any TC $\itc$ formed in view $v$ must satisfy two conditions:
    (1) its $\hightip$ field must be non-$\bot$ (implying that its $\highQC$ field is $\bot$), and  
    (2) this high tip must be the tip of proposal $p$.
    Indeed, at least $f + 1$ correct validators (precisely those who voted for $p$) update their local tip to the tip of $p$ in view $v$.
    Since a TC aggregates timeout messages from $2f + 1$ validators, at least one of these updated tips must appear in any valid TC and therefore in $\itc$.
    Furthermore, because the highest-view tip is chosen as the high tip and the leader does not equivocate in view $v$, the high tip of $\itc$ must be the tip of $p$.

    This implies that any valid proposal $p^*$ derived from a TC formed in view $v$ must extend $p$:
    \begin{itemize}
        \item If $p^*$ is a fresh proposal carrying an NEC from view $v + 1$, observe that forming an NEC requires at least $f + 1$ correct validators to issue $\NE$ messages upon receiving a TC from view $v$.
        Since this TC has the tip of $p$ as its high tip and at least $f + 1$ correct validators supported $p$ in view $v$, none of these validators issue an $\NE$ message.
        Consequently, such an NEC cannot be constructed, making this case impossible.

        \item Proposal $p^*$ cannot carry a TC from view $v$ whose $\highQC$ field is non-$\bot$, since we have already shown that every TC from view $v$ must use its high tip and therefore has its $\highQC$ field set to $\bot$.

        \item If $p^*$ is a reproposal carrying a TC from view $v$, it must repropose the high tip of that TC.
        As argued above, that high tip is exactly the tip of proposal $p$, so $p^*$ reproposes $p$ (its block $B$).
    \end{itemize}
\end{itemize}

\medskip
\mypara{Liveness.}
We now explain how \sysname guarantees that every block proposed by a correct leader after GST is eventually committed.
Consider a proposal $p$ (fresh or a reproposal) issued by a correct leader in some post-GST view $v$.
Since the network is stable after GST, the leader disseminates $p$ to all validators sufficiently quickly so that all correct validators issue their votes for $p$.
Because votes are sent both to the current leader and to the next leader, the leader of view $v$ receives at least $2f + 1$ votes for $p$ and constructs a QC for it.
This QC is then broadcast to all validators, and every correct validator receives a QC before timing out from view $v$.
Consequently, no TC can be formed in view $v$.
This fact is essential: it implies that any valid proposal in any later view must strictly extend $p$, meaning that $p$ remains on the chain of all future valid proposals.

Eventually, there must be three consecutive correct leaders in views strictly greater than $v$.
Let these views be $v^*$, $v^* + 1$, and $v^* + 2$, all greater than $v$.
Since any valid proposal after view $v$ must extend $p$, the proposal issued in view $v^*$ necessarily extends $p$.
Because we are post-GST, the leader of view $v^*$ successfully disseminates its proposal to all validators, who all vote for it.
Thus, the leader of view $v^* + 1$ collects these votes and forms a QC for the proposal from view $v^*$.
The leader of view $v^* + 1$ then issues a proposal extending the view-$v^*$ proposal via this QC; clearly, this proposal from view $v^* + 1$ also strictly extends $p$.
All correct validators vote for this view-($v^* + 1$) proposal as well, ensuring that the leader of view $v^* + 2$ forms a QC for the proposal from view $v^* + 1$ and issues a new proposal carrying that QC.
When a correct validator receives the proposal from view $v^* + 2$, it observes two consecutive QCs for the view-$v^*$ proposal: one from view $v^*$ and one from view $v^* + 1$.
The commitment rule causes the validator to commit the view-$v^*$ proposal.
Since the view-$v^*$ proposal strictly extends $p$, and since all ancestors of a committed proposal are also committed, proposal $p$ (and thus its block) is committed at this moment.

\medskip
\mypara{Optimistic responsiveness.}
The (non-leader-based) optimistic responsiveness property states that, when all validators are correct, \sysname commits $\Omega(d / \delta)$ blocks during any post-GST time interval of duration $d$, where $\delta$ denotes the actual network delay (as opposed to the pessimistic bound $\Delta$).
This follows from the fact that, after GST, every view successfully extends the previous one by forming a QC.
Since each view is led by a correct leader, the leader reliably disseminates its proposal to all validators, all correct validators vote for it, and the next correct leader collects these votes to form a QC and issues a proposal that carries it.
As a consequence, a block proposed in view~$v$ is committed by the end of view~$v+2$: 
view~$v$ forms a QC for the proposal, view~$v+1$ forms the ``child'' QC extending it, and view~$v+2$ disseminates these two consecutive QCs, completing the commit. 
Since each view takes $O(\delta)$ time when all validators are correct, a block proposed in view~$v$ is committed within $O(\delta)$ time.
Over a duration of $d$, the protocol executes $\Omega(d / \delta)$ views, and all but the last two of these views produce committed blocks.
(The final two views also lead to commitments, but their corresponding commits occur just after the end of the time interval.)
Lastly, since each correct leader can issue its proposal as soon as it collects either $2f + 1$ votes (sufficient to form a QC) or $2f + 1$ timeout messages (sufficient to form a TC), \sysname satisfies the leader-based optimistic responsiveness property.

\medskip
\mypara{Leader fault isolation.}
This property states that, after GST, if a view $v + 1$ led by a Byzantine leader is ``sandwiched'' between two views $v$ and $v + 2$ led by correct leaders, then the duration of view $v + 1$ is at most $\viewduration + O(\delta)$, where $\viewduration$ is a single timeout duration.
In other words, a single Byzantine leader cannot delay progress for more than one $\viewduration$ period.

The intuition is as follows.
Since view $v$ is led by a correct leader and we are after GST, the leader of view $v$ successfully constructs a QC for its proposal and disseminates it to all validators.
Upon receiving this QC, all correct validators move to view $v + 1$, and they do so within $O(\delta)$ time of one another.
Let $\tau(v + 1)$ denote the time when the first correct validator enters view $v + 1$.
From this moment, it takes at most $\viewduration + O(\delta)$ time for all correct validators to issue their timeout messages for view $v + 1$.
Indeed, every correct validator issues a timeout message after spending $\viewduration$ time in view $v + 1$ (unless it has already progressed to later views), and all correct validators enter view $v + 1$ within $O(\delta)$ time of $\tau(v + 1)$.
Thus, by time $\tau(v + 1) + \viewduration + O(\delta)$, every correct validator has produced its timeout message for view $v + 1$.
Consequently, after another $\delta$ time, all correct validators can construct a TC for view $v + 1$ and transition to view $v + 2$, ensuring that the Byzantine leader’s influence is confined to a single view-duration period.

\medskip
\mypara{Linearity and low latency on the happy path.}
On the happy path (after GST, with leaders behaving correctly), each view incurs only linear communication.
This is because each view follows the ``leader-to-all, all-to-next-leader'' pattern, which incurs linear number of messages.
Importantly, no timeout messages are issued on the happy path, and therefore no TCs are formed.
Since quadratic communication arises only from TC construction, all quadratic costs are confined to the unhappy path (i.e., during failures or before GST).

Regarding latency, forming and disseminating a QC for a proposal $p$ requires exactly three communication steps on the happy path:
step~1 disseminates $p$,
step~2 disseminates votes for $p$,
and step~3 disseminates the next proposal carrying the QC for $p$.
Because \sysname permits speculative commitment upon observing a single QC, proposal $p$ can be speculatively committed within these three communication rounds.
(Recall that such speculative commitments can only be reverted in the presence of equivocation.)

For definitive, irrevocable commitment, \sysname requires one additional round-trip to form the second QC:
one more step of vote dissemination for the proposal that carries a QC for $p$, followed by the dissemination of the next proposal that carries this second QC.
Thus, irrevocable commitment is achieved with only two additional communication steps beyond speculative commitment, for a total of five communication steps.

\section{Related Work} 
\label{sec:related}
\sysname belongs to the HotStuff family of consensus protocols; therefore, we primarily discuss this class of protocols. We refer the reader to~\Cref{tab:comparison} for a detailed comparison with prior works.

\medskip
\mypara{BeeGees.}\label{para:beegees} Among existing protocols, BeeGees~\cite{BEEGEES} is the most closely related to MonadBFT. It introduces the any-honest-leader (AHL) property, which implies tail-forking resistance. Since both BeeGees and MonadBFT are HotStufff~\cite{HotStuff} derivatives, they share  linear complexity on the happy path, as well as pipelining as properties. However, the two protocols differ in several respects.
\begin{itemize}
    \item First, BeeGees's Slow View Change requires the recovery leader to receive 
full block data from prior failed views: at the $k$-th consecutive failure, 
each of the $n$ validators sends $\mathcal{O}(k \cdot b)$ data, resulting 
in $\mathcal{O}(k \cdot b \cdot n)$ inbound data at the recovery leader, 
where $b$ is the block size. 
MonadBFT, having been designed with practical deployment in mind, must address this and it does so at the protocol level: the recovery leader either recovers the pending block or obtains a no-endorsement certificate proving the block is unavailable, without requiring full block data to be embedded in view change messages.\footnote{We note that BeeGees could in principle be adapted to circumvent this issue; however, given its primarily theoretical focus, this does not appear to have been a design priority.}

\item Second, MonadBFT introduces fast recovery via backup QCs, which ensures that a single Byzantine leader causes only one timeout delay (leader fault isolation). 

\item Third, when sufficient votes for implicit QC materialization are unavailable, BeeGees' recovery leader must wait $O(\Delta)$ before proposing, whereas MonadBFT maintains leader-based optimistic responsiveness on the unhappy path once timeout messages have been collected.

\item Finally, MonadBFT introduces the notion of speculative commitment with precise reversion guarantees: a speculatively committed block can only be reverted if the proposing leader equivocated. While BeeGees' AHL property and reproposal mechanism may implicitly support a form of speculative finality, BeeGees does not define or discuss speculative commitment. MonadBFT makes this a first-class protocol property, with formal guarantees and concrete latency benefits for applications.
\end{itemize}

\medskip
\mypara{Concurrent work HotStuff-1\cite{kang2024hotstuff1linearconsensusonephase}.}
The concurrent work HotStuff-1 is a leader-based protocol with linear communication complexity in both the happy and unhappy paths.
HotStuff-1 achieves a weaker speculative finalization property than \sysname, where validators can speculative finalize a block within 1.5 round trips only if the parent block is already committed.
Moreover, HotStuff-1 does not prevent tail-forking attacks. To mitigate them, it assigns each validator a long view, subdivided into slots, where the leader proposes one block per slot.
This design introduces several drawbacks. First, the view duration must be sufficiently long to allow an honest leader to finalize its block, which increases the risk of prolonged censorship and causes higher delays if the leader fails.
In fact, a rational validator trying to maximize its MEV would censor MEV-related transactions until the last block, to maximize MEV~\cite{alpturer2025timing}.
Furthermore, HotStuff-1 lacks optimistic responsiveness (in the sense of \malor\ property we define in~\Cref{sub:consensus_properties}) when a leader fails during view switch. 
Specifically, after detecting the failure of the previous view, leader in HotStuff-1 is still required to wait $O(\Delta)$ before proposing.
Finally, tail-forking remains possible for the block proposed before the leader is rotated out, which is the root cause of the tail-forking problem.

\medskip
\mypara{Other protocols in the HotStuff family.}  
Other members of the HotStuff family such as HotStuff~\cite{HotStuff-2:2023/397}, Fast-HotStuff~\cite{jalalzai2021fasthotstuff}, Jolteon~\cite{gelashvili2021jolteon}, HotStuff-2~\cite{HotStuff-2:2023/397}, and PaLa~\cite{Pala},  have higher commit latencies and provide only limited protection against tail-forking. 

\medskip
\mypara{Protocols in the PBFT family.}
Within the family of Practical Byzantine Fault Tolerance (PBFT) protocols~\cite{castro01practicalPhD,doidge2024moonshotoptimizingchainbasedrotating,VBFT}, PBFT with tentative execution (PBFT-TE)~\cite{castro01practicalPhD} introduces a fast path that reduces latency to three message delays, while reverting to the traditional PBFT protocol if the fast path fails. 
PBFT-TE incurs $O(n^2)$ message complexity (even in the optimistic case), is not pipelined, and prevents tail-forking only if $f+1$ correct validators retain the QC for the block. 
The Moonshot protocol~\cite{doidge2024moonshotoptimizingchainbasedrotating} also achieves three-message-delay latency but  requires $O(n^2)$ communication on the happy path, and only provides weak tail-forking resistance.
Finally, VBFT~\cite{VBFT} provides tail-forking resistance, has speculative commit latency of two message delays, but incurs $O(n^2)$ message complexity in the optimistic case.

\section*{Acknowledgments}
We thank Justin Jacob and Robert Kiel for helpful feedback on this work. We are especially grateful to Karolos Antoniadis, Manuel Bravo, Yacine Dolivet, and Denis Kolegov for their extensive comments and suggestions.

\iffc
    \bibliographystyle{splncs04}
    \bibliography{References}
\else
    \bibliographystyle{IEEEtran}
    \bibliography{References}
\fi
\appendix

\clearpage

\section{Proofs}
\label{sec:analysis}

\subsection{Safety}
\label{sub:safety}

Before proceeding with the proof of safety, we refine our ``extension'' terminology to also apply to tips.

\medskip
\mypara{Tips and proposals.}
We say that a proposal $p$ is a \emph{parent} of a tip $T$ of a fresh proposal $p'$ if and only if $p$ is a parent of $p'$. 
Since $p'$ might have multiple parents, tip $T$ can also have multiple parents.
We further say that a tip $T$ of a fresh proposal $p'$ \emph{extends} a proposal $p$ if and only if $p'$ extends $p$.
Finally, a tip $T$ of a fresh proposal $p'$ \emph{strictly extends} $p$ if and only if $p'$ strictly extends $p$.




\medskip
\mypara{Proof.}
Recall from~\Cref{sub:consensus_properties} that the safety property requires that no two correct validators commit distinct blocks at the same log position in their local logs.
To establish this, we first show that no correct validator votes for two distinct proposals within the same view.

\begin{lemma} 
\label{lemma:jovan_same_vote}
Let $\mathit{vote}_1$ and $\mathit{vote}_2$ be any two votes cast by a correct validator $i$ such that $\mathit{vote}_1.\view = \mathit{vote}_2.\view$. 
Then,
\[
\mathit{vote}_1.\blockid = \mathit{vote}_2.\blockid.
\]
\end{lemma}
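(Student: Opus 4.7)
\medskip
\noindent\textbf{Proof plan.} My plan is to localize all code paths that can produce a vote, show that $\highestvotedview$ acts as a monotone guard preventing two distinct votes per view, and then verify that the remaining corner case (a regular vote and a tip vote in the same view) necessarily refers to the same block. Concretely, a correct validator $i$ constructs a vote only in two places: (i) line~\ref{line:create_vote_normal} of \Cref{Algorithm:Consensus-Execution_1}, when handling a proposal, and (ii) line~\ref{line:create_vote_timeout} of \CreateTimeoutMsg\ (\Cref{Algorithm:Utilities}), when emitting a timeout message. In both places, the vote is the output of $\CreateVote(\curView, \localhightip.\blockheader)$, so $\ivote.\blockid = \Hash(\localhightip.\blockheader.\blockhash, \curView)$. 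Hence it suffices to show that any two votes issued by $i$ while in the same view $v$ carry the same $\localhightip.\blockheader.\blockhash$.

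The first step is to record the monotonicity invariants of $\highestvotedview$. After a regular vote at line~\ref{line:highest_voted_view_update_proposal}, $\highestvotedview$ equals the view of the voted proposal; after a timeout, line~\ref{line:update_highest_voted_view_timeout} sets $\highestvotedview \gets \curView$ before \CreateTimeoutMsg\ is called; and $\highestvotedview$ is never decreased elsewhere. Combining this with the guard $\iprop.\view > \highestvotedview$ at line~\ref{Alg:Algorithm: Consensus-Execution:safety_check}, at most one regular vote can be cast in any single view (a second proposal with the same view fails the guard, equivocating or not). Since the timeout event for a given view is triggered at most once (the ``if not already triggered'' clause at line~\ref{line:premature_timeout}), at most one tip vote is cast per view as well. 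Thus the only non-trivial case is when $i$ casts both a regular vote and a tip vote in view $v$.

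In that case, the regular vote must precede the tip vote: otherwise $\highestvotedview = v$ would already hold when the proposal arrives, and the proposal's guard would reject the regular vote. At the moment of the regular vote, line~\ref{Alg:update_tip} has just assigned $\localhightip \gets \GetTip(\iprop)$; inspecting \GetTip\ (\Cref{line:function_get_tip}) together with the reproposal check in \SafetyCheck\ gives $\localhightip.\blockheader.\blockhash = \iprop.\block.\blockhash$ regardless of whether $\iprop$ is fresh or a reproposal. The only place where $\localhightip$ is subsequently mutated is line~\ref{Alg:update_tip}, again gated by $\iprop'.\view > \highestvotedview = v$. Such a mutation would require processing a proposal $\iprop'$ with $\iprop'.\view > v$, which, via the \IncrementView\ call at line~\ref{line:increment_view_proposal}, advances $\curView$ above $v$ before the update; hence no mutation of $\localhightip$ can happen while $i$ is still in view $v$. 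Therefore, at the later tip vote, $\localhightip.\blockheader.\blockhash$ is unchanged and $\curView$ is still $v$, so the two votes agree on both the hashed block and the hashed view, and consequently on $\blockid$.

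The main obstacle I expect is ensuring that no subtle interleaving lets $\localhightip$ change between the two votes without first leaving view $v$; this hinges on tying together the voting guard, the monotone growth of $\highestvotedview$, and the property that every update to $\localhightip$ in a strictly higher view is preceded by an \IncrementView\ that takes $i$ out of view $v$. A secondary bookkeeping obstacle is the symmetry between fresh proposals and reproposals inside \GetTip, which must be handled explicitly so that the equality $\localhightip.\blockheader.\blockhash = \iprop.\block.\blockhash$ holds in both sub-cases.
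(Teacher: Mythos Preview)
Your proof is correct and follows essentially the same approach as the paper's: enumerate the two vote-creation sites, use the $\highestvotedview$ guard to rule out two regular votes and to force the regular-then-timeout ordering, and then argue that $\localhightip$ and $\curView$ cannot change between the two votes without the validator leaving view $v$. Your treatment is slightly more explicit (e.g., unpacking \GetTip\ to get $\localhightip.\blockheader.\blockhash = \iprop.\block.\blockhash$, which is actually unnecessary since identity of $\localhightip$ between the two calls already suffices), but the argument structure matches the paper's.
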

\begin{proof}
For the sake of contradiction, assume that the statement of the lemma does not hold.
Without loss of generality, suppose validator $i$ issues vote $\mathit{vote}_1$ before issuing vote $\mathit{vote}_2$.
It is impossible for validator $i$ to send two different votes at line~\ref{line:send_vote_normal} of \Cref{Algorithm:Consensus-Execution_1}, as it updates the $\mathsf{highest\_voted\_view}$ variable immediately after casting the first vote (line~\ref{line:highest_voted_view_update_proposal} of \Cref{Algorithm:Consensus-Execution_1}).
Likewise, validator $i$ cannot send two different votes at line~\ref{line:broadcast_timeout_message} of \Cref{Algorithm:Consensus-Execution_1}, as it can time out from a given view $v = \mathit{vote}_1.\view = \mathit{vote}_2.\view$ at most once.
Furthermore, $\mathit{vote}_1$ cannot be sent at line~\ref{line:broadcast_timeout_message} of \Cref{Algorithm:Consensus-Execution_1} as, upon timing out from view $v$, validator $i$ updates its $\mathsf{highest\_voted\_view}$ variable to $v$ (line~\ref{line:update_highest_voted_view_timeout} of \Cref{Algorithm:Consensus-Execution_1}), thereby preventing it from sending a subsequent vote at line~\ref{line:send_vote_normal} of \Cref{Algorithm:Consensus-Execution_1}.
Hence, we conclude that $\mathit{vote}_1$ must be sent at line~\ref{line:send_vote_normal} of \Cref{Algorithm:Consensus-Execution_1}, and $\mathit{vote}_2$ must be sent at line~\ref{line:broadcast_timeout_message} of \Cref{Algorithm:Consensus-Execution_1}.

When validator $i$ creates $\mathit{vote}_1$, it invokes the $\CreateVote$ function with parameters $\curView$ and $\localhightip.\blockheader$ (line~\ref{line:create_vote_normal} of \Cref{Algorithm:Consensus-Execution_1}).  
When issuing $\mathit{vote}_2$, validator $i$’s $\localhightip$ and $\curView$ variables remain unchanged, since neither is modified after $\mathit{vote}_1$ is cast.  
(If either variable were modified, validator $i$ would have entered a view greater than $v$, which would prevent it from issuing $\mathit{vote}_2$ for view $v$.)  
Therefore, when $\mathit{vote}_2$ is created, validator $i$ invokes the $\CreateVote$ function with parameters $\curView$ and $\localhightip.\blockheader$ (line~\ref{line:create_vote_timeout} of \Cref{Algorithm:Utilities}).  
Since these parameters are identical to those used for $\mathit{vote}_1$, the statement of the lemma directly follows.
\qed
\end{proof}

Next, we show that two distinct proposals cannot both obtain QCs in the same view.

\begin{lemma} 
\label{lemma:jovan_qc_intersection}
Let $\mathit{qc}_1$ and $\mathit{qc}_2$ be any two QCs such that $\mathit{qc}_1.\view = \mathit{qc}_2.\view$.
Then, 
\[
\mathit{qc}_1.\blockid      = \mathit{qc}_2.\blockid.
\]
\end{lemma}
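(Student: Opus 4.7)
The plan is to prove this by a standard quorum-intersection argument, reducing it to Lemma~\ref{lemma:jovan_same_vote}. First, I would recall from the definition of a valid QC (the $\ValidQC$ predicate in \Cref{Algorithm:ValidationPredicates}) that $\mathit{qc}_1.\Sigma$ and $\mathit{qc}_2.\Sigma$ each aggregate $2f+1$ valid signatures from distinct validators, where each individual signature is on the tuple $\langle \mathit{qc}_k.\view, \mathit{qc}_k.\blockhash, \mathit{qc}_k.\blockid \rangle$ for $k \in \{1,2\}$. In particular, each aggregated signature corresponds to a vote (as defined in the ``Votes \& quorum certificates'' paragraph of \Cref{sub:data_structures}) cast by the contributing validator in view $\mathit{qc}_k.\view$.

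Next, I would invoke quorum intersection. Since $n = 3f+1$ and both certificates aggregate $2f+1$ signers, the two signer sets intersect in at least $(2f+1) + (2f+1) - (3f+1) = f+1$ validators. Because at most $f$ validators are Byzantine, at least one of these common signers, call it validator $i$, is correct. Thus $i$ contributed a vote $\mathit{vote}_1$ to $\mathit{qc}_1$ and a vote $\mathit{vote}_2$ to $\mathit{qc}_2$, with $\mathit{vote}_1.\view = \mathit{qc}_1.\view = \mathit{qc}_2.\view = \mathit{vote}_2.\view$ and $\mathit{vote}_k.\blockid = \mathit{qc}_k.\blockid$ for $k \in \{1,2\}$.

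Applying Lemma~\ref{lemma:jovan_same_vote} to $\mathit{vote}_1$ and $\mathit{vote}_2$ then yields $\mathit{vote}_1.\blockid = \mathit{vote}_2.\blockid$, which immediately gives $\mathit{qc}_1.\blockid = \mathit{qc}_2.\blockid$, as required.

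The main obstacle, if any, is a subtle one: ensuring that every signature aggregated into a QC genuinely corresponds to a ``vote'' in the sense covered by Lemma~\ref{lemma:jovan_same_vote}. I would address this by noting that votes enter the QC-building machinery through $\HandleVote$ in \Cref{Algorithm: Consensus-Execution-2}, which is invoked both on direct vote receipt (\Cref{Algorithm:Consensus-Execution_1}, line~\ref{line:form_qc}) and on the tip vote carried inside a valid timeout message (\Cref{Algorithm:Pacemaker}, $\HandleTimeout$). In both paths the aggregated signatures are signatures on $\langle \view, \blockhash, \blockid \rangle$ produced by correct validators only via $\CreateVote$ in \Cref{Algorithm:Utilities} (used at line~\ref{line:create_vote_normal} of \Cref{Algorithm:Consensus-Execution_1} and line~\ref{line:create_vote_timeout} of \Cref{Algorithm:Utilities}), which are precisely the actions covered by Lemma~\ref{lemma:jovan_same_vote}. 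Hence no ``auxiliary'' signing path escapes the lemma's scope, and the quorum-intersection step closes the argument.
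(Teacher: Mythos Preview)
Your proof is correct and follows essentially the same approach as the paper: a quorum-intersection argument to locate a common correct signer, followed by an application of Lemma~\ref{lemma:jovan_same_vote}. Your version is more detailed (in particular, the discussion of why every signature aggregated into a QC corresponds to a vote in the sense of Lemma~\ref{lemma:jovan_same_vote}), but the underlying idea is identical.
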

\begin{proof}
This follows from quorum intersection: since each QC has $2f + 1$ votes and $n =3f + 1$, at least one honest validator must have voted for both. 
As honest validators never vote for two distinct proposals in the same view (by \Cref{lemma:jovan_same_vote}), we have $\mathit{qc}_1.\blockid = \mathit{qc}_2.\blockid$.
\qed
\end{proof}

The lemma below shows that whenever a correct validator votes for a fresh proposal, its local tip is aligned with that proposal’s block.


\begin{lemma} \label{lemma:jovan_local_tip}
Let $\mathit{vote}$ be a vote cast by a correct validator $i$ in view $v$ for a fresh proposal $p$:
\[
\mathit{vote}.\blockid = p.\blockid,
\quad 
\mathit{vote}.\view = v,
\quad \text{and} \quad
p.\block.\originalview = v.
\]
Then, at the moment $\mathit{v ote}$ is cast, the following is satisfied at validator $i$:
(1) the block header stored in $\localhightip$ matches the block header of $p.\block$, and 
(2) $\localhightip.\view = v$.
\end{lemma}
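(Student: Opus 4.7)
The plan is to perform a case analysis on which line in the algorithms actually produces $\mathit{vote}$. By \Cref{lemma:jovan_same_vote} (and the reasoning in its proof), a correct validator issues at most one vote in view $v$, and this vote can only originate from either line~\ref{line:send_vote_normal} of \Cref{Algorithm:Consensus-Execution_1} (the ``normal'' vote cast upon processing a proposal) or from inside a timeout message produced at line~\ref{line:broadcast_timeout_message} of \Cref{Algorithm:Consensus-Execution_1} via $\CreateTimeoutMsg$.

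In the normal-vote case, I will use the fact that immediately before the vote is created, validator $i$ executes $\localhightip \gets \GetTip(\iprop)$ at line~\ref{Alg:update_tip}, and that the vote itself is built from $\localhightip.\blockheader$ at line~\ref{line:create_vote_normal}. Since $\mathit{vote}.\blockid = p.\blockid$ and the $\blockid$ field is a hash that binds the block header hash to the view, collision resistance forces the proposal being processed to be $p$ itself (or at least to carry the same block and the same view). Because $p$ is a fresh proposal, $\GetTip(p)$ returns the tip of $p$, whose block header equals the block header of $p.\block$ and whose view equals $p.\view = v$, yielding both claims directly.

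For the timeout-vote case, the vote is produced inside $\CreateTimeoutMsg$ as $\CreateVote(v, \localhightip.\blockheader)$ (line~\ref{line:create_vote_timeout} of \Cref{Algorithm:Utilities}), so $\mathit{vote}.\blockid = \Hash(\localhightip.\blockheader.\blockhash, v)$. Equating this with $p.\blockid = \Hash(p.\block.\blockhash, v)$ and invoking collision resistance of $\Hash$ gives $\localhightip.\blockheader.\blockhash = p.\block.\blockhash$; then, since $\blockhash$ is itself a hash of $(\originalview, \payloadhash, \qc)$, the two block headers coincide, establishing claim~(1). For claim~(2), I will appeal to an auxiliary invariant: at every correct validator and at every point in time, the stored $\localhightip$ satisfies $\localhightip.\view = \localhightip.\blockheader.\originalview$. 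Combined with $\localhightip.\blockheader.\originalview = p.\block.\originalview = v$ (which holds because $p$ is fresh), this yields $\localhightip.\view = v$.

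The main obstacle, and the only non-routine ingredient, is establishing the auxiliary invariant $\localhightip.\view = \localhightip.\blockheader.\originalview$. I will prove it by induction over updates to $\localhightip$: the only write site is line~\ref{Alg:update_tip} of \Cref{Algorithm:Consensus-Execution_1}, where $\localhightip$ is assigned $\GetTip(\iprop)$ for some $\iprop$ that has passed $\SafetyCheck$. If $\iprop$ is fresh, $\GetTip$ returns the tip of $\iprop$, which is validated by $\ValidTip$ at line~\ref{line:jovan_valid_tip_fresh_proposal} of \Cref{Algorithm:ValidationPredicates} and hence satisfies the constraint at line~\ref{line:jovan_require_block_view_fresh}, namely $\itip.\blockheader.\originalview = \itip.\view$. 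Otherwise, $\GetTip$ returns $\iprop.\tc.\hightip$, which is validated by $\ValidTC$ (invoked inside $\SafetyCheck$ via the check at line~\ref{line:valid_proposal_tc_check}) and therefore also passes $\ValidTip$, satisfying the same equation. This closes the induction and completes the proof.
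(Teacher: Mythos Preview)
Your proof is correct, but the decomposition differs from the paper's. The paper does not split into normal-vote versus timeout-vote cases; instead it observes that in both cases the vote is created via $\CreateVote(\curView,\localhightip.\blockheader)$, derives $\localhightip.\blockheader.\blockhash=p.\block.\blockhash$ and $\curView=v$, and then performs a case analysis on the proposal $p'$ that most recently updated $\localhightip$ (splitting on $p'.\view<v$ versus $p'.\view=v$, and within each on whether $p'$ was fresh). Your route instead isolates an explicit invariant, $\localhightip.\view=\localhightip.\blockheader.\originalview$, and proves it once by inspecting the single write site and the $\ValidTip$ check it inherits via $\SafetyCheck$/$\ValidTC$. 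This is arguably cleaner and more reusable: the invariant immediately dispatches claim~(2) in either voting path, and in fact you could have used it for the normal-vote case as well rather than treating that case separately.

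One small point to tighten in your normal-vote case: you write ``Because $p$ is a fresh proposal, $\GetTip(p)$ returns the tip of $p$'', but the processed proposal is $\iprop$, not $p$; you need that $\IsFreshProposal(\iprop)$ holds. This does follow, since $\iprop$ passed $\SafetyCheck$ and has $\iprop.\block.\originalview=v=\iprop.\view$, so the non-fresh branch (which requires $\iprop.\view>\iprop.\block.\originalview$ at line~\ref{line:jovan_check_block_view_reproposal}) is excluded. With that step made explicit, the argument is complete.
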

\begin{proof}
Validator $i$ constructs $\mathit{vote}$ by invoking $\CreateVote(\curView, \localhightip.\blockheader)$ (see line~\ref{line:create_vote_normal} of \Cref{Algorithm:Consensus-Execution_1} and line~\ref{line:create_vote_timeout} of \Cref{Algorithm:Utilities}).  
Recall that
\[
p.\blockid = \Hash(p.\block.\blockhash, p.\view = v)
\]
and
\[
\mathit{vote}.\blockid = \Hash(\localhightip.\blockheader.\blockhash, \curView).
\]
Since $\mathit{vote}.\blockid = p.\blockid$, it follows that
\[
\localhightip.\blockheader.\blockhash = p.\block.\blockhash
\qquad\text{and}\qquad
\curView = p.\view = v.
\]
(Otherwise, $p.\blockid$ would differ from $\mathit{vote}.\blockid$.)
Note additionally that 
\[
p.\block.\blockhash = \localhightip.\blockheader.\blockhash = \Hash(p.\block.\originalview = v, \cdot, \cdot).
\]

Let $p'$ be the proposal whose reception most recently updated $\localhightip$ to its current value, which is now used in $\mathit{vote}$.
Since validator $i$ only processes valid proposals, $p'$ must be valid.
Moreover, because $\localhightip$ is used in a vote for view $v$, we must have $p'.\view \leq v$.
We now consider the two cases for $p'$:

\begin{itemize}
    \item \emph{Case 1:} $p'.\view < v$.
    
    Since $p'$ is valid, it follows from line~\ref{line:jovan_safety_check_proposal} of \Cref{Algorithm:ValidationPredicates} that $p'.\block.\originalview \leq p'.\view < v$.
    We further distinguish:
    \begin{itemize}
        \item If $\IsFreshProposal(p') = \text{true}$, then $\localhightip$ is updated to the tip of proposal $p'$ (see the $\GetTip(p')$ function), which itself contains the block header of $p'.\block$.
        However, as $p.\block.\blockhash = \Hash(v, \cdot, \cdot)$ and $p'.\block.\blockhash = \Hash(< v, \cdot, \cdot)$, we must have 
        \[
        \localhightip.\blockheader.\blockhash \neq p.\block.\blockhash,
        \] 
        contradicting the equality above.
        Therefore, this case is impossible.

        \item If $\IsFreshProposal(p') = \text{false}$, then, according to the check at line~\ref{alg:safetycheck:block_hash_check} of \Cref{Algorithm:ValidationPredicates}, we have that
\[
p'.\tc.\hightip.\blockheader.\blockhash = p'.\block.\blockhash = \Hash(< v, \cdot, \cdot).
\]
In this case, $\localhightip$ is updated to $p'.\tc.\hightip$ (see the $\GetTip(p')$ function).
However, since $p.\block.\blockhash = \Hash(v, \cdot, \cdot)$ while $p'.\tc.\hightip.\blockheader.\blockhash = \Hash(< v, \cdot, \cdot)$, it follows that 
\[
        \localhightip.\blockheader.\blockhash \neq p.\block.\blockhash.
        \]
Thus, this case is also impossible.
    \end{itemize}

    \item \emph{Case 2:} $p'.\view = v$.
    
    Since $p'$ is valid, line~\ref{line:jovan_safety_check_proposal} of \Cref{Algorithm:ValidationPredicates} ensures that $p'.\block.\originalview \leq v$.
    We distinguish:
    \begin{itemize}
        \item If $\IsFreshProposal(p') = \text{true}$, then $p'.\block.\originalview = v$ (because of the check at line~\ref{line:jovan_require_block_view_fresh} of \Cref{Algorithm:ValidationPredicates}, called at line~\ref{line:jovan_valid_tip_fresh_proposal}) and $\localhightip$ is updated to the tip of $p'$ (see the $\GetTip(p')$ function).
        Furthermore, since $\localhightip.\blockheader.\blockhash = p.\block.\blockhash = p'.\block.\blockhash$, it follows that $p.\block = p'.\block$ and thus $\localhightip.\blockheader$ indeed matches the block header of $p.\block$.
        Furthermore, $\localhightip.\view = v$ (as $\localhightip$ is updated to the tip of $p'$ with $p'.\view = v$), proving the lemma in this case.

        \item If $\IsFreshProposal(p') = \text{false}$, then line~\ref{line:jovan_check_block_view_reproposal} of \Cref{Algorithm:ValidationPredicates} implies $p'.\block.\originalview < v$.
        Moreover, due to the check at line~\ref{alg:safetycheck:block_hash_check} of \Cref{Algorithm:ValidationPredicates}, we have that
        \[
        p'.\tc.\hightip.\blockheader.\blockhash = p'.\block.\blockhash = \Hash(< v, \cdot, \cdot).
        \]
        In this case, $\localhightip$ is updated to $p'.\tc.\hightip$ (see the $\GetTip(p')$ function).
        However, since $p.\block.\blockhash = \Hash(v, \cdot, \cdot)$ while $p'.\tc.\hightip.\blockheader.\blockhash = \Hash(< v, \cdot, \cdot)$, it follows that 
        \[
        \localhightip.\blockheader.\blockhash \neq p.\block.\blockhash,
        \]
        which renders this case impossible.
    \end{itemize}
\end{itemize}
The only feasible scenario is the subcase where $p'.\view = v$ and $\IsFreshProposal(p') = \text{true}$, under which the desired statement holds.
This concludes the proof.
\qed
\end{proof}




    

The following lemma is critical for the safety property of the \sysname protocol.
In essence, this shows that once $f + 1$ correct validators vote for the ``second-level'' QC of a proposal, 
every valid proposal in subsequent views must strictly extend that proposal.

\begin{lemma}\label{lemma:jovan_safety_crucial}
Let $p$ be any proposal with $p.\view = v$.  
Suppose that at least $f + 1$ correct validators vote for a proposal $p'$ satisfying
\[
p'.\view = v + 1, 
\quad 
p'.\block.\qc.\view = v, 
\quad \text{and} \quad 
p'.\block.\qc.\blockid = p.\blockid.
\]
Then, any valid proposal $p^*$ issued in a view $v^* > v + 1$ (i.e., $p^*.\view = v^* > v + 1$) must strictly extend $p$.
\end{lemma}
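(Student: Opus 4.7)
The plan is a strong induction on $v^*$, with $v^* = v + 2$ as the base case. Before starting, I would establish three structural facts about certificates. (F1) Every QC in view $v$ shares its $\blockid$ with $p$: this is immediate from \Cref{lemma:jovan_qc_intersection} because $p'.\block.\qc$ is itself a QC in view $v$ pointing to $p$. (F2) Every QC in view $v+1$ shares its $\blockid$ with $p'$: combining \Cref{lemma:jovan_same_vote,lemma:jovan_qc_intersection} with the $f+1$ correct voters of $p'$ forces any $2f+1$ quorum in view $v+1$ to attest to $p'.\blockid$. (F3) Every valid TC at view $v+1$ has $\highQC = \bot$ and a non-$\bot$ $\hightip$ with $\hightip.\view = v+1$ and $\hightip.\blockheader.\qc.\blockid = p.\blockid$.

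The pivotal fact is (F3). By \Cref{lemma:jovan_local_tip}, each of the $f+1$ correct voters of $p'$ has $\localhightip.\view = v+1$ immediately after voting; if such a validator later times out in view $v+1$, its local high QC has view $\leq v$ (otherwise it would have advanced to view $v+2$), so by the logic of \CreateTimeoutMsg its timeout message contributes a tip of view $v+1$. A pigeonhole between the $f+1$ correct voters and the $2f+1$ signers of any TC at view $v+1$ then forces $\max(\tipsviews) = v+1$ inside the TC. Combined with the $\ValidTC$ constraint $\max(\tipsviews) \leq \highQC.\view < \itc.\view = v+1$ along the high-QC branch, this rules that branch out, so $\hightip$ is set with $\hightip.\view = v+1$. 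Because every valid tip $T$ at view $v+1$ satisfies $T.\blockheader.\qc.\view \leq v$, the tie-break on $\qc.\view$ together with the tip of $p'$ (which has $\qc.\view = v$) pins $\hightip.\blockheader.\qc.\view = v$, and (F1) then pins $\hightip.\blockheader.\qc.\blockid = p.\blockid$.

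With (F1)--(F3) in place, the base case reduces to a finite enumeration of the forms $p^*$ at view $v + 2$ may assume: fresh with a QC from view $v+1$; fresh with an NEC; fresh via a TC carrying $\highQC$; or a reproposal carrying a TC from view $v+1$. The first case is closed by (F2): the parent equals $p'$, which strictly extends $p$. The $\highQC$-carrying case is ruled out by (F3). In the remaining two cases, (F3) pins $\hightip$'s $\qc$ to $p.\blockid$, so either the reproposed block or the new fresh block derived from the NEC has its parent-QC matching $p$; since $p^*.\block$ and $p.\block$ differ in $\originalview$, the extension is strict. The only subtlety is leader equivocation in view $v+1$, which may admit several valid view-$v+1$ tips, but (F1) guarantees that each of them still extends $p$.

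The inductive step propagates the tip/QC-floor argument to later views: any correct voter of $p'$ keeps $\localhightip.\view \geq v+1$ forever, since $\localhightip$ is only replaced upon voting for a proposal of strictly higher view, which by the induction hypothesis strictly extends $p$. Consequently, each timeout message such a validator contributes to any later TC supplies either a tip or a QC whose associated proposal extends $p$, and the TC-construction and tie-breaking rules propagate this property through $\hightip$ and $\highQC$. Together with the induction hypothesis applied to the parent proposal referenced by any QC carried in $p^*$, the base-case enumeration closes each sub-case at $v^*$. I expect the main obstacle to be exactly this propagation step: one must argue that no adversarial TC at a later view can be crafted whose high tip or high QC escapes the chain through $p$ using only view-$\leq v$ certificates, since at least one correct voter of $p'$ necessarily contributes a witness of view $\geq v+1$ to every such TC.
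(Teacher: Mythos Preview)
Your approach is correct and essentially mirrors the paper's: the paper packages your (F1)--(F3), the local-tip floor, and the TC-propagation claims as five simultaneous invariants $X,Y,Q,W,Z$ proved by strong induction, with exactly the case enumeration you describe at each view. The one imprecision worth tightening is that for a reproposal $\localhightip$ is set to the TC's $\hightip$ (not to the proposal's own tip), so ``$\localhightip$ is only replaced by a tip of strictly higher view'' is not literally true---you need the $Z$-like invariant on $\hightip$ already in hand when you prove the local-tip floor, which is why the paper makes all five invariants explicit and inducts on them jointly.
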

\begin{proof}
Let $S$ denote the set of correct validators, with $|S| \geq f + 1$, that voted for fresh proposal $p'$ in view $v + 1$.  
For every correct validator $i \in S$, the following holds.  
At the moment $i$ votes for proposal $p'$ in view $v + 1$, \Cref{lemma:jovan_local_tip} implies that  
(1) $\localhightip$'s block header matches the block header of $p'.\block$, and  
(2) $\localhightip.\view = v + 1$.  
Hence, since validator $i$ cannot update its $\localhightip$ variable to a tip with view $v + 1$ in any view smaller than $v + 1$, it follows that $i$ must have updated its local tip in view $v + 1$ to a tip whose block header matches the block header of $p'.\block$.

We prove the lemma using induction on the view number $v'$.
Let us define the following five induction hypotheses, $X(v')$, $Y(v')$, $Q(v')$, $W(v')$, and $Z(v')$:
\begin{itemize}\itemsep2pt
    \item $X(v')$: Any (valid) proposal $p^*$ in view $v'>v+1$ strictly extends proposal $p$ and satisfies $p^*.\block.\qc.\view \geq v$. 
    Consequently, any (valid) tip $T$ of a fresh proposal from view $v' > v + 1$ also strictly extends proposal $p$ and satisfies $T.\blockheader.\qc.\view \geq v$.
     
    \item $Y(v')$: If any validator $i \in S$ updates its local tip to some tip $T$ in view $v' \geq v + 1$, then $T.\view \geq v + 1$ and $T.\blockheader.\qc.\view \geq v$.

    \item $Q(v')$: For any QC $\mathit{qc}$ with $\mathit{qc}.\view = v'\geq v+1$, any proposal $p_{\mathit{qc}}$ with $p_{\mathit{qc}}.\blockid = \mathit{qc}.\blockid$ strictly extends proposal $p$.

    \item $W(v')$: Let $\mathit{tc}$ be any TC with $\mathit{tc}.\view = v'\geq v+1$ such that $\mathit{tc}.\highQC \neq \bot$.
    Then, any proposal $p_{\mathit{tc}}$ with $p_{\mathit{tc}}.\blockid = \mathit{tc}.\highQC.\blockid$ strictly extends $p$ and $\mathit{tc}.\highQC.\view \geq v + 1$.
    
    \item $Z(v')$: Let $\mathit{tc}$ be any TC with $\mathit{tc}.\view = v'\geq v+1$ and $\mathit{tc}.\hightip \neq \bot$.
    Then, the proposal whose tip is $\mathit{tc}.\hightip$ strictly extends $p$, $\mathit{tc}.\hightip.\view \geq v + 1$ and $\mathit{tc}.\hightip.\blockheader.\qc.\view \geq v$.
\end{itemize}
Note that proving $X(v')$ is equivalent to proving the statement of the lemma. 
Therefore, we start by proving the rest. 

\medskip
\noindent \underline{$Y(v' = v + 1)$:}
Consider any correct validator $i \in S$.  
By the first paragraph of the proof, validator $i$ updates its local tip in view $v + 1$ to a tip whose block header matches that of $p'.\block$ and whose view is $v + 1$.
Since $i$ can update its local tip at most once per view and $p.\view = v$, it follows that $Y(v' = v + 1)$ holds.




\medskip
\noindent \underline{$Q(v' = v + 1)$:}
Let $\mathit{qc}$ be any QC with $\mathit{qc}.\view = v + 1$.
Since $f + 1$ correct validators voted for proposal $p'$ in view $v + 1$ and at least one of these validators must contribute to $\mathit{qc}$, \Cref{lemma:jovan_same_vote} implies that $\mathit{qc}.\blockid = p'.\blockid$.
Hence, the block contained in any proposal $p_{\mathit{qc}}$ with $p_{\mathit{qc}}.\blockid = \mathit{qc}.\blockid$ must be identical to the block contained in $p'$.
Therefore, $p_{\mathit{qc}}$ strictly extends $p$.

\medskip
\noindent \underline{$W(v' = v + 1)$:}
Let $\mathit{tc}$ be any TC with $\mathit{tc}.\view = v + 1$ and $\mathit{tc}.\highQC \neq \bot$.
By quorum intersection (since $|S| \geq f + 1$ and $\mathit{tc}$ contains information from $2f + 1$ validators), $\mathit{tc}$ includes information from at least one validator $i \in S$.
Since $Y(v + 1)$ holds and $i$ updated its local tip to a tip whose view is $v + 1$ and whose block header matches the block header of $p'.\block$ (from the first paragraph of the proof) before contributing to $\mathit{tc}$, $\mathit{tc}.\highQC.\view \geq v + 1$.
Moreover, since $\mathit{tc}.\view = v + 1$, we have $\mathit{tc}.\highQC.\view \leq v$ (due to the $\ValidTC$ function).
Therefore, this case is impossible; that is, no TC from view $v + 1$ can have a non-empty $\highQC$ field.


\medskip
\noindent \underline{$Z(v' = v + 1)$:}
Let $\mathit{tc}$ be any TC with $\mathit{tc}.\view = v + 1$ and $\mathit{tc}.\hightip \neq \bot$.
By quorum intersection (since $|S| \geq f + 1$ and $\mathit{tc}$ contains information about $2f + 1$ validators), $\mathit{tc}$ includes information from at least one validator $i \in S$.
Because $Y(v + 1)$ holds and $i$ updated its local tip to a tip whose view is $v + 1$ and whose block header matches the block header of $p'.\block$ (from the first paragraph of the proof) before contributing to $\mathit{tc}$, $\mathit{tc}.\hightip.\view \geq v + 1$.
Given that $\mathit{tc}.\view = v + 1$, we have that $\mathit{tc}.\hightip.\view \leq v + 1$ (by the $\ValidTC$ function).
Therefore, $\mathit{tc}.\hightip.\view = v + 1$.
Moreover, since the local tip of validator $i \in S$ contains a block header that matches the block header of $p'.\block$, it follows that $\mathit{tc}.\tipsviews[i] = v + 1$ and $\mathit{tc}.\qcsviews[i] = v$.
(Note that validator $i$ reports its local tip and not its $\localhighQC$ variable, since its local tip is from view $v + 1$, whereas its $\localhighQC$ variable cannot hold any QC from any view greater than $v$.)
This further implies that $\mathit{tc}.\hightip.\blockheader.\qc.\view = v$.
As a QC was formed for proposal $p$ in view $v$, \Cref{lemma:jovan_qc_intersection} ensures that a parent of the proposal whose tip is $\mathit{tc}.\hightip$ must be $p$.
Therefore, the proposal whose tip is $\mathit{tc}.\hightip$ strictly extends $p$, which implies that $Z(v' = v + 1)$ holds.

\medskip
\noindent \underline{$X(v' = v + 2)$:}
Let us consider all possible cases for a valid proposal $p^*$ with $p^*.\view = v + 2$:
\begin{itemize}\itemsep2pt
    \item Suppose $p^*$ is a fresh proposal. Consider the following three possibilities: 
    \begin{itemize}\itemsep2pt
        \item Let $p^*.\block.\qc.\view = v + 1$. Then, by quorum intersection, at least one validator $i\in S$ must contribute to $p^*.\block.\qc$. 
        Hence, \Cref{lemma:jovan_same_vote} implies that a parent of proposal $p^*$ must be proposal $p'$.
        Since $p'$ strictly extends proposal $p$, $p^*$ strictly extends proposal $p$, thereby proving $X(v' = v + 2)$ in this case.

        \item Let $p^*.\nec \neq \bot$.
        We first prove that $p^*.\nec.\hightipQCview = v$.
        In order for $p^*.\nec$ to be formed, a correct validator $i$ must have received an NE request accompanied by a TC $\mathit{tc}$ with $\mathit{tc}.\view = v' - 1 = v + 1$ and $\mathit{tc}.\hightip \neq \bot$.
        Given that $Z(v + 1)$ holds, $\mathit{tc}.\hightip$ strictly extends proposal $p$ and $\mathit{tc}.\hightip.\view \geq v + 1$.
        Moreover, given that $\mathit{tc}.\view = v + 1$, it follows that $\mathit{tc}.\hightip.\view = v + 1$ (due to the $\ValidTC$ function).
        Given that $Z(v + 1)$, we additionally have that $\mathit{tc}.\hightip.\blockheader.\qc.\view \geq v$.
        Because $\mathit{tc}.\hightip.\view = v + 1$ and $\mathit{tc}.\hightip.\blockheader.\qc.\view < v + 1$, we have that
        $p^*.\nec.\hightipQCview$ indeed must be equal to $v$.

        Moreover, observe that $p^*.\block.\qc.\view = p^*.\nec.\hightipQCview = v$ (by the $\SafetyCheck$ function).
        Because a QC is formed for proposal $p$ in view $v$, we obtain $p^*.\block.\qc.\blockid = p.\blockid$ (by \Cref{lemma:jovan_qc_intersection}).
        Thus, $p^*$ strictly extends proposal $p$, establishing $X(v' = v + 2)$.


        
        

        \item Let $p^*.\nec = \bot$ and $p^*.\tc.\highQC \neq \bot$.
        Note that $p^*.\tc.\view = v + 1$.
        By quorum intersection (since $|S| \geq f + 1$ and $p^*.\tc$ contains information from $2f + 1$ validators), $\mathit{tc}$ includes information from at least one validator in $S$.
        Given that $Y(v + 1)$ holds and all validators from the set $S$ update their local tip to a tip whose view is $v + 1$ and whose block header matches the block header of $p'.\block$ (from the first paragraph of the proof) before (potentially) contributing to $p^*.\tc$, it follows that $p^*.\tc.\tipsviews$ includes $v + 1$.
        (Note that each validator $i \in S$ reports its local tip and not its $\localhighQC$ variable, since its local tip is from view $v + 1$, whereas its $\localhighQC$ variable cannot hold any QC from any view greater than $v$.)
        Therefore, it must hold that $p^*.\tc.\highQC.\view \geq v + 1$.
        However, the fact that $p^*.\tc.\view = v + 1$, implies that $p^*.\tc.\highQC.\view \leq v$ (due to the $\ValidTC$ function).
        Therefore, this case cannot occur.
    
    \end{itemize}
    
    \item Suppose $p^*$ is a reproposal.
    In this case, $p^*$ is a reproposal of $p^*.\tc.\hightip$.
    Since $p^*.\tc.\view = v + 1$, $Z(v + 1)$ implies that $p^*$ strictly extends $p$ and $p^*.\block.\qc.\view \geq v$, which proves the $X(v' = v + 2)$ statement in this case.
\end{itemize}


\smallskip
We now prove that the inductive step holds.

\smallskip
\noindent \underline{$Y(v' > v + 1)$:}
Consider a validator $i \in S$ that modifies its local tip in view $v' > v + 1$.
Let us consider all possible cases for that to happen:
\begin{itemize}
    \item Let $i$ modify its local tip upon receiving a fresh proposal $p^*$ in view $v' > v + 1$.
    We further consider three possibilities:
    \begin{itemize}
        \item Let $p^*.\block.\qc.\view = p^*.\view - 1$.
        In this case, validator $i$ updates its local tip to $T(p^*)$, the tip of proposal $p^*$, which satisfies: $T(p^*).\view > v + 1$ and $T(p^*).\blockheader.\qc.\view \geq v + 1$ (as $p^*.\block.\qc.\view = v' - 1 \geq v + 1$).
        Thus, $Y(v' > v + 1)$ holds in this case.

        \item Let $p^*.\nec \neq \bot$.
        In this case, validator $i$ updates its local tip to the tip $T(p^*)$ of proposal $p^*$ with $T(p^*).\view > v + 1$ (as $p^*.\view = v' > v + 1$).
        We now prove that $T(p^*).\blockheader.\qc.\view = p^*.\nec.\hightipQCview \geq v$.
        In order for $p^*.\nec$ to be formed, a correct validator $i$ must have received an NE request accompanied by a TC $\mathit{tc}$ with $\mathit{tc}.\view = v' - 1 \geq v + 1$.
        Given that $Z(v' - 1 \geq v + 1)$ holds, $\mathit{tc}.\hightip.\blockheader.\qc.\view \geq v$, which implies that $p^*.\nec.\hightipQCview \geq v$.




        \item Let $p^*.\nec = \bot$ and $p^*.\tc.\highQC \neq \bot$.
        In this case, validator $i$ updates its local tip to the tip $T(p^*)$ of proposal $p^*$ with $T(p^*).\view > v + 1$ (as $p^*.\view = v' > v + 1$).
        Moreover, a parent of $p^*$ is a proposal to which $p^*.\tc.\highQC$ points (guaranteed by the $\SafetyCheck$ function).
        Given that $p^*.\tc.\view = v' - 1 \geq v + 1$ and $W(v' - 1)$ holds,  $p^*.\block.\qc.\view \geq v + 1$.
        Hence, $Y(v' > v + 1)$ is satisfied even in this case.
    \end{itemize}

    \item Let validator $i$ modify its local tip upon receiving a reproposal $p^*$.
    In this case, validator $i$ updates its local tip to $p^*.\tc.\hightip$ (note that $p^*.\tc.\view = v' - 1 \geq v + 1$ and $p^*.\tc.\hightip \neq \bot$ in this case).
    As $Z(v' - 1 \geq v + 1)$ holds, $p^*.\tc.\hightip.\view \geq v + 1$ and $p^*.\tc.\hightip.\blockheader.\qc.\view \geq v$, which proves $Y(v' > v + 1)$.
\end{itemize}
As $Y(v' > v + 1)$ is preserved in all possible cases, the statement holds.

\smallskip
\noindent \underline{$Q(v' > v + 1)$:}
Let $\mathit{qc}$ be any QC with $\mathit{qc}.\view = v' > v + 1$.
We distinguish between two cases:
\begin{itemize}
    \item Suppose that at least one correct validator contributing to $\mathit{qc}$ did so upon receiving a proposal in view $v'$.  
    In this case, the received proposal strictly extends proposal $p$, since $X(v')$ holds. 
    Hence, $Q(v' > v + 1)$ is satisfied.

    \item Suppose that no correct validator contributing to $\mathit{qc}$ did so upon receiving a proposal in view $v'$.  
    In this case, all correct votes issued in view $v'$ are based on timeout messages, each corresponding to the local tip of the validator at that time.  
    Since $Y(x)$ holds for all views $x \in [v + 1, v']$, and each validator in $S$ updates its local tip in view $v + 1$ to a tip that strictly extends $p$ (by the first paragraph of the proof) before potentially issuing any timeout-based vote in view $v'$, it follows that at least one correct validator $i \in S$ must have contributed to $\mathit{qc}$ after having voted for proposal $p'$.
    If validator $i$ updates its local tip for this vote in view $v + 1$, then---since a correct validator updates its local tip at most once per view, and it has already done so in view $v + 1$ to a tip that strictly extends $p$ (by the first paragraph of the proof)---the vote must target a block that strictly extends $p$.  
    Otherwise, the vote must also target a block that strictly extends $p$ by $X(v^*)$, for any $v^* \in [v + 2, v']$.
    In both cases, $Q(v' > v + 1)$ is satisfied.

\end{itemize}


\smallskip
\noindent \underline{$W(v' > v + 1)$:}
Let $\mathit{tc}$ be any TC with $\mathit{tc}.\view = v' > v + 1$ and $\mathit{tc}.\highQC \neq \bot$.
As (1) $|S| \geq f + 1$, and (2) $\mathit{tc}$ includes information about $2f + 1$ validators, information about at least one validator $i \in S$ is included in $\mathit{tc}$.
Furthermore, since (1) validator $i$ updates its local tip to a tip whose view is $v + 1$ and that strictly extends proposal $p$ (from the first paragraph of the proof) before contributing to $\mathit{tc}$, and (2) $Y(x)$ holds for every view $x \in [v + 1, v']$, it follows that $\mathit{tc}.\highQC.\view \geq v + 1$ (as validator $i$ reports either its local tip from a view $\geq v + 1$ or its $\localhighQC$ variable from a view $\geq v + 1$).
Moreover, the validity of $\mathit{tc}$ dictates that $\mathit{tc}.\highQC.\view < v'$ (see the $\ValidTC$ function).
Lastly, since $Q(x)$ holds for every view $x \in [v + 1, v' - 1]$, it must be that a proposal to which $\mathit{tc}.\highQC$ points strictly extends proposal $p$, thus proving $W(v' > v + 1)$.

\smallskip
\noindent \underline{$Z(v' > v + 1)$:}
Let $\mathit{tc}$ be any TC with $\mathit{tc}.\view = v' > v + 1$ and $\mathit{tc}.\hightip \neq \bot$.
Since (1) $|S| \geq f + 1$, and (2) $\mathit{tc}$ includes information about $2f + 1$ validators, it must be that information about at least one validator $i \in S$ is included in $\mathit{tc}$.
Given that (1) validator $i$ updates its local tip to a tip whose view is $v + 1$ and that strictly extends proposal $p$ (from the first paragraph of the proof) before contributing to $\mathit{tc}$, and (2) $Y(x)$ holds for every view $x \in [v + 1, v']$, $\mathit{tc}.\hightip.\view \geq v + 1$ (as validator $i$ reports either its local tip from a view $\geq v + 1$ or its $\localhighQC$ variable from a view $\geq v + 1$).
We now consider two possible cases:
\begin{itemize}
    \item Let $\mathit{tc}.\hightip.\view = v + 1$.
    Given that (1) $\mathit{tc}.\hightip \neq \bot$, (2) $\mathit{tc}.\hightip.\view = v + 1$, (3) validator $i$ updated its local tip to a tip whose view is $v + 1$ and whose block header matches the block header of $p'.\block$ (from the first paragraph of the proof) before contributing to $\mathit{tc}$, and (4) $Y(x)$ holds for every view $x \in [v + 1, v']$, it must follow that $\mathit{tc}.\tipsviews[i] = v + 1$ and $\mathit{tc}.\qcsviews[i] = v$.
    (Note that if $i$ were to include a QC in its timeout message, this would imply that its view is $\geq v + 1$, which in turn would force $\mathit{tc}.\hightip = \bot$.)
    Therefore, $\mathit{tc}.\hightip.\blockheader.\qc.\view = v$.
    Lastly, given that a QC is formed in view $v$ for proposal $p$, \Cref{lemma:jovan_qc_intersection} ensures that a parent of $\mathit{tc}.\hightip$ must be $p$.
    This implies that $\mathit{tc}.\hightip$ strictly extends proposal $p$, which proves $Z(v' > v + 1)$ in this case.

    \item Let $\mathit{tc}.\hightip.\view > v + 1$.
    Note that $\mathit{tc}.\hightip.\view \leq v'$.
    As $X(\mathit{tc}.\hightip.\view)$ is satisfied, we know that $\mathit{tc}.\hightip$ strictly extends proposal $p$ and it holds that $\mathit{tc}.\hightip.\blockheader.\qc.\view \geq v$.
    Thus, $Z(v' > v + 1)$ holds here as well.
\end{itemize}

\smallskip
\noindent \underline{$X(v' > v + 2)$:}
Let us consider all possible cases for a valid proposal $p^*$ with $p^*.\view = v' > v + 2$:
\begin{itemize}
    \item Suppose $p^*$ is a fresh proposal.
    We further consider three possibilities:
    \begin{itemize}
        \item Let $p^*.\block.\qc.\view = v' - 1 \geq v + 2$.
        In this case, a parent of $p^*$ strictly extends proposal $p$ (because $Q(v' - 1 \geq v + 2)$ holds), which implies that $p^*$ strictly extends proposal $p$.
        Moreover, $p^*.\block.\qc.\view \geq v + 2$, which concludes the proof of $X(v' > v + 2)$ in this case.

        \item Let $p^*.\nec \neq \bot$.
        We first prove that $p^*.\nec.\hightipQCview = p^*.\block.\qc.\view \geq v$.
        In order for $p^*.\nec$ to be formed, a correct validator $i$ must have received an NE request accompanied by a TC $\mathit{tc}$ with $\mathit{tc}.\view = v' - 1 \geq v + 2$ and $\mathit{tc}.\hightip \neq \bot$.
        Given that $Z(v' - 1)$ holds, $\mathit{tc}.\hightip.\blockheader.\qc.\view \geq v$.
        Therefore, $p^*.\nec.\hightipQCview = p^*.\block.\qc.\view \geq v$.
        Hence, it is left to prove that $p^*$ strictly extends proposal $p$.
        We now differentiate two cases:
        \begin{itemize}
            \item Let $p^*.\block.\qc.\view = v$.
            In this case, a parent of $p^*$ must be $p$ (by \Cref{lemma:jovan_qc_intersection}), which proves the statement of the invariant.

            \item Let $p^*.\block.\qc.\view > v$.
            In this case, the fact that $Q(p^*.\block.\qc.\view \geq v + 1)$ ensures that the statement of the invariant holds.
        \end{itemize}
            



        \item Let $p^*.\nec = \bot$ and $p^*.\tc.\highQC \neq \bot$.
        Note that $p^*.\tc.\view = v' - 1$.
        By quorum intersection (since $|S| \geq f + 1$ and $p^*.\tc$ contains information from $2f + 1$ validators), $p^*.\tc$ includes information from at least one validator in $S$.
        Given that $Y(x)$ holds for all views $v \in [v + 1, v' - 1]$ and all validators from the set $S$ update their local tip to a tip whose view is $v + 1$ and that strictly extends proposal $p$ (from the first paragraph of the proof) before (potentially) contributing to $p^*.\tc$ (in view $v + 1$), it follows that $p^*.\tc.\highQC.\view \geq v + 1$; note that $p^*.\tc.\highQC.\view < v' - 1$.    
        Next, $Q(p^*.\tc.\highQC.\view)$ shows that $p^*.\tc.\highQC$ points to a proposal that strictly extends proposal $p$.
        Since $p^*.\block.\qc = p^*.\tc.\highQC$ (ensured by the $\SafetyCheck$ function), $p^*$ strictly extends proposal $p$.
        Moreover, $p^*.\block.\qc.\view \geq v + 1$, which concludes the proof of $X(v' > v + 2)$ in this case.
    \end{itemize}
    
    \item Suppose $p^*$ is a reproposal.
    In this case, $p^*$ is a reproposal of $p^*.\tc.\hightip$; note that $p^*.\tc.\view = v' - 1 \geq v + 2$.
    Due to $Z(v' - 1)$, we have that $p^*$ strictly extends proposal $p$ (as $p^*.\tc.\hightip$ does so) and $p^*.\block.\qc.\view \geq v$ (as $p^*.\tc.\hightip.\blockheader.\qc.\view \geq v$).
    Thus, $X(v' > v + 2)$ holds in this case as well.
\end{itemize}
As $X(v' > v + 2)$ holds in all possible scenarios, the proof of the lemma is completed.
\qed
\end{proof}

The following lemma shows that once $f + 1$ correct validators vote for the “second-level’’ QC of a proposal, none of them can later vote for any proposal that does not strictly extend that proposal.

\begin{lemma}\label{lemma:jovan_safety_crucial_2}
Let $p$ be any proposal with $p.\view = v$.  
Suppose that at least $f + 1$ correct validators vote for a proposal $p'$ satisfying
\[
p'.\view = v + 1, 
\quad 
p'.\block.\qc.\view = v, 
\quad \text{and} \quad 
p'.\block.\qc.\blockid = p.\blockid.
\]
Then, if any correct validator that voted for $p'$ in view $v + 1$ later votes for a proposal $p^*$ in some view $v^* > v + 1$, the proposal $p^*$ must strictly extend $p$.
\end{lemma}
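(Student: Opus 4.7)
The plan is to derive this statement as an almost immediate corollary of Lemma~\ref{lemma:jovan_safety_crucial}. The hypothesis of the present lemma is exactly the hypothesis of Lemma~\ref{lemma:jovan_safety_crucial}: there exist at least $f+1$ correct validators that voted for a proposal $p'$ in view $v+1$ whose carried QC points to $p$ in view $v$. Therefore Lemma~\ref{lemma:jovan_safety_crucial} applies verbatim and tells us that \emph{every} valid proposal in a view $v^* > v+1$ must strictly extend $p$.

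The only remaining step is to argue that the proposal $p^*$ in question is valid. First I would note that a correct validator casts a vote on a proposal only after executing the check at line~\ref{Alg:Algorithm: Consensus-Execution:safety_check} of \Cref{Algorithm:Consensus-Execution_1}, which in turn invokes $\SafetyCheck(p^*)$ via the \Require statement at the top of the proposal-reception handler. In particular, $\ValidBlock$, $\ValidQC$, the signature check on $p^*.\sigma$ by the designated leader of $p^*.\view$, and (depending on whether $p^*$ is fresh or a reproposal) the appropriate tip/TC/NEC checks must all have succeeded. Hence $p^*$ is a valid proposal with $p^*.\view = v^* > v+1$.

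Combining these two observations, Lemma~\ref{lemma:jovan_safety_crucial} yields that $p^*$ strictly extends $p$, which is exactly the claim.

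I do not expect any genuine obstacle here: the hard work---the five-invariant induction establishing that \emph{all} valid proposals past view $v+1$ strictly extend $p$---has already been carried out in Lemma~\ref{lemma:jovan_safety_crucial}. The only thing to be careful about is that the voting predicate used by correct validators really does enforce full proposal validity (including $\SafetyCheck$), so that restricting attention to ``proposals voted for by a correct validator'' is no weaker than restricting to ``valid proposals''; this is immediate from the \Require at the start of the proposal-handling routine. The specialization to validators that voted for $p'$ is not actually needed for the conclusion, but the lemma is stated in this form because it is the shape in which it will be consumed by the subsequent inductive safety argument.
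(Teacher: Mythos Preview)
Your reduction to Lemma~\ref{lemma:jovan_safety_crucial} is the right idea, but the step where you argue that $p^*$ is valid has a genuine gap. You claim that a correct validator casts a vote on a proposal only after executing $\SafetyCheck(p^*)$ in the proposal-reception handler. This is not true: a correct validator also casts a vote as part of a timeout message (line~\ref{line:create_vote_timeout} of \Cref{Algorithm:Utilities}, invoked from $\CreateTimeoutMsg$). In that case the vote is constructed from the validator's $\localhightip$ together with the \emph{current} view $v^*$, and the validator need not have received---let alone validated---any proposal in view $v^*$ at all. Such tip-votes do contribute to QCs (see the $\HandleVote$ call inside $\HandleTimeout$), and the safety theorem that consumes this lemma applies it precisely to a proposal $p_j$ that obtained a QC. So you cannot assume $p^*$ passed $\SafetyCheck$ at the voting validator.

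The paper closes this gap by arguing about $\localhightip$ directly rather than about $p^*$: after the validator voted for $p'$ in view $v+1$, its local tip points to $p'.\block$ (\Cref{lemma:jovan_local_tip}); it cannot update the tip again in view $v+1$ because $\highestvotedview$ was bumped; and any later update happens upon receiving a valid proposal in some view $> v+1$, which by Lemma~\ref{lemma:jovan_safety_crucial} strictly extends $p$. Since every vote---whether issued at line~\ref{line:send_vote_normal} or via a timeout message---has $\blockid$ determined by $\localhightip$, this covers both voting paths. Your sketch handles only the first path; to make it complete you would need exactly this local-tip tracking argument for the second.
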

\begin{proof}
Let $S$ denote the set of correct validators, with $|S| \geq f + 1$, that voted for fresh proposal $p'$ in view $v + 1$.  
Assume, for the sake of contradiction, that there exists a correct validator $\ell \in S$ that votes for a proposal $p^*$ in some view $v^* > v + 1$ such that $p^*$ does not strictly extend $p$.  
We know that, in view $v + 1$, validator $\ell$’s local tip stores a block header that matches the block header of proposal $p'$ (by \Cref{lemma:jovan_local_tip}).  
The fact that $\ell$ issues a vote for proposal $p^*$ (that does not strictly extend proposal $p$) in view $v^*$ necessarily implies that $\ell$ updates its local tip to contain a block header that does not strictly extend proposal $p$ after voting for proposal $p'$.  
Indeed, if validator $\ell$ casts its vote for $p^*$ at line~\ref{line:send_vote_normal} of \Cref{Algorithm:Consensus-Execution_1}, it updates its local tip at line~\ref{Alg:update_tip} of the same algorithm;  
otherwise, if the vote is issued via a timeout message, it is created at line~\ref{line:create_vote_timeout} of \Cref{Algorithm:Utilities}.  
If validator $\ell$ had not updated its local tip to contain a block header that does not strictly extend proposal $p$ after voting for $p'$, then $p^*$ would necessarily strictly extend $p$.  
Hence, validator $\ell$ must have updated its local tip to contain a block header that does not strictly extend proposal $p$ after voting for $p'$.  

Note that this update cannot occur in view $v + 1$: once $\ell$ votes for $p'$, it sets its $\mathsf{highest\_voted\_view}$ variable to $v + 1$ (line~\ref{line:highest_voted_view_update_proposal} or line~\ref{line:update_highest_voted_view_timeout} of \Cref{Algorithm:Consensus-Execution_1}), which prevents it from updating its local tip in view $v + 1$ (line~\ref{Alg:update_tip} of \Cref{Algorithm:Consensus-Execution_1}).  
Therefore, validator $\ell$ must update its local tip to contain a block header that does not strictly extend proposal $p$ in some view strictly greater than $v + 1$.
However, by \Cref{lemma:jovan_safety_crucial}, every valid proposal from any view greater than $v + 1$---and hence every proposal that $\ell$ could use to update its local tip---must strictly extend $p$.
This contradicts the assumption that validator $\ell$ updates its local tip in some view greater than $v + 1$ to contain a block header that does not strictly extend $p$, and it completes the proof.
\qed
\end{proof}

Finally, we are ready to prove that \sysname satisfies safety.

\begin{theorem} [Safety] 
\label{theorem:jovan_safety}
No two correct validators commit different blocks at the same log position in their local logs.
\end{theorem}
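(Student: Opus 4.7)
The plan is to argue by contradiction. Suppose two correct validators commit blocks $B \neq B''$ at the same log height $h$. Unpacking \CommitSpecCommit, a commit of $B$ at height $h$ requires two QCs from consecutive views: a QC for some proposal $p$ with $p.\view = v$ (so $B = p.\block$) together with a QC for a proposal $p'$ with $p'.\view = v+1$ satisfying $p'.\block.\qc.\blockid = p.\blockid$. An analogous pair $(p'', p''')$ at views $(v'', v''+1)$ witnesses the commit of $B''$, with $B'' = p''.\block$. Without loss of generality, assume $v \leq v''$.

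If $v = v''$, the two QCs for $p$ and $p''$ lie in the same view, so \Cref{lemma:jovan_qc_intersection} together with \Cref{observation:jovan_same_proposals} forces $p.\block = p''.\block$, giving $B = B''$ and contradicting the assumption. It remains to handle $v < v''$. The key observation here is that the QC for $p'$ aggregates $2f+1$ signatures, so at least $f+1$ correct validators voted for $p'$: exactly the hypothesis of \Cref{lemma:jovan_safety_crucial}. When $v'' \geq v+2$, that lemma directly guarantees that $p''$ strictly extends $p$; therefore $B'' = p''.\block$ is a strict descendant of $B$ in the block chain and lives at height strictly greater than $h$, contradicting both blocks being at height $h$.

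The remaining subcase $v'' = v+1$ is not covered by \Cref{lemma:jovan_safety_crucial} (whose conclusion is restricted to views strictly greater than $v+1$), and I expect this to be the main obstacle. I would handle it separately via quorum uniqueness: since the QCs for $p'$ and for $p''$ both lie in view $v+1$, \Cref{lemma:jovan_qc_intersection} together with \Cref{observation:jovan_same_proposals} identifies their underlying blocks, so $p''.\block = p'.\block$ strictly extends $p.\block = B$, and hence $B''$ again sits at height $h+1$ rather than $h$. A small bookkeeping step needed throughout is to confirm that ``strictly extends'' at the level of proposals transfers to ``strictly greater height'' at the level of blocks, which is immediate from the block-chaining paradigm: every step along the parent-pointer chain increases height by one, so strict descendants have strictly larger heights.
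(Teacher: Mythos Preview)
Your proposal is correct and follows essentially the same three-case structure as the paper's proof. The only minor difference is that for the case $v'' > v+1$ the paper invokes \Cref{lemma:jovan_safety_crucial_2} (an immediate corollary of \Cref{lemma:jovan_safety_crucial}), whereas you apply \Cref{lemma:jovan_safety_crucial} directly to the committed proposal $p''$; both routes are valid.
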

\begin{proof}
Suppose, toward contradiction, that two correct validators $i$ and $j$ commit distinct blocks $B_i$ and $B_j$ at the same log position, and that these blocks are committed as part of proposals $p_i$ and $p_j$, respectively.
Note that proposals $p_i$ and $p_j$ are conflicting; that is, neither extends the other.
(Otherwise, the blocks $B_i$ and $B_j$ could not have been committed at the same log position.)
Without loss of generality, let $p_i.\view \leq p_j.\view$.
This implies the existence of a QC for proposal $p_i$ in view $p_i.\view$, another QC for the child of $p_i$ in view $p_i.\view + 1$, and a QC for proposal $p_j$ in view $p_j.\view$ (by the commitment rules of \sysname).
We now consider three possible cases:
\begin{itemize}
    \item $p_i.\view = p_j.\view$: 
    \Cref{lemma:jovan_qc_intersection} shows that $p_i.\blockid = p_j.\blockid$.
    However, this is impossible as the blocks $B_i$ and $B_j$ could not have been committed at the same log position.
    

    \item $p_i.\view + 1 = p_j.\view$: 
    Again, \Cref{lemma:jovan_qc_intersection} proves that $p_j.\blockid = c_i.\blockid$, where $c_i$ denotes the child of proposal $p_i$. 
    This cannot happen as the blocks $B_i$ and $B_j$ could not have been committed at the same log position.

    \item $p_i.\view + 1 < p_j.\view$: 
    In this case, there exists at least one correct validator that contributed both to the QC for the child of $p_i$ in view $p_i.\view + 1$ and to the QC for proposal $p_j$ in view $p_j.\view$.  
By \Cref{lemma:jovan_safety_crucial_2}, this implies that $p_j$ strictly extends $p_i$.  
This is impossible, however, because block $B_j$ conflicts with block $B_i$.
\end{itemize}
As neither of these cases can occur, the proof is complete.
\qed
\end{proof}

\subsection{Tail-Forking Resistance}
\label{sub:tf-analysis}
Recall from \Cref{sub:consensus_properties} that the tail-forking resistance property asserts the following:
if a leader proposes a block $B$ as part of a fresh proposal $p$, at least $f + 1$ correct validators vote for $p$, and the leader does not equivocate, then whenever any correct validator commits a block $B'$, it must hold that either $B$ extends $B'$ or $B'$ extends $B$.
We begin by showing that, under the conditions stated above, no NEC can be formed.

\begin{lemma} \label{lemma:jovan_no_nec}
Let $p$ be any fresh proposal with $p.\view = v$, issued by the leader of view $v$.
Then, no correct validator that voted for $p$ in view $v$ issues an $\NE$ message upon receiving a message  
$\langle \NERequest, \mathit{tc} \rangle$, where $\mathit{tc}$ is a TC whose $\mathit{tc}.\hightip$ satisfies $\mathit{tc}.\hightip.\blockid = p.\blockid$.
\end{lemma}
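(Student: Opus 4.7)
The plan is to reduce the lemma to a direct inspection of the $\ProcessNERequest$ function in \Mref{Algorithm:Recovery}, which is the only code path by which a correct validator ever issues an $\NE$ message. I would first verify this claim by scanning the pseudocode: $\NE$ messages appear exclusively as the output of $\ProcessNERequest$ (the only place where a validator signs an object of the form $\langle \itc.\view{+}1, \itc.\hightip.\blockheader.\qc.\view\rangle$ and transmits it labelled $\NE$). Everywhere else in the protocol $\NE$-related objects are only consumed (e.g.\ when aggregating $\NEset$ into an NEC inside $\Recover$). So the lemma reduces to showing that the guard of the $\if$-statement at line~\ref{Has_not_voted_for_higthip} fails at validator $i$ whenever its precondition is met.

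Next, I would invoke the hypothesis. The guard requires that $i$ has not voted for any proposal $p'$ with $p'.\blockid = \itc.\hightip.\blockid$. By assumption $i$ voted for $p$ in view $v$, and $p.\blockid = \itc.\hightip.\blockid$ by hypothesis on $\itc$. Instantiating $p' := p$ therefore witnesses a proposal for which $i$ has voted and whose $\blockid$ equals $\itc.\hightip.\blockid$, so the guard evaluates to $\false$. Consequently $\ProcessNERequest$ returns without executing the $\Sign$ and \textbf{send} statements, establishing that no $\NE$ message is issued.

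There is essentially no obstacle: the entire content of the lemma is the guard condition of $\ProcessNERequest$ combined with the hypothesis on $\itc.\hightip.\blockid$. The only minor care needed is in the first step — confirming that the $\NE$-issuing path is unique — and in noting that the notion of ``has voted for a proposal'' is monotone (once $i$ has cast such a vote at line~\ref{line:send_vote_normal} of \Mref{Algorithm:Consensus-Execution_1} or via the tip-vote embedded in a timeout message at line~\ref{line:create_vote_timeout} of \Mref{Algorithm:Utilities}, this fact is never retracted), so the guard remains $\false$ for any invocation of $\ProcessNERequest$ that occurs after $i$'s vote for $p$.
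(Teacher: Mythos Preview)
Your argument covers only the case where $\ProcessNERequest$ is invoked \emph{after} $i$'s vote for $p$; you explicitly say the guard ``remains $\false$ for any invocation of $\ProcessNERequest$ that occurs after $i$'s vote for $p$.'' But the lemma's hypothesis only tells you that both events---the vote for $p$ in view $v$ and the receipt of the $\langle \NERequest, \itc\rangle$---occur somewhere in $i$'s execution; it does not tell you in which order. If the $\NERequest$ were processed first, then at that moment $i$ has not yet voted for any proposal with $\blockid = \itc.\hightip.\blockid$, the guard at line~\ref{Has_not_voted_for_higthip} is satisfied, and $i$ would send an $\NE$ message. Your monotonicity remark is correct but irrelevant to this ordering of events.

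The paper closes this gap with a short ordering argument you omit. From $\itc.\hightip.\blockid = p.\blockid$ and the definition $\blockid = \Hash(\blockhash,\view)$ one gets $\itc.\hightip.\view = p.\view = v$; then $\ValidTC$ forces $\itc.\view \geq \itc.\hightip.\view = v$. Hence processing the $\NERequest$ (line~\ref{line:increment_view_ne_request} of \Cref{Algorithm:Consensus-Execution_2}) sets $\curView \geq v+1$, after which $i$ can no longer cast a vote for $p$ in view $v$---neither via the proposal handler (which requires $p.\view \geq \curView$) nor via a timeout message for view $v$. So the hypothesis that $i$ voted for $p$ in view $v$ forces the vote to precede the $\NERequest$, and only then does your guard-failure argument apply.
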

\begin{proof}
By contradiction, assume that some correct validator $i$ that voted for proposal $p$ in view $v$ issues an $\NE$ message upon receiving the above $\langle \NERequest, \mathit{tc} \rangle$ message.
Since $p$ is a fresh proposal from view $v$, the $\ValidTC$ check implies that $\mathit{tc}.\view \geq v$ (as $\mathit{tc}.\hightip.\view = v$).
Observe that validator $i$ must have voted for $p$ before receiving the $\NERequest$.
Indeed, if $i$ processed the $\NERequest$ first, it would increment its $\curView$ to a value $\geq v + 1$ (line~\ref{line:increment_view_ne_request} of \Cref{Algorithm:Consensus-Execution_2}), and therefore could not subsequently cast a vote for $p$ in view $v$.

Thus, validator $i$ votes for $p$ and only later receives the $\NERequest$.
However, in this case $i$ cannot issue an $\NE$ message, because the condition at line~\ref{Has_not_voted_for_higthip} of \Cref{Algorithm:Recovery} fails.
This yields a contradiction, and the lemma follows.
\qed
\end{proof}

Next, we show that any fresh proposal by a non-equivocating leader receiving at least $f+1$ votes from correct validators can never be abandoned.

\begin{lemma} \label{lemma:jovan_crucial_tail_forking}
Let $p$ be any fresh proposal with $p.\view = v$, issued by the leader of view $v$.
Suppose that the leader of view $v$ does not equivocate and that at least $f + 1$ correct validators vote for $p$ in view $v$.
Then, any valid proposal $p^*$ issued in a view $v^* \geq v + 1$ (i.e., $p^*.\view = v^* \geq v + 1$) must extend $p$.
\end{lemma}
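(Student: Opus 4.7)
The plan is to prove the statement by strong induction on $v^* \geq v+1$, in the same spirit as \Cref{lemma:jovan_safety_crucial}, but with a strengthened invariant that accounts for the fact that $p$ itself may be reproposed (whereas in the safety analysis the focus was on strict extension of the committing proposal). First I would let $S$ denote the set of at least $f+1$ correct validators that voted for $p$ in view $v$, and observe (via \Cref{lemma:jovan_local_tip}) that immediately after voting, each $i \in S$ has $\localhightip.\view = v$ with a block header matching that of $p.\block$. By a short auxiliary argument analogous to \Cref{lemma:jovan_safety_crucial_2}, no validator in $S$ can subsequently update its local tip to a tip that does not extend $p$, since any such update would require a valid proposal from a view $>v$ that does not extend $p$, contradicting the induction hypothesis.

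Next I would structure the induction via a family of invariants parallel to those in \Cref{lemma:jovan_safety_crucial}: (i) $X(v')$: every valid proposal in view $v' > v$ extends $p$; (ii) $Y(v')$: if any $i \in S$ updates its local tip in view $v' \geq v$, the new tip extends $p$; (iii) $Q(v')$: any proposal pointed to by a QC with view $v' \geq v$ extends $p$; (iv) $Z(v')$: any TC with view $v' \geq v$ and $\hightip \neq \bot$ has its high tip extending $p$; (v) $W(v')$: any TC with view $v' \geq v$ and $\highQC \neq \bot$ has its high QC pointing to a proposal extending $p$. The base case $X(v+1)$ is the crux: the three subcases are (a) $p^*$ fresh with QC from view $v$, handled by \Cref{lemma:jovan_same_vote,lemma:jovan_qc_intersection} since $|S| \geq f+1$ forces the QC to point to $p$; (b) $p^*$ a reproposal via a TC from view $v$, handled by $Z(v)$; (c) $p^*$ fresh with either an NEC from view $v+1$ or a TC from view $v$ carrying a non-$\bot$ $\highQC$.

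For subcase (c), the key inputs would be \Cref{lemma:jovan_no_nec} (which rules out NEC formation because no $i \in S$ will sign an $\NE$ message for any TC whose high tip references $p$) together with the fact that, by quorum intersection, any TC from view $v$ must aggregate information from at least one $i \in S$ whose $\localhightip.\view = v$ at the moment of contribution; by the tie-breaking rule in \ValidTC and the non-equivocation assumption (which ensures the tip of $p$ is the \emph{unique} tip from view $v$ that any $i \in S$ could have reported), this forces $\hightip$ to equal the tip of $p$ and forces $\highQC = \bot$. The non-equivocation hypothesis is crucial here because without it a conflicting tip from view $v$ could also appear and be selected.

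For the inductive step ($v^* > v+1$), the arguments mirror those of \Cref{lemma:jovan_safety_crucial} almost verbatim, once the base case is established: $Q$, $W$, $Z$ for view $v^*$ are obtained from $Y(v+1), \ldots, Y(v^*)$ and $X(v+1), \ldots, X(v^*-1)$, and then $X(v^*)$ follows by case analysis on the shape of $p^*$. I expect the main obstacle to be subcase (c) of the base case, specifically the argument that no conflicting TC from view $v$ can slip a high tip different from $p$'s tip past the \ValidTC check; this is where the non-equivocation assumption does all the work, and it must be invoked carefully to exclude the possibility that a Byzantine validator contributes a fabricated tip of view $v$ with a higher internal QC view, since there is no genuine competing proposal of view $v$ to fabricate from.
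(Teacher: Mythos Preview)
Your proposal is correct and follows essentially the same route as the paper's proof: a strong induction on view number with mutually supporting invariants ($X$, $Y$, $Q$, $Z$), using non-equivocation to pin down the unique view-$v$ tip and \Cref{lemma:jovan_no_nec} (together with $|S|\geq f+1$) to dispose of the NEC subcase. The only cosmetic differences are that the paper starts its base case at $v'=v$ (where $X(v)$ is immediate from non-equivocation), omits your $W$ by handling the $\highQC\neq\bot$ case inline via $Y$ and $Q$, and states $Y$ in the weaker form $T.\view \geq v$ rather than ``$T$ extends $p$''.
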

\begin{proof}
Let $S$ denote the set of correct validators, with $|S| \geq f + 1$, that voted for fresh proposal $p$ in view $v$.  
For every correct validator $i \in S$, the following holds.  
At the moment $i$ votes for proposal $p$ in view $v$, \Cref{lemma:jovan_local_tip} implies that  
(1) $\localhightip.\blockheader$ matches the block header of $p.\block$, and  
(2) $\localhightip.\view = v$.  
Hence, since validator $i$ cannot update its $\localhightip$ variable to a tip with view $v$ in any view smaller than $v$, it follows that $i$ must have updated its local tip in view $v$ to a tip that extends proposal $p$.

We prove the claim by induction on the view number $v'$.
Consider the following four induction hypotheses, $X(v')$, $Y(v')$, $Q(v')$, and $Z(v')$:
\begin{itemize}
    \item $X(v')$: Any (valid) proposal $p^*$ in view $v'\geq v$ extends $p$.
    Consequently, any (valid) tip $T$ of a fresh proposal from view $v' \geq v$ also extends $p$.
    Moreover, if $v' > v$, then $T.\blockheader.\qc.\view \geq v$.
     
    \item $Y(v')$: If any validator $i \in S$ updates its local tip to some tip $T$ in view $v'\ge v$, then $T.\view \geq v$.

    \item $Q(v')$: For any QC $\mathit{qc}$ with $\mathit{qc}.\view = v'\ge v$, any proposal $p_{\mathit{qc}}$ with $p_{\mathit{qc}}.\blockid = \mathit{qc}.\blockid$ extends proposal $p$.

    
    \item $Z(v')$: Let $\mathit{tc}$ be any TC with $\mathit{tc}.\view = v' \ge v$ and $\mathit{tc}.\hightip \neq \bot$. Then, the proposal whose tip is $\mathit{tc}.\hightip$ extends $p$ and $\mathit{tc}.\hightip.\view \geq v$.
    Moreover, if $\mathit{tc}.\hightip.\view > v$, then $\mathit{tc}.\hightip.\blockheader.\qc.\view \geq v$.
\end{itemize}

We begin by proving the base case. Note that $X(v')$ is equivalent to the lemma’s statement.

\medskip
\noindent \underline{$X(v' = v)$:}
This follows directly from the fact that the leader of view $v$ issues only proposal $p$ in that view.

\medskip
\noindent \underline{$Y(v' = v)$:}
Consider any correct validator $i \in S$.  
By the first paragraph of the proof, validator $i$ updates its local tip in view $v$ to a tip whose view is $v$.  
Since $i$ can update its local tip at most once per view, it follows that $Y(v' = v)$ holds.


\medskip
\noindent \underline{$Q(v' = v)$:}
Let $\mathit{qc}$ be any QC with $\mathit{qc}.\view = v$.
Given that at least $f + 1$ correct validators voted for proposal $p$ in view $v$, \Cref{lemma:jovan_same_vote} ensures that $\mathit{qc}$ points to proposal $p$.
Thus, $Q(v' = v)$ holds.

%

\medskip
\noindent \underline{$Z(v' = v)$:}
Let $\mathit{tc}$ be any TC with $\mathit{tc}.\view = v$ and $\mathit{tc}.\hightip \neq \bot$.
The standard quorum intersection (as $|S| \geq f + 1$ and $\mathit{tc}$ contains information about $2f + 1$ validators) ensures that $\mathit{tc}$ includes information about at least one validator $i \in S$.
Because $Y(v)$ holds and $i$ updated its local tip to a tip whose view is $v$ and that extends proposal $p$ (from the first paragraph of the proof) before contributing to $\mathit{tc}$ (in view $v$), $\mathit{tc}.\hightip.\view \geq v$.
(Note that validator $i$ reports its local tip and not its $\localhighQC$ variable, since its local tip is from view $v$, whereas its $\localhighQC$ variable cannot hold any QC from any view greater than $v - 1$.)
Since $\mathit{tc}.\hightip.\view \leq \itc.\view = v$ (by the $\ValidTC$ function), we get that $\mathit{tc}.\hightip.\view = v$.
As the only proposal issued in view $v$ is $p$, it follows that $\mathit{tc}.\hightip$ extends proposal $p$, which implies that $Z(v' = v)$ holds.

\smallskip
We now prove that the inductive step holds.

\smallskip
\noindent \underline{$X(v' > v)$:}
Let us consider all possible cases for a valid proposal $p^*$ with $p^*.\view = v' > v$:
\begin{itemize}
    \item Suppose $p^*$ is a fresh proposal.
    We further consider three possibilities:
    \begin{itemize}
        \item Let $p^*.\block.\qc.\view = v' - 1 \geq v$.
        In this case, a parent of $p^*$ extends proposal $p$ (because $Q(v' - 1 \geq v)$ holds), which implies that $p^*$ extends proposal $p$.
        Thus, $X(v' > v)$ holds in this case.

        \item Let $p^*.\nec \neq \bot$.
        By quorum intersection, there exists a correct validator $i \in S$ that issues an NE message upon receiving a message $\langle \NERequest, \mathit{tc} \rangle$ in view $v' > v$, where $\mathit{tc}.\view = v' - 1 \geq v$ and $\mathit{tc}.\hightip \neq \bot$.  
        Since $Z(v' - 1 \geq v)$ holds, $\mathit{tc}.\hightip$ extends $p$ and $\mathit{tc}.\hightip.\view \geq v$.
        We now show that $\mathit{tc}.\hightip.\view > v$.
        Assume, for contradiction, that $\mathit{tc}.\hightip.\view = v$.
        Since $p$ is the unique proposal issued in view $v$, we have $\mathit{tc}.\hightip.\blockid = p.\blockid$, contradicting the fact that validator $i$ issued an $\NE$ message (by \Cref{lemma:jovan_no_nec}).
        
        Next, by $Z(v' - 1 \geq v)$, $\mathit{tc}.\hightip.\blockheader.\qc.\view \geq v$.
        Thus, $p^*.\nec.\hightipQCview \geq v$.  
        At the same time, the validity of $p^*.\nec$ implies that 
        \[
        p^*.\nec.\hightipQCview < v' - 1.
        \]
        Since a parent of $p^*$ is a proposal to which a QC from view $p^*.\nec.\hightipQCview$ points (as enforced by the $\SafetyCheck$ function), and since $Q(p^*.\nec.\hightipQCview \in [v,\, v'-2])$ holds, it follows that $p^*$ extends $p$.

            

        
        \item Let $p^*.\nec = \bot$ and $p^*.\tc.\highQC \neq \bot$.
        Note that $p^*.\tc.\view = v' - 1 \geq v$.
        By quorum intersection (since $|S| \geq f + 1$ and $p^*.\tc$ contains information from $2f + 1$ validators), $p^*.\tc$ includes information from at least one validator in $S$.
        Given that $Y(x)$ holds for all views $v \in [v, v' - 1]$ and all validators from the set $S$ update their local tip to a tip whose view is $v$ and that extends proposal $p$ (from the first paragraph of the proof) before contributing to $p^*.\tc$ (in view $v$), it follows that $p^*.\tc.\highQC.\view \geq v$; note that $p^*.\tc.\highQC.\view < v' - 1$ (by the $\ValidTC$ function). 
        Next, $Q(p^*.\tc.\highQC.\view \in [v, v' - 2])$ shows that $p^*.\tc.\highQC$ points to a proposal that extends proposal $p$.
        Since $p^*.\block.\qc = p^*.\tc.\highQC$ (ensured by the $\SafetyCheck$ function), $p^*$ extends proposal $p$, which concludes the proof of $X(v' > v)$ in this case.
    \end{itemize}
    
    \item Suppose $p^*$ is a reproposal.
    In this case, $p^*$ is a reproposal of $p^*.\tc.\hightip$; note that $p^*.\tc.\view = v' - 1 \geq v$.
    Due to $Z(v' - 1)$, we have that $p^*$ extends proposal $p$ (as $p^*.\tc.\hightip$ does so).
    Thus, $X(v' > v)$ holds in this case as well.
\end{itemize}
Therefore, $X(v' > v)$ holds in all possible scenarios.

\smallskip
\noindent \underline{$Y(v' > v)$:}
Consider a validator $i \in S$ that modifies its local tip in view $v' > v$.
Let us consider all possible cases for that to happen:
\begin{itemize}
    \item Let $i$ modify its local tip upon receiving a fresh proposal $p^*$ in view $v' > v$.
    In this case, $i$ updates its local tip to some tip $T$ with $T.\view = v' > v$, which proves $Y(v' > v)$ in this scenario.
    
    \item Let validator $i$ modify its local tip upon receiving a reproposal $p^*$.
    Here, $p^*.\tc \neq \bot$.
    In this case, validator $i$ updates its local tip to $p^*.\tc.\hightip$ (note that $p^*.\tc.\view = v' - 1 \geq v$).
    By quorum intersection (since $|S| \geq f + 1$ and $p^*.\tc$ contains information from $2f + 1$ validators), $p^*.\tc$ includes information from at least one validator in $S$.
    Given that $Y(x)$ holds for all views $v \in [v, v' - 1]$ and all validators from the set $S$ update their local tip to a tip whose view is $v$ and that extends proposal $p$ (from the first paragraph of the proof) before contributing to $p^*.\tc$ (in view $v$), it follows that $p^*.\tc.\tipsviews$ includes $\geq v$.
    Hence, $p^*.\tc.\hightip.\view \geq v$, which proves $Y(v' > v)$ even in this case.
\end{itemize}
As $Y(v' > v)$ is preserved in all possible cases, the statement holds.

\smallskip
\noindent \underline{$Q(v' > v)$:}
Let $\mathit{qc}$ be any QC with $\mathit{qc}.\view = v' > v$.  
We distinguish between two cases:
\begin{itemize}
    \item Suppose that at least one correct validator contributing to $\mathit{qc}$ did so upon receiving a proposal in view $v'$.  
    In this case, the received proposal extends proposal $p$, since $X(v')$ holds.  
    Hence, $Q(v' > v)$ is satisfied.

    \item Suppose that no correct validator contributing to $\mathit{qc}$ did so upon receiving a proposal in view $v'$.  
    In this case, all correct votes issued in view $v'$ are based on timeout messages, each corresponding to the local tip of the validator at that time.  
    Since $Y(x)$ holds for all views $x \in [v, v']$, and each validator in the set $S$ updates its local tip to a tip whose view is $v$ and that extends proposal $p$ (from the first paragraph of the proof) before contributing to $\mathit{qc}$, it follows that at least one correct validator $i \in S$ must have contributed to $\mathit{qc}$ after having voted for proposal $p$.
    If validator $i$ updates its local tip for this vote in view $v$, then---since a correct validator updates its local tip at most once per view, and it has already done so in view $v$ to a tip that extends $p$ (by the first paragraph of the proof)---the vote must target a proposal that extends $p$.  
    Otherwise, the vote must also target a proposal that extends $p$ by $X(v^*)$, for any $v^* \in [v + 1, v']$.
    In both cases, $Q(v' > v)$ is satisfied.

\end{itemize}
Thus, $Q(v' > v)$ holds in both cases.


\smallskip
\noindent \underline{$Z(v' > v)$:}
Let $\mathit{tc}$ be any TC with $\mathit{tc}.\view = v' > v$ and $\mathit{tc}.\hightip \neq \bot$.
Since (1) $|S| \geq f + 1$, and (2) $\mathit{tc}$ includes information about $2f + 1$ validators, it must be that information about at least one validator $i \in S$ is included in $\mathit{tc}$.
Given that (1) validator $i$ updates its local tip to a tip whose view is $v$ and that extends proposal $p$ (from the first paragraph of the proof) before contributing to $\mathit{tc}$, and (2) $Y(x)$ holds for every view $x \in [v, v']$, a view $\geq v$ is included in $\mathit{tc}.\tipsviews$.
This further means that $\mathit{tc}.\hightip.\view \geq v$.
We now consider two possible cases:
\begin{itemize}
    \item Let $\mathit{tc}.\hightip.\view = v$.
    Since the only proposal issued in view $v$ is $p$, it follows that $\mathit{tc}.\hightip$ extends proposal $p$.
    Therefore, $Z(v' > v)$ is satisfied.

    \item Let $\mathit{tc}.\hightip.\view > v$.
    Note that $\mathit{tc}.\hightip.\view \leq v'$.
    As $X(\mathit{tc}.\hightip.\view)$ is satisfied, we know that $\mathit{tc}.\hightip$ extends proposal $p$ and $\mathit{tc}.\hightip.\blockheader.\qc.\view \geq v$.
    Thus, $Z(v' > v)$ holds here as well.
\end{itemize}
Given that all invariants have been established, the proof is complete.
\qed
\end{proof}

The following lemma shows that if a non-equivocating leader issues a fresh proposal $p$ and at least $f + 1$ correct validators vote for it, then none of these validators can later vote for any proposal that does not extend proposal $p$.

\begin{lemma} \label{lemma:jovan_tail_forking_crucial_2}
Let $p$ be any fresh proposal with $p.\view = v$, issued by the leader of view $v$.
Suppose that the leader of view $v$ does not equivocate and that at least $f + 1$ correct validators vote for $p$ in view $v$.
Then, if any correct validator that voted for $p$ in view $v$ later votes for a proposal $p^*$ in some view $v^* \geq v + 1$, the proposal $p^*$ must extend $p$.
\end{lemma}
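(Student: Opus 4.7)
The plan is to mirror the proof of Lemma \ref{lemma:jovan_safety_crucial_2}, adapting it to the tail-forking setting by invoking Lemma \ref{lemma:jovan_crucial_tail_forking} in place of Lemma \ref{lemma:jovan_safety_crucial} and starting the tracking from view $v$ rather than view $v+1$. Concretely, I would argue by contradiction: assume some validator $\ell \in S$ that voted for $p$ in view $v$ later casts a vote for a proposal $p^*$ in view $v^* \geq v+1$ such that $p^*$ does not extend $p$. By \Cref{lemma:jovan_local_tip}, at the moment $\ell$ votes for $p$ in view $v$, its $\localhightip$ variable has a block header matching that of $p.\block$ and satisfies $\localhightip.\view = v$.

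Next, I would observe that every vote cast by a correct validator---whether issued at line \ref{line:send_vote_normal} of \Cref{Algorithm:Consensus-Execution_1} or packaged inside a timeout message via line \ref{line:create_vote_timeout} of \Cref{Algorithm:Utilities}---derives its $\blockhash$ and $\blockid$ fields from $\localhightip.\blockheader$. Since $\ell$'s vote is for $p^*$, we must have $\localhightip.\blockheader.\blockhash = p^*.\block.\blockhash$ at the moment the vote is issued. By collision resistance this equality transfers the ``extends'' relation: the $\qc$ field of $\localhightip.\blockheader$ is identical to that of $p^*.\block$, so $\localhightip$ and $p^*$ share the same parents under the parent relation, and any ancestor chain from $\localhightip$ back to $p$ would also witness $p^*$ extending $p$. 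Since $p^*$ does not extend $p$, $\localhightip$ at the moment of voting for $p^*$ cannot extend $p$ either. Therefore $\ell$ must have updated $\localhightip$ to a tip not extending $p$ at some point after voting for $p$ in view $v$.

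This update cannot occur in view $v$ itself: immediately after voting for $p$, validator $\ell$ sets $\mathsf{highest\_voted\_view} \gets v$ (via line \ref{line:highest_voted_view_update_proposal} or line \ref{line:update_highest_voted_view_timeout} of \Cref{Algorithm:Consensus-Execution_1}), which blocks the tip-updating branch at line \ref{Alg:update_tip} from firing again in view $v$. Hence the update must occur in some view $v' > v$. But the invariants $X(v')$, $Y(v')$, and $Z(v')$ established inside the proof of \Cref{lemma:jovan_crucial_tail_forking} guarantee that every tip update performed by a validator in $S$ during a view $v' > v$ yields a tip that extends $p$: such updates come either from a valid fresh proposal (which extends $p$ by $X$) or from a reproposal whose $\hightip$ extends $p$ (by $Z$). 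This contradicts the existence of the update to a tip not extending $p$, completing the proof.

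The main obstacle I expect is the tip/proposal interplay in the crucial step where the vote's $\blockhash$ coincides with $p^*.\block.\blockhash$: one must be careful in transferring the ``extends'' relation between a tip and the corresponding proposal, since \emph{extends} is defined separately for these two objects. The key technical observation that makes this work is that matching block headers induce identical $\qc$ fields and therefore identical parent sets, so any descendant chain from $\localhightip$ back to $p$ is also a descendant chain from $p^*$ back to $p$; the rest of the argument is a direct transcription of the safety analog, with the starting view shifted by one.
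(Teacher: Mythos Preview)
Your proposal is correct and follows essentially the same contradiction argument as the paper's proof: both pin down the local tip at the moment of voting for $p$ via \Cref{lemma:jovan_local_tip}, argue that a later vote for a non-extending $p^*$ forces a local-tip update to a non-extending tip, rule out view $v$ via the $\mathsf{highest\_voted\_view}$ guard, and then derive a contradiction from the fact that all proposals in views $>v$ extend $p$. The only cosmetic difference is that you reach inside the proof of \Cref{lemma:jovan_crucial_tail_forking} to cite its internal invariants $X$, $Y$, $Z$, whereas the paper simply invokes that lemma's statement (every valid proposal in a view $>v$ extends $p$) and observes that any local-tip update is driven by such a proposal; the latter is cleaner since it avoids depending on unexported proof internals.
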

\begin{proof}
Let $S$ denote the set of correct validators, with $|S| \geq f + 1$, that voted for fresh proposal $p$ in view $v$.
Assume, for the sake of contradiction, that there exists a correct validator $\ell \in S$ that votes for a proposal $p^*$ in some view $v^* \geq v + 1$ such that $p^*$ does not extend $p$.  
We know that, in view $v$, validator $\ell$’s local tip stores a block header that matches the block header of proposal $p$'s block (by \Cref{lemma:jovan_local_tip}).  
The fact that $\ell$ issues a vote for proposal $p^*$ (that does not extend proposal $p$) in view $v^*$ necessarily implies that $\ell$ updates its local tip to contain a block header that does not extend proposal $p$ after voting for proposal $p$.  
Indeed, if validator $\ell$ casts its vote for $p^*$ at line~\ref{line:send_vote_normal} of \Cref{Algorithm:Consensus-Execution_1}, it updates its local tip at line~\ref{Alg:update_tip} of the same algorithm;  
otherwise, if the vote is issued via a timeout message, it is created at line~\ref{line:create_vote_timeout} of \Cref{Algorithm:Utilities}.  
If validator $\ell$ had not updated its local tip to contain a block header that does not extend proposal $p$ after voting for $p$, then $p^*$ would necessarily extend $p$.
Hence, validator $\ell$ must have updated its local tip to contain a block header that does not extend proposal $p$ after voting for $p$.  

Note that this update cannot occur in view $v$: once $\ell$ votes for $p$, it sets its $\mathsf{highest\_voted\_view}$ variable to $v$ (line~\ref{line:highest_voted_view_update_proposal} or line~\ref{line:update_highest_voted_view_timeout} of \Cref{Algorithm:Consensus-Execution_1}), which prevents it from updating its local tip in view $v$ (line~\ref{Alg:update_tip} of \Cref{Algorithm:Consensus-Execution_1}).  
Therefore, validator $\ell$ must update its local tip to contain a block header that does not extend proposal $p$ in some view strictly greater than $v$.
However, \Cref{lemma:jovan_crucial_tail_forking} proves that every valid proposal from any view greater than $v$---and hence every proposal that $\ell$ could use to update its local tip---must extend $p$.
This contradicts the assumption that validator $\ell$ updates its local tip in some view greater than $v$ to contain a block header that does not extend $p$, and it completes the proof.
\qed
\end{proof}

We are now ready to prove the tail-forking resistance property.

\begin{theorem} [Tail-forking resistance] \label{theorem:jovan_tail_forking}
If a leader proposes a block $B$ as part of a fresh proposal $p$ and at least $f + 1$ correct validators vote for it, then---unless the leader equivocates---any block $B'$ committed by a correct validator must satisfy that either $B$ extends $B'$ or $B'$ extends $B$.
\end{theorem}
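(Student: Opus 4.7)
\medskip
\noindent\textbf{Proof plan for Theorem~\ref{theorem:jovan_tail_forking} (Tail-forking resistance).}
The plan is to let $B'$ be any block committed by a correct validator, so that $B'$ is carried by some proposal $p'$ with $p'.\view = v'$. By the commit rule of \sysname, the commitment of $B'$ requires a QC for $p'$ in view $v'$ together with a QC in view $v' + 1$ pointing to $p'$ (i.e., a QC for a proposal $p''$ whose parent is $p'$). Writing $v = p.\view$, I would then split the argument into four cases depending on the relation between $v'$ and $v$: (i) $v' \geq v + 1$, (ii) $v' = v$, (iii) $v' = v - 1$, and (iv) $v' \leq v - 2$.

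For case (i), Lemma~\ref{lemma:jovan_crucial_tail_forking} directly gives that $p'$ extends $p$, so $B'$ extends $B$. For cases (ii) and (iii) the key observation is the following consequence of Lemma~\ref{lemma:jovan_same_vote} and Lemma~\ref{lemma:jovan_qc_intersection}: since at least $f + 1$ correct validators vote for $p$ in view $v$, any QC formed in view $v$ must share the block identifier of $p$. Indeed, such a QC aggregates $2f + 1$ votes, hence includes at least $f + 1$ correct votes, but at most $f$ correct validators did not vote for $p$, so at least one correct validator both voted for $p$ and contributed to the QC, forcing its block identifier by Lemma~\ref{lemma:jovan_same_vote}. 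Applied with case (ii), this implies that the QC for $p'$ in view $v' = v$ is actually a QC for $p$; Observation~\ref{observation:jovan_same_proposals} then yields $p.\block = p'.\block$, hence $B = B'$. In case (iii), the same argument applied to the QC for $p''$ in view $v' + 1 = v$ gives $p.\block = p''.\block$, which means that the parent of $p$ is $p'$, and therefore $B$ extends $B'$.

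For case (iv), where $v' \leq v - 2$, I would invoke Lemma~\ref{lemma:jovan_safety_crucial} applied to the committed proposal $p'$ and its child $p''$ in view $v' + 1$: since $v = p.\view > v' + 1$, $p$ must strictly extend $p'$, so $B$ extends $B'$. Combining the four cases yields that in every scenario either $B$ extends $B'$ or $B'$ extends $B$, which is the desired conclusion.

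The main obstacle is handling the boundary cases $v' = v$ and $v' = v - 1$ cleanly, because the hypothesis of the theorem only promises $f + 1$ correct votes for $p$ and not the existence of a QC for $p$; the proof must instead deduce the blockid of foreign QCs in view $v$ from the quorum-intersection argument described above, which is the reason the two middle cases need to be treated separately from the generic safety-style argument used for $v' \leq v - 2$.
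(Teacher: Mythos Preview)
Your plan is correct and mirrors the paper's own argument: the paper also case-splits on the relation between $p'.\view$ and $p.\view$, using Lemma~\ref{lemma:jovan_crucial_tail_forking} (via its corollary Lemma~\ref{lemma:jovan_tail_forking_crucial_2}) for $v' > v$, the quorum-intersection argument for $v' \in \{v-1,v\}$, and Lemma~\ref{lemma:jovan_safety_crucial} for $v' \le v-2$; the only cosmetic difference is that the paper phrases it as a proof by contradiction and collapses your case~(i) into the statement ``$p'.\view \le v$''. Two small points to tighten: your quorum-intersection count (``at most $f$ correct validators did not vote for $p$'') assumes exactly $2f+1$ correct validators---the robust version intersects the $f+1$ correct $p$-voters with the full $2f+1$ QC signers, which overlap since $(f+1)+(2f+1) > 3f+1$; and you should take $p'$ to be the proposal that is \emph{directly} committed via two consecutive QCs (the paper does this explicitly), since $B'$ may merely be an ancestor of $p'.\block$---the reduction is immediate but should be stated.
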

\begin{proof} 
Let $v$ denote $p$'s view.
First, observe that every correct validator that votes for proposal $p$ in view $v$ must have received some proposal $p^*$ in view $v$ whose block is identical to $p.\block$ (by \Cref{lemma:jovan_local_tip}).

Assume, for contradiction, that the leader does not equivocate and a correct validator $i$ commits a block $B'$ such that $B'$ neither extends $B$ nor is extended by $B$.
Let validator $i$ commit a block $B'$ as part of proposal $p'$, where $p'$ contains $B'$ or strictly extends a proposal that contains $B'$.
This implies the existence of a QC for proposal $p'$ in view $p'.\view$ and another QC for the child of $p'$ in view $p'.\view + 1$.
Note that proposals $p$ and $p'$ are conflicting (as $B$ and $B'$ do not extend each other).
%

From \Cref{lemma:jovan_tail_forking_crucial_2} and the fact that the leader does not equivocate, we have that $p'.\view \leq p.\view$. 
Now consider three possible cases:
\begin{itemize}
    \item $p'.\view = p.\view$.
    There exists at least one correct validator that votes for both proposal $p$ and proposal $p'$ in view $p'.\view = p.\view$.  
    Hence, $p'.\blockid = p.\blockid$ by \Cref{lemma:jovan_same_vote}.
    However, this is impossible as $B$ and $B'$ do not extend each other.
    

    \item $p'.\view + 1 = p.\view$: 
    There exists at least one correct validator that votes for both proposal $p$ and for the child $c'$ of proposal $p'$ in view $p'.\view + 1 = p.\view$.  
    Therefore, $c'.\blockid = p.\blockid$ by \Cref{lemma:jovan_same_vote}, which further implies that $B$ extends $B'$, which violates the starting assumption.
    
    
    \item $p'.\view + 1 < p.\view$:
    In this case, since every correct validator that voted for proposal $p$ in view $v$ receives some proposal $p^*$ with $p^*.\block = p.\block$ (by the first paragraph of the proof), \Cref{lemma:jovan_safety_crucial} implies that $p^*$ strictly extends proposal $p'$.
    Consequently, the block $B$ contained in $p^*$ strictly extends the block $B'$.
    This is, however, impossible as $B$ conflicts with $B'$.
\end{itemize}
As neither of the above cases can occur, this completes the proof. 
\qed
\end{proof}

An immediate corollary of \Cref{theorem:jovan_tail_forking} is that a speculatively finalized block contained in a fresh proposal that has collected a QC can only be reverted if the leader that issued the proposal equivocated.

\begin{corollary} [Reversion of speculatively committed blocks] 
\label{corollary:jovan_reversion}
Suppose a block $B$ contained in a fresh proposal $p$ is speculatively finalized by a correct validator, and another block $B'$ is committed by a correct validator such that neither $B$ extends $B'$ nor $B'$ extends $B$.
Then, the leader of view $p.\view$ equivocated.
\end{corollary}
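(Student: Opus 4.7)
The plan is to derive this corollary as an almost immediate consequence of Theorem~\ref{theorem:jovan_tail_forking}. The whole argument reduces to checking that the hypotheses of that theorem are met by the hypotheses of the corollary, and then applying its contrapositive.

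First I would unpack what ``speculatively finalized'' means in \sysname. By the specification of $\CommitSpecCommit$ in \Mref{Algorithm: Consensus-Execution-2}, a correct validator speculatively commits the block $B$ of a fresh proposal $p$ only upon observing a QC $\iqc$ that points to $p$ (i.e., $\iqc.\blockid = p.\blockid$). Thus the existence of a correct validator that speculatively finalizes $B$ implies the existence of a QC for $p$ in view $p.\view$. By the definition of a QC, it aggregates $2f+1$ valid votes on $p$; since at most $f$ validators are Byzantine, at least $f+1$ of these votes come from correct validators. This establishes exactly the two hypotheses of Theorem~\ref{theorem:jovan_tail_forking}: the leader of view $p.\view$ issued a fresh proposal $p$ containing $B$, and at least $f+1$ correct validators voted for it.

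Next I would apply the contrapositive of Theorem~\ref{theorem:jovan_tail_forking}. That theorem states that, under these hypotheses, if the leader of view $p.\view$ did not equivocate, then any block $B'$ committed by a correct validator must satisfy either $B$ extends $B'$ or $B'$ extends $B$. The corollary's hypothesis gives us a block $B'$ committed by a correct validator such that neither $B$ extends $B'$ nor $B'$ extends $B$. The contrapositive therefore forces the leader of view $p.\view$ to have equivocated, which is precisely the conclusion of the corollary.

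I do not foresee a substantive obstacle here. The only subtlety worth making explicit in the write-up is the link between speculative finalization and the existence of a QC for a fresh proposal, since the corollary's statement is phrased in terms of speculative finalization while Theorem~\ref{theorem:jovan_tail_forking} is phrased in terms of votes cast by correct validators. Once this bridge is stated (one line, via the $\CommitSpecCommit$ rule and the $2f+1$ threshold for a QC), the corollary follows immediately by contraposition, with no additional case analysis needed.
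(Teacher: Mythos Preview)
Your proposal is correct and follows essentially the same approach as the paper's own proof: both observe that speculative finalization of a fresh proposal implies a QC and hence at least $f+1$ correct votes, then apply the contrapositive of Theorem~\ref{theorem:jovan_tail_forking}. Your version is slightly more explicit about the bridge via $\CommitSpecCommit$ and the $2f+1$ threshold, which is fine.
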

\begin{proof}
Since block $B$ contained in a fresh proposal $p$ is speculatively finalized, at least $f + 1$ correct validators must have voted for $p$.
Assume, for contradiction, that the leader of view $p.\view$ does not equivocate and that some block $B'$ is committed by a correct validator such that neither $B$ extends $B'$ nor $B'$ extends $B$.
By \Cref{theorem:jovan_tail_forking}, this situation cannot occur (as either $B$ extends $B'$ or $B'$ extends $B$), which contradicts our assumption that the leader did not equivocate.
Hence, the theorem follows.
\qed
\end{proof}
\subsection{Liveness} 
\label{sub:analysis-liveness}

%

This subsection establishes that \sysname satisfies the liveness property: 
every block issued by a correct leader after GST is eventually committed.  
We emphasize that this guarantee applies to \emph{all} blocks, including those proposed by a correct leader as part of a reproposal.
We begin by defining what it means for a correct validator to enter a view.

\begin{definition} [Entering a view] \label{definition:jovan_view_entering}
A correct validator is said to \emph{enter} view $v$ when it assigns $v$ to its $\curView$ variable (line~\ref{line:enter_view} of \Cref{Algorithm:Pacemaker}).  
If this assignment occurs at real time $\tau$, we say that the validator enters view $v$ \emph{at time $\tau$}.  
\end{definition}

Next, we define what it means for a correct validator to accept a QC or a TC.

\begin{definition} [Accepting a QC or a TC]
A correct validator \emph{accepts} a QC $\mathit{qc}$ or a TC $\mathit{tc}$ if and only if it invokes $\IncrementView(\mathit{qc})$ or $\IncrementView(\mathit{tc})$, and this invocation updates its $\curView$ variable to $\mathit{qc}.\view + 1$ or $\mathit{tc}.\view + 1$, respectively.
\end{definition}
Recall that a correct validator $i$ accepts a QC $\mathit{qc}$ in one of the following situations:
\begin{itemize}
    \item upon receiving a proposal that carries it (line~\ref{line:increment_view_proposal} of \Cref{Algorithm:Consensus-Execution_1});
    
    \item upon forming it locally (line~\ref{line:increment_view_form_qc} of \Cref{Algorithm:Consensus-Execution_1} and line~\ref{line:increment_view_form_qc_timeout_message} of \Cref{Algorithm:Consensus-Execution_2});
    
    \item upon receiving it directly from another validator $j$ (line~\ref{line:increment_view_receive_qc} of \Cref{Algorithm:Consensus-Execution_1} if validator $j$ is the leader of view $\mathit{qc}.\view$ and line~\ref{line:increment_view_receive_qc_next_leader} of \Cref{Algorithm:Consensus-Execution_1} if validator $i$ is the leader of view $\mathit{qc}.\view + 1$);
    
    \item upon receiving it within a timeout message (line~\ref{line:increment_view_receive_qc_timeout_message} of \Cref{Algorithm:Consensus-Execution_2}).
\end{itemize}
Similarly, a correct validator accepts a TC in one of the following situations:
\begin{itemize}
    \item upon receiving a proposal that carries it (line~\ref{line:increment_view_proposal} of \Cref{Algorithm:Consensus-Execution_1}); 
    
    \item upon receiving it within a timeout message (line~\ref{line:increment_view_receive_qc_timeout_message} of \Cref{Algorithm:Consensus-Execution_2});
    
    \item upon forming it locally (line~\ref{line:increment_view_form_tc} of \Cref{Algorithm:Consensus-Execution_2});

    \item upon receiving it directly (line~\ref{line:increment_view_receive_tc} of \Cref{Algorithm:Consensus-Execution_2});
    
    \item upon receiving it within a proposal request or an NE request (line~\ref{line:increment_view_proposal_request} of \Cref{Algorithm:Consensus-Execution_2} and line~\ref{line:increment_view_ne_request} of \Cref{Algorithm:Consensus-Execution_2}).
\end{itemize}

Let $\tau_i(v)$ denote the time at which a correct validator $i$ enters view $v$, with $\tau_i(v) = \bot$ if validator $i$ never enters that view.
In the remainder of this section, we say that a correct validator \emph{fully times out} from a view if it issues a timeout triggered by its own local clock---namely, after $\viewduration$ time has elapsed according to that clock\footnote{Note that before GST, although a validator's local clock may measure $\viewduration$ time, the actual elapsed real time may differ, since local clocks are permitted to drift prior to GST.}---rather than due to the premature timeout invocation at line~\ref{line:premature_timeout} of \Mref{Algorithm:Pacemaker}.
Similarly, we say that a correct validator \emph{prematurely times out} if it triggers the timeout event at line~\ref{line:premature_timeout} of \Cref{Algorithm:Consensus-Execution_2}.

First, we show that the first correct validator to send a timeout message for any view $v$ must have fully timed out from that view (line~\ref{line:timeout_from_view} of \Cref{Algorithm:Consensus-Execution_1}) before doing so.

\begin{lemma} \label{lemma:jovan_first_timeout}
Let $i$ be the first correct validator that sends a timeout message for any view $v \geq 1$.
Then, validator $i$ has fully timed out from view $v$ before sending the message.
\end{lemma}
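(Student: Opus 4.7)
The plan is to prove the lemma by contradiction, leveraging the two mechanisms through which a correct validator can send a timeout message for a view $v$: (i) a full timeout after $\viewduration$ has elapsed locally since entering view $v$ (line~\ref{line:timeout_from_view} of \Cref{Algorithm:Consensus-Execution_1}), and (ii) a premature timeout triggered by the Bracha amplification step in the pacemaker, which fires only after receiving $f+1$ timeout messages for view $v$ (line~\ref{line:premature_timeout} of \Cref{Algorithm:Pacemaker}). These are the only two places in the code where a correct validator's timeout message for $v$ can be created and broadcast.

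First, I would verify by inspection of \Cref{Algorithm:Consensus-Execution_1,Algorithm:Consensus-Execution_2,Algorithm:Pacemaker} that there is no third pathway through which a correct validator broadcasts its own timeout message for view $v$ (forwarding another validator's TC at line~\ref{line:broadcast_tc_timeout} or line~\ref{line:broadcast_tc_receive_tc} does not count, since these are not newly-signed timeout messages from the forwarder). This rules out any hidden routes and reduces the argument to the two mechanisms above.

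Next, I would assume for contradiction that the first correct validator $i$ to send a timeout message for view $v$ does so via mechanism (ii). Then, prior to this action, $i$ must have collected at least $f+1$ valid timeout messages for view $v$ (the trigger for line~\ref{line:premature_timeout}). Since there are at most $f$ Byzantine validators, at least one of these $f+1$ messages, say from validator $j$, originates from a correct validator. By the reliability of authenticated channels and the direction of causality, $j$ must have broadcast its own timeout message for view $v$ strictly before $i$ received it, and hence strictly before $i$ sends its own. This contradicts the assumption that $i$ is the first correct validator to send a timeout message for view $v$.

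Therefore, $i$ must have sent its timeout message through mechanism (i), i.e., through the full timeout branch at line~\ref{line:timeout_from_view}, which by definition means $i$ has fully timed out from view $v$ before sending the message. The main (mild) obstacle is making sure the case analysis of timeout-broadcast sites is exhaustive; beyond that, the standard Bracha-amplification pigeonhole argument closes the proof immediately.
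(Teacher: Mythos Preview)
Your proposal is correct and follows essentially the same contradiction argument as the paper: if the first correct validator's timeout were premature, it would require $f+1$ prior timeout messages, at least one of which must come from a correct validator, contradicting minimality. Your treatment is slightly more explicit about the pigeonhole step and the exhaustiveness of the two broadcast sites, but the approach is the same.
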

\begin{proof}
Assume, for the sake of contradiction, that validator $i$ does not send a timeout message upon fully timing out from view $v$.
The only remaining possibility is that $i$ sends its timeout message upon prematurely timing out from view $v$ (line~\ref{line:premature_timeout} of \Cref{Algorithm:Pacemaker}).
In this case, validator $i$ must have already received $f + 1$ timeout messages associated with view $v$.
Hence, $i$ cannot be the first correct validator to send a timeout message for view $v$, which concludes the proof.
\qed
\end{proof}

Next, we prove that if any correct validator issues a timeout message associated with any view $v$, then the validator has previously entered view $v$.

\begin{lemma} \label{lemma:jovan_timeout_message_enter_before}
If a correct validator $i$ issues a timeout message for some view $v \geq 1$, then $i$ must have entered view $v$ beforehand.
\end{lemma}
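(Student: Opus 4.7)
The plan is to case-split on how the timeout message was produced. Inspecting \Cref{Algorithm:Consensus-Execution_1,Algorithm:Consensus-Execution_2,Algorithm:Pacemaker}, a correct validator only broadcasts a timeout message via line~\ref{line:broadcast_timeout_message} of \Cref{Algorithm:Consensus-Execution_1}, and this line executes in exactly two situations: (i) the local ``timing out in view $\curView$'' event fires naturally after $\viewduration$ time has elapsed on the timer started in \Mref{Algorithm:Pacemaker}.$\IncrementView$ (line~\ref{line:enter_view}); or (ii) the same event is triggered prematurely at line~\ref{line:premature_timeout} of \Mref{Algorithm:Pacemaker} via Bracha amplification, after receipt of $f+1$ timeout messages for the same view. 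In both cases the handler uses $\curView$ to build and tag the timeout message, so it suffices to show that at the moment the handler executes, $\curView = v$, and hence (by \Cref{definition:jovan_view_entering}) validator $i$ has already entered view $v$.

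For case (i), the timer that triggers the ``timing out'' event was started by the very \IncrementView{} call that assigned $v$ to $\curView$, so by definition validator $i$ entered view $v$ before the handler fires; no further work is needed.

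For case (ii), I plan to track the value of $\curView$ through the receipt handler for timeout messages. The premature trigger at line~\ref{line:premature_timeout} runs inside \HandleTimeout{}, invoked from the receipt block in \Cref{Algorithm:Consensus-Execution_2} (line~\ref{line:receive_timeout_message}) only after the guard $\itimeout.\view \geq \curView$ succeeds and after the call \Mref{Algorithm:Pacemaker}.$\IncrementView(\itimeout.\lastcer)$ at line~\ref{line:increment_view_receive_qc_timeout_message}. Since $\ValidTimeoutMessage$ enforces $\itimeout.\lastcer.\view = \itimeout.\view - 1$, this \IncrementView{} call satisfies its precondition whenever $\curView \le \itimeout.\view - 1$ and pushes $\curView$ to $\itimeout.\view$; if $\curView$ already equals $\itimeout.\view = v$, it is left unchanged. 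Either way, immediately before \HandleTimeout{} executes, $\curView \ge v$, and the guard together with the monotonicity of $\curView$ forces $\curView = v$ exactly, which matches the invariant asserted by the comment $(= \curView)$ on line~\ref{line:premature_timeout}. Thus the premature trigger, and the handler it invokes, run with $\curView = v$, so validator $i$ has entered view $v$ beforehand.

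The one subtlety I expect to spell out carefully is the atomicity argument: the ``timing out in view $\curView$'' event must be interpreted as carrying the specific view for which it was triggered, so that even if subsequent messages advance $\curView$ before the handler runs, the broadcast at line~\ref{line:broadcast_timeout_message} still refers to the view at which the event was scheduled. Under the standard assumption that event handlers are processed atomically with respect to the triggering step (as is implicit throughout the paper), this obstacle is immediate; otherwise, a short bookkeeping argument noting that $v$ is captured at trigger time suffices to close the case.
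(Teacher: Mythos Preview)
Your proposal is correct and follows essentially the same approach as the paper: both case-split on full versus premature timeout, dispatch case (i) immediately from the timer mechanism, and for case (ii) observe that the $\IncrementView(\itimeout.\lastcer)$ call at line~\ref{line:increment_view_receive_qc_timeout_message} of \Cref{Algorithm:Consensus-Execution_2} pushes $\curView$ to $v$ before the Bracha trigger fires. The paper's version is terser---it simply notes that the \emph{first} received timeout message already forces entry into view $v$---whereas you trace $\curView$ through the guard and the $\IncrementView$ call more explicitly and add the atomicity remark; these are presentational differences, not substantive ones.
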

\begin{proof}
For validator $i$ to issue a timeout message for view $v$ (line~\ref{line:broadcast_timeout_message} of \Cref{Algorithm:Consensus-Execution_1}), it must have either fully or prematurely timed out from view $v$ beforehand.  
If $i$ fully times out from view $v$, then it necessarily entered view $v$ earlier.  
Otherwise, $i$ prematurely times out upon receiving $f + 1$ valid timeout messages (line~\ref{line:premature_timeout} of \Cref{Algorithm:Pacemaker}).  
Upon receiving the first such message, validator $i$ enters view $v$ (line~\ref{line:increment_view_receive_qc_timeout_message} of \Cref{Algorithm:Consensus-Execution_2}).  
Therefore, the claim holds in both cases.
\qed
\end{proof}

The following lemma establishes that a correct validator never issues an invalid timeout message.  
In particular, if a correct validator issues a timeout message for some view $v \geq 1$, then that message must include either a QC or a TC from view $v - 1$.

\begin{lemma} \label{lemma:jovan_valid_timeout}
Suppose that a correct validator $i$ issues a timeout message $m$ associated with some view $v \geq 1$.  
Then, $m$ is a valid timeout message.
\end{lemma}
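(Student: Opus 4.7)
My plan is to prove this by directly verifying each conjunct of the $\ValidTimeoutMessage$ predicate defined in \Cref{Algorithm:ValidationPredicates}, using the structure of the $\CreateTimeoutMsg$ function in \Cref{Algorithm:Utilities} together with two standing invariants I will establish upfront: (I1) the variable $\localhighQC$ of a correct validator always holds a valid QC with view at most $\curView - 1$, and if the validator is in view $\curView = v$, then either $\localhighQC.\view = v - 1$ or $\lastTC.\view = v - 1$; (I2) the variable $\localhightip$ of a correct validator always holds a valid tip with $\localhightip.\view \leq \curView$.

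I will prove (I1) by inspecting every call site of $\IncrementView$ (\Cref{Algorithm:Pacemaker}). Each such call passes in a certificate whose validity is guaranteed by the surrounding $\ValidQC$/$\ValidTC$ check (as enforced in the relevant upon-clauses and helper functions). Moreover, $\IncrementView(\cert)$ requires $\cert.\view \geq \curView$ and sets $\curView \gets \cert.\view + 1$, so $\curView$ only increases and the last certificate that actually incremented $\curView$ to its current value $v$ has view exactly $v - 1$; that certificate is stored in $\localhighQC$ (if a QC) or $\lastTC$ (if a TC). Since any later $\IncrementView$ call with a QC of view $\geq v$ would move $\curView$ beyond $v$, while validator $i$ is in view $v$ neither $\localhighQC$ nor $\lastTC$ can be overwritten by a certificate from view $\geq v$. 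For (I2), I will observe that $\localhightip$ is updated only at line~\ref{Alg:update_tip} of \Cref{Algorithm:Consensus-Execution_1} via $\GetTip(p)$ for a proposal $p$ that passed $\SafetyCheck$ after the associated $\IncrementView$ call; by the reasoning in the proof of \Cref{lemma:jovan_local_tip}, this assignment yields a tip whose view is at most $p.\view = \curView$ at the moment of the update, and since $\curView$ only grows afterwards, the invariant persists.

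Given (I1) and (I2), the verification of the predicate fields becomes a case split on the branches of $\CreateTimeoutMsg(v)$. By \Cref{lemma:jovan_timeout_message_enter_before}, $i$ has entered view $v$, so $\curView = v$ at the moment $m$ is formed (the timeout handler is triggered from within view $v$). In the branch $\localhightip.\view \leq \localhighQC.\view$, the message carries $\qc = \localhighQC$ and $\tip = \hightipvote = \bot$; (I1) yields $\ValidQC(\localhighQC)$ and $\localhighQC.\view \leq v - 1 < v$, which discharges the conditions of $\ValidTimeoutMessage$ for this case. In the other branch, $m$ carries $\tip = \localhightip$, $\qc = \bot$, and a freshly created $\hightipvote$; by (I2) we have $\ValidTip(\localhightip)$ and $\localhightip.\view \leq v$, and the vote is constructed by $\CreateVote(v, \localhightip.\blockheader)$ and is therefore a valid vote for view $v$ by direct inspection. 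The $\lastcer$ field is set to $\localhighQC$ when $\localhighQC.\view = v - 1$ and to $\lastTC$ otherwise; in the latter case, (I1) guarantees $\lastTC.\view = v - 1$ and $\ValidTC(\lastTC)$. Finally, $\sigma$ is produced by validator $i$'s own $\Sign$ invocation over the precisely required tuple, so it is a valid signature.

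The main obstacle I anticipate is the precise bookkeeping in (I1) for the mixed case where a validator enters view $v$ via a TC after previously having a strictly smaller $\localhighQC$. One must carefully rule out the possibility that $\lastTC$ was later overwritten while still in view $v$; this is handled by the monotonicity of $\curView$ under $\IncrementView$ and by the fact that any accepted certificate with view $\geq v$ immediately pushes $\curView$ past $v$, so $\lastTC$ cannot have changed since entry into view $v$. All remaining conditions of $\ValidTimeoutMessage$ (the mutual exclusivity of $\tip$ and $\qc$, the hash/identifier checks embedded in $\ValidVote$ and $\ValidTip$, and the signature format) follow by direct inspection of the constructor.
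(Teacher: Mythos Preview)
Your plan is correct and considerably more thorough than the paper's own proof. The paper's argument is essentially a two-line sketch: it invokes \Cref{lemma:jovan_timeout_message_enter_before} to conclude that validator $i$ entered view $v$, hence accepted a QC or TC from view $v-1$ via $\IncrementView$, and notes that $\CreateTimeoutMsg$ places this certificate in the $\lastcer$ slot---and stops there, tacitly treating all other clauses of $\ValidTimeoutMessage$ (the tip, the tip-vote, the alternative $\qc$ field, the signature format) as self-evident from the constructor. You instead make the relevant state invariants explicit via (I1) and (I2) and then discharge every conjunct of the predicate by a case split on the two branches of $\CreateTimeoutMsg$. What this buys you is an argument that would survive a closer audit (for instance, you actually justify $\localhightip.\view \leq v$ and $\ValidTip(\localhightip)$, which the paper never mentions); what the paper's approach buys is brevity, at the cost of leaving the remaining conditions to the reader.

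One minor refinement to your plan: in (I1) you assert that every $\IncrementView$ call site is guarded by an explicit $\ValidQC$/$\ValidTC$ check, but the call at line~\ref{line:increment_view_form_tc} of \Cref{Algorithm:Consensus-Execution_2} passes a \emph{locally constructed} TC returned by $\HandleTimeout$, for which no external validity check is performed. You will need a short argument that a TC assembled from $2f{+}1$ messages each passing $\ValidTimeoutMessage$ itself satisfies $\ValidTC$; this follows by inspecting the construction in \Cref{Algorithm:Pacemaker} against the clauses of $\ValidTC$, but it is not literally ``guaranteed by the surrounding check.''
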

\begin{proof}
If $v = 1$, then $m$ is trivially valid, as it includes a genesis QC from view $0$.


Consider $v > 1$.  
For message $m$ to be broadcast (line~\ref{line:broadcast_timeout_message} of \Mref{Algorithm:Consensus-Execution_1}), validator~$i$ must have already entered view~$v$ (by \Cref{lemma:jovan_timeout_message_enter_before}).
For validator~$i$ to enter view~$v$, it must have first accepted either a QC or a TC corresponding to view~$v - 1$ (see the $\IncrementView$ procedure at line~\ref{alg:Increment_view} of \Mref{Algorithm:Pacemaker}).
This QC or TC is subsequently included in the timeout message (see the $\CreateTimeoutMsg$ function at line~\ref{line:view_certificate_create_timeout_message} of \Cref{Algorithm:Utilities}), thereby completing the proof.
\qed
\end{proof}

We will implicitly rely on the above lemma throughout the proof.  
In particular, it guarantees that all timeout messages issued by correct validators are valid and therefore will be processed by every other correct validator.

The following lemma shows that for any view $v > 1$, if a correct validator enters view $v$, then there must exist some correct validator that previously entered view $v - 1$.

\begin{lemma} \label{lemma:jovan_enter_before}
Consider any correct validator $i$ and any view $v > 1$ such that $\tau_i(v) \neq \bot$.  
Then, there exists a correct validator $j$ for which $\tau_j(v-1) \neq \bot$ and $\tau_j(v-1) \leq \tau_i(v)$.
\end{lemma}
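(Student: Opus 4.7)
The plan is to identify, given $i$'s entry into view $v$, a specific correct validator $j$ that must have entered view $v-1$ beforehand. By the pacemaker in \Cref{Algorithm:Pacemaker}, $i$ enters view $v$ exactly when it executes some call $\IncrementView(\icer)$ that sets $\curView$ to $v$, which forces $\icer.\view = v - 1$; here $\icer$ is either a QC or a TC. I handle each case separately using quorum intersection plus a causal chain.

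In the QC case, $\icer$ aggregates $2f + 1$ signed votes whose $\view$ field equals $v - 1$, so by pigeonhole at least $f + 1$ of those signatures are produced by correct validators. A correct validator can produce a vote with $\view = v - 1$ only in one of two ways: either on the happy-path branch of \Cref{Algorithm:Consensus-Execution_1} at line~\ref{line:create_vote_normal}, which is preceded in the same branch by an $\IncrementView$ that sets $\curView$ to $v - 1$; or as a tip vote packaged inside a timeout message for view $v - 1$ via line~\ref{line:create_vote_timeout} of \Cref{Algorithm:Utilities}, in which case \Cref{lemma:jovan_timeout_message_enter_before} forces the signer to have entered view $v - 1$ before issuing that timeout. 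Picking such a correct $j$ then gives $\tau_j(v-1)$ no later than the time at which $j$ dispatches its signed contribution, which is no later than the time the QC is assembled, which in turn is no later than $\tau_i(v)$.

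In the TC case, $\icer$ aggregates signatures on $2f + 1$ timeout messages for view $v - 1$, at least $f + 1$ of which are from correct validators. Picking any such correct $j$, \Cref{lemma:jovan_timeout_message_enter_before} immediately yields that $j$ entered view $v - 1$ before issuing its timeout message, and the same causal chain as in the QC case gives $\tau_j(v - 1) \leq \tau_i(v)$.

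The main obstacle is the QC case, because \sysname\ allows a QC with $\view = v - 1$ to be built (at least partially) from tip votes extracted by $\HandleTimeout$ from timeout messages, not only from regular votes for a happy-path proposal of view $v - 1$. The key observation that defuses this subtlety is that the $\view$ field of a tip vote is the timeout view, not the tip's own view, so any QC with $\view = v - 1$ that draws on tip votes must consume tip votes carried in timeout messages for view $v - 1$; \Cref{lemma:jovan_timeout_message_enter_before} then applies uniformly to those contributors, and the rest of the argument is pure causality.
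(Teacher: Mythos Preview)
Your proof is correct and follows essentially the same approach as the paper's: both reduce to the fact that entering view $v$ requires accepting a QC or TC for view $v-1$, then use quorum intersection to find a correct contributor who must have been in view $v-1$ when signing, invoking \Cref{lemma:jovan_timeout_message_enter_before} for the timeout-message branch. Your explicit treatment of the tip-vote subtlety (that a QC for view $v-1$ may be assembled partly from votes carried inside timeout messages for view $v-1$) is a nice addition that the paper leaves implicit; one small imprecision is that the happy-path $\IncrementView$ in the same branch need not be what sets $\curView$ to $v-1$ (the validator may already have $\curView = v-1$), but the conclusion---that $\curView = v-1$ at the moment the vote is created, hence the validator has entered $v-1$---is unaffected.
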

\begin{proof}
If a correct validator $i$ enters view $v > 1$, then a QC or a TC must have been previously formed for view $v - 1$.
Therefore, let us consider two possible cases:
\begin{itemize}
    \item Suppose that a QC is formed in view $v - 1$ before validator $i$ enters view $v$.  
    Then at least $f + 1$ correct validators must have voted in view $v - 1$.  
    Since a correct validator can cast a vote in view $v - 1$ only after entering that view---whether the vote is issued upon receiving a proposal (line~\ref{line:send_vote_normal} of \Cref{Algorithm:Consensus-Execution_1}) or through a timeout message (by \Cref{lemma:jovan_timeout_message_enter_before})---the lemma follows in this case.




    \item Suppose that a TC is formed in view $v - 1$ before validator $i$ enters view $v$.  
    This implies that at least $f + 1$ correct validators have issued timeout messages in view $v - 1$.  
    Each of these validators must have entered view $v - 1$ before sending its timeout message (by \Cref{lemma:jovan_timeout_message_enter_before}), and thus the lemma follows in this case as well.

    
\end{itemize}
Since the lemma holds in both possible cases, the proof is complete.
\qed
\end{proof}




The following lemma proves that every view is eventually entered by a correct validator, i.e., no views are ``skipped''.

\begin{lemma} \label{lemma:jovan_all_views_entered}
For every view $v \geq 1$, at least one correct validator enters view $v$.
\end{lemma}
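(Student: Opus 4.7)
I would prove the lemma by induction on $v$. The base case $v = 1$ follows directly from initialization: as described in the ``Genesis'' paragraph, every correct validator invokes $\IncrementView(\localhighQC)$ with the genesis QC at startup, which sets $\curView = 1$ and so every correct validator enters view~$1$. For the inductive step, I would assume that at least one correct validator enters view~$v$ and argue, by contradiction, that at least one correct validator must enter view~$v+1$.

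\textbf{Inductive step.} Suppose, for contradiction, that no correct validator ever enters view $v+1$. Fix a correct validator $i_0$ that enters view $v$ (guaranteed by the hypothesis). Since entering view $v+1$ is the only way to leave view $v$ upward, $i_0$ remains in view $v$ forever, so its local timer for $v$ must expire and $i_0$ must eventually fully time out and broadcast a timeout message $m_{i_0}$ for view $v$ (line~\ref{line:broadcast_timeout_message} of \Cref{Algorithm:Consensus-Execution_1}). By \Cref{lemma:jovan_valid_timeout}, $m_{i_0}$ is valid and contains a certificate $m_{i_0}.\lastcer$ with $\lastcer.\view = v-1$. By reliable communication, every correct validator eventually receives $m_{i_0}$, and upon processing it invokes $\IncrementView(m_{i_0}.\lastcer)$, which brings any validator currently in a view $\leq v-1$ up to view $v$. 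Hence every correct validator eventually enters view $v$.

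\textbf{Forming the TC.} Once all $n-f \geq 2f+1$ correct validators have entered view $v$, the same argument applied to each of them shows that, under the contradiction assumption, each fully times out from $v$ and broadcasts a valid timeout message. Each correct validator therefore eventually collects $2f+1$ valid timeout messages for view $v$, and \HandleTimeout\ in \Cref{Algorithm:Pacemaker} returns a TC $\itc$ with $\itc.\view = v$. Processing this TC triggers $\IncrementView(\itc)$ (line~\ref{line:increment_view_form_tc} of \Cref{Algorithm:Consensus-Execution_2}), which sets $\curView$ to $v+1$, so every such correct validator enters view $v+1$. This contradicts the assumption that no correct validator ever enters view $v+1$, completing the induction.

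\textbf{Main obstacle.} The delicate part is the transitional step where only a \emph{single} correct validator may initially be in view $v$: the Bracha amplification trigger on $f+1$ timeout messages alone does not suffice to bootstrap progress if the Byzantine validators refuse to cooperate. The crux is therefore the observation that any single valid timeout message for view $v$ already carries a certificate for view $v-1$ in its $\lastcer$ field, and the receipt of this single message is enough to pull every lagging correct validator up to view $v$ via $\IncrementView$. Once this ``view synchronization via $\lastcer$'' step is established, reaching the $2f+1$ timeout messages needed to form a TC (and hence entering view $v+1$) is routine. I would also be careful to invoke \Cref{lemma:jovan_valid_timeout} (which itself relies on \Cref{lemma:jovan_timeout_message_enter_before} and \Cref{lemma:jovan_enter_before}) only for timeout messages genuinely produced by correct validators, to avoid a circular dependency with the inductive hypothesis.
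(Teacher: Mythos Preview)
Your proof is correct and follows essentially the same approach as the paper (induction on $v$ versus the paper's minimal-counterexample contradiction are equivalent framings), and you correctly identify the key step: a single valid timeout message's $\lastcer$ field pulls all lagging correct validators into view~$v$. Two minor tightenings: (i) the phrase ``entering view $v+1$ is the only way to leave view $v$ upward'' is not literally true (a validator could jump to $v+2$ on receiving a view-$(v+1)$ certificate), so invoke \Cref{lemma:jovan_enter_before} explicitly to conclude that under the contradiction hypothesis no correct validator enters any view $> v$; (ii) $\HandleTimeout$ on $2f+1$ timeout messages may return a QC (from tip votes) rather than a TC, but either way the validator enters view $v+1$, as the paper's proof also notes.
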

\begin{proof}
We prove this by contradiction.  
Assume that there exists a view that no correct validator enters, and let $v_{\min}$ denote the smallest such view.  
Clearly, $v_{\min} > 1$, since every correct validator starts at view~$1$.
By \Cref{lemma:jovan_enter_before}, no correct validator can enter any view greater than $v_{\min}$.  
Moreover, by the minimality of $v_{\min}$, every view smaller than $v_{\min}$ is entered by at least one correct validator.  
Let $i$ be a correct validator that enters view~$v_{\min} - 1$.  
Since validator~$i$ never enters any view greater than $v_{\min} - 1$, it eventually issues a timeout message for view~$v_{\min} - 1$ (line~\ref{line:broadcast_timeout_message} of \Cref{Algorithm:Consensus-Execution_1}) upon timing out from view $v_{\min} - 1$.

Now, consider two cases: (i) $v_{\min} = 2$ and (ii) $v_{\min} > 2$. 
When $v_{\min} = 2$, then note that all correct validators start at $v_{\min} - 1 = 1$.
Alternatively, when $v_{\min} > 2$, the aforementioned timeout message associated with view $v_{\min} - 1$ carries either a QC or a TC from view~$v_{\min} - 2$ (as validator $i$ entered view $v_{\min} - 1$ upon accepting a QC or a TC from view $v_{\min} - 2$, by \Cref{lemma:jovan_valid_timeout}). 
Upon receiving this message, every correct validator that has not yet entered view $v_{\min} - 1$ enters view $v_{\min} - 1$ (line~\ref{line:increment_view_receive_qc_timeout_message} of \Cref{Algorithm:Consensus-Execution_2}).
Therefore, in both possible cases, all correct validators eventually enter view $v_{\min} - 1$.

Since no correct validator ever enters any view greater than $v_{\min} - 1$, all correct validators eventually broadcast their timeout messages for view~$v_{\min} - 1$ (line~\ref{line:broadcast_timeout_message} of \Cref{Algorithm:Consensus-Execution_1}, either because they fully or prematurely time out from view $v_{\min} - 1$).  
Given that there are at least $2f + 1$ correct validators and $2f + 1$ timeout messages are required to form a TC, at least one correct validator eventually constructs a QC or a TC for view $v_{\min} - 1$ and enters view $v_{\min}$ (line~\ref{line:increment_view_form_qc_timeout_message} or line~\ref{line:increment_view_form_tc} of \Cref{Algorithm:Consensus-Execution_2}).  
This contradicts the starting assumption and concludes the proof.
\qed
\end{proof}


\Cref{lemma:jovan_all_views_entered} is important because it guarantees that every view is eventually entered by at least one correct validator. We assume this lemma implicitly in the remainder of the proof.

\medskip 
Let $\tau(v)$ denote the time at which the first correct validator enters view $v$, i.e., 
\[
    \tau(v) \deq\,  \min \bigl( \{ \tau_i(v) \mid i \text{ is a correct validator and } \tau_i(v) \neq \bot \} \bigr).
\]
By \Cref{lemma:jovan_all_views_entered}, $\tau(v)$ is well-defined for every view $v$. Next, we define a post-GST view.

\begin{definition} [Post-GST view] \label{definition:jovan_post_gst_view}
A view $v$ is a \emph{post-GST view} if and only if $\tau(v) \geq \text{GST}$. 
\end{definition}
Without loss of generality, we assume that all correct validators start \sysname\ before GST. Therefore, any post-GST view $v$ must satisfy $v > 1$ (because correct validators start at view $1$ implying $\tau(1) < \text{GST}$).

Let $\viewduration$ denote the local time each correct validator waits before issuing a timeout message for a given view. Stating differently, $\viewduration$ represents the maximum time a correct leader needs to successfully collect and disseminate a QC in any post-GST view. 
In \sysname, we require $\viewduration$ to be
\[
 \viewduration= 3\Delta + \recoveryduration + 3\Delta,
\] 
where $\Delta$ is the upper bound on message delays after GST and $\recoveryduration$ denotes the time (after GST) required for a correct leader to construct or recover a proposal once it has obtained a QC or a TC from the preceding view (after GST).
%
%
Intuitively, the first $3\Delta$ term accounts for the delay that may occur before a correct leader enters the view --- this can take up to $3\Delta$ after the first correct validator has entered (as proven by \Cref{lemma:jovan_enter_views_with_correct_leader}). 
Additionally, the leader may require up to $\recoveryduration$ time to reconstruct the block it must repropose or to generate an NEC, in the case of the unhappy path. 
Concretely, $\recoveryduration = \left( \left\lceil \frac{n}{\batch} \right\rceil - 1 \right) \cdot \intervalduration + 2\Delta$, where $\batch$ and $\intervalduration$ are parameters of the recovery algorithm (see \Cref{Algorithm:Recovery}): proposal requests are disseminated to $\batch$ validators at a time, with $\intervalduration$ time between successive batches. 
Therefore, $\left( \left\lceil \frac{n}{\batch} \right\rceil - 1 \right) \cdot \intervalduration$ is sufficient to send proposal and NE requests to all validators. 
These requests are received by all validators within an additional $\Delta$ time, and their responses are received by the leader within another $\Delta$, for a total of $2\Delta$ communication delay. 
Thus, the total time required for this recovery process is $\recoveryduration = \left( \left\lceil \frac{n}{\batch} \right\rceil - 1 \right) \cdot \intervalduration + 2\Delta$.

Recall that when the leader obtains a QC, it is assumed to simultaneously obtain the proposal referenced by that QC, allowing it to issue a new proposal immediately upon obtaining the QC.
Finally, the leader needs an additional $\Delta$ time to disseminate its proposal, $\Delta$ time to collect votes and form a QC, and another $\Delta$ time to disseminate the newly formed QC.

The following lemma establishes that no correct validator issues a timeout message associated with any post-GST view $v$ before time $\tau(v) + \viewduration$.

\begin{lemma} \label{lemma:jovan_first_timeout_time}
Let $v > 1$ be any post-GST view.
Then, no correct validator issues a timeout message for view $v$ before time $\tau(v) + \viewduration$.
\end{lemma}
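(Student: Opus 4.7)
The plan is to leverage \Cref{lemma:jovan_first_timeout}, which guarantees that the first correct validator to broadcast a timeout message for view $v$ must have done so by \emph{fully} timing out, not prematurely. From there, the argument becomes a simple timing calculation grounded in the post-GST assumption.

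Concretely, let $i$ be the first correct validator that issues a timeout message for view $v$. By \Cref{lemma:jovan_first_timeout}, validator $i$ fully times out from view $v$ before broadcasting, meaning $i$ waits $\viewduration$ on its local clock after entering view $v$. Let $\tau_i(v)$ denote the real time at which $i$ enters view $v$; by definition of $\tau(v)$ as the minimum over all correct validators, we have $\tau_i(v) \geq \tau(v)$. Because $v$ is a post-GST view, $\tau(v) \geq \GST$, and hence $\tau_i(v) \geq \GST$ as well. Consequently, the entire $\viewduration$ waiting interval on $i$'s local clock occurs after GST, and thus the elapsed real time during this wait equals $\viewduration$. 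Therefore, $i$ broadcasts its timeout message no earlier than $\tau_i(v) + \viewduration \geq \tau(v) + \viewduration$.

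Since $i$ was chosen as the first correct validator to issue a timeout message for view $v$, no correct validator issues such a message before time $\tau(v) + \viewduration$, establishing the claim.

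The only subtlety (which I do not expect to be a real obstacle) is ensuring that local-clock measurements of $\viewduration$ correspond to $\viewduration$ units of real time once the timer starts after GST; this is a standard consequence of the partial synchrony model as used throughout \Cref{subsec:model}, and can be invoked directly without further argument.
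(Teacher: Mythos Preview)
Your proposal is correct and follows essentially the same approach as the paper: invoke \Cref{lemma:jovan_first_timeout} to conclude the first timeout must be a full timeout, then bound the broadcast time below by $\tau_i(v) + \viewduration \geq \tau(v) + \viewduration$. Your version is slightly more detailed in making the post-GST clock-synchrony assumption explicit, whereas the paper's proof is terser, but the argument is the same.
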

\begin{proof}
By \Cref{lemma:jovan_first_timeout}, the first correct validator to issue a timeout message for view $v$ must do so only after fully timing out from that view.  
Since no correct validator enters view $v$ before time $\tau(v)$, this event cannot occur earlier than $\tau(v) + \viewduration$.
\qed
\end{proof}

We now prove that no TC can be formed for any post-GST view $v$ before time $\tau(v) + \viewduration$.

\begin{lemma} \label{lemma:jovan_tc_time}
Let $v > 1$ be any post-GST view.
Then, no TC for view $v$ can be formed before time $\tau(v) + \viewduration$.
\end{lemma}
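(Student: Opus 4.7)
The plan is to reduce this claim directly to \Cref{lemma:jovan_first_timeout_time}, using only the basic quorum size required to assemble a TC. A TC aggregates timeout messages from $2f+1$ distinct validators; since at most $f$ of these can be Byzantine, at least $f+1$ of the contributing timeout messages must be issued by correct validators. In particular, at least one correct validator must issue a timeout message associated with view $v$ before any TC for view $v$ can exist.

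From here, I would simply observe that \Cref{lemma:jovan_first_timeout_time} already prohibits any correct validator from issuing a timeout message for the post-GST view $v$ before time $\tau(v) + \viewduration$. Consequently, the earliest moment at which the $(2f+1)$-th contributing timeout message could exist is no earlier than $\tau(v) + \viewduration$, and the TC itself cannot be formed before that time.

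The proof is essentially a one-line corollary, so there is no real technical obstacle. The only subtlety worth checking is that we really do need a correct contributor: this follows from $n = 3f + 1$ and the aggregation threshold $2f + 1$, which guarantees that at least $f + 1 \geq 1$ correct validators must have signed. Since \Cref{lemma:jovan_first_timeout_time} is stated for \emph{any} correct validator issuing such a message (not merely the first one from a particular set), this suffices, and the conclusion follows immediately.
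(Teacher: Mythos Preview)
Your proposal is correct and essentially identical to the paper's proof: both observe that a TC aggregates $2f+1$ timeout messages, so at least $f+1$ must come from correct validators, and then invoke \Cref{lemma:jovan_first_timeout_time} to conclude that none of those can appear before $\tau(v)+\viewduration$. The paper's version is just terser.
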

\begin{proof}
For a TC to be formed, at least $f + 1$ timeout messages must be issued by correct validators. 
Thus, the lemma follows directly from \Cref{lemma:jovan_first_timeout_time}.
\qed
\end{proof}

The following lemma establishes that, for any post-GST view $v$, no correct validator can enter any view $v' > v$ before time $\tau(v) + \viewduration$, unless at least one correct validator accepts a QC for view $v$ prior to $\tau(v) + \viewduration$.

\begin{lemma} \label{lemma:jovan_early_qc}
Let $v > 1$ be any post-GST view.  
If there exists a view $v' > v$ such that $\tau(v') < \tau(v) + \viewduration$, then there must exist a correct validator that accepts a QC for view $v$ before time $\tau(v) + \viewduration$.
\end{lemma}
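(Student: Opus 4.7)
The plan is to reduce the general case $v' > v$ to the specific case $v+1$ by walking back through intermediate views, and then exploit Lemma \ref{lemma:jovan_tc_time} to rule out TCs as the certificate that enabled the transition.

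First, I would use Lemma \ref{lemma:jovan_enter_before} inductively. The hypothesis gives a correct validator $i$ with $\tau_i(v') = \tau(v') < \tau(v) + \viewduration$. Applying Lemma \ref{lemma:jovan_enter_before} repeatedly, for each intermediate view $v''$ with $v < v'' < v'$ there exists a correct validator that entered $v''$ at or before the time $i$ entered $v'$. In particular, there is a correct validator $j$ with
\[
\tau_j(v+1) \;\leq\; \tau(v') \;<\; \tau(v) + \viewduration.
\]
Hence $\tau(v+1) < \tau(v) + \viewduration$.

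Next, I would inspect how $j$ could have entered view $v+1$. By the $\IncrementView$ procedure, validator $j$ must have accepted either a QC or a TC for view $v$ at time $\tau_j(v+1)$. Lemma \ref{lemma:jovan_tc_time} ensures that no TC for view $v$ is formed before $\tau(v) + \viewduration$, and a correct validator cannot accept a TC that has not yet been formed. Therefore $j$ must have accepted a QC for view $v$ strictly before time $\tau(v) + \viewduration$, which is exactly the conclusion sought.

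The main obstacle is conceptual rather than technical: one must verify that the inductive descent via Lemma \ref{lemma:jovan_enter_before} preserves the time bound at every step (each preceding validator enters its view no later than the next), and that ``accepting'' a TC indeed presupposes its prior formation, so that Lemma \ref{lemma:jovan_tc_time} applies. Both are straightforward consequences of the pacemaker logic and the definitions in Section~\ref{sub:analysis-liveness}, so no new machinery is required.
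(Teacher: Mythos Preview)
Your proposal is correct and follows essentially the same approach as the paper: reduce to view $v+1$ via repeated application of Lemma~\ref{lemma:jovan_enter_before}, then note that the validator entering $v+1$ must have accepted a QC or TC for view $v$, and rule out the TC case using Lemma~\ref{lemma:jovan_tc_time}. The paper frames this as a proof by contradiction while you argue directly, but the logical content is identical.
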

\begin{proof}
We prove this by contradiction.  
Suppose there exists a view $v' > v$ that is entered by a correct validator before time $\tau(v) + \viewduration$, and that no correct validator accepts a QC for view $v$ before $\tau(v) + \viewduration$.  
By \Cref{lemma:jovan_enter_before}, it follows that $\tau(v + 1) < \tau(v) + \viewduration$.  
Let $i$ be a correct validator that enters view $v + 1$ at time $\tau(v + 1)$.  
We consider two possibilities:
\begin{itemize}
    \item Suppose that validator $i$ enters view $v + 1$ upon accepting a QC for view $v$.  
    This contradicts the assumption that no correct validator accepts a QC for view $v$ before $\tau(v) + \viewduration$, rendering this case impossible.

    \item Suppose that validator $i$ enters view $v + 1$ upon accepting a TC for view $v$.
    This case is impossible due to \Cref{lemma:jovan_tc_time}.
\end{itemize}
As neither case is possible, the contradiction completes the proof.
\qed
\end{proof}


The following lemma establishes that every correct validator enters a post-GST view $v$ with a correct leader within at most $3\delta$ time from the moment the first correct validator entered view $v$, or that at least one correct validator has accepted a QC for view $v$ by that time.  
Recall that $\delta \leq \Delta$ denotes the actual network delay.

\begin{lemma} \label{lemma:jovan_enter_views_with_correct_leader}
%
Let $v > 1$ be any post-GST view with a correct leader. 
If no correct validator accepts a QC for view $v$ by time $\tau(v) + 3\delta$, then every correct validator enters view $v$ by time $\tau(v) + 3\delta$.
\end{lemma}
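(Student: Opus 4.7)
The plan is to fix an arbitrary correct validator $i^*$ that enters view $v$ at $\tau^* = \tau(v)$, identify the certificate $\cert$ for view $v-1$ that $i^*$ accepted, and then argue case by case on how $\cert$ was obtained that within $3\delta$ of $\tau^*$ every correct validator either accepts some certificate for view $v-1$ or processes a proposal for view $v$ broadcast by $L_v$. As a first step, I would use the hypothesis together with \Cref{lemma:jovan_tc_time} and \Cref{lemma:jovan_enter_before} to rule out any correct validator being past view $v$ during $[\tau^*, \tau^* + 3\delta]$: the hypothesis excludes QCs for view $v$ accepted in this window, and since $3\delta < \viewduration$ \Cref{lemma:jovan_tc_time} excludes any TC for view $v$ formed in it, so \Cref{lemma:jovan_enter_before} (applied inductively to propagate any such high entry back to a QC or TC for view $v$) rules out any correct validator reaching a view $> v$. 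Consequently it suffices to show that every correct validator enters view $v$ by $\tau^* + 3\delta$.

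Next, I would split on the form of $\cert$. In the QC case, an inspection of the handlers in \Cref{Algorithm:Consensus-Execution_1,Algorithm:Consensus-Execution_2} shows that $i^*$ can accept a QC for view $v - 1$ only by (i) receiving it from $L_{v - 1}$ via the first branch of the QC handler, (ii) being $L_v$ and receiving it via the third branch, (iii) locally forming it through the direct vote handler, whose \textbf{require} forces $i^* \in \{L_{v - 1}, L_v\}$, or (iv) accepting it as the $\lastcer$ of a timeout message, in which case line~\ref{line:forward_qc_timeout_message} explicitly forwards the QC to $L_v$. In every sub-case $L_v$ obtains the QC within $\delta$ of $\tau^*$ (or already holds it), invokes $\NextLeaderReceivesQC$, and broadcasts a happy-path proposal for view $v$ that reaches every correct validator by $\tau^* + 2\delta$. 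In the TC case, either $i^*$ obtained the TC from an incoming message (direct TC receipt or inside a timeout message) and the corresponding handler immediately re-broadcasts the TC at line~\ref{line:broadcast_tc_timeout} or line~\ref{line:broadcast_tc_receive_tc}, so every correct validator receives it within $\delta$; or $i^*$ built the TC itself from $2f + 1$ timeout messages for view $v - 1$, of which at least $f + 1$ come from correct validators that already broadcast them, and the Bracha amplification at line~\ref{line:premature_timeout} of \Cref{Algorithm:Pacemaker} forces every correct validator to broadcast its own timeout for view $v - 1$ by $\tau^* + \delta$ and hence to locally assemble a TC by $\tau^* + 2\delta$. The same Bracha argument handles the residual sub-case in which $i^*$ forms a QC via tip votes inside $\HandleTimeout$ without being $L_{v-1}$ or $L_v$: the underlying $2f + 1$ timeout messages still contain $\geq f + 1$ correct broadcasts, so every correct validator constructs some certificate for view $v - 1$ within $2\delta$. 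In every case, each correct validator invokes $\IncrementView$ on a certificate for view $v - 1$ by $\tau^* + 2\delta$ and thus enters view $v$ by $\tau^* + 3\delta$.

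The main obstacle is the bookkeeping rather than any single deep argument. The $\curView$ variable can be advanced to $v$ through a surprisingly large number of entry points---proposal receipt, each of the three guarded branches in the QC handler of \Cref{Algorithm:Consensus-Execution_1}, the timeout-message handler via $\lastcer$, direct TC receipt, and the return values of $\HandleVote$ and $\HandleTimeout$---and one must verify for every branch that the underlying certificate, or an equivalent message, is re-disseminated through one of the protocol's designated hooks ($\CurrentLeaderReceivesQC$, $\NonLeaderReceivesQC$, the explicit TC re-broadcasts, or the proposal broadcast following $\NextLeaderReceivesQC$ or $\NextLeaderReceivesTC$) and reaches every correct validator in time. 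The $3\delta$ in the statement is slightly conservative relative to the $\tau^* + 2\delta$ that the case analysis actually yields; the extra $\delta$ comfortably absorbs the worst-case forwarding chain $i^* \to L_v \to \text{all correct}$ and is convenient for the downstream liveness arguments.
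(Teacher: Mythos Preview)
Your argument is essentially the paper's: both first use the hypothesis together with \Cref{lemma:jovan_tc_time} and \Cref{lemma:jovan_enter_before} to rule out any correct validator being past view $v$ during $[\tau(v),\tau(v)+3\delta]$, then case-split on how $v$ was entered, routing the QC branches through $L_v$ (who broadcasts something that pulls everyone into $v$) and handling the TC/timeout branches via Bracha amplification. The paper organizes the split slightly differently---it asks whether \emph{any} correct validator accepts a QC for view $v-1$ at one of a fixed list of handler lines by time $\tau(v)+\delta$, rather than conditioning on the certificate of the \emph{first} entrant---but the mechanics are identical.

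Two places where the paper is more careful than your write-up. First, your claim that in every QC sub-case $L_v$ ``invokes $\NextLeaderReceivesQC$ and broadcasts a happy-path proposal'' need not hold: if $L_v$ has already entered $v$ via a TC by the time the forwarded QC arrives, the guard $\iqc.\view \geq \curView$ in the QC handler drops it, and $L_v$ is instead running $\NextLeaderReceivesTC$ (possibly mid-recovery). The paper therefore phrases the conclusion as ``proposal (or a proposal request or an NE request)''; since the NE request is broadcast and carries a view-$(v{-}1)$ TC, its receipt still moves every correct validator into $v$, so your conclusion survives but the justification should match. Second, the TC re-broadcasts you cite at lines~\ref{line:broadcast_tc_timeout} and~\ref{line:broadcast_tc_receive_tc} are guarded (only if no timeout message or TC for that view was already broadcast), so ``immediately re-broadcasts'' overstates it; however, any valid TC for view $v-1$ already certifies that $\geq f+1$ correct validators broadcast timeouts for $v-1$, so the Bracha argument you give for the locally-built case covers this sub-case as well.
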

\begin{proof}
We first prove that if no correct validator accepts a QC for view $v$ by time $\tau(v) + 3\delta$, then no correct validator can enter any view $\geq v + 1$ by that time.
The reasoning is as follows.  
By \Cref{lemma:jovan_tc_time}, no TC can be formed for view $v$ before time $\tau(v) + \viewduration > \tau(v) + 3\Delta \geq \tau(v) + 3\delta$.  
Hence, no correct validator can enter view $v + 1$ by time $\tau(v) + 3\delta$, since a validator can enter a new view only upon accepting either a QC or a TC associated with view $v$.  
Consequently, by \Cref{lemma:jovan_enter_before}, no correct validator enters a view $\geq v + 1$ by time $\tau(v) + 3\delta$.

Since a correct validator enters view $v$ at time $\tau(v)$, it must have accepted either a QC or a TC associated with view $v - 1$ at that time.  
To establish the lemma, we therefore consider two possible cases:
\begin{itemize}[itemsep=3pt]
    \item \emph{Case 1:} Suppose that some correct validator $i$ accepts a QC for view $v - 1$ by time $\tau(v) + \delta$ at line~\ref{line:increment_view_proposal} of \Cref{Algorithm:Consensus-Execution_1} or at line~\ref{line:increment_view_form_qc} of \Cref{Algorithm:Consensus-Execution_1} or at line~\ref{line:increment_view_receive_qc} of \Cref{Algorithm:Consensus-Execution_1} or at line~\ref{line:increment_view_receive_qc_next_leader} of \Cref{Algorithm:Consensus-Execution_1} or at line~\ref{line:increment_view_receive_qc_timeout_message} of \Cref{Algorithm:Consensus-Execution_2}.
    This accounts for all situations in which a correct validator may accept a QC, except for the case where it is formed from $2f + 1$ timeout messages (line~\ref{line:increment_view_form_qc_timeout_message} of \Cref{Algorithm:Consensus-Execution_2}), which will be addressed separately in the second case.

    We now differentiate two scenarios:
    \begin{itemize}
        \item Let validator $i$ be the leader of view $v$.
        Note that validator $i$ cannot accept a QC for view $v - 1$ at line~\ref{line:increment_view_proposal} of \Cref{Algorithm:Consensus-Execution_1}, since this would require the leader to have received a proposal from itself.  
        However, issuing a proposal is only possible after the validator has already entered view $v$ (see \Cref{Algorithm:Consensus-Execution_1} and \Cref{Algorithm:Consensus-Execution_2}).

        In all remaining cases, the leader issues its proposal immediately upon accepting a QC (line~\ref{line:propose_form_qc} or line~\ref{line:propose_receive_qc} of \Cref{Algorithm:Consensus-Execution_1}, or line~\ref{line:propose_receive_qc_timeout_message} of \Cref{Algorithm:Consensus-Execution_2}).  
        Consequently, all correct validators enter view $v$ at line~\ref{line:increment_view_proposal} of \Cref{Algorithm:Consensus-Execution_1} by time $\tau(v) + 2\delta$, unless they have already done so, and thus the lemma follows in this case.

\item Let validator $i$ be a non-leader in view $v$.
In this case, we first show that the leader of view $v$ issues a proposal (or a proposal request or an NE request) by time $\tau(v) + 2\delta$.  
If validator $i$ accepts a QC at line~\ref{line:increment_view_proposal} of \Cref{Algorithm:Consensus-Execution_1}, then the corresponding proposal must have been received from the leader of view $v$, implying that the leader issued its proposal by time $\tau(v) + \delta < \tau(v) + 2\delta$.  
If validator $i$ accepts a QC at any other point in the protocol (covered by this case), it forwards that QC to the leader of view $v$ (line~\ref{line:disseminate_backup_qc} or line~\ref{line:handle_backup_qc_receive_qc} of \Cref{Algorithm:Consensus-Execution_1} or line~\ref{line:forward_qc_timeout_message} of \Cref{Algorithm:Consensus-Execution_2}).  
Hence, the leader of view $v$ receives a QC associated with view $v - 1$ by time $\tau(v) + 2\delta$.  
If the leader has not yet entered view $v$ and issued a proposal (or a proposal request or an NE request) by this time, it accepts the QC (line~\ref{line:increment_view_receive_qc_next_leader} of \Cref{Algorithm:Consensus-Execution_1}) and then issues its proposal (line~\ref{line:propose_receive_qc} of \Cref{Algorithm:Consensus-Execution_1}).

Therefore, every correct validator receives a proposal (or a proposal request or an NE request) from the leader of view $v$ by time $\tau(v) + 3\delta$.  
Consequently, if not already done so, each correct validator enters view $v$ by time $\tau(v) + 3\delta$---either at line~\ref{line:increment_view_proposal} of \Cref{Algorithm:Consensus-Execution_1}, or at line~\ref{line:increment_view_proposal_request} or line~\ref{line:increment_view_ne_request} of \Cref{Algorithm:Consensus-Execution_2}.  

\end{itemize}

    \item \emph{Case 2:} Suppose that no correct validator $i$ accepts a QC for view $v - 1$ by time $\tau(v) + \delta$ at line~\ref{line:increment_view_proposal} of \Cref{Algorithm:Consensus-Execution_1} or at line~\ref{line:increment_view_form_qc} of \Cref{Algorithm:Consensus-Execution_1} or at line~\ref{line:increment_view_receive_qc} of \Cref{Algorithm:Consensus-Execution_1} or at line~\ref{line:increment_view_receive_qc_next_leader} of \Cref{Algorithm:Consensus-Execution_1} or at line~\ref{line:increment_view_receive_qc_timeout_message} of \Cref{Algorithm:Consensus-Execution_2}.
    In this case, any correct validator that entered view $v$ at time $\tau(v)$ must have done so either by accepting a TC for view $v - 1$ or by accepting a QC for view $v - 1$ at line~\ref{line:increment_view_form_qc_timeout_message} of \Cref{Algorithm:Consensus-Execution_2} upon receiving $2f + 1$ timeout messages associated with view $v - 1$.
    We again consider two possible scenarios:
    \begin{itemize}
        \item Suppose that the leader of view $v$ issues its proposal (or proposal request or NE request) by time $\tau(v) + \delta$.
        Then, every correct validator receives this message by time $\tau(v) + 2\delta$, ensuring that all correct validators enter view $v$ by that time (at line~\ref{line:increment_view_proposal} of \Cref{Algorithm:Consensus-Execution_1} or at line~\ref{line:increment_view_proposal_request} of \Cref{Algorithm:Consensus-Execution_2} or at line~\ref{line:increment_view_ne_request} of \Cref{Algorithm:Consensus-Execution_2}).

        \item Suppose that the leader of view $v$ does not issue its proposal (or proposal request or NE request) by time $\tau(v) + \delta$.
        In this case, we show that every correct validator either issues a timeout message associated with view $v - 1$ by time $\tau(v) + \delta$ or broadcasts a TC by the same time.  
        Since at least $f + 1$ timeout messages are sent by correct validators by time $\tau(v)$, every correct validator receives $f + 1$ timeout messages by time $\tau(v) + \delta$.  
        Upon receiving the first $f + 1$ timeout messages for view $v - 1$, one of the following must occur for each correct validator $i$:
        \begin{itemize}
            \item Validator $i$ sends a timeout message for view $v - 1$ after fully or prematurely timing out from view $v - 1$ by this moment.

            \item Otherwise, validator $i$ has already entered view $v$ (as the check at line~\ref{line:receive_timeout_message} of \Cref{Algorithm:Consensus-Execution_2} fails).  
            Recall that no validator can enter any view $\geq v + 1$ by this time, as argued in the first paragraph of the proof.  
            Since no correct validator accepts a QC for view $v - 1$ at line~\ref{line:increment_view_proposal} of \Cref{Algorithm:Consensus-Execution_1} or at line~\ref{line:increment_view_form_qc} of \Cref{Algorithm:Consensus-Execution_1} or at line~\ref{line:increment_view_receive_qc} of \Cref{Algorithm:Consensus-Execution_1} or at line~\ref{line:increment_view_receive_qc_next_leader} of \Cref{Algorithm:Consensus-Execution_1} or at line~\ref{line:increment_view_receive_qc_timeout_message} of \Cref{Algorithm:Consensus-Execution_2} nor does the leader issue any proposal (or proposal request or NE request) by time $\tau(v) + \delta$, validator $i$ must have entered view $v$ upon accepting a TC at line~\ref{line:increment_view_receive_qc_timeout_message} or line~\ref{line:increment_view_receive_tc} of \Cref{Algorithm:Consensus-Execution_2}.  
            (Note that if validator $i$ moved to view $v$ upon accepting a QC or a TC at line~\ref{line:increment_view_form_qc_timeout_message} or line~\ref{line:increment_view_form_tc} of \Cref{Algorithm:Consensus-Execution_2}, then it must have already issued a timeout message for view $v - 1$---a situation already covered above---since it must first time out from view $v - 1$ before transitioning to view $v$.)
            In this case, if validator $i$ has not already issued a timeout message for view $v - 1$, it broadcasts the accepted TC (line~\ref{line:broadcast_tc_timeout} or line~\ref{line:broadcast_tc_receive_tc} of \Cref{Algorithm:Consensus-Execution_2}).
        \end{itemize}
        Hence, every correct validator either issues a timeout message associated with view $v - 1$ by time $\tau(v) + \delta$ or broadcasts a TC associated with view $v - 1$ by that time.
        Finally, we distinguish two cases:
\begin{itemize}
    \item Suppose that some correct validator broadcasts a TC by time $\tau(v) + \delta$.  
    Then every correct validator receives a TC associated with view $v - 1$ by time $\tau(v) + 2\delta$.  
    In this case, unless already in view $v$, each correct validator accepts the TC and enters view $v$ (line~\ref{line:increment_view_receive_tc} of \Cref{Algorithm:Consensus-Execution_2}).

    \item Otherwise, all correct validators receive $2f + 1$ timeout messages by time $\tau(v) + 2\delta$. 
    If a correct validator has not yet entered view $v$ by this time, it does so at line~\ref{line:increment_view_form_qc_timeout_message} or line~\ref{line:increment_view_form_tc} of \Cref{Algorithm:Consensus-Execution_2}, upon constructing a TC or a QC associated with view $v - 1$ (line~\ref{line:form_tc_qc} of \Cref{Algorithm:Consensus-Execution_2}).
\end{itemize}

    \end{itemize}

\end{itemize}
Since both cases satisfy the lemma’s claim, the proof is complete.
\qed
\end{proof}

Next, we show that for any post-GST view $v$ with a correct leader, the leader enters view $v$ by time $\tau(v) + 3\delta$.

\begin{lemma} \label{lemma:jovan_correct_enters}
Let $v > 1$ be any post-GST view with a correct leader.
Then, the leader of view $v$ enters view $v$ by time $\tau(v) + 3\delta$.
\end{lemma}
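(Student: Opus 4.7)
\medskip
The plan is to reduce the statement to \Cref{lemma:jovan_enter_views_with_correct_leader} by a short case split on whether any correct validator accepts a QC for view $v$ before time $\tau(v) + 3\delta$.

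\emph{Case 1:} No correct validator accepts a QC for view $v$ by time $\tau(v) + 3\delta$. In this case, \Cref{lemma:jovan_enter_views_with_correct_leader} applies directly and yields the stronger conclusion that \emph{every} correct validator enters view $v$ by time $\tau(v) + 3\delta$; the leader of view $v$, being correct, is in particular one of them.

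\emph{Case 2:} Some correct validator accepts a QC $\mathit{qc}$ with $\mathit{qc}.\view = v$ by time $\tau(v) + 3\delta$. I plan to argue that such a $\mathit{qc}$ cannot come into existence unless the correct leader of view $v$ has already entered that view. Since $\mathit{qc}$ aggregates $2f+1$ signatures, at least $f+1$ of the underlying votes come from correct validators. Each such vote, whether cast as a regular vote (line~\ref{line:create_vote_normal} of \Cref{Algorithm:Consensus-Execution_1}) or as a tip vote embedded in a timeout message (line~\ref{line:create_vote_timeout} of \Cref{Algorithm:Utilities}), is only cast on a proposal or tip whose leader signature verifies against the designated leader of view $v$ (as enforced by \SafetyCheck and \ValidTip). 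By signature unforgeability, that signature can only be produced by the (correct) leader itself, and a correct validator never signs a proposal for view $v$ before entering view $v$, since proposals are issued inside \NextLeaderReceivesQC or \NextLeaderReceivesTC, both of which run strictly after $\IncrementView$ has updated $\curView$ to $v$. Therefore the leader must have entered view $v$ strictly before $\mathit{qc}$ was formed, and hence well before $\tau(v) + 3\delta$.

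The main obstacle is the Case~2 argument: I need to make sure that no path to a QC for view $v$ bypasses a proposal signed by the correct leader. In particular, tip votes carried inside timeout messages must be traced, via the \ValidTip check, back to a leader-signed tip of a fresh proposal from view $v$; this, together with the unforgeability of the leader's signature, closes the argument.
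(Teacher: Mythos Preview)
Your Case~1 matches the paper exactly. Case~2, however, has a real gap.

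You claim that every vote for view $v$---regular or tip vote---is cast on something signed by the leader of view $v$. This is true for regular votes: $\SafetyCheck$ verifies the leader-of-$v$ signature on $\iprop.\blockid$. It is \emph{false} for tip votes. A tip vote inside a timeout message for view $v$ is $\CreateVote(v,\localhightip.\blockheader)$ at line~\ref{line:create_vote_timeout} of \Cref{Algorithm:Utilities}, and $\localhightip$ may come from a view $v' < v$. The $\ValidTip$ check in $\ValidTimeoutMessage$ verifies the signature of the leader of $v'$ on $\Hash(\blockhash,v')$, not of the leader of $v$. For instance, a correct validator that entered view $v$ via a TC for view $v-1$ may have $\localhightip.\view = v-1 > \localhighQC.\view$ and will then include that old tip and a tip vote for it in its timeout message for view $v$; nothing in this path requires any signature from the leader of $v$. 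So your unforgeability argument does not rule out the tip-vote path to a QC for view $v$.

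The paper closes Case~2 by a timing argument instead: by \Cref{lemma:jovan_first_timeout_time}, no correct validator issues a timeout message for view $v$ before time $\tau(v) + \viewduration > \tau(v) + 3\delta$. Hence any correct vote contributing to a QC accepted by time $\tau(v)+3\delta$ must be a regular vote cast upon receiving the leader's proposal, which forces the correct leader to have already entered $v$. You need this use of \Cref{lemma:jovan_first_timeout_time} (or an equivalent bound ruling out correct timeouts this early) to finish Case~2; signature unforgeability alone is not enough.
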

\begin{proof}
Let us consider two possible cases: 
\begin{itemize}
    \item \emph{Case 1:} No correct validator accepts a QC for view $v$ by time $\tau(v) + 3\delta$.
    
    In this case, by \Cref{lemma:jovan_enter_views_with_correct_leader}, every correct validator (including the leader) enters view $v$ by time $\tau(v) + 3\delta$.

    \item \emph{Case 2:} Some correct validator accepts a QC for view $v$ by time $\tau(v) + 3\delta$. 
    
    Note that no correct validator issues a timeout message for view $v$ before time $\tau(v) + \viewduration > \tau(v) + 3\delta$ (by \Cref{lemma:jovan_first_timeout_time}). 
    Therefore, every correct validator that contributed to the aforementioned QC must have done so upon receiving the leader's proposal for view $v$.  
    Consequently, the leader of view $v$ must have entered view $v$ prior to that time. 
    \qed
\end{itemize}
\end{proof}

We now show that, in every post-GST view $v$ with a correct leader, the leader issues a proposal by time $\tau(v) + 3\delta + \recoveryduration + \delta$.

\begin{lemma} \label{lemma:jovan_correct_leader_proposal_time}
Let $v > 1$ be any post-GST view with a correct leader.  
If the leader of view $v$ enters view $v$ upon accepting a QC associated with view $v - 1$, then the leader issues its proposal by time
\[
\tau(v) + 3\delta.
\]
Alternatively, if the leader of view $v$ enters view $v$ upon accepting a TC associated with view $v - 1$, then the leader issues its proposal by time
\[
\tau(v) + 3\delta + \recoveryduration.
\]
\end{lemma}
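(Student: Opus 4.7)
The plan is to split the argument into the two cases distinguished by the lemma statement, and to rely on \Cref{lemma:jovan_correct_enters} (which bounds the leader's entry time) together with the structural properties of $\NextLeaderReceivesQC$, $\NextLeaderReceivesTC$, and $\Recover$ described in \Cref{Algorithm: Consensus-Execution-2} and \Cref{Algorithm:Recovery}.

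First, I would handle the QC case. By \Cref{lemma:jovan_correct_enters}, the correct leader of view $v$ enters view $v$ by time $\tau(v) + 3\delta$. Inspecting every branch in \Cref{Algorithm:Consensus-Execution_1} and \Cref{Algorithm:Consensus-Execution_2} at which a validator accepts a QC associated with view $v - 1$ (namely lines~\ref{line:propose_form_qc}, \ref{line:propose_receive_qc} of \Cref{Algorithm:Consensus-Execution_1}, and line~\ref{line:propose_receive_qc_timeout_message} of \Cref{Algorithm:Consensus-Execution_2}), one verifies that whenever the accepting validator is the leader of view $v$, the call to $\NextLeaderReceivesQC$ happens synchronously, and this function immediately invokes $\CreatePrepareMsg$ and broadcasts the resulting fresh proposal. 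Since local computation is assumed to have negligible overhead, the proposal is issued at the same instant the leader enters view $v$, giving the bound $\tau(v) + 3\delta$.

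Second, for the TC case, I would again use \Cref{lemma:jovan_correct_enters} to conclude that the leader enters view $v$ by time $\tau(v) + 3\delta$. Then I would split $\NextLeaderReceivesTC$ (\Cref{Algorithm: Consensus-Execution-2}) into its three branches: (i) $\itc.\highQC \neq \bot$, (ii) $\itc.\highQC = \bot$ and the leader already holds the block referenced by $\itc.\hightip$, and (iii) the leader must invoke $\Recover(\itc)$. In cases (i) and (ii) the proposal is constructed and broadcast synchronously upon entering view $v$, which is well within the claimed bound. The main work is case (iii), where I would show that $\Recover(\itc)$ terminates within $\recoveryduration = (\lceil n/\batch \rceil - 1) \cdot \intervalduration + 2\Delta$ time after the leader invokes it. The argument is: the NE request is broadcast at once, and proposal requests are dispatched in batches of size $\batch$ at intervals of $\intervalduration$, so that within $(\lceil n/\batch \rceil - 1) \cdot \intervalduration + \Delta$ time all $n$ validators have received both requests; every correct validator then responds within another $\Delta$ by either returning the cached proposal (at line~\ref{Alg:Has_voted_for_hightip} of \Cref{Algorithm:Recovery}) or returning an $\NE$ message (at line~\ref{Has_not_voted_for_higthip} of \Cref{Algorithm:Recovery}) unless it has already voted for the requested high tip. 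Since there are $2f+1$ correct validators, either at least one correct validator holds the requested proposal and forwards it, in which case $\Recover$ returns a proposal, or at least $f + 1$ correct validators have not voted for it and thus issue $\NE$ messages; combined with up to $f$ additional $\NE$ messages (or the $f+1$ honest ones together with even one more honest, depending on who has the block), the leader collects $2f+1$ $\NE$ messages and returns an NEC. In either outcome, $\Recover$ returns within $\recoveryduration$, after which the leader immediately calls $\CreatePrepareMsg$ and broadcasts. Adding this to the entry bound yields $\tau(v) + 3\delta + \recoveryduration$.

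The main obstacle is case (iii): one must argue carefully that the disjunction ``$\Recover$ returns either a proposal or an NEC within $\recoveryduration$'' always holds, and in particular that the $2f + 1$ $\NE$ messages can be assembled when no honest validator has the block. The subtlety is that a correct validator refuses to send an $\NE$ message if it has previously voted for the requested high tip, so I have to rule out the pathological case in which more than $f$ correct validators have voted for the high tip while none of them are reachable with the block (or forward it in time). This reduces to a quorum-intersection argument: if at least $f + 1$ correct validators voted for the high tip, then they hold its cached proposal, and after GST at least one of them returns it to the leader within $2\Delta$ of receiving the proposal request; otherwise, at most $f$ correct validators have voted, so at least $f + 1$ correct validators are eligible to send an $\NE$ message, which together with up to $f$ Byzantine signatures suffices to form the NEC. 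Once this disjunction is in hand, the timing bound follows by straightforward addition.
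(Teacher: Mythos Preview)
Your overall plan follows the paper's structure closely, but the recovery argument (case~(iii) of the TC branch) has a genuine gap in the NEC count. You argue that if at most $f$ correct validators voted for the high tip, then the remaining $\ge f+1$ correct validators send $\NE$ messages, ``which together with up to $f$ Byzantine signatures suffices to form the NEC.'' Byzantine validators are under no obligation to send $\NE$ messages, so this yields only $f+1$ signatures---short of the $2f+1$ required. The quorum-intersection framing (``$\ge f+1$ voted'' versus ``$\le f$ voted'') is simply the wrong dichotomy. The split the paper uses is: either \emph{some} correct validator has cached a proposal with $\blockid = \itc.\hightip.\blockid$, in which case it returns it via $\ProposalRequest$; or \emph{no} correct validator has such a proposal. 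In the latter case, since a correct validator can only have voted for the high-tip proposal after receiving (and hence caching) it, \emph{all} $2f+1$ correct validators are eligible to send $\NE$, and the leader assembles the NEC from honest responses alone. Note also that in your ``$\le f$ voted'' case, those $\le f$ correct voters \emph{do} hold the proposal and will return it---so the leader gets the proposal there too, not an NEC.

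There is a second, more structural omission. You assert that ``every correct validator then responds within another~$\Delta$,'' but the handlers at lines~\ref{line:receive_proposal_request} and~\ref{line:receive_ne_request} of \Cref{Algorithm:Consensus-Execution_2} fire only when $\itc.\view+1 \ge \curView$, i.e., when the receiving validator is still in view $\le v$. You never argue that no correct validator has advanced past view~$v$ by the time the requests arrive. The paper closes this with an explicit case split: if some correct validator has entered a view $> v$ by time $\tau(v)+3\delta+\recoveryduration-\Delta$, then (via \Cref{lemma:jovan_tc_time} and \Cref{lemma:jovan_first_timeout_time}) a QC for view~$v$ must already exist, built from votes cast upon receiving the leader's proposal---so the leader has already proposed and the bound holds trivially. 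Otherwise, all correct validators are still in view $\le v$ and the response-counting argument goes through. You need either this case split or a direct argument that no correct validator can leave view~$v$ while the correct leader is still in recovery.
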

\begin{proof}
By \Cref{lemma:jovan_correct_enters}, the leader $L$ of view $v$ enters view $v$ by time $\tau(v) + 3\delta$.  
If the leader $L$ enters view $v$ upon accepting a QC from view $v - 1$, then $L$ issues its proposal immediately. 
This directly establishes the claim of the lemma in this case. 
(Recall that we assume the leader, upon obtaining a QC, simultaneously receives the proposal referenced by that QC. 
This enables the leader to issue a new proposal without delay.)

If the leader $L$ enters view $v$ upon accepting a timeout certificate $\mathit{tc}$ from view $v - 1$, and either (i) $\mathit{tc}.\highQC \neq \bot$, or (ii) the leader already possesses a block corresponding to $\mathit{tc}.\hightip$, then $L$ issues its proposal immediately (see the $\NextLeaderReceivesTC$ function).

Finally, it remains to consider the case where the leader does not possess a block corresponding to $\mathit{tc}.\hightip$.  
Note that in this case, every correct validator receives both an NE request and a proposal request from the leader $L$ by time $\tau(v) + 3\delta + \recoveryduration - \Delta$ (see \Cref{Algorithm:Recovery}).
In this situation, we distinguish between two subcases:
\begin{itemize}
    \item \emph{Case 1:} Some correct validator enters some view $> v$ by time $\tau(v) + 3\delta + \recoveryduration - \Delta$.

    By \Cref{lemma:jovan_enter_before}, we have 
\[
\tau(v + 1) < \tau(v) + 3\delta + \recoveryduration - \Delta < \tau(v) + \viewduration.
\]
Therefore, some correct validator must have accepted either a QC or a TC for view $v$ by time $\tau(v) + 3\delta + \recoveryduration - \Delta$.  
However, by \Cref{lemma:jovan_tc_time}, no TC for view $v$ can be formed before time $\tau(v) + \viewduration > \tau(v) + 3\delta + \recoveryduration - \Delta$.  
It follows that some correct validator must have accepted a QC for view $v$ by time $\tau(v) + 3\delta + \recoveryduration - \Delta$.
    Since no correct validator issues a timeout message before time $\tau(v) + \viewduration > \tau(v) + 3\delta + \recoveryduration - \Delta$ (by \Cref{lemma:jovan_first_timeout_time}), it follows that every correct validator that contributed to the aforementioned QC did so upon receiving a proposal from the leader, which concludes the proof.

    \item \emph{Case 2:} No correct validator enters any view $> v$ by time $\tau(v) + 3\delta + \recoveryduration - \Delta$.  

    In this case, every correct validator sends either the block corresponding to $\mathit{tc}.\hightip$ or an $\mathsf{NE}$ message back to the leader by time $\tau(v) + 3\delta + \recoveryduration - \Delta$.  
    (This is because, upon receiving an NE request or a proposal request, the validator cannot have entered any view $> v$, as these messages are received by time $\tau(v) + 3\delta + \recoveryduration - \Delta$.)
    Hence, by time $\tau(v) + 3\delta + \recoveryduration$, the leader $L$ either obtains the block corresponding to $\mathit{tc}.\hightip$ (if at least one correct validator possesses the block) or constructs an NEC (if no correct validator possesses the block).  
    Consequently, by the same time, the leader $L$ issues its proposal.
    \qed
\end{itemize}
\end{proof}

Next, we show that for any post-GST view $v$ with a correct leader, if no correct validator accepts a QC for view $v$ by time $\tau(v) + 3\delta + \recoveryduration + \delta$, then all correct validators cast their votes for a proposal associated with view $v$ by that time.

\begin{lemma} \label{lemma:jovan_issue_vote_time_tc}
Let $v > 1$ be any post-GST view with a correct leader, and let $p$ be the proposal issued by that leader in view $v$. 
If $p.\tc = \bot$ (resp., $p.\tc \neq \bot$) and no correct validator accepts a QC for view $v$ by time $\tau(v) + 3\delta + \delta$ (resp., by time $\tau(v) + 3\delta + \recoveryduration + \delta$), then every correct validator issues its vote for proposal $p$  
by time $\tau(v) + 3\delta + \delta$ (resp., by time $\tau(v) + 3\delta + \recoveryduration + \delta$).
\end{lemma}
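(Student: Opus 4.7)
The plan is to combine the leader's proposal-issuance bound from \Cref{lemma:jovan_correct_leader_proposal_time} with bounded post-GST delivery and the absence of competing progress, so as to force every correct validator to vote for $p$ by the stated deadline. Let $T$ denote the required deadline, i.e., $T = \tau(v) + 3\delta + \delta$ when $p.\tc = \bot$ and $T = \tau(v) + 3\delta + \recoveryduration + \delta$ otherwise. Since $p.\tc = \bot$ iff the correct leader entered view $v$ by accepting a QC for view $v-1$ (and $p.\tc \neq \bot$ iff it entered by accepting a TC), \Cref{lemma:jovan_correct_leader_proposal_time} shows the leader issues $p$ by time $T - \delta$, so by bounded post-GST delivery every correct validator receives $p$ by $T$.

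Next I would show that throughout $[\tau(v) + 3\delta,\, T]$ every correct validator is in view $v$ exactly. The hypothesis rules out QCs for view $v$ before $T$, and \Cref{lemma:jovan_tc_time} rules out TCs for view $v$ before $\tau(v) + \viewduration > T$; together with \Cref{lemma:jovan_enter_before}, this prevents any correct validator from entering a view $> v$ by $T$. Moreover, the QC-absence hypothesis also implies the precondition of \Cref{lemma:jovan_enter_views_with_correct_leader} at the earlier time $\tau(v) + 3\delta \leq T$, so every correct validator has entered view $v$ by then.

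I would then show that on reception no correct validator is blocked from voting. By \Cref{lemma:jovan_first_timeout_time}, no correct validator issues a timeout message for view $v$ before $\tau(v) + \viewduration > T$, so none advances $\mathsf{highest\_voted\_view}$ to $v$ via the timeout path (line~\ref{line:update_highest_voted_view_timeout}). The only remaining way $\mathsf{highest\_voted\_view}$ could already equal $v$ at reception is that $i$ has cast an earlier vote in view $v$; but \Cref{lemma:jovan_same_vote}, together with the fact that the correct leader of view $v$ signs only $p$ and that correct validators only vote for correctly-signed proposals, forces any such earlier vote to target $p$, discharging the conclusion in that subcase. Otherwise $\mathsf{highest\_voted\_view} < v$ when $p$ arrives, the \SafetyCheck{} passes because $p$ is well-formed and carries the designated correct leader's signature, and $i$ casts its vote for $p$ at line~\ref{line:send_vote_normal} by time $T$.

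The step I expect to require the most care is isolating premature timeouts, which bypass the $\viewduration$ clock; but \Cref{lemma:jovan_first_timeout} reduces every premature timeout to the prior existence of a fully-timed-out correct validator, which cannot occur before $\tau(v) + \viewduration$. Once that observation is in hand, the remaining argument is routine bookkeeping over the three facts that every correct validator (i) has entered view $v$ by $\tau(v) + 3\delta$, (ii) has not advanced beyond view $v$ by $T$, and (iii) has not timed out of view $v$ by $T$.
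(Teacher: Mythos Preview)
Your proof is correct and follows the same approach as the paper: invoke \Cref{lemma:jovan_correct_leader_proposal_time} to bound the issuance time, use the QC-absence hypothesis together with \Cref{lemma:jovan_tc_time} and \Cref{lemma:jovan_enter_before} to keep every correct validator from advancing past view $v$ by the deadline, and conclude that everyone votes for $p$ on reception. You are in fact more careful than the paper's own proof, which simply asserts that validators ``cast their votes for proposal $p$, entering view $v$ first if they have not already done so'' without explicitly arguing that the $\mathsf{highest\_voted\_view}$ guard at line~\ref{Alg:Algorithm: Consensus-Execution:safety_check} passes or that no (premature) timeout has fired; your appeal to \Cref{lemma:jovan_first_timeout_time} and the observation that the correct leader signs only $p$ cleanly close those gaps.
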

\begin{proof}
Note that, by~\Cref{lemma:jovan_tc_time}, no TC for view $v$ can be formed by time $\tau(v) + 3\delta + \recoveryduration + \delta$.  
Since a validator can enter a new view only upon accepting either a QC or a TC, this fact, together with the lemma’s assumption, implies that no correct validator enters view $v + 1$ or any higher view (as proven by \Cref{lemma:jovan_enter_before}) by time $\tau(v) + 3\delta + \delta$ if $p.\tc = \bot$ and by time $\tau(v) + 3\delta + \recoveryduration + \delta$ if $p.\tc \neq \bot$.

Let us now differentiate two cases:
\begin{itemize}
    \item Let $p.\tc = \bot$.
    By \Cref{lemma:jovan_correct_leader_proposal_time}, the leader issues its proposal $p$ by time $\tau(v) + 3\delta$.  
    This proposal is received by every correct validator by time $\tau(v) + 3\delta + \delta$, upon which they cast their votes for proposal $p$, entering view $v$ first if they have not already done so.  
    From the first paragraph, we know that no correct validator enters any view beyond $v$ before this time.  
    Hence, whenever a correct validator receives proposal $p$, it enters view $v$ (unless it has already done so) and subsequently votes for it.
    Thus, the lemma follows in this case.

    \item Let $p.\tc \neq \bot$.
    By \Cref{lemma:jovan_correct_leader_proposal_time}, the leader issues its proposal $p$ by time $\tau(v) + 3\delta + \recoveryduration$.  
    This proposal is received by every correct validator by time $\tau(v) + 3\delta + \recoveryduration + \delta$, upon which they cast their votes for proposal $p$, entering view $v$ first if they have not already done so.  
    From the first paragraph, we know that no correct validator enters any view beyond $v$ before this time.  
    Hence, whenever a correct validator receives proposal $p$, it enters view $v$ (unless it has already done so) and subsequently votes for it.
    Thus, the lemma follows.
\qed
\end{itemize}
\end{proof}


The following lemma proves that when leaders of two consecutive post-GST views are honest, then the leader of the latter view receives a QC within a bounded amount of time.

\begin{lemma} \label{lemma:jovan_two_leaders}
Let $v > 1$ and $v + 1$ be two consecutive post-GST views, each led by a correct leader.  
Suppose that the leader of view $v$ enters view $v$ upon accepting a QC associated with view $v - 1$.
Then, the leader of view $v + 1$ enters view $v + 1$ at some time $\tau_L$ upon accepting a QC for view $v$.  
Moreover, the following bounds hold:
\[
\tau_L \leq \tau(v) + 3\delta + 2\delta,
\]
and moreover,
\[
\tau_L \leq \tau(v + 1) + \delta.
\]
\end{lemma}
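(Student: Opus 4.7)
The plan is to reason by a short case analysis on whether a QC for view $v$ is accepted by any correct validator ``early,'' together with the standard bookkeeping of proposal/vote propagation established by the preceding lemmas.

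First, I would invoke \Cref{lemma:jovan_correct_leader_proposal_time}: since $L_v$ enters view $v$ via a QC for view $v-1$, the proposal $p$ issued by $L_v$ carries $p.\tc = \bot$ and is sent by time $\tau(v) + 3\delta$. Combined with \Cref{lemma:jovan_tc_time}, which rules out any TC for view $v$ before $\tau(v) + \viewduration$, this already implies that any correct validator entering view $v+1$ by time $\tau(v) + 5\delta < \tau(v) + \viewduration$ must do so by accepting a QC for view $v$ (and similarly for $L_{v+1}$).

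Next, I would case-split on whether \emph{some} correct validator accepts a QC for view $v$ by time $\tau(v) + 4\delta$. In the negative case, \Cref{lemma:jovan_issue_vote_time_tc} (applied with $p.\tc = \bot$) gives that all correct validators have issued their votes for $p$ by $\tau(v) + 4\delta$; since correct validators send votes to both the current and next leader (line~\ref{line:send_vote_normal} of \Cref{Algorithm:Consensus-Execution_1}), $L_{v+1}$ collects $2f{+}1$ valid votes by $\tau(v) + 5\delta$ and forms (and accepts) a QC for view $v$ (line~\ref{line:form_qc}). In the positive case, a correct validator accepts a QC $\mathit{qc}$ for view $v$ by $\tau(v) + 4\delta$; by the handling at lines~\ref{line:handle_backup_qc_receive_qc} and \ref{line:forward_qc_timeout_message} (and the fact that $L_v$ broadcasts backup QCs), $\mathit{qc}$ is delivered to $L_{v+1}$ by $\tau(v) + 5\delta$, at which point $L_{v+1}$ accepts it at line~\ref{line:increment_view_receive_qc_next_leader}. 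In either case, $L_{v+1}$ has entered view $v+1$ via a QC for view $v$ by time $\tau(v) + 5\delta$, establishing the first bound.

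For the sharper bound $\tau_L \leq \tau(v+1) + \delta$, I would observe that the validator $j$ realizing $\tau(v+1)$ must itself have entered view $v+1$ upon accepting some QC for view $v$ (again by \Cref{lemma:jovan_tc_time} together with $\tau(v+1) \leq \tau(v) + 5\delta < \tau(v) + \viewduration$, which rules out TC-based entry). Inspecting the relevant handlers (in particular $\NonLeaderReceivesQC$ called at line~\ref{line:handle_backup_qc_receive_qc}, and the forwarding at lines~\ref{line:forward_qc_proposal_current_leader} and~\ref{line:forward_qc_timeout_message}), this QC is sent to $L_{v+1}$ at time $\tau(v+1)$; it arrives by $\tau(v+1) + \delta$, and if $L_{v+1}$ has not already entered $v+1$, it does so then by accepting a QC at line~\ref{line:increment_view_receive_qc_next_leader}.

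The main bookkeeping obstacle is making sure every path by which a correct validator can accept a QC for view $v$ is matched by some mechanism that forwards that QC to $L_{v+1}$ within $\delta$; I would handle this by enumerating the five acceptance sites in \Cref{Algorithm:Consensus-Execution_1,Algorithm:Consensus-Execution_2} and checking that each either \emph{is} $L_{v+1}$ itself (so no forwarding is needed) or triggers one of the explicit forwarding steps above.
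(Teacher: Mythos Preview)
Your approach is essentially the same as the paper's: the same case split on whether a correct validator accepts a QC for view $v$ by time $\tau(v)+4\delta$, the same invocation of \Cref{lemma:jovan_correct_leader_proposal_time}, \Cref{lemma:jovan_tc_time}, and \Cref{lemma:jovan_issue_vote_time_tc}, and the same closing enumeration of QC-acceptance sites with the forwarding-to-$L_{v+1}$ check. Two small points worth tightening: (i) line~\ref{line:forward_qc_proposal_current_leader} forwards the QC to $L_v$, not $L_{v+1}$; the reason that acceptance site is harmless is that the proposal carrying the QC came from $L_{v+1}$, so $L_{v+1}$ already has it. (ii) In the negative case you should also argue, as the paper does, that $L_{v+1}$ is still in a view $\leq v$ when the $2f{+}1$ votes arrive (otherwise it ignores them); this follows because any earlier entry of $L_{v+1}$ into view $v+1$ would already have been via a QC for view $v$, and entry into a view $>v+1$ would force a QC for $v+1$, which in turn requires $L_{v+1}$ to have proposed and hence entered $v+1$ first.
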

\begin{proof}
Let $L_v$ denote the leader of view $v$ and $L_{v + 1}$ denote the leader of view $v + 1$.  
Recall that $\viewduration > 3\delta + 2\delta$.
By \Cref{lemma:jovan_tc_time}, a TC for view $v$ cannot be formed before $\tau(v) + \viewduration > \tau(v) + 3\delta + 2\delta$.

We distinguish two cases:
\begin{itemize}
    \item \emph{Case 1:} Suppose some correct validator accepts a QC for view $v$ before time $\tau(v) + 3\delta + \delta$.
    
    Let $i$ denote the first correct validator to accept a QC for view $v$, occurring at some time $\tau_i^{\mathit{qc}} < \tau(v) + 3\delta + \delta$.  
    Observe that $\tau(v + 1) = \tau_i^{\mathit{qc}}$, as no TC for view $v$ can be formed by this time and $i$ is the first correct validator to accept a QC for view $v$.
    Moreover, that validator $i$ can accept this QC in one of the following ways:
    (i) upon receiving a proposal that carries this QC (line~\ref{line:increment_view_proposal} of \Cref{Algorithm:Consensus-Execution_1}),
    (ii) upon forming the QC locally (line~\ref{line:increment_view_form_qc} of \Cref{Algorithm:Consensus-Execution_1}),
    (iii) upon receiving a QC directly (line~\ref{line:increment_view_receive_qc} or line~\ref{line:increment_view_receive_qc_next_leader} of \Cref{Algorithm:Consensus-Execution_1}), or
    (iv) upon receiving a timeout message containing this QC (line~\ref{line:increment_view_receive_qc_timeout_message} of \Cref{Algorithm:Consensus-Execution_2}).
    Validator $i$ cannot, however, accept this QC at line~\ref{line:increment_view_form_qc_timeout_message} of \Cref{Algorithm:Consensus-Execution_2}, as doing so would require that at least one correct validator had already sent a timeout message.
    This is impossible, since no correct validator issues a timeout message before time $\tau(v) + \viewduration > \tau(v) + 3\delta + \delta$ (by \Cref{lemma:jovan_first_timeout_time}).

    If validator $i$ is the leader of view $v + 1$, the lemma follows immediately.  
    Otherwise, suppose that $i$ is not the leader of view $v + 1$.

    We now show that the leader $L_{v + 1}$ of view $v + 1$ receives or accepts a QC associated with view $v$ by time $\tau_i^{\mathit{qc}} + \delta$.  
Consider all possible points at which validator $i$ could have accepted a QC for view $v$:
\begin{itemize}
    \item If $i$ accepts the QC at line~\ref{line:increment_view_proposal} of \Cref{Algorithm:Consensus-Execution_1}, then it must have received a proposal carrying this QC from the leader $L_{v + 1}$, which implies that the leader had already accepted the QC beforehand.

    \item If $i$ accepts the QC at line~\ref{line:increment_view_form_qc} of \Cref{Algorithm:Consensus-Execution_1}, it forwards the QC to all validators, including $L_{v + 1}$ (line~\ref{line:disseminate_backup_qc} of \Cref{Algorithm:Consensus-Execution_1}), ensuring that the leader receives a QC for view $v$ by time $\tau_i^{\mathit{qc}} + \delta$.  
    Recall that $i \neq L_{v + 1}$.

    \item If $i$ accepts the QC at line~\ref{line:increment_view_receive_qc} of \Cref{Algorithm:Consensus-Execution_1}, it forwards the QC to the leader $L_{v + 1}$ (line~\ref{line:handle_backup_qc_receive_qc} of \Cref{Algorithm:Consensus-Execution_1}), which again ensures that the leader receives a QC for view $v$ by time $\tau_i^{\mathit{qc}} + \delta$.

    \item If $i$ accepts the QC at line~\ref{line:increment_view_receive_qc_timeout_message} of \Cref{Algorithm:Consensus-Execution_2}, it forwards the QC to $L_{v + 1}$ (line~\ref{line:forward_qc_timeout_message} of \Cref{Algorithm:Consensus-Execution_2}), and thus the same bound applies.
\end{itemize}

Therefore, in all cases, the leader $L_{v + 1}$ either receives or accepts a QC associated with view $v$ by time $\tau_i^{\mathit{qc}} + \delta$.  
If it accepts such a QC by this time, the lemma follows.  
Therefore, suppose that $L_{v + 1}$ does not accept any QC for view $v$ by time $\tau_i^{\mathit{qc}} + \delta$.
In this case, it still receives the QC for view $v$ by that time.



Let $\tau_{L_{v + 1}}^{\mathit{qc}} \leq \tau_i^{\mathit{qc}} + \delta \leq \tau(v + 1) + \delta < \tau(v) + 3\delta + 2\delta$ denote the first time $L_{v + 1}$ receives a QC from view $v$.  
We now distinguish two possible cases concerning the time $\tau_{L_{v + 1}}^{\mathit{qc}}$:
\begin{itemize}
    \item Suppose the current view of $L_{v + 1}$ at time $\tau_{L_{v + 1}}^{\mathit{qc}}$ is $v + 1$.  
    Since no TC for view $v$ can be formed before time $\tau(v) + 3\delta + 2\delta > \tau_{L_{v + 1}}^{\mathit{qc}}$, it would imply that $L_{v + 1}$ entered view $v + 1$ upon accepting a QC from view $v$.  
    However, this is impossible, as $L_{v + 1}$ does not accept any QC before time $\tau_{L_{v + 1}}^{\mathit{qc}}$.
    

    \item Suppose that the current view of $L_{v + 1}$ at time $\tau_{L_{v + 1}}^{\mathit{qc}}$ is greater than $v + 1$.  
    Hence, \Cref{lemma:jovan_enter_before} proves that $\tau(v + 2) \leq \tau_{L_{v + 1}}^{\mathit{qc}} < \tau(v) + 3\delta + 2\delta$.
    Since no TC can be formed for view $v + 1$ before time $\tau(v + 1) + \viewduration \geq \tau(v) + \viewduration > \tau_{L_{v + 1}}^{\mathit{qc}}$, any correct validator that has entered view $v + 2$ must have done so due to a QC from view $v + 1$.  
    Moreover, as no correct validator issues a timeout message for view $v + 1$ before $\tau(v + 1) + \viewduration > \tau_{L_{v + 1}}^{\mathit{qc}}$, every correct validator contributing to the aforementioned QC from view $v + 1$ must have done so after receiving a proposal from $L_{v + 1}$.  
    This implies that $L_{v + 1}$ must have entered view $v + 1$ before $\tau_{L_{v + 1}}^{\mathit{qc}}$, since this QC for view $v + 1$ can only be formed after $L_{v + 1}$ has issued its proposal.  
    Furthermore, $L_{v + 1}$ must have entered view $v + 1$ upon accepting a QC for view $v$ (as no TC associated with view $v$ can be formed by this time), which contradicts the fact that $L_{v + 1}$ does not accept any QC associated with view $v$ by time $\tau_{L_{v + 1}}^{\mathit{qc}}$.
    
\end{itemize}

Since both cases are impossible, the current view of $L_{v + 1}$ upon receiving the first QC from view $v$ at time $\tau_{L_{v + 1}}^{\mathit{qc}} \leq \tau(v + 1) + \delta$ must be smaller than $v + 1$.
Hence, the leader $L_{v + 1}$ accepts a QC at this time (line~\ref{line:increment_view_receive_qc_next_leader} of \Cref{Algorithm:Consensus-Execution_1}).


    \item \emph{Case 2:} Suppose no correct validator accepts a QC for view $v$ before time $\tau(v) + 3\delta + \delta$.  

    Given that no TC can be formed before time $\tau(v) + \viewduration > \tau(v) + 3\delta + \delta$, we have that $\tau(v + 1) \geq \tau(v) + 3\delta + \delta$.
    Moreover, by \Cref{lemma:jovan_issue_vote_time_tc}, all correct validators vote for the proposal issued by the leader of view $v$ by that time.  
    Consequently, the leader of view $v + 1$ receives $2f + 1$ votes from correct validators by time $\tau(v) + 3\delta + 2\delta$.
    Let $\tau_{L_{v + 1}}^{\mathit{qc}} \leq \tau(v) + 3\delta + 2\delta$ denote the earliest time at which the leader $L_{v + 1}$ receives $2f + 1$ votes for view $v$ from correct validators.  
    If validator $i$ accepts a QC for view $v$ before time $\tau_{L_{v + 1}}^{\mathit{qc}}$, the lemma follows immediately.  
    Otherwise, assume that validator $i$ does not accept any QC for view $v$ before time $\tau_{L_{v + 1}}^{\mathit{qc}}$.
    Again, we consider two possibilities:
    \begin{itemize}
    \item Suppose the current view of $L_{v + 1}$ at time $\tau_{L_{v + 1}}^{\mathit{qc}}$ is $v + 1$.  
    Since no TC for view $v$ can be formed by time $\tau(v) + 3\delta + 2\delta \geq \tau_{L_{v + 1}}^{\mathit{qc}}$, it follows that $L_{v + 1}$ would have accepted a QC for view $v$ by that time, which contradicts the above assumption.

    \item Suppose that the current view of $L_{v + 1}$ at time $\tau_{L_{v + 1}}^{\mathit{qc}}$ is greater than $v + 1$.  
    Hence, \Cref{lemma:jovan_enter_before} proves that $\tau(v + 2) \leq \tau_{L_{v + 1}}^{\mathit{qc}} < \tau(v) + 3\delta + 2\delta$.
    Since no TC can be formed for view $v + 1$ before time $\tau(v + 1) + \viewduration \geq \tau(v) + \viewduration > \tau_{L_{v + 1}}^{\mathit{qc}}$, any correct validator that has entered view $v + 2$ must have done so due to a QC from view $v + 1$.  
    Moreover, as no correct validator issues a timeout message for view $v + 1$ before $\tau(v + 1) + \viewduration > \tau_{L_{v + 1}}^{\mathit{qc}}$, every correct validator contributing to the aforementioned QC from view $v + 1$ must have done so after receiving a proposal from $L_{v + 1}$.  
    This implies that $L_{v + 1}$ must have entered view $v + 1$ before $\tau_{L_{v + 1}}^{\mathit{qc}}$, since this QC for view $v + 1$ can only be formed after $L_{v + 1}$ has issued its proposal.  
    Furthermore, $L_{v + 1}$ must have entered view $v + 1$ upon accepting a QC for view $v$ (as no TC associated with view $v$ can be formed by this time), which contradicts the fact that $L_{v + 1}$ does not accept any QC associated with view $v$ by time $\tau_{L_{v + 1}}^{\mathit{qc}}$.
\end{itemize}
Since both cases are impossible, the current view of $L_{v + 1}$ at time $\tau_{L_{v + 1}}^{\mathit{qc}}$ must be smaller than $v + 1$.  
Therefore, at this moment, the leader $L_{v + 1}$ forms a QC for view $v$ (line~\ref{line:form_qc} of \Cref{Algorithm:Consensus-Execution_1}) and enters view $v + 1$ upon accepting this QC (line~\ref{line:increment_view_form_qc} of \Cref{Algorithm:Consensus-Execution_1}).
\end{itemize}
Since the statement holds in both cases, the proof is complete.
\qed
\end{proof}

Next, we show that in any sequence of $3f + 3$ consecutive views, there exists at least one group of three consecutive views in which each view has a correct leader.  
This is crucial, as the commitment rule of the \sysname protocol requires two consecutive QCs to commit a block.  
Hence, the existence of three consecutive views with correct leaders guarantees that these two consecutive QCs are both constructed (during the first two views) and subsequently disseminated (during the third view).

\begin{lemma} \label{lemma:jovan_consecutive_views}
In any sequence of $3f + 3$ consecutive views, there exists a group of three consecutive views such that each view has a correct leader.
\end{lemma}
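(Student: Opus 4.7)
The plan is a pigeonhole-style argument by contradiction that leverages the round-robin leader rotation. Indexing the views of the window as $0, 1, \ldots, 3f+2$, let $B$ denote the set of positions led by Byzantine validators. If no three consecutive positions are all correct, then the sorted elements $b_0 < b_1 < \cdots < b_{t-1}$ of $B$ must satisfy $b_0 \leq 2$, $b_{t-1} \geq 3f$, and $b_{i+1} - b_i \leq 3$ for every $i$. Telescoping the gaps immediately yields the lower bound $t \geq f+1$.

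The next step is to derive a matching upper bound on $t = |B|$ from the round-robin structure. Since $3f+3 = n+2$ with $n = 3f+1$, each validator's primary leadership position lies in $\{0, \ldots, n-1\}$, and a validator leads a second time within the window only if its primary position falls in $\{0, 1\}$ (in which case it also leads at primary $+\,n$). Hence, writing $q \in \{0, 1, 2\}$ for the number of Byzantine wrap-around positions in $\{n, n+1\} = \{3f+1, 3f+2\}$, I obtain $t \leq f + q$. Combined with $t \geq f+1$, this forces $q \geq 1$, so at least one of positions $0$ or $1$ must be Byzantine.

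The argument concludes by case analysis on $q$. If $q = 1$, then exactly one primary $j \in \{0,1\}$ is Byzantine, contributing both $j$ and $j+n$ to $B$, with the remaining $f-1$ Byzantine positions lying in $\{2, \ldots, 3f\}$; in particular, $b_0 = j$ and $b_{t-1} = j + n$, so the total span is $n = 3f+1$ while $t-1 = f$ gaps of size at most $3$ can span at most $3f$, a contradiction. If $q = 2$, then both $0$ and $1$ are Byzantine primaries (from distinct validators) and both $3f+1$ and $3f+2$ are Byzantine wrap-arounds, leaving at most $f-2$ Byzantine positions inside $\{2, \ldots, 3f\}$; the restricted sub-span from position $1$ to position $3f+1$ then has length $3f$ but is covered by at most $f-1$ consecutive gaps of size at most $3$, giving span $\leq 3f-3$, another contradiction.

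The main obstacle will be the careful bookkeeping in the $q = 2$ case: one must argue that the number of available Byzantine slots in the interior $\{2, \ldots, 3f\}$ is bounded by $f-2$ (since two distinct Byzantine validators have already been consumed at positions $0$ and $1$), and then tie this count to the restricted gap-counting argument on the sub-span from $1$ to $3f+1$. A secondary subtlety is that the entire argument presupposes a round-robin leader schedule, which I would flag explicitly as the standard assumption for chain-based rotating-leader BFT protocols; the lemma would fail under adversarially chosen schedules.
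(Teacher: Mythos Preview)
Your argument is correct, but it takes a noticeably longer route than the paper's. The paper observes directly that the window contains $3f+1$ overlapping triples of consecutive views, and that under round-robin each of the $3f+1$ validators appears in \emph{exactly three} of those triples (the two ``boundary'' validators at positions $0,1$ pick up their third triple via the wrap-around at positions $3f+1,3f+2$). Hence the $f$ Byzantine validators can taint at most $3f$ triples, leaving at least one all-correct triple --- a two-line pigeonhole argument with no case analysis.

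Your approach reaches the same conclusion via a contradiction with gap-telescoping on the Byzantine positions, followed by a case split on the number of wrap-around slots. Everything checks out: the telescoping bound $t\geq f+1$, the round-robin bound $t\leq f+q$, and both cases $q=1$ and $q=2$ close correctly (for $q=2$ with $f=1$ the case is vacuous since two distinct Byzantine validators would be needed). The trade-off is that your method is more mechanical and would likely generalize to windows of other lengths or to non-uniform triple sizes, whereas the paper's argument exploits the exact coincidence that the window length $3f+3=n+2$ makes every validator hit exactly three triples. You are also right to flag the round-robin assumption explicitly --- the paper relies on it as well but states it in passing.
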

\begin{proof}
Consider any sequence of $3f + 3$ consecutive views.  
This sequence contains $3f + 3 - 2 = 3f + 1$ distinct groups of three consecutive views.  
By the round-robin schedule, each of the $3f + 1$ validators appears in exactly three of these groups.  
Since there are at most $f$ faulty validators, the pigeonhole principle implies that at most $3f$ groups can include a view led by a faulty validator.
Therefore, there exists at least one group in which all three views have correct leaders.
\qed
\end{proof}


Next, we prove that no TC can be constructed for any post-GST view with a correct leader.

\begin{lemma} \label{lemma:jovan_correct_leader_no_tc}
Let $v > 1$ be any post-GST view with a correct leader.
Then, no TC can be formed for view $v$.
Furthermore, at least $f + 1$ correct validators do not issue timeout messages associated with view $v$.
\end{lemma}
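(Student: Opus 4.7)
}
The plan is to prove the stronger statement that every correct validator accepts a QC for view $v$ strictly before it fully times out from $v$. This immediately yields both conclusions: no correct validator (hence certainly at least $f+1$ of them) issues a timeout message for view $v$, and therefore no TC can be assembled, since a TC requires $2f+1$ timeout messages while only the at most $f$ Byzantine validators could contribute.

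The core chain of bounds I will establish uses the preceding timing lemmas. By Lemma~\ref{lemma:jovan_correct_leader_proposal_time}, the correct leader $L_v$ issues its proposal $p$ by time $\tau(v) + 3\delta + \recoveryduration$. I will then split into two sub-cases. In the main sub-case, no correct validator accepts a QC for view $v$ by time $\tau(v) + 3\delta + \recoveryduration + \delta$; Lemma~\ref{lemma:jovan_issue_vote_time_tc} then yields that every correct validator votes for $p$ by this time. Since correct validators send their votes to $L_v$ as well as to the leader of $v+1$ (line~\ref{line:send_vote_normal} of Algorithm~\ref{Algorithm:Consensus-Execution_1}), and $L_v$ is correct, $L_v$ collects at least $2f+1$ votes by $\tau(v) + 3\delta + \recoveryduration + 2\delta$, assembles a backup QC, and disseminates it via $\CurrentLeaderReceivesQC$ (line~\ref{line:disseminate_backup_qc}), so every correct validator receives a QC for view $v$ by $\tau(v) + 3\delta + \recoveryduration + 3\delta$. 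In the complementary sub-case, a QC for view $v$ is accepted by some correct validator even earlier; I will leverage the QC-forwarding rules (lines~\ref{line:forward_qc_proposal_current_leader} of Algorithm~\ref{Algorithm:Consensus-Execution_1} and line~\ref{line:forward_qc_timeout_message} of Algorithm~\ref{Algorithm:Consensus-Execution_2}) together with the subsequent $\CurrentLeaderReceivesQC$ rebroadcast by $L_v$ to argue that every correct validator also obtains the QC within an $O(\delta)$ window, which still fits within $\tau(v) + \viewduration$. In both sub-cases the QC reaches every correct validator by $\tau(v) + 3\delta + \recoveryduration + 3\delta \le \tau(v) + \viewduration$, using $\viewduration = 3\Delta + \recoveryduration + 3\Delta$ and $\delta \le \Delta$.

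I will then combine this with the timeout-lower-bound arguments already established. A correct validator $i$ that enters view $v$ at $\tau_i \ge \tau(v)$ fully times out no earlier than $\tau_i + \viewduration \ge \tau(v) + \viewduration$; by Lemma~\ref{lemma:jovan_first_timeout_time}, no correct validator emits a timeout message for $v$ before $\tau(v) + \viewduration$, so the premature-timeout threshold of $f+1$ timeout messages (line~\ref{line:premature_timeout} of Algorithm~\ref{Algorithm:Pacemaker}) cannot be reached in this window either, because the at most $f$ Byzantine validators alone are insufficient. Hence every correct validator accepts the disseminated QC, calls $\IncrementView$, and moves past view $v$ without issuing a timeout message, delivering the claim.

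The main obstacle will be the degenerate tie at real time $\tau(v) + \viewduration$ when $\delta = \Delta$: at the earliest-entering correct validator, the arrival of the broadcast QC and the firing of its local timer coincide. I plan to dispose of this by appealing to the standard scheduling convention that a validator processes message-driven handlers (thereby invoking $\IncrementView$ on the received QC and stopping the timer at line~\ref{alg:Increment_view} of Algorithm~\ref{Algorithm:Pacemaker}) before its local-timer event at line~\ref{line:timeout_from_view} of Algorithm~\ref{Algorithm:Consensus-Execution_1} is dispatched, which is consistent with how $\viewduration$ was explicitly budgeted to accommodate the entire happy-path round-trip under worst-case delay $\Delta$.
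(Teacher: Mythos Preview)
Your overall strategy mirrors the paper's: exploit the backup-QC mechanism so that $L_v$ broadcasts a QC for view~$v$ that reaches every correct validator before its local timer fires. However, the stronger statement you propose to prove --- that \emph{every} correct validator accepts a QC for view~$v$ before fully timing out --- is false, and the failure shows up precisely in your complementary sub-case. Take $f\ge 2$ and let $L_{v+1},L_{v+2}$ both be Byzantine. After $L_v$'s proposal $p$ is out, $L_{v+1}$ collects $2f+1$ votes (using $f+1$ correct votes from a chosen set $S$ with $L_v\notin S$, plus Byzantine votes), and sends its proposal $p'$ only to~$S$. The validators in~$S$ enter $v+1$, forward the QC for~$v$ to $L_v$ (line~\ref{line:forward_qc_proposal_current_leader}), and vote for $p'$; the adversary delays all forwards and all regular votes destined for $L_v$, while rushing $S$'s votes for $p'$ to~$L_{v+2}$. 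Now $L_{v+2}$ assembles a QC for $p'$ and rushes a proposal $p''$ for view $v+2$ to $L_v$, which arrives \emph{before} the delayed forwards and before $L_v$ has collected $2f+1$ votes. When $L_v$ processes $p''$, it jumps straight to $\curView=v+2$ via $\IncrementView(p''.\block.\qc)$; thereafter, the delayed QCs for view~$v$ arrive with $\qc.\view=v<\curView$, so the handler at line~\ref{line:receive_backup_qc} does not fire and $\CurrentLeaderReceivesQC$ is never invoked. The remaining $f-1$ correct validators outside $S\cup\{L_v\}$ receive nothing that advances them past view~$v$, and they \emph{do} fully time out. The lemma still holds --- $|S\cup\{L_v\}|=f+2\ge f+1$ correct validators moved past $v$ without timing out, so at most $(f-1)+f=2f-1<2f+1$ timeout messages exist --- but your stronger claim does not.

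This is exactly why the paper's proof opens with a case split: Case~1 handles the situation where $\ge f+1$ correct validators have already entered a view $>v$ before $\tau(v)+\viewduration$ (your failure mode lands here), and only in Case~2 --- where at most $f$ correct validators have advanced, so no correct validator can be beyond $v+1$ --- does the ``forward to $L_v$ then rebroadcast'' argument go through, because $L_v$'s $\curView$ is guaranteed to be $\le v+1$ when the forwarded QC arrives (and one checks that every code path by which $L_v$ transitions to $v+1$ via a QC for~$v$ also triggers $\CurrentLeaderReceivesQC$). To repair your plan you must add this split, or equivalently weaken your target to ``at least $f+1$ correct validators enter a view $>v$ before $\tau(v)+\viewduration$.'' The tie-breaking at $\tau(v)+\viewduration$ you flagged is a secondary concern; the missing case split is the real gap.
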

\begin{proof}
Let us consider the two following cases:
\begin{itemize}
    \item \emph{Case 1:} At least $f + 1$ correct validators enter views greater than $v$ before time $\tau(v) + \viewduration$.
    
    Since no correct validator can issue a timeout message before $\tau(v) + \viewduration$ (by \Cref{lemma:jovan_first_timeout_time}), those validators that advance to higher views before this time will never issue a timeout message for view $v$: they can neither fully nor prematurely time out from view $v$ prior to time $\tau(v) + \viewduration$.
    As at least $2f + 1$ timeout messages are required to form a TC, no TC for view $v$ can be constructed in this case, which proves the lemma.

    \item \emph{Case 2:} At most $f$ correct validators enter views greater than $v$ before time $\tau(v) + \viewduration$. 
    
    In this case, we first prove that no correct validator can enter any view $> v + 1$ before time $\tau(v) + \viewduration$. 
    Since $\tau(v + 1) \geq \tau(v)$ (by \Cref{lemma:jovan_enter_before}) and no TC can be formed for view $v + 1$ before time $\tau(v + 1) + \viewduration \geq \tau(v) + \viewduration$ (by \Cref{lemma:jovan_tc_time}), the only way a correct validator could enter a view $> v + 1$ before time $\tau(v) + \viewduration$ is by accepting a QC from view $v + 1$.  
    Moreover, since no correct validator times out from view $v + 1$ before $\tau(v + 1) + \viewduration \geq \tau(v) + \viewduration$ (by \Cref{lemma:jovan_first_timeout_time}), every correct validator that contributes to any QC from view $v + 1$ formed before time $\tau(v) + \viewduration$ must have done so upon processing the leader’s proposal for view $v + 1$ (line~\ref{line:send_vote_normal} of \Cref{Algorithm:Consensus-Execution_1}).  
    As forming such a QC requires at least $f + 1$ correct validators to have voted, this would imply that $f + 1$ or more correct validators entered views greater than $v$ before time $\tau(v) + \viewduration$ (line~\ref{line:increment_view_proposal} of \Cref{Algorithm:Consensus-Execution_1}), which contradicts the case assumption.
    Therefore, all correct validators that transition to future views before time $\tau(v) + \viewduration$ do so only to view $v + 1$.  

    Let $L$ denote the leader of view $v$.  
    We further distinguish two scenarios:
    \begin{itemize}
        \item Suppose that $L$ disseminates a QC formed for view $v$ to all validators before time $\tau(v) + \viewduration - \Delta$ (by invoking $\CurrentLeaderReceivesQC$ at line~\ref{line:disseminate_backup_qc_proposal} or line~\ref{line:disseminate_backup_qc} or line~\ref{line:disseminate_backup_qc_2} of \Cref{Algorithm:Consensus-Execution_1} or line~\ref{line:disseminate_backup_qc_timeout} of \Cref{Algorithm:Consensus-Execution_2}).

        In this case, every correct validator $i$ receives from the leader $L$ a QC formed for view $v$ by time $\tau(v) + \viewduration$.  
        If the current view of validator $i$ is $v + 1$, then $i$ has already entered view $v + 1$ and therefore never issues a timeout message for view $v$.  
        If the current view of validator $i$ is $< v + 1$, validator $i$ accepts the received QC (line~\ref{line:increment_view_receive_qc} of \Cref{Algorithm:Consensus-Execution_1}) and enters view $v$, which again implies that it will not issue a timeout message for that view.  
        (Recall that the current view of validator $i$ cannot exceed $v + 1$.)  
        Hence, in both cases, validator $i$ never issues a timeout message for view $v$, and therefore no correct validator does so, implying that no TC can be formed for view $v$.

        \item Suppose that $L$ does not disseminate a QC formed for view $v$ to all validators before time $\tau(v) + \viewduration - \Delta$ (by invoking $\CurrentLeaderReceivesQC$ at line~\ref{line:disseminate_backup_qc_proposal} or line~\ref{line:disseminate_backup_qc} or line~\ref{line:disseminate_backup_qc_2} of \Cref{Algorithm:Consensus-Execution_1} or line~\ref{line:disseminate_backup_qc_timeout} of \Cref{Algorithm:Consensus-Execution_2}).

        For this scenario, we analyze two possible cases according to \Cref{lemma:jovan_issue_vote_time_tc}:
        \begin{itemize}
            \item Suppose that some correct validator $i$ accepts a QC for view $v$ by time $\tau(v) + 3\delta + \recoveryduration + \delta$.
            
            Note that this correct validator $i$ cannot accept a QC for view $v$ at line~\ref{line:increment_view_form_qc_timeout_message} of \Cref{Algorithm:Consensus-Execution_2}, since it is impossible to receive $2f + 1$ timeout messages for view $v$ by time $\tau(v) + 3\delta + \recoveryduration + \delta < \tau(v) + \viewduration$ (as no correct validator issues a timeout message associated with view $v$ before time $\tau(v) + \viewduration$ according to \Cref{lemma:jovan_first_timeout_time}).
            We now consider all possible points in the protocol where validator $i$ could accept a QC for view $v$:
            \begin{itemize}
                \item Suppose that validator $i$ accepts a QC for view $v$ at line~\ref{line:increment_view_proposal} of \Cref{Algorithm:Consensus-Execution_1}.  
                In this case, $i$ forwards the accepted QC to the leader $L$ (line~\ref{line:forward_qc_proposal_current_leader} of \Cref{Algorithm:Consensus-Execution_1}).  
                This implies that the leader $L$ receives a QC for view $v$ by time $\tau(v) + 3\delta + \recoveryduration + 2\delta \leq \tau(v) + \viewduration - \Delta$.  
                Upon receiving this QC, if the current view of $L$ is $v + 1$ (recall that it cannot be greater), then $L$ has already disseminated a QC for view $v$ (at line~\ref{line:disseminate_backup_qc_proposal}, line~\ref{line:disseminate_backup_qc}, line~\ref{line:disseminate_backup_qc_2} of \Cref{Algorithm:Consensus-Execution_1}, or line~\ref{line:disseminate_backup_qc_timeout} of \Cref{Algorithm:Consensus-Execution_2}).  
                Otherwise, $L$ processes the received QC and disseminates it at line~\ref{line:disseminate_backup_qc_2} of \Cref{Algorithm:Consensus-Execution_1}.  
                In both cases, we reach a contradiction with the assumption that $L$ does not disseminate a QC formed in view $v$ by time $\tau(v) + \viewduration - \Delta$, making this case impossible.

                \item Suppose that validator $i$ accepts a QC for view $v$ at line~\ref{line:increment_view_form_qc} of \Cref{Algorithm:Consensus-Execution_1}.  
                In this case, $i$ is either the leader of view $v$ or the leader of view $v + 1$.  
                If $i$ is the leader of view $v$ (i.e., $i = L$), then it disseminates the QC formed in view $v$ at line~\ref{line:disseminate_backup_qc} of \Cref{Algorithm:Consensus-Execution_1}, which contradicts the initial assumption that this does not occur.  

                Otherwise, $i$ is the leader of view $v + 1$, in which case it issues its proposal that is received by all correct validators by time $\tau(v) + 3\delta + \recoveryduration + 2\delta < \tau(v) + \viewduration$.  
                Consequently, by time $\tau(v) + 3\delta + \recoveryduration + 2\delta$, all correct validators have entered view $v + 1$, implying that no correct validator issues a timeout message for view $v$.  
                Hence, the lemma follows in this case.

                \item Suppose that validator $i$ accepts a QC for view $v$ at line~\ref{line:increment_view_receive_qc} of \Cref{Algorithm:Consensus-Execution_1}.  
                In this case, $i$ must have received the QC from the leader $L$, which implies that $L$ had previously disseminated the QC formed for view $v$ (at line~\ref{line:disseminate_backup_qc_proposal}, line~\ref{line:disseminate_backup_qc}, line~\ref{line:disseminate_backup_qc_2} of \Cref{Algorithm:Consensus-Execution_1}, or line~\ref{line:disseminate_backup_qc_timeout} of \Cref{Algorithm:Consensus-Execution_2}).  
                This contradicts the initial assumption, and therefore this case cannot occur.

                \item Suppose that validator $i$ accepts a QC for view $v$ at line~\ref{line:increment_view_receive_qc_next_leader} of \Cref{Algorithm:Consensus-Execution_1}.
                In this case, $i$ is the leader of view $v + 1$.
                Therefore, validator $i$ issues its proposal that is received by all correct validators by time $\tau(v) + 3\delta + \recoveryduration + 2\delta < \tau(v) + \viewduration$.  
                Consequently, by time $\tau(v) + 3\delta + \recoveryduration + 2\delta$, all correct validators have entered view $v + 1$, implying that no correct validator issues a timeout message for view $v$.  
                Hence, the lemma follows in this case.

                \item Suppose that validator $i$ accepts a QC for view $v$ at line~\ref{line:increment_view_receive_qc_timeout_message} of \Cref{Algorithm:Consensus-Execution_2}.
                In this case, $i$ forwards the accepted QC to the leader $L$ (line~\ref{line:forward_qc_timeout_message_current_leader} of \Cref{Algorithm:Consensus-Execution_2}).  
                This implies that the leader $L$ receives a QC for view $v$ by time $\tau(v) + 3\delta + \recoveryduration + 2\delta \leq \tau(v) + \viewduration - \Delta$.  
                Upon receiving this QC, if the current view of $L$ is $v + 1$ (recall that it cannot be greater), then $L$ has already disseminated a QC for view $v$ (at line~\ref{line:disseminate_backup_qc_proposal}, line~\ref{line:disseminate_backup_qc}, line~\ref{line:disseminate_backup_qc_2} of \Cref{Algorithm:Consensus-Execution_1}, or line~\ref{line:disseminate_backup_qc_timeout} of \Cref{Algorithm:Consensus-Execution_2}).  
                Otherwise, $L$ processes the received QC and disseminates it at line~\ref{line:disseminate_backup_qc_2} of \Cref{Algorithm:Consensus-Execution_1}.  
                In both cases, we reach a contradiction with the assumption that $L$ does not disseminate a QC formed in view $v$ by time $\tau(v) + \viewduration - \Delta$, making this case impossible.
            \end{itemize}
            Hence, in all possible scenarios, the lemma holds.

            \item Suppose that no correct validator accepts a QC for view $v$ by time $\tau(v) + 3\delta + \recoveryduration + \delta$.

            By \Cref{lemma:jovan_issue_vote_time_tc}, all correct validators issue their votes by time $\tau(v) + 3\delta + \recoveryduration + \delta$, which implies that the leader $L$ receives votes from all correct validators by time $\tau(v) + 3\delta + \recoveryduration + 2\delta \leq \tau(v) + \viewduration - \Delta$.  
            Upon receiving these votes, if the current view of $L$ is $v + 1$ (recall that it cannot be greater), then $L$ has already disseminated a QC for view $v$ (at line~\ref{line:disseminate_backup_qc_proposal}, line~\ref{line:disseminate_backup_qc}, line~\ref{line:disseminate_backup_qc_2} of \Cref{Algorithm:Consensus-Execution_1}, or line~\ref{line:disseminate_backup_qc_timeout} of \Cref{Algorithm:Consensus-Execution_2}).  
            Otherwise, $L$ processes the received votes and disseminates the QC for view $v$ (formed from these votes) at line~\ref{line:disseminate_backup_qc} of \Cref{Algorithm:Consensus-Execution_1}.  
            Hence, in both cases, we obtain a contradiction with the assumption that $L$ does not disseminate a QC formed in view $v$ by time $\tau(v) + \viewduration - \Delta$, making this case impossible.
        \end{itemize}
    \end{itemize}

\end{itemize}
As the statement of the lemma holds in both possible cases, the proof is concluded.
\qed
\end{proof}


Next, we show that no correct validator can (fully or prematurely) time out from a view with a correct leader before every correct validator receives a proposal from that leader.

\begin{lemma} \label{lemma:jovan_timeout_all_proposal}
Let $v > 1$ be any post-GST view with a correct leader.  
Then, no correct validator (fully or prematurely) times out from view $v$ before every correct validator receives a proposal from the leader of view $v$.
\end{lemma}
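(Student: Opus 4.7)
The plan is to combine two bounds: an upper bound on when every correct validator has received the leader's proposal, and a lower bound on when any correct validator can first time out from view $v$. Showing that the former strictly precedes the latter yields the claim.

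First, I would bound the proposal-delivery time. By \Cref{lemma:jovan_correct_leader_proposal_time}, the correct leader $L$ of view $v$ issues its proposal no later than $\tau(v) + 3\delta$ if it entered view $v$ via a QC, and no later than $\tau(v) + 3\delta + \recoveryduration$ if it entered via a TC. Since the view is post-GST, this proposal is delivered to every correct validator within an additional $\delta$ of being broadcast, so every correct validator receives a proposal from $L$ by time at most
\[
T \;\deq\; \tau(v) + 3\delta + \recoveryduration + \delta.
\]

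Next, I would lower-bound the time of the first timeout (full or premature) at any correct validator for view $v$. By \Cref{lemma:jovan_first_timeout}, the first correct validator that sends a timeout message for view $v$ must do so by fully timing out, and by \Cref{lemma:jovan_first_timeout_time} no correct validator issues any timeout message for view $v$ before $\tau(v) + \viewduration$. Since a premature timeout at line~\ref{line:premature_timeout} of \Cref{Algorithm:Pacemaker} requires $f+1$ timeout messages to have been received---which in turn requires at least one correct validator to have previously issued a timeout message---this bound applies to premature timeouts as well. Hence no correct validator (fully or prematurely) times out from view $v$ before $\tau(v) + \viewduration$.

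Finally, I would close the argument by comparing the two bounds. Using $\delta \leq \Delta$ and the protocol setting $\viewduration = 3\Delta + \recoveryduration + 3\Delta$, we obtain
\[
\viewduration \;\geq\; 3\delta + \recoveryduration + 3\delta \;>\; 3\delta + \recoveryduration + \delta \;\geq\; T - \tau(v),
\]
so $T < \tau(v) + \viewduration$, which gives the lemma. The main (minor) obstacle is simply bookkeeping: making sure the recovery-path bound $3\delta + \recoveryduration$ from \Cref{lemma:jovan_correct_leader_proposal_time} is the right worst case to use, and that the slack between $\Delta$ and $\delta$ inside $\viewduration$ is large enough to dominate the single extra $\delta$ of proposal propagation---this follows immediately from the $3\Delta$ margins built into the definition of $\viewduration$ at the start of this subsection.
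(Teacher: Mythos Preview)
Your proposal is correct and follows essentially the same approach as the paper: both arguments bound the proposal-delivery time by $\tau(v) + 3\delta + \recoveryduration + \delta$ via \Cref{lemma:jovan_correct_leader_proposal_time}, bound the earliest possible timeout by $\tau(v) + \viewduration$ via \Cref{lemma:jovan_first_timeout_time}, and then compare using the definition of $\viewduration$. Your explicit justification for why the $\tau(v)+\viewduration$ bound covers premature timeouts as well is a nice touch, but it is already implicit in \Cref{lemma:jovan_first_timeout_time}, which the paper invokes directly.
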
 
\begin{proof}
For the sake of contradiction, suppose that some correct validator $i$ (either fully or prematurely) times out from view $v$ before another correct validator $j$ receives the proposal from the leader of view $v$.  
Observe that validator $i$ cannot (fully or prematurely) time out in view $v$ before time $\tau(v) + \viewduration$ (by \Cref{lemma:jovan_first_timeout_time}).  
By \Cref{lemma:jovan_correct_leader_proposal_time}, the leader of view $v$ issues its proposal by time $\tau(v) + 3\delta + \recoveryduration$.  
Consequently, validator $j$ receives this proposal by time $\tau(v) + 3\delta + \recoveryduration + \delta < \tau(v) + \viewduration$, which contradicts the assumption that validator $i$ times out before validator $j$ receives the leader’s proposal.  
Hence, the claim follows. 
\qed
\end{proof}

The following lemma shows that any QC formed in a post-GST view with a correct leader must correspond to the proposal issued in that same view.

\begin{lemma} \label{lemma:jovan_post_gst_correct_leader_qc}
Let $v > 1$ be any post-GST view with a correct leader, and let $p$ denote the proposal issued by the leader in view $v$.  
If any QC $\mathit{qc}$ is formed in view $v$, then $\mathit{qc}.\blockid = p.\blockid$.
\end{lemma}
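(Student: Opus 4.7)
The plan is to show that every vote cast by a correct validator in view $v$ carries $\blockid = p.\blockid$. Once this is established, the conclusion is immediate: a QC for view $v$ aggregates $2f+1$ votes sharing the same $\blockid$, of which at most $f$ may originate from Byzantine validators, so at least $f+1$ correct votes must contribute and they force $\mathit{qc}.\blockid = p.\blockid$.

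For the easy half of the argument, I would observe that a correct leader signs only the single proposal $p$ for view $v$, and that the $\SafetyCheck$ predicate rejects any proposal not signed by the designated leader. Hence, the only proposal for view $v$ that any correct validator can process is $p$, and every regular vote cast at line~\ref{line:send_vote_normal} of \Cref{Algorithm:Consensus-Execution_1} in view $v$ by a correct validator carries $\blockid = p.\blockid$.

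The harder half concerns the tip votes embedded in timeout messages (created at line~\ref{line:create_vote_timeout} of \Cref{Algorithm:Utilities}), which carry $\blockid = \Hash(\localhightip.\blockheader.\blockhash,\, v)$. I would show that whenever a correct validator $j$ sends a timeout message for view $v$, its $\localhightip$ already reflects $p$, so $\localhightip.\blockheader.\blockhash = p.\block.\blockhash$. For this, combine \Cref{lemma:jovan_first_timeout_time}, which ensures no correct timeout in view $v$ occurs before $\tau(v) + \viewduration$, with \Cref{lemma:jovan_correct_leader_proposal_time}, which ensures the leader issues $p$ by time $\tau(v) + 3\delta + \recoveryduration$. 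Since $\viewduration = 6\Delta + \recoveryduration$, the proposal $p$ is delivered (post-GST) to every correct validator strictly before the earliest correct timeout in view $v$. Consequently, $j$ processes $p$ while still in view $\leq v$ and with $\highestvotedview < v$ (the latter because $\highestvotedview$ can only be raised to $v$ by voting in view $v$ or timing out in view $v$, neither of which has occurred at $j$ yet). Thus $j$ updates $\localhightip$ at line~\ref{Alg:update_tip}; whether $p$ is fresh or a reproposal, the $\SafetyCheck$ check at line~\ref{alg:safetycheck:block_hash_check} of \Cref{Algorithm:ValidationPredicates} guarantees $\localhightip.\blockheader.\blockhash = p.\block.\blockhash$, and any subsequent tip vote from $j$ in view $v$ has $\blockid = p.\blockid$.

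The main obstacle is the timing argument in the third paragraph: one must rule out scenarios where a correct validator receives $p$ but ends up with $\highestvotedview \geq v$ before processing it, for instance via a premature timeout at line~\ref{line:premature_timeout} of \Cref{Algorithm:Pacemaker} triggered by $f+1$ arriving timeout messages. The reason this cannot happen post-GST is that any premature trigger requires at least one correct timeout message to participate in the $f+1$ threshold (since there are only $f$ Byzantine validators), which by \Cref{lemma:jovan_first_timeout_time} cannot exist before $\tau(v) + \viewduration$---and by that time, under the partial-synchrony assumption that post-GST messages are processed within $\Delta$, $p$ has already been processed. Everything else reduces to the straightforward vote-counting argument in the first paragraph.
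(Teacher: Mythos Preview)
Your proof is correct and follows essentially the same line of reasoning as the paper. The paper's version is slightly more modular: instead of unfolding the timing argument directly, it invokes the auxiliary \Cref{lemma:jovan_timeout_all_proposal} (which packages exactly your combination of \Cref{lemma:jovan_first_timeout_time} and \Cref{lemma:jovan_correct_leader_proposal_time}) to conclude that any correct validator who times out in view $v$ has already received $p$, and then invokes \Cref{lemma:jovan_same_vote} to conclude that its tip vote agrees with its earlier regular vote for $p$. Your argument inlines both of these steps---you reconstruct the timing bound and then reason about $\localhightip$ directly rather than going through the one-vote-per-view lemma---but the substance is identical.
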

\begin{proof}
Assume for contradiction that $\mathit{qc}$ does not point to $p$.  
Note that all correct validators that cast their votes in view $v$ after receiving the leader’s proposal do vote for the received proposal, and therefore must vote for $p$.  
Then, there must exist a correct validator contributing to $\mathit{qc}$ by issuing a timeout message (upon fully or prematurely timing out from view $v$) including a vote for a proposal different from $p$.
However, by \Cref{lemma:jovan_timeout_all_proposal}, that correct validator must have previously received the leader's proposal $p$.
Upon receiving proposal $p$, the correct validator must have voted for $p$ (line~\ref{line:send_vote_normal} of \Cref{Algorithm:Consensus-Execution_1}), since, had its current view been greater than $v$ (which would prevent the aforementioned vote), it would not have issued a timeout message associated with view $v$ .
Consequently, by \Cref{lemma:jovan_same_vote}, the timeout message must include a vote for $p$, which implies that $\mathit{qc}$ points to $p$.
\qed

\end{proof}


The following lemma establishes that if a correct leader issues a proposal $p$ in any post-GST view $v$, then, before the first TC for view $v + 1$ is formed (if any), at least $f + 1$ correct validators must have observed a QC formed in view $v$.  
This, in turn, implies that these validators update their local QC variable ($\localhighQC$) to this QC before the TC for view $v + 1$ is constructed.

\begin{lemma} \label{lemma:jovan_post_gst_correct_leader_tc}
Let $v > 1$ be any post-GST view with a correct leader.  
Then, before any TC for view $v + 1$ is formed (if any), at least $f + 1$ correct validators updated their $\localhighQC$ variable to a QC from view $v$.
\end{lemma}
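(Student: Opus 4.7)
The plan is to leverage the previously established Lemma~\ref{lemma:jovan_correct_leader_no_tc} (no TC can be formed for a post-GST view with a correct leader) together with Lemma~\ref{lemma:jovan_timeout_message_enter_before} (every correct validator that issues a timeout message for a view has entered that view). First I would observe that any TC for view $v+1$ is assembled from $2f+1$ timeout messages associated with view $v+1$, so at least $f+1$ of the contributing validators must be correct.

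Next, I would fix any correct contributor $i$ and argue that $i$ must have entered view $v+1$ strictly before broadcasting its timeout message. This follows from Lemma~\ref{lemma:jovan_timeout_message_enter_before}, together with the observation that the premature-timeout trigger (line~\ref{line:premature_timeout} of \Mref{Algorithm:Pacemaker}) fires only for the current view, and $\curView$ is monotone non-decreasing inside $\IncrementView$. Thus any timeout message for view $v+1$ from a correct validator is issued while its $\curView$ equals $v+1$, which in turn forces an earlier acceptance of some certificate $\icer$ satisfying $\icer.\view = v$.

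The core step then applies Lemma~\ref{lemma:jovan_correct_leader_no_tc}: since $v$ is a post-GST view with a correct leader, no TC for view $v$ exists at any time, so the certificate $\icer$ accepted by $i$ to advance to view $v+1$ cannot be a TC. Hence $\icer$ must be a QC with $\icer.\view = v$, and by the $\IncrementView$ routine (\Mref{Algorithm:Pacemaker}, line~\ref{alg:Increment_view}) this acceptance assigns $\localhighQC \gets \icer$, i.e., it updates $\localhighQC$ to a QC from view $v$. Since this update precedes the validator's timeout message, which in turn precedes the formation of the TC for view $v+1$, the $f+1$ correct contributors all have performed the required update before the TC for view $v+1$ is formed, completing the argument.

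The only subtle point, and hence the main thing to be careful about, is ruling out alternative paths by which a correct validator could end up issuing a timeout for view $v+1$ without ever passing through an acceptance of a QC from exactly view $v$ (for example, by jumping ahead to a higher view via a later certificate and then returning). The monotonicity of $\curView$ under $\IncrementView$, combined with the fact that timeout messages are created only for the current view (see $\CreateTimeoutMsg$ in \Mref{Algorithm:Utilities}), rules out such jumps and makes the argument tight.
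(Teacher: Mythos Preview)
Your proposal is correct and takes essentially the same approach as the paper: both rely on \Cref{lemma:jovan_correct_leader_no_tc} to rule out a TC for view $v$, and then conclude that each of the $f+1$ correct contributors to the TC for view $v+1$ must have a QC from view $v$ stored in $\localhighQC$. The only cosmetic difference is that the paper argues via \Cref{lemma:jovan_valid_timeout} and the structure of the $\lastcer$ field in $\CreateTimeoutMsg$, whereas you trace through \Cref{lemma:jovan_timeout_message_enter_before} and $\IncrementView$; these are equivalent routes to the same conclusion.
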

\begin{proof}
Let $\mathit{tc}$ be any TC such that $\mathit{tc}.\view = v + 1$.  
Since at least $f + 1$ correct validators must have issued timeout messages for view $v + 1$ for $\mathit{tc}$ to exist, and since all timeout messages from correct validators are valid (by \Cref{lemma:jovan_valid_timeout}), each of them must include either a TC or a QC from the previous view $v$.  
By \Cref{lemma:jovan_correct_leader_no_tc}, no TC can be formed for view $v$, which implies that all correctly issued timeout messages must carry a QC from view $v$.
Moreover, by the definition of the $\CreateTimeoutMsg$ function (line~\ref{line:view_certificate_create_timeout_message} of \Cref{Algorithm:Utilities}), this QC must be stored in the $\localhighQC$ variable of each of these $f + 1$ correct validators.
\qed
\end{proof}

We now complement the previous lemma by proving that if there exists any post-GST view $v$ with a correct leader, and some correct validator enters a view greater than $v + 1$, then at least $f + 1$ correct validators must have updated their $\localhighQC$ variable to a QC formed in view $v$.

\begin{lemma} \label{lemma:jovan_post_gst_correct_leader_enter}
Let $v > 1$ be any post-GST view with a correct leader.  
Then, before any correct validator enters any view greater than $v + 1$, at least $f + 1$ correct validators updated their $\localhighQC$ variable to a QC from view $v$.
\end{lemma}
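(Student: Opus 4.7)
I would first reduce to the first entry of a correct validator into view $v+2$: by \Cref{lemma:jovan_enter_before}, any correct validator that ever reaches a view $v' > v+1$ is preceded by a correct entry into $v+2$, so it suffices to analyze the time $\tau^\star$ at which some correct validator $j$ first enters $v+2$. That entry must come from $j$ accepting either a QC or a TC for view $v+1$, which splits the proof into two subcases. The TC subcase follows directly from \Cref{lemma:jovan_post_gst_correct_leader_tc}, which already states that before any TC for $v+1$ can form, at least $f+1$ correct validators have updated $\localhighQC$ to a QC from view $v$; all the real work is in the QC subcase.

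In the QC subcase, let $\mathit{qc}_{v+1}$ be the QC that $j$ accepts and let $B$ be the block it points to. At least $f+1$ of its aggregated votes come from correct validators, and I would show that each such contributor has, at some point strictly before $\tau^\star$, updated $\localhighQC$ to a QC from view $v$. The key structural observation is that before $\tau^\star$ no correct validator is yet in any view $\geq v+2$, so every valid view-$(v+1)$ proposal that a correct validator has processed must be a happy-path fresh proposal whose block carries a QC from view $v$: the three alternative proposal shapes (fresh-via-NEC, fresh-via-TC-highQC, and reproposal) each require a TC for view $v$, which is ruled out by \Cref{lemma:jovan_correct_leader_no_tc}. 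For a regular vote cast at line~\ref{line:send_vote_normal} of \Cref{Algorithm:Consensus-Execution_1}, the proposal-reception handler invokes $\IncrementView$ on that QC from view $v$, so either the update to $\localhighQC$ happens exactly at this step or the contributor was already in view $v+1$, having been brought there earlier by a QC from view $v$ (again, since no TC for view $v$ exists), in which case $\localhighQC$ had already been updated. For a tip vote carried in a timeout message for view $v+1$, the contributor's $\localhightip.\blockheader$ must match $B$, which carries $\originalview = v+1$; by the $\GetTip$ procedure this header can have been installed only while processing either the happy-path fresh view-$(v+1)$ proposal itself (reducing to the regular-vote case) or a reproposal in some view $\geq v+2$ whose high tip equals that block.

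The hard part is ruling out this reproposal option in the tip-vote analysis. If a correct validator had installed its local tip via a reproposal of view $\geq v+2$, then processing that reproposal would have driven its $\curView$ to $\geq v+2$ before it emitted the tip vote; but the vote must reach the aggregator before $\mathit{qc}_{v+1}$ is formed, which in turn precedes $\tau^\star$, so this would place a correct validator in a view $\geq v+2$ strictly before $\tau^\star$, contradicting the minimality of $\tau^\star$. Once this timeline step is in place, the implication ``each of the $\geq f+1$ correct contributors processed a happy-path view-$(v+1)$ proposal, hence $\localhighQC$ was set to a QC from view $v$ at that moment or strictly earlier'' is direct bookkeeping over \Cref{Algorithm:Pacemaker}'s $\IncrementView$.
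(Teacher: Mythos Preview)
Your plan is correct in outline but considerably more complicated than the paper's proof. The reduction to the first entry into view $v+2$ and the TC subcase via \Cref{lemma:jovan_post_gst_correct_leader_tc} match the paper exactly. For the QC subcase, however, the paper bypasses all of your proposal-shape and local-tip bookkeeping with a single observation: any correct validator that casts a vote in view $v+1$---whether a regular vote upon receiving a proposal or a tip vote inside a timeout message---must first have \emph{entered} view $v+1$ (for the timeout case this is \Cref{lemma:jovan_timeout_message_enter_before}). Since no TC for view $v$ exists (\Cref{lemma:jovan_correct_leader_no_tc}), entering $v+1$ can only happen by accepting a QC from view $v$, and $\IncrementView$ immediately sets $\localhighQC$ to that QC. That gives the $f+1$ contributors in one line, with no need to classify proposal shapes, trace how $\localhightip$ was installed, or run the timeline argument ruling out reproposals from views $\geq v+2$.

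Your tip-vote branch also contains an unjustified step: you assert that the block $B$ pointed to by $\mathit{qc}_{v+1}$ carries $\originalview = v+1$, but the QC only records a blockhash and a view. To pin down $\originalview = v+1$ from the contributor's side you would need $\localhightip.\view = v+1$, which follows from $\localhightip.\view > \localhighQC.\view \geq v$---and that last inequality is precisely the ``entered $v+1$ via a QC from view $v$'' fact. So the paper's shortcut is already latent in your own argument; once you surface it, the remainder of your machinery becomes unnecessary.
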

\begin{proof}
According to \Cref{lemma:jovan_enter_before}, for any view $> v + 1$ to be entered, some correct validator must first enter view $v + 2$.  
If this occurs as a result of a QC formed in view $v + 1$, then at least $f + 1$ correct validators must have voted for a proposal in that view.  
Such votes can be cast only after the validators have entered view $v + 1$, regardless of whether the vote was issued upon receiving a proposal or upon timing out from view $v + 1$.
Upon entering view $v + 1$---which can only happen via a QC from view $v$ (by \Cref{lemma:jovan_correct_leader_no_tc})---each of these $f + 1$ correct validators update their $\localhighQC$ variable to this QC (see the $\IncrementView$ function).
Hence, the lemma holds in this case.  
Alternatively, if the transition to view $v + 2$ occurs via a TC for view $v + 1$, then the lemma holds due to \Cref{lemma:jovan_post_gst_correct_leader_tc}.
\qed
\end{proof}



The following lemma establishes that any proposal (whether fresh or a reproposal) issued by a correct leader after GST is strictly extended by every proposal issued in subsequent views.  
This property is essential for ensuring liveness.  
Specifically, once a correct leader issues a proposal in a post-GST view, the lemma guarantees that all future proposals---regardless of whether they are issued by correct or Byzantine leaders---must extend this honest proposal.  
Consequently, once a group of three consecutive correct leaders occurs, the first leader in this group will strictly extend the honest proposal, and since the block proposed by this leader becomes committed by the end of the third view from the group, the original honest proposal is eventually committed as well.

\begin{lemma} \label{lemma:jovan_crucial_liveness}
Consider a proposal $p$ issued by a correct leader in any post-GST view $v > 1$.  
Then, any valid proposal $p^*$ issued in any view $v^* \geq v + 1$ (i.e., $p^*.\view = v^* \geq v + 1$) must strictly extend $p$.
\end{lemma}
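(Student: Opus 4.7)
\medskip
\noindent \textbf{Proof plan.}
The plan is to mirror the inductive structure of \Cref{lemma:jovan_crucial_tail_forking}, but to replace the ``$f+1$ correct voters'' assumption used there by the stronger post-GST/correct-leader guarantees already established: no TC for view $v$ can form (\Cref{lemma:jovan_correct_leader_no_tc}), every QC formed in view $v$ points to $p$ (\Cref{lemma:jovan_post_gst_correct_leader_qc}), and before any TC for view $v+1$ (\Cref{lemma:jovan_post_gst_correct_leader_tc}) or before any correct validator enters a view $>v+1$ (\Cref{lemma:jovan_post_gst_correct_leader_enter}), at least $f+1$ correct validators have set their $\localhighQC$ to a QC from view $v$. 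Since $p^*.\view > v = p.\view$ forces $p^*.\block \neq p.\block$ whenever $p^*$ is a fresh proposal (its original view is $>v$), once we show that $p^*$ extends $p$, strictness follows automatically, as a reproposal of $p$'s block in a later view would require a TC carrying $p$'s tip which we will rule out.

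For the base case $v^* = v+1$, I would argue by case analysis on $p^*$: a reproposal would need a TC from view $v$, which is impossible by \Cref{lemma:jovan_correct_leader_no_tc}; a fresh proposal carrying an NEC from view $v+1$ or a TC from view $v$ with non-$\bot$ $\highQC$ likewise requires a TC from view $v$, again ruled out; the only remaining case is a fresh proposal carrying a QC from view $v$, and by \Cref{lemma:jovan_post_gst_correct_leader_qc} this QC points to $p$, so $p^*$ strictly extends $p$.

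For the inductive step I would set up hypotheses analogous to the tail-forking proof: $X(v')$ that every valid proposal in view $v'>v$ strictly extends $p$; $Q(v')$ that every QC in view $v'\geq v$ points to a proposal that extends $p$ (with equality to $p$ only when $v'=v$); and $Z(v')$, $W(v')$ that, for any TC from view $v'\geq v+1$, the carried $\hightip$ (resp.\ $\highQC$) corresponds to a proposal strictly extending $p$. The crucial ingredient for $Q$, $Z$, $W$ at views $v'\geq v+1$ is that \Cref{lemma:jovan_post_gst_correct_leader_enter} together with \Cref{lemma:jovan_post_gst_correct_leader_tc} guarantees $f+1$ correct validators with $\localhighQC$ from view $v$ (or later, by $Q$ of the induction) before the relevant TC forms or the relevant view is entered; quorum intersection with the $2f+1$ contributors then ensures that the $\tipsviews$/$\qcsviews$ of any TC contain a value $\geq v$, forcing the high tip or high QC to be at view $\geq v$ and, by the inductive hypotheses $Q$, $X$, to reference a proposal extending $p$. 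The $X$ step then handles each type of proposal in view $v'$ (fresh with preceding QC, fresh with NEC, fresh with TC carrying $\highQC$, or reproposal) using $Q(v'-1)$, $Z(v'-1)$, $W(v'-1)$ exactly as in the tail-forking proof.

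The main obstacle will be the NEC case in the induction for $X(v')$: I need to rule out the possibility that $p^*.\nec$ certifies a QC view $<v$, as that would allow $p^*$ to bypass $p$ entirely. To do so, I would observe that forming $p^*.\nec$ requires a TC $\mathit{tc}$ from view $v'-1\geq v+1$ whose $\hightip$ (by $Z(v'-1)$) strictly extends $p$ and has view $\geq v+1$; hence $p^*.\nec.\hightipQCview \geq v$, so the parent QC embedded in $p^*$ has view $\geq v$ and by $Q$ points to a proposal extending $p$. A second delicate point is handling the $W(v'=v+1)$ case, where I must show no TC from view $v+1$ can actually use a $\highQC$ field: here I use that the $f+1$ correct validators with $\localhighQC$ from view $v$ either report this QC (giving $\qcsviews$ value $v$) or an even newer tip/QC, and a careful comparison with the leader’s proposal in view $v+1$ (if any) together with the constraint $\mathit{tc}.\highQC.\view < \mathit{tc}.\view$ forces consistency with $p$, completing the argument.
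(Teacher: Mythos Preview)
Your proposal is correct and follows essentially the same approach as the paper's proof: the same inductive hypotheses (your $X$, $Q$, $Z$, $W$ correspond to the paper's $X$, $Q$, $Z$, $Y$), the same key supporting lemmas (\Cref{lemma:jovan_correct_leader_no_tc}, \Cref{lemma:jovan_post_gst_correct_leader_qc}, \Cref{lemma:jovan_post_gst_correct_leader_tc}, \Cref{lemma:jovan_post_gst_correct_leader_enter}), and the same case analysis. One minor deviation: for $W(v'=v+1)$ you suggest ruling out a non-$\bot$ $\highQC$ entirely, whereas the paper simply shows that if $\mathit{tc}.\highQC \neq \bot$ then $\mathit{tc}.\highQC.\view = v$ and hence points to $p$ by \Cref{lemma:jovan_post_gst_correct_leader_qc}; either route works, and your fallback ``forces consistency with $p$'' is exactly the paper's conclusion.
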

\begin{proof}
We prove the claim by induction on the view number $v'$.
Consider the following four induction hypotheses, $X(v')$, $Y(v')$, $Q(v')$, and $Z(v')$:
\begin{itemize}
    \item $X(v')$: Any (valid) proposal $p^*$ in view $v' \geq v + 1$ strictly extends $p$ and satisfies $p^*.\block.\qc.\view \geq v$.
    Consequently, any (valid) tip $T$ (of a fresh proposal) from view $v'$ also strictly extends $p$ and satisfies $T.\blockheader.\qc.\view \geq v$.
     
    \item $Y(v')$: Let $\mathit{tc}$ be any TC with $\mathit{tc}.\view = v' \geq v + 1$ and $\mathit{tc}.\highQC \neq \bot$.
    Then, any proposal $p_{\mathit{tc}}$ with $p_{\mathit{tc}}.\blockid = \mathit{tc}.\highQC.\blockid$ extends $p$ and $\mathit{tc}.\highQC.\view \geq v$.

    \item $Q(v')$: For any QC $\mathit{qc}$ with $\mathit{qc}.\view = v' \geq v + 1$, any proposal $p_{\mathit{qc}}$ with $p_{\mathit{qc}}.\blockid = \mathit{qc}.\blockid$ strictly extends proposal $p$.
    
    \item $Z(v')$: Let $\mathit{tc}$ be any TC with $\mathit{tc}.\view = v' \geq v + 1$ and $\mathit{tc}.\hightip \neq \bot$. 
    Then, the proposal whose tip is $\mathit{tc}.\hightip$ strictly extends $p$ and $\mathit{tc}.\hightip.\blockheader.\qc.\view \geq v$.
\end{itemize}

We begin by proving the base case. 
Note that $X(v')$ is equivalent to the lemma’s statement.

\medskip
\noindent \underline{$X(v' = v + 1)$:}
Let $p^*$ be any proposal such that $p^*.\view = v + 1$.  
By \Cref{lemma:jovan_correct_leader_no_tc}, we have $p^*.\tc = \bot$, which implies that $p^*.\block.\qc.\view = v$.  
Due to \Cref{lemma:jovan_post_gst_correct_leader_qc}, it follows that a parent of proposal $p^*$ is $p$, thereby proving that $X(v' = v + 1)$ holds.

\medskip
\noindent \underline{$Y(v' = v + 1)$:}
Let $\mathit{tc}$ be any TC such that $\mathit{tc}.\view = v + 1$ and $\mathit{tc}.\highQC \neq \bot$.  
By \Cref{lemma:jovan_post_gst_correct_leader_tc}, at least $f + 1$ correct validators must have updated their $\localhighQC$ variable to a QC for view $v$ before $\mathit{tc}$ was formed.  
Moreover, as $\mathit{tc}.\view = v + 1$, the preceding argument, together with the validity of $\mathit{tc}$, implies that $\mathit{tc}.\highQC.\view = v$.
Due to \Cref{lemma:jovan_post_gst_correct_leader_qc}, $\mathit{tc}.\highQC.\blockid = p.\blockid$, thereby proving $Y(v' = v + 1)$.

\medskip
\noindent \underline{$Q(v' = v + 1)$:}
Let $\mathit{qc}$ be any QC with $\mathit{qc}.\view = v' = v + 1$.  
We distinguish between two cases:
\begin{itemize}
    \item Suppose that at least one correct validator contributing to $\mathit{qc}$ did so upon receiving a proposal in view $v'$.  
    In this case, the received proposal strictly extends proposal $p$, since $X(v' = v + 1)$ holds.  
    Hence, $Q(v' = v + 1)$ is satisfied.

    \item Suppose that no correct validator contributing to $\mathit{qc}$ did so upon receiving a proposal in view $v'$.  
    In this case, all correct votes issued in view~$v'$ are based on timeout messages, each reflecting the validator’s local tip at the time of issuance.  
    Now consider any correct validator~$i$ contributing to~$\mathit{qc}$.  
    Such a contribution must occur when validator~$i$ (either fully or prematurely) times out from view~$v + 1$.  
    This implies that validator~$i$ had previously entered view~$v + 1$, which, by \Cref{lemma:jovan_post_gst_correct_leader_qc}, can only happen after accepting a QC from view~$v$.  
    Therefore, at the time of issuing the timeout message, validator~$i$'s $\localhighQC$ variable must contain a QC from view~$v$.
    The only circumstance under which validator~$i$ would issue a vote contributing to~$\mathit{qc}$ is if it updated its local tip in view~$v + 1$ based on a fresh proposal---since only in that case would $\localhightip.\view > \localhighQC.\view$, prompting a vote.  
    However, in this scenario, the condition $X(v' = v + 1)$ implies that $Q(v' = v + 1)$ holds.


\end{itemize}
Thus, $Q(v' = v + 1)$ holds in both cases.

\medskip
\noindent \underline{$Z(v' = v + 1)$:}
Let $\mathit{tc}$ be any TC such that $\mathit{tc}.\view = v + 1$ and $\mathit{tc}.\hightip \neq \bot$.  
By \Cref{lemma:jovan_post_gst_correct_leader_tc}, at least $f + 1$ correct validators must have updated their $\localhighQC$ variable to a QC for view $v$ before $\mathit{tc}$ was formed. 
As $\mathit{tc}.\view = v + 1$ and at least one correct validator contributing to $\mathit{tc}$ has its $\localhighQC$ variable containing a QC from view $v$, it follows that $\mathit{tc}.\hightip.\view = v + 1$.  
As $X(v' = v + 1)$ holds, $\mathit{tc}.\hightip$ must strictly extend proposal $p$ and $\mathit{tc}.\hightip.\blockheader.\qc.\view \geq v$, thereby proving $Z(v' = v + 1)$.

\smallskip
We now prove that the inductive step holds.

\medskip
\noindent \underline{$X(v' > v + 1)$:}
Let us consider all possible cases for a valid proposal $p^*$ with $p^*.\view = v' > v + 1$:
\begin{itemize}
    \item Suppose $p^*$ is a fresh proposal.
    We further consider three possibilities:
    \begin{itemize}
        \item Let $p^*.\block.\qc.\view = v' - 1 \geq v + 1$.
        In this case, a parent of $p^*$ strictly extends proposal $p$ (because $Q(v' - 1 \geq v + 1)$ holds), which implies that $p^*$ strictly extends proposal $p$.
        Thus, $X(v' > v + 1)$ holds in this case.

        \item Let $p^*.\nec \neq \bot$.
        We first prove that $p^*.\nec.\hightipQCview = p^*.\block.\qc.\view \geq v$.
        In order for $p^*.\nec$ to be formed, a correct validator $i$ must have received an NE request accompanied by a TC $\mathit{tc}$ with $\mathit{tc}.\view = v' - 1 \geq v + 1$.
        Given that $Z(v' - 1)$ holds, $\mathit{tc}.\hightip$ strictly extends proposal $p$ and $\mathit{tc}.\hightip.\blockheader.\qc.\view \geq v$.
        Therefore, $p^*.\nec.\hightipQCview = p^*.\block.\qc.\view \geq v$.
        Hence, it is left to prove that $p^*$ strictly extends proposal $p$.
        We now differentiate two cases:
        \begin{itemize}
            \item Let $p^*.\block.\qc.\view = v$.
            In this case, a parent of $p^*$ must be $p$ (by \Cref{lemma:jovan_post_gst_correct_leader_qc}), which proves the statement of the invariant.

            \item Let $p^*.\block.\qc.\view > v$.
            In this case, the fact that $Q(p^*.\block.\qc.\view \geq v + 1)$ ensures that the statement of the invariant holds.
        \end{itemize}

        
        \item Let $p^*.\nec = \bot$ and $p^*.\tc.\highQC \neq \bot$.
        Note that $p^*.\tc.\view = v' - 1 \geq v + 1$.
        In this case, a parent of $p^*$ extends $p$ and $p^*.\tc.\highQC.\view \geq v$ (as ensured by $Y(v' - 1 \geq v + 1)$), which proves $X(v' > v + 1)$ in this scenario.
    \end{itemize}
    
    \item Suppose $p^*$ is a reproposal.
    In this case, $p^*$ is a reproposal of $p^*.\tc.\hightip$; note that $p^*.\tc.\view = v' - 1 \geq v + 1$.
    Due to $Z(v' - 1 \geq v + 1)$, we have that $p^*$ strictly extends proposal $p$ (as $p^*.\tc.\hightip$ does so) and $p^*.\tc.\hightip.\blockheader.\qc.\view \geq v$.
    Thus, $X(v' > v + 1)$ holds in this case as well.
\end{itemize}
As $X(v' > v + 1)$ holds in all possible scenarios, $X(v' > v + 1)$ is satisfied.

\medskip
\noindent $Y(v' > v + 1)$:
Let $\mathit{tc}$ be any TC such that $\mathit{tc}.\view = v' > v + 1$ and $\mathit{tc}.\highQC \neq \bot$.  
Since $\mathit{tc}$ is formed and the first correct validator to issue a timeout message must do so upon fully timing out from view $v' > v + 1$ (by \Cref{lemma:jovan_first_timeout}),  
\Cref{lemma:jovan_post_gst_correct_leader_enter} ensures that at least $f + 1$ correct validators must have updated their $\localhighQC$ variable to a QC for view $v$ before $\mathit{tc}$ was formed.  
Hence, $\mathit{tc}.\highQC.\view \geq v$. 
Moreover, due to the validity of $\mathit{tc}$, $\mathit{tc}.\highQC.\view < v'$.
We now consider two cases:
\begin{itemize}
    \item If $\mathit{tc}.\highQC.\view = v$, then by \Cref{lemma:jovan_post_gst_correct_leader_qc}, $\mathit{tc}.\highQC$ points to proposal $p$,  
    thereby establishing that $Y(v' > v + 1)$ holds in this case.

    \item If $\mathit{tc}.\highQC.\view > v$, then $Y(v' > v + 1)$ follows from $Q(\mathit{tc}.\highQC.\view)$.
    

\end{itemize}

\medskip
\noindent \underline{$Q(v' > v + 1)$:}
Let $\mathit{qc}$ be any QC with $\mathit{qc}.\view = v' > v + 1$.  
We distinguish between two cases:
\begin{itemize}
    \item Suppose that at least one correct validator contributing to $\mathit{qc}$ did so upon receiving a proposal in view $v'$.  
    In this case, the received proposal strictly extends proposal $p$, since $X(v' > v + 1)$ holds.  
    Hence, $Q(v' > v + 1)$ is satisfied.

    \item Suppose that no correct validator contributing to $\mathit{qc}$ did so upon receiving a proposal in view $v'$.  
    In this case, all correct votes issued in view~$v'$ are based on timeout messages, each reflecting the validator’s local tip at the time of issuance.  
By \Cref{lemma:jovan_post_gst_correct_leader_enter}, at least $f + 1$ correct validators must have updated their $\localhighQC$ variable to a QC from view~$v$ before $\mathit{qc}$ was formed.
Let $i$ be any such correct validator who both updated its $\localhighQC$ to a QC from view~$v$ prior to the formation of~$\mathit{qc}$ and contributed a vote to~$\mathit{qc}$.  
This implies that validator~$i$'s local tip~$T$ must belong to a view $\geq v + 1$, as the vote is cast for that proposal.  
(Otherwise, $i$ would have instead issued a timeout message containing its local QC and would not have voted.)
Given that $X(T.\view \in [v + 1, v'])$ holds, it follows that $\mathit{qc}$ must point to a proposal that strictly extends~$p$.

\end{itemize}
Thus, $Q(v' > v + 1)$ holds in both cases.

\medskip
\noindent \underline{$Z(v' > v + 1)$:}
Let $\mathit{tc}$ be any TC with $\mathit{tc}.\view = v' > v$ and $\mathit{tc}.\hightip \neq \bot$.
Since $\mathit{tc}$ is formed and the first correct validator to issue a timeout message must do so upon timing out from view $v' > v + 1$,  
\Cref{lemma:jovan_post_gst_correct_leader_enter} ensures that at least $f + 1$ correct validators must have updated their $\localhighQC$ variable to a QC for view $v$ before $\mathit{tc}$ was formed.  
Hence, $\mathit{tc}.\hightip.\view > v$; note that $\mathit{tc}.\hightip.\view \leq v'$.  
Therefore, $Z(v' > v + 1)$ holds due to $X(\mathit{tc}.\hightip.\view)$.
\qed
\end{proof}

Next, we derive an upper bound on the difference $\tau(v + 1) - \tau(v)$ for a post-GST view $v$.

\begin{lemma} \label{lemma:jovan_bound}
Let $v > 1$ be any post-GST view.  
Then, the following bound holds:
\[
\tau(v + 1) - \tau(v) \leq 2\viewduration + 2\delta.
\]
\end{lemma}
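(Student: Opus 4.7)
The plan is to trace, in the worst case, how the protocol progresses from $\tau(v)$ through a full timeout/TC cycle, and to argue that the two potential ``rounds'' of timeouts together with the corresponding message delays add up to at most $2\viewduration + 2\delta$. I would split the argument into stages via case analysis, where each case that escapes early only tightens the bound.

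First, let $i_0$ be a correct validator with $\tau_{i_0}(v) = \tau(v)$. I would consider the state at time $\tau(v) + \viewduration$. If $i_0$ has entered some view $> v$ by then, we immediately get $\tau(v+1) \leq \tau(v) + \viewduration$ and we are done. Otherwise $i_0$ remains in view $v$ for a full local timeout duration and so fully times out at $\tau(v) + \viewduration$, broadcasting a valid timeout message $\itimeout$ (\Cref{lemma:jovan_valid_timeout}). The key structural fact here is that $\itimeout$ carries a QC or TC from view $v-1$, which is by itself sufficient to drag any lagging correct validator up to view $v$.

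Second, since we are past GST, $\itimeout$ reaches every correct validator by $\tau(v) + \viewduration + \delta$. Upon receipt, the recipient invokes \IncrementView on $\itimeout.\lastcer$ (line~\ref{line:increment_view_receive_qc_timeout_message} of \Cref{Algorithm:Consensus-Execution_2}), which raises $\curView$ to at least $v$ if it was lower. Hence by $\tau(v) + \viewduration + \delta$, every correct validator satisfies $\curView \geq v$. If any of them is already past view $v$ at that moment, then $\tau(v+1) \leq \tau(v) + \viewduration + \delta$, and we are done. Otherwise, each of the $2f+1$ correct validators is in view $v$ at time $\tau(v) + \viewduration + \delta$ and must have entered it at some time $\tau_j \in [\tau(v),\, \tau(v) + \viewduration + \delta]$.

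Third, each such validator will then, assuming it stays in view $v$, fully time out by $\tau_j + \viewduration \leq \tau(v) + 2\viewduration + \delta$ and broadcast its own timeout message (line~\ref{line:broadcast_timeout_message} of \Cref{Algorithm:Consensus-Execution_1}); any validator that escapes view $v$ earlier only gives a stricter bound. Thus by $\tau(v) + 2\viewduration + \delta$, every correct validator has either broadcast a timeout message for view $v$ or moved past view $v$. In the latter situation we are again done; otherwise all $2f+1$ correct timeout messages reach every correct validator by $\tau(v) + 2\viewduration + 2\delta$. Upon collecting them, any correct validator not yet in view $v+1$ builds a TC via \HandleTimeout and enters view $v+1$ (line~\ref{line:increment_view_form_tc} of \Cref{Algorithm:Consensus-Execution_2}), establishing $\tau(v+1) \leq \tau(v) + 2\viewduration + 2\delta$.

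The main subtlety I expect to handle carefully is the second stage: I must justify that the broadcast of $\itimeout$ at $\tau(v) + \viewduration$ really does synchronize \emph{all} lagging correct validators to view $v$ within one network delay, rather than leaving some stranded in still earlier views. This is exactly the purpose of the $\lastcer$ component together with \Cref{lemma:jovan_valid_timeout}, and it is what lets me charge only one extra $\viewduration$ (plus $\delta$) for the ``second wave'' of timeouts rather than having to unroll the pacemaker recursively. Everything else is bookkeeping against $\tau_j + \viewduration$ and the $\delta$ message-propagation bound.
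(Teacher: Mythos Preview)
Your argument is correct and matches the paper's Case~2 (faulty leader) almost step for step: the first entrant's timeout at $\tau(v)+\viewduration$ synchronizes everyone to view~$v$ within~$\delta$, a second $\viewduration$ produces $2f{+}1$ timeout broadcasts, and one more~$\delta$ yields a TC (or QC) and entry to~$v{+}1$. The paper additionally splits off a Case~1 for a correct leader, invoking \Cref{lemma:jovan_issue_vote_time_tc} to get a tighter intermediate bound, but that split is not needed for the stated $2\viewduration+2\delta$ bound, so your uniform treatment is a clean simplification; the only cosmetic gap is that \HandleTimeout may return a QC rather than a TC, which you should mention but which does not affect the conclusion.
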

\begin{proof}
By contradiction, suppose that $\tau(v + 1) - \tau(v) > 2\viewduration + 2\delta$.
Let us now consider two different cases:
\begin{itemize}
    \item Suppose that the leader of view $v$ is correct.  
    By \Cref{lemma:jovan_issue_vote_time_tc}, every correct validator issues its vote for a proposal from view $v$ by time $\tau(v) + 3\delta + \recoveryduration + \delta < \tau(v) + 2\viewduration + 2\delta$ (otherwise, it would hold that $\tau(v + 1) < \tau(v) + 3\delta + \recoveryduration + \delta$, contradicting our initial assumption).  
    By time $\tau(v) + 3\delta + \recoveryduration + 2\delta$, the leader receives votes from all correct validators.  
    Once this occurs, the leader forms a QC (line~\ref{line:form_qc} of \Cref{Algorithm:Consensus-Execution_1}) for view $v$ and proceeds to enter view $v + 1$ (line~\ref{line:increment_view_form_qc} of \Cref{Algorithm:Consensus-Execution_1}).  
    Hence, $\tau(v + 1) - \tau(v) \leq 3\delta + \recoveryduration + 2\delta$, which again contradicts our starting assumption.
    This implies that the leader of $v$ cannot be correct.

    \item Suppose that the leader of view $v$ is faulty.
    First, observe that a correct validator $i$ that entered view $v$ at time $\tau(v)$ issues a valid timeout message at time $\tau(v) + \viewduration$, which carries either a QC or a TC from view $v - 1$.  
    (Otherwise, validator $i$ would have already advanced to a future view, contradicting the initial assumption.)  
    Consequently, all correct validators enter view $v$ by time $\tau(v) + \viewduration + \delta$, since upon receiving any valid timeout message associated with view $v$, a correct validator immediately enters that view (line~\ref{line:increment_view_receive_qc_timeout_message} of \Cref{Algorithm:Consensus-Execution_2}).  
    Subsequently, each correct validator issues its own timeout message by time $\tau(v) + 2\viewduration + \delta$ (as none have progressed to later views by this point), and all such messages are received by every correct validator by time $\tau(v) + 2\viewduration + 2\delta$.  
    At this stage, a correct validator can form a QC or a TC associated with view $v$ (line~\ref{line:form_tc_qc} of \Cref{Algorithm:Consensus-Execution_2}), enabling it to move to the next view $v + 1$ (line~\ref{line:increment_view_form_qc_timeout_message} or line~\ref{line:increment_view_form_tc} of \Cref{Algorithm:Consensus-Execution_2}).  
    Thus, we once again reach a contradiction with the initial assumption.
\end{itemize}
Since neither of the two cases can occur, our initial assumption must be false, thereby completing the proof of the lemma.
\qed
\end{proof}

Next, we show that every correct validator enters a post-GST view within $(f + 1) \cdot (2\viewduration + 2\delta) + 4\delta + \recoveryduration$ time after GST.

\begin{lemma} \label{lemma:jovan_liveness_weird}
Let $v_{\max}$ denote the highest view that is not a post-GST view.  
Then, every correct validator enters a view $> v_{\max}$ by time
\[
\text{GST} + (f + 1) \cdot (2\viewduration + 2\delta) + 4\delta + \recoveryduration.
\]
\end{lemma}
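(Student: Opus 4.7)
The plan is to combine a pigeonhole argument over consecutive post-GST views with successive applications of \Cref{lemma:jovan_bound}, and then to synchronize all correct validators at a post-GST view led by a correct validator via \Cref{lemma:jovan_enter_views_with_correct_leader}. First, I would upper bound $\tau(v_{\max}+1)$ in terms of $\GST$. By the definition of $v_{\max}$, every correct validator is in some view $v_i \leq v_{\max}+1$ at time $\GST$, having entered that view at some time $\tau_i(v_i) \leq \GST$. Hence within $\viewduration$ of real time after $\GST$, any correct validator still in a view $\leq v_{\max}$ either advances or fully times out and broadcasts a valid timeout message (cf.\ \Cref{lemma:jovan_first_timeout} and \Cref{lemma:jovan_valid_timeout}); within another $\delta$, these timeout messages reach every correct validator and, via line~\ref{line:increment_view_receive_qc_timeout_message} of \Cref{Algorithm:Consensus-Execution_2}, cascade through the embedded $\lastcer$ fields to pull every correct validator up to view $v_{\max}+1$ or beyond. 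This argument (mirroring the cascade used in the proof of \Cref{lemma:jovan_all_views_entered}) yields a bound on $\tau(v_{\max}+1)$ of roughly $\GST + 2\viewduration + 2\delta$.

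Next, I would apply \Cref{lemma:jovan_bound} iteratively $f$ times to the consecutive post-GST views $v_{\max}+1, v_{\max}+2, \ldots, v_{\max}+f+1$, obtaining
\[
\tau(v_{\max}+f+1) \;\leq\; \tau(v_{\max}+1) + f\,(2\viewduration + 2\delta).
\]
By the pigeonhole principle applied to the round-robin leader schedule, at least one of the $f+1$ consecutive views $v_{\max}+1, \ldots, v_{\max}+f+1$ has a correct leader; let $v^*$ be the smallest such view. Then $v^* > v_{\max}$ is a post-GST view with a correct leader, and $\tau(v^*) \leq \tau(v_{\max}+1) + f(2\viewduration + 2\delta)$.

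Finally, to guarantee that \emph{every} correct validator enters a view $> v_{\max}$ (not merely that some do), I would invoke \Cref{lemma:jovan_enter_views_with_correct_leader} at $v^*$. If no correct validator accepts a QC for $v^*$ by time $\tau(v^*) + 3\delta$, then all correct validators enter $v^*$ by that time, and since $v^* > v_{\max}$ we are done. Otherwise, the dissemination argument in the proof of \Cref{lemma:jovan_correct_leader_no_tc} ensures that the correct leader of $v^*$ (or the validator that first formed the QC) broadcasts the QC within $\delta$, so every lagging correct validator accepts it and enters view $v^*+1 > v_{\max}$; if the leader must first recover a missing block, this incurs at most an additional $\recoveryduration + \delta$ via \Cref{lemma:jovan_correct_leader_proposal_time}. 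Summing the three contributions yields the bound $\GST + (f+1)(2\viewduration + 2\delta) + 4\delta + \recoveryduration$ stated in the lemma.

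The main obstacle will be Step~1: bounding $\tau(v_{\max}+1)$ rigorously in the pre-GST mess. Since correct validators may be scattered across many views at $\GST$, with arbitrary pre-GST timeouts already fired and arbitrary $\localhighQC$ and $\lastTC$ values, the cascade argument must carefully account for how timeout-message forwarding and Bracha amplification (\Cref{Algorithm:Pacemaker}) combine to synchronize every correct validator to view $v_{\max}+1$ within a single timeout duration after $\GST$, without assuming any pre-GST synchrony.
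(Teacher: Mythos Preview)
Your three-step plan matches the paper's proof: bound $\tau(v_{\max}+1) - \GST \leq 2\viewduration + 2\delta$; iterate \Cref{lemma:jovan_bound} across the $f+1$ post-GST views $v_{\max}+1,\ldots,v_{\max}+f+1$ and use pigeonhole to locate a view $v^*$ with a correct leader; then let that leader pull every straggler up. Two small refinements bring your sketch closer to the paper's execution.

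\emph{Step 1 is simpler than you fear.} There is no need for a multi-hop ``cascade through the embedded $\lastcer$ fields'' across many pre-GST views. By definition of $v_{\max}$, some correct validator already sits in $v_{\max}$ before $\GST$ (and no one is higher). Under the contradiction hypothesis, that validator fully times out by $\GST + \viewduration$ and broadcasts a valid timeout message carrying a certificate from view $v_{\max}-1$; receiving that single message (line~\ref{line:increment_view_receive_qc_timeout_message} of \Cref{Algorithm:Consensus-Execution_2}) already lifts every correct validator to view $v_{\max}$ by $\GST + \viewduration + \delta$. A second $\viewduration$ then forces everyone to time out, and one more $\delta$ delivers $2f+1$ timeouts so a QC or TC for $v_{\max}$ forms. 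No induction over the scattered pre-GST view numbers is needed.

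\emph{Step 3 is cleaner via \Cref{lemma:jovan_correct_leader_proposal_time} alone.} Rather than branching on \Cref{lemma:jovan_enter_views_with_correct_leader} and then chasing QC dissemination through the argument of \Cref{lemma:jovan_correct_leader_no_tc}, the paper simply observes that the correct leader of $v^*$ enters $v^*$ (\Cref{lemma:jovan_correct_enters}) and issues a proposal, proposal request, or NE request by $\tau(v^*) + 3\delta + \recoveryduration$ (\Cref{lemma:jovan_correct_leader_proposal_time}). Any of these three messages, when received at time $\tau(v^*) + 4\delta + \recoveryduration$, triggers $\IncrementView$ at line~\ref{line:increment_view_proposal} of \Cref{Algorithm:Consensus-Execution_1} or lines~\ref{line:increment_view_proposal_request}--\ref{line:increment_view_ne_request} of \Cref{Algorithm:Consensus-Execution_2}, so every lagging validator enters $v^* > v_{\max}$ directly. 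This avoids the case split you propose and yields the $4\delta + \recoveryduration$ tail term in one stroke.
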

\begin{proof}
Since $v_{\max}$ is the highest non-post-GST view, it follows that all views $> v_{\max}$ are post-GST views.  
Consider any correct validator $i$.  
Among the views $\{v_{\max} + 1, v_{\max} + 2, \dots, v_{\max} + f + 1\}$, there exists a view $v_i$ with a correct leader.
By \Cref{lemma:jovan_correct_enters}, the leader of view~$v_i$ enters view~$v_i$.  
Then, by \Cref{lemma:jovan_correct_leader_proposal_time}, it issues its proposal by time $\tau(v_i) + 3\delta + \recoveryduration$, which is received by validator~$i$ no later than time $\tau(v_i) + 4\delta + \recoveryduration$.
Upon receiving this proposal, if $\curView$ at validator~$i$ is less than~$v_i$, then validator~$i$ enters view~$v_i > v_{\max}$.  
Otherwise, validator~$i$ has already entered a view higher than~$v_{\max}$.
Thus, it remains to bound $\tau(v_i) + 4\delta + \recoveryduration - \text{GST}$, which we do by first bounding $\tau(v_{\max} + 1) - \text{GST}$.

\smallskip
\noindent\emph{Bounding $\tau(v_{\max} + 1) - \text{GST}$.}
Assume, for the sake of contradiction, that 
\[
\tau(v_{\max} + 1) - \text{GST} > 2\viewduration + 2\delta.
\]
Let $i$ be a correct validator that enters view $v_{\max}$ at time $\tau(v_{\max})$.  
Validator $i$ issues a valid timeout message associated with view $v_{\max}$ by time $\text{GST} + \viewduration$, carrying either a QC or a TC from view $v_{\max} - 1$.
(Otherwise, validator $i$ would have already progressed to a future view, contradicting the initial assumption.)
Consequently, all correct validators enter view $v_{\max}$ by time $\text{GST} + \viewduration + \delta$, since upon receiving any valid timeout message for view $v_{\max}$, a correct validator immediately enters that view (if it has not already done so) at line~\ref{line:increment_view_receive_qc_timeout_message} of \Cref{Algorithm:Consensus-Execution_2}.  
Each correct validator then issues its own timeout message for view $v_{\max}$ by time $\text{GST} + 2\viewduration + \delta$ (as none have advanced to any future view), and these messages are received by every correct validator by time $\text{GST} + 2\viewduration + 2\delta$.  
At this point, a correct validator can form a QC or a TC associated with view $v_{\max}$ (line~\ref{line:form_tc_qc} of \Cref{Algorithm:Consensus-Execution_2}), enabling it to move to view $v_{\max} + 1$.  
Hence,
\[
\tau(v_{\max} + 1) - \text{GST} \leq 2\viewduration + 2\delta.
\]

\smallskip
\noindent \emph{Bounding $\tau(v_i) + 4\delta + \recoveryduration - \text{GST}$.}
Recall that $\tau(v_{\max} + f + 1) \geq \tau(v_{\max} + 1)$ by \Cref{lemma:jovan_enter_before}.
Finally, we bound the time by which validator $i$ enters its leader view $v_i$ and issues its proposal: 
\[
\begin{aligned}
\tau(v_i) + 4\delta + \recoveryduration - \text{GST}
\;\;&\leq 
\underbrace{\big(\tau(v_{\max} + 1) - \text{GST}\big)}_{\leq 2\viewduration + 2\delta}
+ 
\underbrace{\big(\tau(v_{\max} + 2) - \tau(v_{\max} + 1)\big)}_{\leq 2\viewduration + 2\delta} \\
&\quad + \dots + 
\underbrace{\big(\tau(v_{\max} + f + 1) - \tau(v_{\max} + f)\big)}_{\leq 2\viewduration + 2\delta}
+ 4\delta + \recoveryduration \\[1ex]
&\leq (f + 1) \cdot (2\viewduration + 2\delta) + 4\delta + \recoveryduration.
\end{aligned}
\]
This establishes the desired bound and completes the proof.
\qed
\end{proof}

We are now ready to prove that \sysname satisfies the liveness property.  
Specifically, we show that any block issued by a correct validator after time 
\[
\text{GST} + (f + 1) \cdot (2\viewduration + 2\delta) + 4\delta + \recoveryduration
\]
is eventually committed.
The need for this additional delay beyond $\text{GST}$ arises from the instability of the network immediately after GST, as established in \Cref{lemma:jovan_liveness_weird}.  
Prior to this point, the network may still exhibit pre-GST behavior, which can prevent correct proposals from being supported.
For instance, consider a scenario where a correct validator is the leader of a pre-GST view~$v$.  
If this validator issues its proposal exactly at time $\text{GST}$, it is possible that all other correct validators have already advanced to a higher view due to earlier timeouts.  
As a result, the proposal for view~$v$ cannot gather sufficient votes to be committed, despite being correctly formed.
This delay thus ensures that the system has fully transitioned out of any pre-GST instability before requiring commitment guarantees.  
One could interpret this as a semantic adjustment: the ``effective GST'' is the model-defined GST plus this bounded delay, accounting for residual view divergence.

\begin{theorem} [Liveness] \label{theorem:liveness}
After time $\textup{GST} + (f + 1) \cdot (2\viewduration + 2\delta) + 4\delta + \recoveryduration$, every block issued by a correct validator is committed by every correct validator.
\end{theorem}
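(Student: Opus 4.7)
The plan is to combine the structural guarantees established so far: (i) the delay bound of \Cref{lemma:jovan_liveness_weird}, which ensures that every view entered by any correct validator after the stated time is a post-GST view with a correct leader somewhere nearby; (ii) the chain-preservation property of \Cref{lemma:jovan_crucial_liveness}, which guarantees that once a correct leader has produced a proposal in a post-GST view, every subsequent valid proposal strictly extends it; and (iii) the pigeonhole argument of \Cref{lemma:jovan_consecutive_views}, which produces a window of three consecutive views all led by correct leaders.

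First, let $B$ be a block issued by a correct validator as part of a proposal $p$ in some view $v$, where $v$ is entered (and the proposal is issued) after $\text{GST} + (f + 1)(2\viewduration + 2\delta) + 4\delta + \recoveryduration$. By \Cref{lemma:jovan_liveness_weird}, every view $\ge v$ is a post-GST view, so \Cref{lemma:jovan_crucial_liveness} applies to $p$: every valid proposal in every view $v' > v$ strictly extends $p$. Next, I would apply \Cref{lemma:jovan_consecutive_views} to the window of $3f+3$ consecutive views starting at $v+1$, obtaining three consecutive views $v^*, v^*+1, v^*+2$ (all greater than $v$) whose leaders are all correct.

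The heart of the argument is then to show that the block carried by the view-$v^*$ proposal gets committed by every correct validator. For each of $v^*, v^*+1, v^*+2$, \Cref{lemma:jovan_correct_leader_no_tc} rules out TCs, and \Cref{lemma:jovan_post_gst_correct_leader_qc} forces any QC formed in these views to point to the correct leader's proposal. Iterating \Cref{lemma:jovan_two_leaders} across the three views, I would argue that the leader of $v^*+1$ enters $v^*+1$ with a QC for the view-$v^*$ proposal and then issues a fresh proposal extending it, and similarly the leader of $v^*+2$ issues a fresh proposal carrying a QC for the view-$(v^*+1)$ proposal. Thus when any correct validator processes the view-$(v^*+2)$ proposal (or, equivalently, its contained QC), it observes a proposal whose parent's QC is from view $v^*+1$ and whose grandparent's QC is from view $v^*$; the consecutive-view commit rule in \CommitSpecCommit\ then commits the grandparent, which is the view-$v^*$ block. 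Since the view-$v^*$ proposal strictly extends $p$ by \Cref{lemma:jovan_crucial_liveness}, and committing a block commits all its ancestors, $B$ is committed by every correct validator.

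The main obstacle I anticipate is the bookkeeping that guarantees that every correct validator actually processes the view-$(v^*+2)$ proposal (or otherwise observes the two consecutive QCs) and therefore invokes the commit rule: one must argue that no correct validator is indefinitely stuck below view $v^*+2$. This follows from combining \Cref{lemma:jovan_enter_views_with_correct_leader} with \Cref{lemma:jovan_bound} to propagate all correct validators through the window, but it must be done uniformly across all three correct-leader views, and care is needed because a correct validator may enter a view beyond $v^*+2$ via a QC it accepts indirectly (through a timeout message or a later proposal) rather than through direct receipt of the view-$(v^*+2)$ proposal. A clean way to handle this is to observe that any valid proposal issued in any view $\ge v^*+2$ must carry a QC chain that includes the two consecutive QCs needed by the commit rule, because by \Cref{lemma:jovan_crucial_liveness} (applied with $p$ being the view-$v^*$ proposal) every such later proposal strictly extends it, and by \Cref{lemma:jovan_correct_leader_no_tc} there are no TCs in $v^*,v^*+1,v^*+2$ to break the QC chain.
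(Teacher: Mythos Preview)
Your plan matches the paper's proof almost exactly: invoke \Cref{lemma:jovan_liveness_weird} to place $v$ in a post-GST view, use \Cref{lemma:jovan_crucial_liveness} so that every later proposal strictly extends $p$, pick three consecutive correct leaders via \Cref{lemma:jovan_consecutive_views}, and then chase the two consecutive QCs through the commit rule, handling separately the case where a validator has already overshot $v^*+2$.

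One small technical point: you cannot simply ``iterate \Cref{lemma:jovan_two_leaders} across the three views'' for the first step $v^* \to v^*+1$, because that lemma's hypothesis requires the leader of $v^*$ to have entered $v^*$ upon accepting a QC from $v^*-1$, and $v^*-1$ may have a faulty leader (so the leader of $v^*$ may well enter via a TC). The paper instead uses \Cref{lemma:jovan_correct_enters} together with \Cref{lemma:jovan_correct_leader_no_tc} and \Cref{lemma:jovan_post_gst_correct_leader_qc} to conclude that the leader of $v^*+1$ enters via a QC pointing to $p^*$; only for the $v^*+1 \to v^*+2$ step does the precondition of \Cref{lemma:jovan_two_leaders} hold. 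Your final paragraph already names all of these lemmas, so the fix is just to route the first transition through them rather than through \Cref{lemma:jovan_two_leaders}.
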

\begin{proof}
Consider any correct validator $i$ that issues a proposal $p$ after time 
\[
\textup{GST} + (f + 1) \cdot (2\viewduration + 2\delta) + 4\delta + \recoveryduration.
\]
By \Cref{lemma:jovan_liveness_weird}, validator $i$ must have entered a post-GST view by that time.  
Let $v$ denote the post-GST view in which the validator $i$ issues its proposal $p$.
By \Cref{lemma:jovan_consecutive_views}, within the sequence of views from $v + 1$ to $v + 3f + 3$, there exists a group of three consecutive views
\[
S = [v^*,\, v^* + 1,\, v^* + 2],
\]
all of which have correct leaders.  

Since the leader of view $v^*$ is correct, it issues a valid proposal $p^*$ that strictly extends proposal $p$ (by \Cref{lemma:jovan_crucial_liveness}).  
By \Cref{lemma:jovan_correct_enters}, the leader of view~$v^* + 1$ enters view~$v + 1$.  
Furthermore, by \Cref{lemma:jovan_correct_leader_no_tc}, this leader enters view~$v^* + 1$ only after accepting a QC from view~$v^*$.  
By \Cref{lemma:jovan_post_gst_correct_leader_qc}, this implies that the proposal issued by the leader of view~$v^* + 1$ must contain a QC for~$p^*$.
Furthermore, the leader of view $v^* + 2$ issues a proposal that contains a QC for the proposal from view $v^* + 1$ (by \Cref{lemma:jovan_two_leaders}), which itself includes a QC for $p^*$.  
Consequently, every correct validator $j$ receives the proposal from the leader of view $v^* + 2$ by time $\tau(v^* + 2) + 2\delta$.
We now distinguish two cases:
\begin{itemize}
    \item If the current view of validator $j$ is $\leq v^* + 2$ upon receiving this proposal, then validator $j$ commits proposal $p$ (as the proposal from view $v^* + 2$ is processed) at line~\ref{line:commit_spec_commit_proposal} of \Cref{Algorithm:Consensus-Execution_1}, which proves the theorem in this case.
    
    \item Otherwise, validator $j$ has already entered some view $v^{**} > v^* + 2$.  
    Since no TC can be formed for any view between $v^* + 2$ and $v^{**} - 1$ by time $\tau(v^* + 2) + 2\delta < \tau(v^* + 2) + \viewduration$, validator $j$ must have entered view $v^{**}$ upon accepting a QC from view $v^{**} - 1 \geq v^* + 2$.  
    This QC necessarily ``extends'' the proposal from view $v^*$ through a chain of QCs (with no TCs), ensuring that there exist at least two consecutive QCs extending the proposal from view $v^*$.
    Hence, $p$ is committed in this case as well, since the $\CommitSpecCommit$ function is invoked after every QC acceptance.
    \qed
\end{itemize}
\end{proof}

\subsection{Optimistic Responsiveness}

In this subsection, we prove that \sysname satisfies optimistic responsiveness: if all validators are correct, then during any post-GST time interval of duration $d$, each correct validator commits at least $\Omega(d / \delta)$ blocks, where $\delta$ denotes the actual network delay.
To this end, we first determine when the first post-GST view is entered by the first correct validator relative to GST.

\begin{lemma} \label{lemma:jovan_smallest_post_gst}
Let $v > 1$ be the smallest post-GST view.
Then, the following bound holds:
\[
\tau(v) - \text{GST} \leq 2\viewduration + 2\delta.
\]
\end{lemma}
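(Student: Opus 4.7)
The plan is to proceed by contradiction, essentially replaying the ``Bounding $\tau(v_{\max}+1) - \GST$'' sub-argument from the proof of \Cref{lemma:jovan_liveness_weird}, specialized to the case $v_{\max} = v - 1$. I would assume $\tau(v) - \GST > 2\viewduration + 2\delta$ and exhibit a deterministic domino chain that forces some correct validator into view $v$ by real time $\GST + 2\viewduration + 2\delta$.

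Since $v > 1$ is the smallest post-GST view, view $v - 1$ is not post-GST, so $\tau(v-1) < \GST$; by \Cref{lemma:jovan_all_views_entered} a correct validator $i$ does enter view $v-1$ at that time (in the boundary case $v = 2$, any correct validator that started before $\GST$ qualifies, and its subsequent timeout message carries the genesis QC). Under the contradiction hypothesis, $i$ does not enter any view $\geq v$ before $\GST + 2\viewduration + 2\delta$; in particular, its local $\viewduration$-timer for view $v-1$ must fire. I would then argue that this timer fires by real time $\GST + \viewduration$ and produces a valid timeout message for view $v-1$, which carries a QC or TC from view $v-2$ by \Cref{lemma:jovan_valid_timeout}.

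From here the rest is mechanical. By $\GST + \viewduration + \delta$, every correct validator has received this timeout message and, unless already at a later view, has entered view $v-1$ via the $\IncrementView$ call on line~\ref{line:increment_view_receive_qc_timeout_message} of \Cref{Algorithm:Consensus-Execution_2}. Reapplying the timer argument within view $v-1$, every correct validator broadcasts its own timeout message for $v-1$ by $\GST + 2\viewduration + \delta$; hence each correct validator accumulates $2f + 1$ such messages by $\GST + 2\viewduration + 2\delta$, at which point the $\HandleTimeout$ routine constructs a QC or TC for view $v-1$ and causes the validator to enter view $v$. This yields $\tau(v) \leq \GST + 2\viewduration + 2\delta$, contradicting the assumption.

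The only subtle point I expect to have to justify carefully is why $i$'s $\viewduration$-timer fires by real time $\GST + \viewduration$, given that $i$ entered view $v-1$ before $\GST$ with a possibly drifted clock. The resolution is that any local time already accumulated before $\GST$ only makes the timer fire earlier, while after $\GST$ the local clock advances at real rate; hence the fire time in real time is at most $\GST + \viewduration$. Everything downstream is then a direct invocation of \Cref{lemma:jovan_valid_timeout}, \Cref{lemma:jovan_timeout_message_enter_before}, and the pacemaker logic already analyzed in earlier lemmas.
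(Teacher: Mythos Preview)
Your proposal is correct and takes essentially the same approach as the paper: both proceed by contradiction, fix a correct validator in view $v-1$ (which is not post-GST), argue its timeout fires by $\GST + \viewduration$, use the carried QC/TC to pull all correct validators into view $v-1$ within one more $\delta$, then re-apply the timer to get $2f+1$ timeout messages for $v-1$ by $\GST + 2\viewduration + 2\delta$. The paper itself notes this is just a replay of the first part of \Cref{lemma:jovan_liveness_weird}, exactly as you say; your added justification of why the timer fires by real time $\GST + \viewduration$ despite pre-GST clock drift is a detail the paper leaves implicit.
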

\begin{proof}
(This statement was already established in the first part of the proof of \Cref{lemma:jovan_liveness_weird}.  
For completeness, we provide a self-contained proof here.)
Suppose, for the sake of contradiction, that $\tau(v) - \text{GST} > 2\viewduration + 2\delta$.  
By \Cref{lemma:jovan_enter_before}, no correct validator enters any view greater than $v - 1$ by time $\text{GST} + 2\viewduration + 2\delta$.

Note first that $v - 1$ cannot be a post-GST view, since $v$ is the smallest post-GST view.  
Hence, $\tau(v - 1) < \text{GST}$.  
Let $i$ be a correct validator that enters view $v - 1$ at time $\tau(v - 1)$.  
Validator $i$ issues its timeout message for view $v - 1$ by time $\text{GST} + \viewduration$, as it does not enter any view greater than $v - 1$ by that time.  
By \Cref{lemma:jovan_valid_timeout}, this timeout message must carry either a QC or a TC from view $v - 2$.  
Upon receiving this timeout message (by time $\text{GST} + \viewduration + \delta$), each correct validator enters view $v - 1$, unless it has already done so.  

Subsequently, every correct validator (either fully or prematurely) times out from view $v - 1$ by time $\text{GST} + \viewduration + \delta + \viewduration$, since no correct validator enters any view higher than $v - 1$ before that time.  
Finally, by time $\text{GST} + 2\viewduration + 2\delta$, every correct validator has received timeout messages from all other correct validators and forms either a QC or a TC for view $v - 1$ (line~\ref{line:form_tc_qc} of \Cref{Algorithm:Consensus-Execution_2}), thereby entering view $v$ (line~\ref{line:increment_view_form_qc_timeout_message} or line~\ref{line:increment_view_form_tc} of \Cref{Algorithm:Consensus-Execution_2}).  
This contradicts our initial assumption, completing the proof.
\qed
\end{proof}




We are now ready to establish the optimistic responsiveness property.

\begin{theorem}[Optimistic responsiveness]
\label{thm:optimistic_responsiveness}
If all validators are correct, then during any post-GST time interval $[\tau_1, \tau_2]$ such that $\tau_1 > \text{GST} + 4\viewduration + 4\delta$ and $\tau_2 - \tau_1 \in \omega(\delta)$, each correct validator commits $\Omega\!\left(\frac{\tau_2 - \tau_1}{\delta}\right)$ blocks.
\end{theorem}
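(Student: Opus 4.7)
The plan is to exploit the fact that, once all validators are correct and we are in the post-GST regime, each view advances via a quorum certificate in $O(\delta)$ time, so an interval of length $d = \tau_2 - \tau_1$ contains $\Omega(d/\delta)$ views, almost all of which translate into a committed block.

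First I would pin down the steady state. Let $v^*$ denote the smallest post-GST view. By \Cref{lemma:jovan_smallest_post_gst}, $\tau(v^*) \leq \text{GST} + 2\viewduration + 2\delta < \tau_1$, so view $v^*$ has been entered before $\tau_1$. Because all validators are correct, every leader is correct, and \Cref{lemma:jovan_correct_leader_no_tc} implies that no TC forms for any view $v \geq v^*$. Consequently, for every $v \geq v^* + 1$, the leader of view $v$ enters $v$ solely by accepting a QC for $v - 1$ --- exactly the precondition of \Cref{lemma:jovan_two_leaders}. Iterating that lemma over successive pairs $(v, v+1)$ with $v \geq v^* + 1$ yields
\[
\tau(v+1) \leq \tau(v) + 5\delta \quad \text{for all } v \geq v^* + 1.
\]

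Second, I would count views in the interval. Starting from any view $v_1 \geq v^* + 1$ whose entry time lies within $O(\delta)$ of $\tau_1$, the per-view bound gives $\tau(v_1 + k) \leq \tau(v_1) + 5k\delta$; the largest $k$ satisfying $\tau(v_1 + k) \leq \tau_2$ is therefore at least $(\tau_2 - \tau_1 - O(\delta))/(5\delta) = \Omega((\tau_2 - \tau_1)/\delta)$. The hypothesis $\tau_2 - \tau_1 \in \omega(\delta)$ ensures this lower bound is non-trivial.

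Third, I would convert views into commits. For any three consecutive steady-state views $v, v+1, v+2$, \Cref{lemma:jovan_post_gst_correct_leader_qc} guarantees that the QC formed in view $v$ points to the proposal $p_v$ issued in view $v$, and likewise the QC from view $v+1$ points to a proposal that directly extends $p_v$ via that QC. When a correct validator processes the proposal from view $v+2$ (carrying the QC from view $v+1$), the commit branch of \CommitSpecCommit fires and irrevocably commits $p_v$'s block; by the per-view bound, this occurs within $O(\delta)$ of $\tau(v)$ at every correct validator. Hence each of the $\Omega((\tau_2 - \tau_1)/\delta)$ views enumerated above contributes a distinct committed block, except for an $O(1)$ boundary loss near $\tau_2$ for views whose commits are triggered just after $\tau_2$.

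The main obstacle is the boundary accounting: one must show that the transition from the possibly TC-entered view $v^*$ into QC-only steady-state progression costs only $O(1)$ slack (this uses \Cref{lemma:jovan_correct_leader_no_tc} on $v^*$ together with \Cref{lemma:jovan_bound} for the single TC-to-QC transition), and that only $O(1)$ commits get pushed past $\tau_2$. Both boundary effects are absorbed because $\tau_2 - \tau_1 \in \omega(\delta)$, leaving $\Omega((\tau_2 - \tau_1)/\delta)$ commits strictly inside $[\tau_1, \tau_2]$.
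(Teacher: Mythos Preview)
Your proposal is correct and follows essentially the same approach as the paper: use \Cref{lemma:jovan_smallest_post_gst} to anchor a post-GST view before $\tau_1$, invoke \Cref{lemma:jovan_correct_leader_no_tc} to rule out TCs so that every subsequent leader enters via a QC, then iterate \Cref{lemma:jovan_two_leaders} to get the $O(\delta)$-per-view bound and count $\Omega((\tau_2-\tau_1)/\delta)$ views, with consecutive QCs triggering commits via $\CommitSpecCommit$ modulo an $O(1)$ boundary loss absorbed by the $\omega(\delta)$ hypothesis. The paper packages the view-counting step by defining an explicit set $\mathcal{V}=\{v:\tau(v)\geq\tau_1,\ \tau(v)+2\delta\leq\tau_2\}$ and adds a short case split for validators that have already advanced past the last view of $\mathcal{V}$, but the substance is the same.
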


\begin{proof}
Let $v_{\max}$ denote the largest post-GST view such that $\tau(v_{\max}) < \tau_1$.  
Note that $v_{\max}$ is well defined by \Cref{lemma:jovan_smallest_post_gst}.
Next, let $\mathcal{V}$ denote the set of views that ``start'' during the time interval $[\tau_1, \tau_2]$:
\[
\mathcal{V} = \{\, v \mid \tau(v) \geq \tau_1 \text{ and } \tau(v) + 2\delta \leq \tau_2 \,\}.
\]

First, we show that $\tau(v_{\max} + 1) < \tau_1 + 5\delta$, i.e., that the smallest view of $\mathcal{V}$ starts at most $5\delta$ time after $\tau_1$.
Let $v_0$ denote the smallest post-GST view.
By \Cref{lemma:jovan_smallest_post_gst}, $\tau(v_0) \leq \text{GST} + 2\viewduration + 2\delta$.
Moreover, \Cref{lemma:jovan_bound} proves that $\tau(v_0 + 1) \leq \tau(v_0) + 2\viewduration + 2\delta \leq \text{GST} + 4\viewduration + 4\delta < \tau_1$.
Hence, $v_{\max} \geq v_0 + 1$, which implies that view $v_{\max} - 1$ is a post-GST view; as all validators are correct, its leader is correct.
By \Cref{lemma:jovan_correct_leader_no_tc}, no TC can be formed for view $v_{\max} - 1$, and thus the leader of view $v_{\max}$ entered view $v_{\max}$ upon accepting a QC for view $v_{\max} - 1$.
Therefore, \Cref{lemma:jovan_two_leaders} applies to the pair of views $v_{\max}$ and $v_{\max} + 1$: the leader of view $v_{\max} + 1$ enters view $v_{\max} + 1$ upon accepting a QC for view $v_{\max}$ by time $\tau(v_{\max}) + 3\delta + 2\delta$.
Consequently, $\tau(v_{\max} + 1) \leq \tau(v_{\max}) + 5\delta < \tau_1 + 5\delta$; recall that $\tau(v_{\max} + 1) \geq \tau_1$ by the definition of $v_{\max}$.
As $\tau_2 - \tau_1 \in \omega(\delta)$, the smallest view $v'$ of $\mathcal{V}$ is indeed $v' = v_{\max} + 1$.
Note that the leader of view $v'$ enters view $v'$ upon accepting a QC for view $v' - 1 = v_{\max}$.

By \Cref{lemma:jovan_two_leaders}, the number of views in~$\mathcal{V}$ satisfies
\[
|\mathcal{V}| \in \Omega\!\left(\frac{\tau_2 - \tau_1}{\delta}\right).
\]
Specifically, the smallest view of $\mathcal{V}$ starts by time $\tau_1 + 5\delta$ and, for each view~$v \in \mathcal{V} \setminus{ \{ \text{the largest view of $\mathcal{V}$} \}}$, \Cref{lemma:jovan_two_leaders} establishes that the time difference between consecutive views is bounded, i.e., $\tau(v + 1) - \tau(v) \in O(\delta)$; recall that, by \Cref{lemma:jovan_two_leaders}, the leader of each view of $\mathcal{V}$ enters that view upon accepting a QC for the preceding view, which ensures that \Cref{lemma:jovan_two_leaders} indeed applies to every pair of consecutive views of $\mathcal{V}$.

Let $v^*$ denote the largest view in $\mathcal{V}$.  
By \Cref{lemma:jovan_two_leaders} (and the assumption that all validators are correct), all correct validators receive the proposal $p^*$ issued by the leader of view $v^*$ by time $\tau(v^*) + 2\delta \leq \tau_2$.  
Furthermore, by \Cref{lemma:jovan_correct_leader_no_tc}, all proposals issued in views within $\mathcal{V}$ (including $v^*$) are ``connected'' through QCs, since no TC can be formed for any view in $\mathcal{V}$.

We now distinguish two cases depending on the state of any correct validator $j$ upon receiving the proposal $p^*$ from the leader of view $v^*$ (which occurs by time $\tau_2$):
\begin{itemize}
    \item If the current view of validator $j$ is at most $v^*$ when the proposal is received, then $j$ commits the grandparent of $p^*$ (as the proposal from view $v^*$ is processed) at line~\ref{line:commit_spec_commit_proposal} of \Cref{Algorithm:Consensus-Execution_1}.
    Therefore, all proposals issued in the views of $\mathcal{V}$---except for the last two---are committed.
    Since $|\mathcal{V}| \in \Omega\left(\frac{\tau_2 - \tau_1}{\delta}\right)$ and we commit $|\mathcal{V}| - 2$ blocks, the number of committed blocks is $|\mathcal{V}| - 2 \in \Omega\left(\frac{\tau_2 - \tau_1}{\delta}\right) - 2 = \Omega\left(\frac{\tau_2 - \tau_1}{\delta}\right)$, where the last equality holds because $\tau_2 - \tau_1 \in \omega(\delta)$.
    
    \item Otherwise, validator $j$ has already entered some view $v^{**} > v^*$.  
    Since no TC can be formed for any view greater than or equal to $v^*$, validator $j$ must have entered view $v^{**}$ upon accepting a QC from view $v^{**} - 1 \geq v^*$.  
    This QC necessarily ``extends'' the proposal from view $v^*$ through a continuous chain of QCs (without any TCs).
    Hence, the grandparent of $p^*$ is committed in this case as well, since the $\CommitSpecCommit$ function is invoked after every QC acceptance.  
    This, in turn, implies that all proposals issued in the views of $\mathcal{V}$---except for the last one---are committed.
    By the same reasoning as in the previous case, the number of committed blocks is $|\mathcal{V}| - 1 \in \Omega\left(\frac{\tau_2 - \tau_1}{\delta}\right)$.
    \qed
\end{itemize}
\end{proof}

As shown, \Cref{thm:optimistic_responsiveness} captures the case where all validators are correct.
However, we can state a stronger guarantee.
Assume there is a sequence $G = [v, v + 1, \dots, v + x - 1]$ of $x$ consecutive post-GST views, each led by a correct validator.
Then, by the time a correct validator receives the proposal issued by the leader of view $v + x - 1$, it commits all blocks proposed in the views of $G$, except possibly those from the last two views $v + x - 2$ and $v + x - 1$.
Moreover, this occurs within $O(x\delta)$ time from the moment the first correct validator enters view $v + 1$.
Crucially, if the leader of view $v$ enters view $v$ upon accepting a QC (rather than a TC), the same guarantee is achieved within $O(x\delta)$ time from the moment the first correct validator enters view $v$.
Note that this guarantee does not require all validators to be correct.

\begin{theorem}[Optimistic responsiveness; stronger guarantee]
Let $G$ be a sequence of $x$ consecutive post-GST views $[v, v + 1, \dots, v + x - 1]$, each led by a correct validator.
Consider any correct validator $i$, and let $\tau_i^{v + x - 1}$ denote the time at which $i$ receives the proposal issued by the leader of view $v + x - 1$.
Then, by time $\tau_i^{v + x - 1}$, validator $i$ commits all blocks proposed in the views of $G$, except possibly those from the last two views $v + x - 2$ and $v + x - 1$.
Furthermore, the time difference $\tau_i^{v + x - 1} - \tau(v + 1)$ lies in $O(x\delta)$.
Finally, if the leader of view $v$ enters view $v$ upon accepting a QC associated with view $v - 1$, then $\tau_i^{v + x - 1} - \tau(v)$ also lies in $O(x\delta)$.
\end{theorem}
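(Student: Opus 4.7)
The plan is to build on the machinery already developed for the preceding theorem, strengthening it from the all-correct case to the sequence-of-correct-leaders case. The key observation is that \Cref{lemma:jovan_correct_leader_no_tc} applies to every view in $G$, so no TC can be formed for any view in $[v, v+x-2]$ (the leader of $v+x-1$ may still be followed by a faulty leader, so the argument handles that view separately). Consequently, in every view $v+k$ with $1 \le k \le x-1$, the correct leader must have entered $v+k$ by accepting a QC for view $v+k-1$ (the only remaining option, since no TC for that view exists). This lets us apply \Cref{lemma:jovan_two_leaders} iteratively to the consecutive pairs $(v+k, v+k+1)$ for $k \ge 1$, yielding $\tau(v+k+1) - \tau(v+k) \in O(\delta)$. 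Telescoping these bounds gives $\tau(v+x-1) - \tau(v+1) \in O(x\delta)$, and since validator $i$ receives the proposal of view $v+x-1$ within $\delta$ of its issuance (which is itself $O(\delta)$ after $\tau(v+x-1)$ by \Cref{lemma:jovan_correct_leader_proposal_time}), the main timing claim $\tau_i^{v+x-1} - \tau(v+1) \in O(x\delta)$ follows.

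For the commitment claim, I would argue as in the second half of the proof of \Cref{thm:optimistic_responsiveness}. Because no TC is formed in views $v, v+1, \dots, v+x-2$, every proposal $p_{v+k}$ for $k \in [1, x-1]$ must carry a QC for $p_{v+k-1}$ (by \Cref{lemma:jovan_post_gst_correct_leader_qc} applied to view $v+k-1$, so the QC is unique and points to the correct leader's proposal). Thus the proposals form an unbroken chain of consecutive-view QCs. When validator $i$ processes $p_{v+x-1}$ at time $\tau_i^{v+x-1}$, I consider two cases: if $i$'s current view is at most $v+x-1$, then $i$ invokes $\CommitSpecCommit$ on the QC contained in $p_{v+x-1}$, which satisfies the consecutive-view condition and commits the grandparent $p_{v+x-3}$; since committing a block commits all of its ancestors, this commits the blocks of views $v, v+1, \dots, v+x-3$. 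Otherwise, $i$ has already advanced past view $v+x-1$, which can only happen via a QC chain extending the view-$v+x-1$ proposal (again because no TC exists in any view of $G$), so the same commitments have already occurred by $\tau_i^{v+x-1}$.

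For the final statement, the added hypothesis $p_v.\tc = \bot$ and $p_v.\nec = \bot$ forces $p_v$ to be a happy-path fresh proposal with $p_v.\block.\qc.\view = v-1$, which means the correct leader of view $v$ entered $v$ by accepting a QC from view $v-1$ (inspecting the calls to $\NextLeaderReceivesQC$ vs.\ $\NextLeaderReceivesTC$ in \Mref{Algorithm:Consensus-Execution_1} and \Mref{Algorithm:Consensus-Execution_2}). This is exactly the hypothesis of \Cref{lemma:jovan_two_leaders} applied to the pair $(v, v+1)$, giving $\tau(v+1) - \tau(v) \in O(\delta)$. Combining with the previous paragraph, $\tau_i^{v+x-1} - \tau(v) = (\tau_i^{v+x-1} - \tau(v+1)) + (\tau(v+1) - \tau(v)) \in O(x\delta) + O(\delta) = O(x\delta)$.

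The main obstacle I anticipate is bookkeeping around the $\recoveryduration$ term: even though all leaders in $G$ are correct, the leader of view $v$ might have entered $v$ via a TC and invoked recovery, which could inflate $\tau(v+1) - \tau(v)$ beyond $O(\delta)$. That is precisely why the statement relaxes the bound to start at $\tau(v+1)$ in general, and only tightens it to $\tau(v)$ under the additional hypothesis that rules out recovery in view $v$. I would need to carefully verify that \Cref{lemma:jovan_two_leaders} genuinely yields an $O(\delta)$ bound whenever its QC-entry hypothesis holds (it does, as the lemma gives $\tau_L \le \tau(v+1) + \delta$ and $\tau_L$ is the entry time of the next leader), and that no TC-induced recovery delay creeps in along the chain $[v+1, v+x-1]$, which is guaranteed by the repeated application of \Cref{lemma:jovan_correct_leader_no_tc}.
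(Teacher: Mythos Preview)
Your proposal is correct and follows essentially the same approach as the paper's proof: both use \Cref{lemma:jovan_correct_leader_no_tc} to rule out TCs in every view of $G$, iterate \Cref{lemma:jovan_two_leaders} over consecutive pairs to telescope $\tau(v+x-1)-\tau(v+1)\in O(x\delta)$, and finish with the same case split on $i$'s current view when $p_{v+x-1}$ arrives. Your argument for the final clause (that $p_v.\tc=\bot$ forces the leader of $v$ to have entered via a QC, so \Cref{lemma:jovan_two_leaders} applies to the pair $(v,v+1)$) is in fact more explicit than the paper's, which simply asserts the applicability of the lemma.

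One small imprecision to patch: in your ``otherwise'' branch, where $i$ has already advanced to some $v^{**}>v+x-1$, the justification ``no TC exists in any view of $G$'' only covers views through $v+x-1$; it says nothing about views $v+x,\dots,v^{**}-1$, which lie outside $G$. The paper closes this gap via \Cref{lemma:jovan_tc_time}: since $\tau_i^{v+x-1}\le \tau(v+x-1)+O(\delta)<\tau(v')+\viewduration$ for every $v'\ge v+x-1$, no TC for any such $v'$ can have formed by time $\tau_i^{v+x-1}$, so the advance must indeed be via a pure QC chain through $p_{v+x-1}$.
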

\begin{proof}
First, observe that the leader of view $v + 1$ enters view $v + 1$ upon accepting a QC for view $v$.  
By \Cref{lemma:jovan_correct_enters}, the leader of view $v + 1$ indeed enters view $v + 1$.  
Since a validator can only enter a new view upon accepting either a QC or a TC from the previous view, and \Cref{lemma:jovan_correct_leader_no_tc} ensures that no TC can be formed for view $v$, it follows that the leader of view $v + 1$ must have entered view $v + 1$ upon accepting a QC for view v.

Next, for each view $v' \in [v + 1, v + 2, \dots, v + x - 2]$, \Cref{lemma:jovan_two_leaders} shows that the time gap between consecutive views is bounded, i.e., $\tau(v' + 1) - \tau(v') \in O(\delta)$.
Consequently, $\tau(v + x - 1) - \tau(v + 1) \in O(x\delta)$.
Furthermore, if the leader of view $v$ enters view $v$ upon accepting a QC associated with view $v - 1$, then \Cref{lemma:jovan_two_leaders} also proves that $\tau(v + 1) - \tau(v) \in O(\delta)$, which implies that $\tau(v + x - 1) - \tau(v) \in O(x\delta)$.

Let $v^* = v + x - 1$ denote the largest view in $G$.  
By \Cref{lemma:jovan_two_leaders} (and the assumption that all leaders of views from $G$ are correct), each correct validator $i$ receives the proposal $p^*$ issued by the leader of view $v^*$ by time $\tau(v^*) + 2\delta$.
(Note that $\tau(v^*) + 2\delta - \tau(v + 1) \in O(x\delta)$ and, if the leader of view $v$ enters view $v$ upon accepting a QC associated with view $v - 1$, $\tau(v^*) + 2\delta - \tau(v) \in O(x\delta)$.)
Furthermore, by \Cref{lemma:jovan_correct_leader_no_tc}, all proposals issued in views within $G$ (including $v^*$) are ``connected'' through QCs, since no TC can be formed for any view in $G$.

We now distinguish two cases depending on the state of validator $i$ upon receiving the proposal $p^*$ from the leader of view $v^*$ (which occurs by time $\tau(v^*) + 2\delta$):
\begin{itemize}
    \item If the current view of validator $i$ is at most $v^*$ when the proposal is received, then $i$ commits the grandparent of $p^*$ (as the proposal from view $v^*$ is processed) at line~\ref{line:commit_spec_commit_proposal} of \Cref{Algorithm:Consensus-Execution_1}.
    Therefore, all proposals issued in the views of $G$---except for the last two---are committed.
    
    \item Otherwise, validator $i$ has already entered some view $v^{**} > v^*$.  
    Since no TC can be formed for any view $\geq v^*$ by time $\tau(v^*) + 2\delta$ (by \Cref{lemma:jovan_tc_time}), validator $i$ must have entered view $v^{**}$ upon accepting a QC from view $v^{**} - 1 \geq v^*$.  
    This QC necessarily ``extends'' the proposal from view $v^*$ through a continuous chain of QCs (without any TCs).
    Hence, the grandparent of $p^*$ is committed in this case as well, since the $\CommitSpecCommit$ function is invoked after every QC acceptance, which also proves that all proposals issued in the views of $G$---except for the last two---are committed.
    \qed
\end{itemize}
\end{proof}
\subsection{Leader Fault Isolation}

In this subsection, we prove that \sysname satisfies the leader fault isolation property, which states that, after GST, if a view $v$ led by a Byzantine leader is ``sandwiched'' between two views led by correct leaders, then the duration of view $v$ is at most $O(\delta) + \viewduration$.
To this end, we first show that all correct validators ``complete'' any post-GST view led by a correct leader within $O(\delta)$ time of one another.

\begin{lemma} \label{lemma:jovan_fault_isolation}
Let $v > 1$ be any post-GST view with a correct leader.
Then, (1) all correct validators enter a view greater than $v$ by time $\tau(v + 1) + 2\delta$, or (2) $\tau(v + 2) \leq \tau(v + 1) + 2\delta$.
\end{lemma}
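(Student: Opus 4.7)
The plan is to trace how the QC for view~$v$---which by~\Cref{lemma:jovan_correct_leader_no_tc} is the only mechanism by which a correct validator can leave view~$v$---reaches every correct validator within $2\delta$ of $\tau(v+1)$. Let $i$ be the first correct validator to enter view~$v+1$, doing so at time $\tau(v+1)$, and let $\mathit{qc}$ be the QC for view~$v$ that $i$ accepts at that moment (such a $\mathit{qc}$ is guaranteed to exist by~\Cref{lemma:jovan_correct_leader_no_tc}). I would begin with a case analysis on the exact line at which $i$ accepts $\mathit{qc}$, spanning all points in~\Cref{Algorithm:Consensus-Execution_1} and~\Cref{Algorithm:Consensus-Execution_2} at which $\IncrementView$ can be invoked with a QC (i.e., line~\ref{line:increment_view_proposal}, line~\ref{line:increment_view_form_qc}, line~\ref{line:increment_view_receive_qc}, or line~\ref{line:increment_view_receive_qc_next_leader} of~\Cref{Algorithm:Consensus-Execution_1}, or line~\ref{line:increment_view_receive_qc_timeout_message} or line~\ref{line:increment_view_form_qc_timeout_message} of~\Cref{Algorithm:Consensus-Execution_2}).

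In each case I would argue that the correct leader $L$ of view~$v$ possesses $\mathit{qc}$ by time $\tau(v+1) + \delta$: either $L = i$ and $L$ itself formed $\mathit{qc}$, or $L$ broadcast $\mathit{qc}$ even before $\tau(v+1)$, or $i$ (respectively, the sender of the message through which $i$ received $\mathit{qc}$) forwards $\mathit{qc}$ to $L$ within $\delta$ via the rules at line~\ref{line:forward_qc_proposal_current_leader} of~\Cref{Algorithm:Consensus-Execution_1} or line~\ref{line:forward_qc_timeout_message_current_leader} of~\Cref{Algorithm:Consensus-Execution_2}. Since $L$ invokes $\textsc{CurrentLeaderReceivesQC}$ upon accepting any QC for view~$v$ (lines~\ref{line:disseminate_backup_qc_proposal},~\ref{line:disseminate_backup_qc},~\ref{line:disseminate_backup_qc_2} of~\Cref{Algorithm:Consensus-Execution_1}, and line~\ref{line:disseminate_backup_qc_timeout} of~\Cref{Algorithm:Consensus-Execution_2}), $L$ broadcasts $\mathit{qc}$ by time $\tau(v+1) + \delta$, and this broadcast reaches every correct validator by $\tau(v+1) + 2\delta$. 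Upon receipt, each correct validator that has not already progressed past view~$v$ enters view~$v+1$ via line~\ref{line:increment_view_receive_qc} of~\Cref{Algorithm:Consensus-Execution_1}.

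Finally, at time $\tau(v+1) + 2\delta$ I would distinguish two possibilities: if some correct validator is already in view $\geq v+2$, then $\tau(v+2) \leq \tau(v+1) + 2\delta$ and case~(2) of the lemma holds; otherwise every correct validator is in view $\geq v+1$, proving case~(1). The main obstacle is the subcase in which $i$ accepts $\mathit{qc}$ by forming it locally from tip votes contained in $2f+1$ timeout messages (line~\ref{line:increment_view_form_qc_timeout_message} of~\Cref{Algorithm:Consensus-Execution_2}), since in this branch no pseudocode rule explicitly forwards a \emph{self-formed} QC back to $L$. Resolving it requires noting that the very same $2f+1$ timeout messages, which are broadcast by their senders, must also reach $L$ within $\delta$; $L$ then constructs $\mathit{qc}$ through its own invocation of $\HandleTimeout$ and disseminates it via $\textsc{CurrentLeaderReceivesQC}$, so the $2\delta$ bound still holds.
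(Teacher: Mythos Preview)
Your overall plan---trace the QC for view~$v$ from the first validator $i$ to enter view~$v{+}1$ back to the correct leader $L$ of view~$v$, then use $L$'s backup-QC broadcast to reach everyone within $2\delta$---is exactly the paper's approach, and your case analysis over the acceptance points in \Cref{Algorithm:Consensus-Execution_1} and \Cref{Algorithm:Consensus-Execution_2} mirrors the paper's almost line for line.

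The one genuine gap is in the subcase you flag as ``the main obstacle,'' namely when $i$ accepts the QC at line~\ref{line:increment_view_form_qc_timeout_message} after assembling it from tip votes carried in $2f{+}1$ timeout messages. Your proposed fix---that the same $2f{+}1$ timeout messages are broadcast by their senders and therefore also reach $L$---does not go through: up to $f$ of those $2f{+}1$ senders may be Byzantine, and Byzantine validators are under no obligation to actually broadcast. They may deliver their timeout messages to $i$ alone and withhold them from $L$, so $L$ need not receive $2f{+}1$ timeout messages and need not be able to rebuild the QC via its own $\HandleTimeout$ invocation.

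The paper sidesteps this subcase entirely rather than trying to repair it. It invokes the \emph{second} clause of \Cref{lemma:jovan_correct_leader_no_tc}: at least $f{+}1$ correct validators never issue a timeout message for view~$v$. Since there are only $2f{+}1$ correct validators and at most $f$ Byzantine ones, at most $2f$ timeout messages for view~$v$ can ever exist, so no validator can collect the $2f{+}1$ timeout messages needed to trigger line~\ref{line:increment_view_form_qc_timeout_message}. This eliminates the case outright. You should replace your timeout-forwarding argument with this impossibility argument; once you do, the remainder of your proof is correct and essentially identical to the paper's.
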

\begin{proof}
Let $L$ denote the leader of view $v$.
By \Cref{lemma:jovan_correct_leader_no_tc}, no TC can be formed for view $v$, and at least $f + 1$ correct validators do not issue timeout messages associated with this view.  
Hence, any correct validator $i$ that enters view $v + 1$ at time $\tau(v + 1)$ must have done so upon accepting a QC associated with view $v$. 
Observe that, since at least $f + 1$ correct validators did not emit timeout messages for view $v$, validator $i$ could not have accepted a QC for view $v$ at line~\ref{line:increment_view_form_qc_timeout_message} of \Cref{Algorithm:Consensus-Execution_2}, as this requires $2f + 1$ timeout messages.
To establish the lemma, we now examine all possible circumstances under which validator $i$ could have accepted a QC for view $v$:
\begin{itemize}
    \item Suppose that $i$ accepts a QC for view $v$ at line~\ref{line:increment_view_proposal} of \Cref{Algorithm:Consensus-Execution_1}.

    In this case, validator $i$ forwards the accepted QC to the leader $L$ (line~\ref{line:forward_qc_proposal_current_leader} of \Cref{Algorithm:Consensus-Execution_1}), which is received by $L$ by time $\tau(v + 1) + \delta$.  
    Upon receiving this QC, if the current view of $L$ is $\leq v$, the leader disseminates the QC to all correct validators (line~\ref{line:disseminate_backup_qc_2} of \Cref{Algorithm:Consensus-Execution_1}), that receive it by time $\tau(v + 1) + 2\delta$, thereby establishing the lemma in this case. 
    Otherwise, we consider two different cases:
    \begin{itemize}
        \item Suppose that the current view of $L$ is $v + 1$.
        Then, the leader has already accepted a QC associated with view $v$ and disseminated it by time $\tau(v + 1) + \delta$, which again ensures that the lemma holds.

        \item Suppose that that the current view of $L$ is $> v + 1$.
        In this case, \Cref{lemma:jovan_enter_before} proves that $\tau(v + 2) \leq \tau(v + 1) + 2\delta$.
    \end{itemize}

    \item Suppose that $i$ accepts a QC for view $v$ at line~\ref{line:increment_view_form_qc} of \Cref{Algorithm:Consensus-Execution_1}.
    
    In this case, validator $i$ is either the leader of view $v$ or of view $v + 1$.  
    If $i$ is the leader of view $v$, it disseminates the accepted QC (line~\ref{line:disseminate_backup_qc} of \Cref{Algorithm:Consensus-Execution_1}), which all correct validators receive by time $\tau(v + 1) + \delta$, thereby establishing the lemma.  
    Otherwise, if $i$ is the leader of view $v + 1$, it disseminates its proposal for view $v + 1$ (line~\ref{line:propose_form_qc} of \Cref{Algorithm:Consensus-Execution_1}), which is likewise received by all correct validators by time $\tau(v + 1) + \delta$, thus confirming the lemma in this case as well.

    \item Suppose that $i$ accepts a QC for view $v$ at line~\ref{line:increment_view_receive_qc} of \Cref{Algorithm:Consensus-Execution_1}.

    In this case, the leader $L$ has already disseminated this QC, implying that all correct validators receive it by time $\tau(v + 1) + \delta$, thereby establishing the lemma.

    \item Suppose that $i$ accepts a QC for view $v$ at line~\ref{line:increment_view_receive_qc_next_leader} of \Cref{Algorithm:Consensus-Execution_1}.

    In this case, $i$ is the leader of view $v + 1$.
    Hence, it disseminates its proposal for view $v + 1$ (line~\ref{line:propose_receive_qc} of \Cref{Algorithm:Consensus-Execution_1}), which is received by all correct validators by time $\tau(v + 1) + \delta$, thus confirming the lemma in this case as well.

    \item Suppose that $i$ accepts a QC for view $v$ at line~\ref{line:increment_view_receive_qc_timeout_message} of \Cref{Algorithm:Consensus-Execution_2}.

    In this case, validator $i$ forwards the accepted QC to the leader $L$ (line~\ref{line:forward_qc_timeout_message_current_leader} of \Cref{Algorithm:Consensus-Execution_2}), which is received by $L$ by time $\tau(v + 1) + \delta$.  
    Upon receiving this QC, if the current view of $L$ is $\leq v$, the leader disseminates the QC to all correct validators (line~\ref{line:disseminate_backup_qc_2} of \Cref{Algorithm:Consensus-Execution_1}), that receive it by time $\tau(v + 1) + 2\delta$, thereby establishing the lemma in this case. 
    Otherwise, we consider two different cases:
    \begin{itemize}
        \item Suppose that the current view of $L$ is $v + 1$.
        Then, the leader has already accepted a QC associated with view $v$ and disseminated it by time $\tau(v + 1) + \delta$, which again ensures that the lemma holds.

        \item Suppose that that the current view of $L$ is $> v + 1$.
        In this case, \Cref{lemma:jovan_enter_before} proves that $\tau(v + 2) \leq \tau(v + 1) + 2\delta$.
    \end{itemize}
\end{itemize}
Since the lemma holds in all possible cases, the proof is complete.
\qed
\end{proof}

We are ready to prove that \sysname satisfies the leader fault isolation property.

\begin{theorem} [Leader fault isolation] \label{thm:leader_fault_isolation}
Consider any three consecutive post-GST views $v$, $v + 1$, and $v + 2$ such that the leaders of views $v$ and $v + 2$ are correct.  
Then, the following bound holds:
\[
\bigl(\text{the earliest time at which all correct validators enter a view } > v + 1\bigr) - \tau(v + 1) \leq O(\delta) + \viewduration.
\]
\end{theorem}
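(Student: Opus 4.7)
\medskip
\noindent\textbf{Proof proposal.}
The plan is to apply \Cref{lemma:jovan_fault_isolation} to view $v$---which has a correct leader---obtaining the dichotomy that either (i) every correct validator enters a view $> v$ by $\tau(v+1)+2\delta$, or (ii) $\tau(v+2)\leq\tau(v+1)+2\delta$. In each case I will trace how a certificate (QC or TC) for view $v+1$ reaches the correct leader $L_{v+2}$ of $v+2$, and from there the remaining validators, in order to bound the earliest time $T^{\star}$ at which every correct validator has entered a view $> v+1$.

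Case (ii) is the easy one. Since $\viewduration > 2\delta$, \Cref{lemma:jovan_tc_time} forbids a TC for $v+1$ by $\tau(v+1)+2\delta$, so the first correct validator $j$ to enter $v+2$ must have done so by accepting a QC for $v+1$. If $j = L_{v+2}$, it invokes $\NextLeaderReceivesQC$ and immediately broadcasts its view-$(v+2)$ proposal; otherwise $j$ forwards the QC to $L_{v+2}$, which reacts within $\delta$ by entering $v+2$ and broadcasting its proposal. The proposal embeds the QC for $v+1$ and reaches every correct validator within another $\delta$, at which point each validator still in a view $\leq v+1$ accepts the embedded QC and enters $v+2$. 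Hence $T^{\star}-\tau(v+1) \in O(\delta)$.

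In case (i), every correct validator is in view $\geq v+1$ by $\tau(v+1)+2\delta$. I split according to whether any correct validator exits $v+1$ by time $\tau(v+1)+\viewduration+2\delta$. If one does---via either a QC (handled as in case (ii)) or a TC (additionally disseminated through the rebroadcast at line~\ref{line:broadcast_tc_receive_tc} of \Cref{Algorithm:Consensus-Execution_2})---the certificate reaches every correct validator within $O(\delta)$ and each enters a view $> v+1$. Otherwise no correct validator has exited $v+1$ by that time; since all entered $v+1$ by $\tau(v+1)+2\delta$, each of the $\geq 2f+1$ correct validators must fully time out within $\viewduration$ of entering---\Cref{lemma:jovan_first_timeout_time} rules out earlier timeouts---and therefore broadcasts its timeout message for $v+1$ by $\tau(v+1)+2\delta+\viewduration$. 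Those $2f+1$ messages reach every correct validator within a further $\delta$, allowing each to construct and accept a TC (or, via fast recovery, a QC) for $v+1$ at line~\ref{line:form_tc_qc} of \Cref{Algorithm:Consensus-Execution_2} and enter $v+2$. In every branch $T^{\star}-\tau(v+1) \leq \viewduration + O(\delta)$.

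The main technical obstacle is the bookkeeping around the QC-forwarding paths: whenever a correct validator accepts a QC for $v+1$---at line~\ref{line:increment_view_proposal}, line~\ref{line:increment_view_receive_qc}, or line~\ref{line:increment_view_receive_qc_timeout_message}---I must verify that it also forwards that QC to $L_{v+2}$ via lines~\ref{line:forward_qc_proposal_current_leader},~\ref{line:handle_backup_qc_receive_qc}, or~\ref{line:forward_qc_timeout_message} respectively, and then argue that $L_{v+2}$ either proposes promptly upon receipt or has already advanced past $v+2$ (in which case a still-later certificate has begun propagating and only accelerates $T^{\star}$). Given the dissemination bounds already established for the correct-leader case in \Cref{lemma:jovan_two_leaders} and \Cref{lemma:jovan_correct_enters}, this reduces to careful case analysis rather than genuinely new combinatorial arguments.
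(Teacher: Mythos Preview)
Your proposal is correct and shares the same starting point as the paper---applying \Cref{lemma:jovan_fault_isolation} to view $v$ to obtain the dichotomy---but diverges afterward in a way that costs you effort. The paper does not trace QC/TC forwarding paths at all. Instead, it first establishes the single bound $\tau(v+2)-\tau(v+1)\leq O(\delta)+\viewduration$ (case~(2) is immediate; case~(1) uses exactly your ``otherwise'' argument: everyone is in $v+1$ by $\tau(v+1)+2\delta$, so everyone times out by $\tau(v+1)+2\delta+\viewduration$, and one more $\delta$ yields a TC/QC). Having bounded $\tau(v+2)$, the paper then invokes \Cref{lemma:jovan_correct_enters} as a black box: the correct leader $L_{v+2}$ enters $v+2$ by $\tau(v+2)+3\delta$ and immediately disseminates a proposal, proposal request, or NE request---each of which carries a view-$(v+1)$ certificate and pulls every recipient into $v+2$ within one more $\delta$. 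All of the forwarding bookkeeping you identify as ``the main technical obstacle'' is already absorbed into \Cref{lemma:jovan_correct_enters}, so the paper never reopens it.

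Two small imprecisions in your version, both fixable: line~\ref{line:forward_qc_proposal_current_leader} forwards the QC to the leader of $\mathit{qc}.\view$ (that is, $L_{v+1}$, not $L_{v+2}$)---though the case is still fine since reaching that line means $L_{v+2}$ has already proposed; and a locally formed TC (line~\ref{line:form_tc_qc}) is not rebroadcast via line~\ref{line:broadcast_tc_receive_tc}, so your TC-dissemination claim in case~(i) needs the Bracha-amplification argument (the $f+1$ correct timeout messages that enabled the TC trigger premature timeouts everywhere within $2\delta$) rather than a direct rebroadcast. Neither is fatal, but both illustrate why routing through \Cref{lemma:jovan_correct_enters} is cleaner.
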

\begin{proof}
By \Cref{lemma:jovan_fault_isolation}, one of the following holds:  
(1) all correct validators enter view $v + 1$ by time $\tau(v + 1) + 2\delta$, or  
(2) $\tau(v + 2) \leq \tau(v + 1) + 2\delta$.  
We first establish that $\tau(v + 2) - \tau(v + 1) \leq O(\delta) + \viewduration$.  
This bound follows immediately in the second case.  
Consider now the first case.  
Assume, for the sake of contradiction, that no correct validator enters view $v + 2$ (or any higher view) by time $\tau(v + 1) + 2\delta + \viewduration + \delta$.  
(Otherwise, if some correct validator does enter a view $\geq v + 2$ by that time, \Cref{lemma:jovan_enter_before} guarantees that $\tau(v + 2) - \tau(v + 1) \leq O(\delta) + \viewduration$.)
Since all correct validators enter view $v + 1$ by time $\tau(v + 1) + 2\delta$, and no correct validator enters any view higher than $v + 1$ by time $\tau(v + 1) + 2\delta + \viewduration + \delta$, each correct validator must issue a timeout message associated with view $v + 1$ by time $\tau(v + 1) + 2\delta + \viewduration$---either due to a full or premature timeout.  
Consequently, by time $\tau(v + 1) + 2\delta + \viewduration + \delta$, every correct validator has received timeout messages from all other correct validators and thus forms either a QC or a TC associated with view $v + 1$ (line~\ref{line:form_tc_qc} of \Cref{Algorithm:Consensus-Execution_2}), entering view $v + 2$ at line~\ref{line:increment_view_form_qc_timeout_message} or line~\ref{line:increment_view_form_tc} of \Cref{Algorithm:Consensus-Execution_2} (unless it has already entered a view greater than view $v + 1$).  
Hence, in all possible cases, we conclude that $\tau(v + 2) - \tau(v + 1) \leq O(\delta) + \viewduration$.

Having established that $\tau(v + 2) - \tau(v + 1) \leq O(\delta) + \viewduration$,  
\Cref{lemma:jovan_correct_enters} proves that the leader of view $v + 2$ enters view $v + 2$ by time $\tau(v + 2) + 3\delta$.
Therefore, the leader of view $v + 2$ disseminates either a proposal, a proposal request, or an NE request by time $\tau(v + 2) + 3\delta$. 
This implies that all correct validators receive the proposal, proposal request, or NE request by time $\tau(v + 2) + 4\delta$. 
  Consequently, all correct validators transition to a view higher than $v + 1$ by time $\tau(v + 2) + 4\delta \leq \tau(v + 2) + O(\delta)$. Hence, the statement of the theorem follows.
\qed
\end{proof}




\end{document}